\newcommand{\norm}[1]{\left\| #1 \right\|}  %Norm
\newcommand{\scprd}[1]{\left\langle #1 \right\rangle}  %Skalarprodukt
\newcommand{\ddt}[1]{\frac{d} {dt} #1}   %Ableitung
\newcommand{\dds}[1]{\frac{d} {ds} #1} %Ableitung
\newcommand{\ddtau}[1]{\frac{d} {d\tau} #1} %Ableitung
\newcommand{\ddtk}[1]{\frac{d^k}{dt^k} #1}  %k-te Ableitung
\newcommand{\ddtl}[1]{\frac{d^l}{dt^l} #1}  %l-te Ableitung
\newcommand{\natu}{\mathbb{N}}  %Zahlbereiche
\newcommand{\real}{\mathbb{R}}
\newcommand{\complex}{\mathbb{C}}
\newcommand{\im}{\operatorname{im}}  %Operatornamen
\newcommand{\rk}{\operatorname{rk}}
\newcommand{\supp}{\operatorname{supp}}
\newcommand{\spn}{\operatorname{span}}
\newcommand{\dist}{\operatorname{dist}}
\newcommand{\dom}{\operatorname{dom}}
\newcommand{\id}{\operatorname{id}}
\renewcommand{\Re}{\operatorname{Re}}
\newtheorem{thm}{Satz}[section]
\newtheorem{cor}[thm]{Korollar}
\newtheorem{prop}[thm]{Proposition}
\newtheorem{lm}[thm]{Lemma}
\theoremstyle{definition} \newtheorem{ex}[thm]{Beispiel}
\theoremstyle{definition}
\title{
Adiabatensätze mit und ohne Spektrallückenbedingung}
\author{Jochen Schmid\\  
\small Faculty of Mathematics and Physics, University of Stuttgart, Germany\\
\small jochen.schmid@mathematik.uni-stuttgart.de}    
\date{December 29, 2011}
\begin{document}

\maketitle

\begin{abstract}
{\small
In dieser Arbeit 
verallgemeinern wir einige der bisher bekannten Adiabatensätze auf Situationen mit nichtunitären Zeitentwicklungen in Banachräumen. Wir beweisen Adiabatensätze mit gleichmäßiger Spektrallückenbedingung (in Verallgemeinerung eines Satzes von Abou Salem), Adiabatensätze mit nichtgleichmäßiger Spektrallückenbedingung (in Verallgemeinerung eines Satzes von Kato) und qualitative sowie quantitative Adiabatensätze ohne Spektrallückenbedingung 
(in Verallgemeinerung von Sätzen von Avron und Elgart und Teufel). 
Außerdem geben wir eine verallgemeinerte Version eines Adiabatensatzes höherer Ordnung an, der von Nenciu stammt. 
In all diesen Adiabatensätzen müssen die betrachteten Spektralwerte nicht notwendig auf der imaginären Achse liegen und in den Adiabatensätzen mit Spektrallückenbedingung und dem Adiabatensatz höherer Ordnung genügen sogar kompakte Untermengen des Spektrums (insbesondere müssen diese nicht aus Eigenwerten bestehen). 
In zahlreichen Beispielen loten wir die Stärke der hier vorgestellten Adiabatensätze aus. Insbesondere zeigen wir, dass die Sätze der vorliegenden Arbeit allgemeiner sind als die bisher bekannten.

Diese Arbeit wurde im Oktober 2010 fertiggestellt und als Diplomarbeit am Fachbereich Mathematik der Universität Stuttgart eingereicht. Die in der Zwischenzeit (im Juni 2011) erschienenen neueren Resultate von Avron, Fraas, Graf und Grech sind daher in der vorliegenden Arbeit nicht berücksichtigt. Wir weisen aber darauf hin, dass die Sätze der vorliegenden Arbeit allgemeiner sind als die ihnen entsprechenden Sätze von Avron, Fraas, Graf und Grech -- mit Ausnahme des Adiabatensatzes höherer Ordnung, der in keinem logischen Verhältnis zu dem Adiabatensatz höherer Ordnung von Avron, Fraas, Graf und Grech steht. 

Wir sind dabei die wichtigsten Sätze der vorliegenden Arbeit in einem Artikel zusammenzustellen und werden diesen in Kürze auf das arXiv hochladen.
\\

In this work we generalize some of the previously known adiabatic theorems to situations with non-unitary evolutions in Banach spaces. We prove adiabatic theorems with uniform gap condition (generalizing a theorem of Abou Salem), adiabatic theorems with non-uniform gap condition (generalizing a theorem of Kato) and qualitative as well as quantitative adiabatic theorems without gap condition (generalizing theorems of Avron and Elgart, and Teufel).
Additionally, we give a generalized version of an adiabatic theorem of higher order due to Nenciu.
In all these adiabatic theorems the considered spectral values need not lie on the imaginary axis and in the adiabatic theorems with spectral gap condition and the adiabatic theorem of higher order compact subsets of the spectrum are sufficient (in particular, these subsets need not consist of eigenvalues). 
We explore the strength of the presented adiabatic theorems in numerous examples. In particular, we show that the theorems of the present work are  more general than the previously known theorems.

This work was finished in October~2010 and handed in as a diploma thesis at the mathematics department of the University of Stuttgart. The more recent results of Avron, Fraas, Graf und Grech which appeared in the meantime (in June 2011) are therefore not taken into consideration here. We point out, however, that the theorems of the present work are more general than the corresponding theorems of Avron, Fraas, Graf und Grech -- with the exception of the adiabatic theorem of higher order which is in no logical relation to the adiabatic theorem of higher order of Avron, Fraas, Graf and Grech. 

We are about to gather the most important theorems of the present work in an article and will upload it to arXiv soon.}
\end{abstract}

\vspace{1cm}

\noindent {\small \emph{2010 Mathematics Subject Classification:} 34E15 (primary), 34G10,  47D06 (secondary) } \\
{\small \emph{Key words and phrases:} adiabatic theorems in Banach spaces, non-unitary evolutions, spectral gap condition }

\newpage

Ich möchte an dieser Stelle Professor Marcel Griesemer, dem Betreuer dieser Diplomarbeit, sehr herzlich danken: für die sehr interessante und schöne Aufgabenstellung und für viele hilfreiche Diskussionen.

\newpage

\tableofcontents

\newpage

In der ganzen Arbeit bezeichnen wir mit $I$ das Intervall $[0,1]$, mit $X$ einen Banachraum über $\complex$, mit $H$ einen Hilbertraum über $\complex$ und mit $D$ einen dichten Unterraum von $X$ bzw. $H$ (wobei $D$ auch mit $X$ bzw. $H$ übereinstimmen kann).

\section{Worum geht's?} \label{sect: einleitung}
 
\subsection{Adiabatentheorie -- Ausgangslage und -frage}
  
Wir gehen von der folgenden Situation aus, die man als allgemeine \emph{Ausgangssituation} der Adiabatentheorie ansehen kann: 
\begin{itemize}
\item $A(t)$ ist für jedes $t \in I$ eine abgeschlossene lineare Abbildung $D \subset X \to X$, sodass für alle $T \in (0, \infty)$ die Anfangswertprobleme 
\begin{align*}
y' = T A(t) y, \; y(s) = x
\end{align*}
wohlgestellt sind (s. Abschnitt~3.1),
\item $\sigma(t)$ ist für jedes $t \in I$ eine kompakte Untermenge von $\sigma(A(t))$, %(insbesondere kann $\sigma(t) = \{ \lambda(t) \}$ sein),
\item $P(t)$ ist für jedes $t \in I$ eine beschränkte Projektion in $X$ mit 
\begin{align*}
A(t) \bigl( P(t)X \cap D \bigr) \subset P(t)X \text{ \; und \; } \sigma \bigl( A(t) \big|_{P(t)X \cap D} \bigr) = \sigma(t).
\end{align*}
\end{itemize}
Zunächst interessieren wir uns eigentlich für die Anfangwertprobleme
\begin{align*}
y' = A\Bigl( \frac{t'}{T} \Bigr) y, \; y(0) = x 
\end{align*}
auf $[0,T]$ (für $T \in (0, \infty)$) mit den für große $T$ langsam zeitveränderlichen linearen Abbildungen $A\bigl( \frac{t'}{T} \bigr)$. Diese Anfangswertprobleme gehen durch Skalierung aber offensichtlich über in die eingangs angeschriebenen Anfangswertprobleme
\begin{align*}
y' = T A(t) y, \; y(0) = x
\end{align*}
auf dem festen Intervall $I$.
\\

Die \emph{Ausgangsfrage} der Adiabatentheorie ist nun: wann -- unter welchen Voraussetzungen an $A(t)$, $\sigma(t)$ und $P(t)$ -- überführt die Zeitentwicklung $U_T$ zu $T A$ (s. Abschnitt~3.1) Vektoren aus dem Unterraum $P(0)X$ in Vektoren, die für große $T$ beinahe in dem Unterraum $P(t)X$ liegen? Anders (und präziser) gefragt: unter welchen Voraussetzungen gilt
\begin{align*}
(1-P(t)) U_T(t) P(0) \longrightarrow 0 \quad (T \to \infty)
\end{align*}
für alle $t \in I$?

Sätze, die solche Voraussetzungen angeben, heißen -- durchaus treffend -- \emph{Adiabatensätze.} Warum treffend? Weil das griechische Wort "`adiabatos"' so viel bedeutet wie "`keine Übergänge zulassend"' und Adiabatensätze besagen eben (grob gesprochen) gerade, dass aus dem Unterraum $P(0)X$ für große $T$ unter der Wirkung von $U_T(t)$ kaum etwas übergeht in den Unterraum $(1-P(t))X$. Wir können das auch so ausdrücken, dass die Zeitentwicklung $U_T$ zu $T A$ für große $T$ in einem gewissen Sinn (s. Abschnitt~3.3) beinahe adiabatisch ist bzgl. $P$. 

Wir unterscheiden zwischen Adiabatensätzen \emph{mit (gleichmäßiger oder nichtgleichmäßiger) Spektrallückenbedingung} und Adiabatensätzen \emph{ohne Spektrallückenbedingung}: bei den einen ist $\sigma(t)$ für jedes $t \in I$ isoliert in $\sigma(A(t))$ (gleichmäßig oder nichtgleichmäßig), bei den andern ist $\sigma(t)$ nicht für jedes $t \in I$ isoliert in $\sigma(A(t))$ (s. Abschnitt~2.1.1).
\\

In den folgenden Zeichnungen sind drei typische Situationen veranschaulicht, und zwar für den Sonderfall, dass $\sigma(A(t)) \subset i \, \real$ und $\sigma(t) = \{ \lambda(t) \}$ für alle $t \in I$. In den ersten beiden liegt eine gleichmäßige bzw. nichtgleichmäßige Spektrallücke vor, in der dritten liegt keine Spektrallücke vor.

\includegraphics{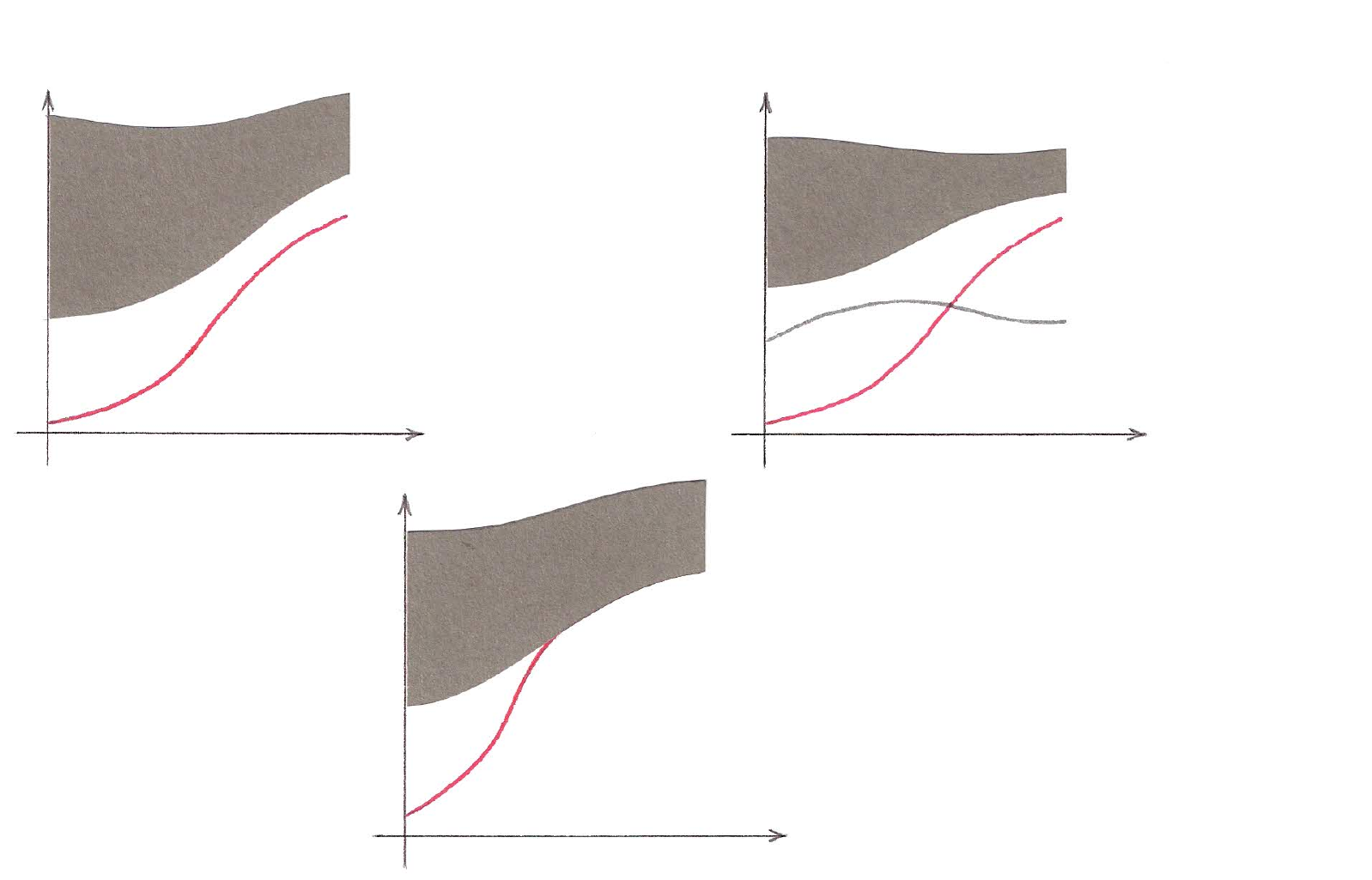}

Die horizontale Achse ist dabei jeweils die reelle Achse ($t$-Achse), die vertikale Achse die imaginäre Achse ($\lambda$-Achse), die Mengen $\sigma(t)$ sind rot und die Mengen $\sigma(A(t)) \setminus \sigma(t)$ grau gefärbt.

\subsection{Was ist bekannt?}
%Welche Adiabatensätze sind schon bekannt? Welche sind besonders wichtig für diese Arbeit? (Abriss der für diese Arbeit wichtigen Adiabatensätze) Woher wir kommen. 

Die Wurzeln der Adiabatentheorie sind natürlich physikalisch, genauer: quantenmechanisch, und daher sind die ersten %(und auch bis heute viele)
Adiabatensätze auch für schiefselbstadjungierte $A(t)$ formuliert (Schrödingeroperatoren -- multipliziert mit einem Faktor $i$ -- sind ja bekanntlich schiefselbstadjungiert). %Der Adiabatensatz Borns und Focks aus~\cite{BornFock} (1928) und der Adiabatensatz Katos sind Adsätze mit Sl/ 
In der Arbeit~\cite{BornFock} von Born, Fock (1928) und in der Arbeit~\cite{Kato 50} von Kato (1950) werden Adiabatensätze mit Spektrallückenbedingung bewiesen (gleichmäßig und nichtgleichmäßig), und zwar eben für schiefselbstadjungierte $A(t)$. Außerdem sind die $\sigma(t)$ in diesen Arbeiten als endlich vorausgesetzt. Diese Voraussetzung schwächt Nenciu in seinem Artikel~\cite{Nenciu 80} aus dem Jahr 1980 ab: für einen Adiabatensatz mit gleichmäßiger Spektrallückenbedingung (und weiterhin für schiefselbstadjungierte $A(t)$) genügt es, wenn $\sigma(t)$ bloß kompakt ist für jedes $t \in I$.

1992 veröffentlichen Nenciu und Rasche (in dem Artikel~\cite{NenciuRasche 92}) einen Adiabatensatz für nichtschiefselbstadjungierte $A(t)$, der jedoch nur in sehr speziellen Situationen (mit gleichmäßiger Spektrallückenbedingung) anwendbar ist. In der Arbeit~\cite{Nenciu 93} von 1993 beweist Nenciu einen Adiabatensatz, der alle bis dahin bekannten Adiabatensätze beliebiger Ordnung enthält (Abschnitt~1 in~\cite{Nenciu 93}). In diesem Satz sind die $A(t)$ als schiefselbstadjungiert vorausgesetzt.

Im Jahr 1998 beweisen Avron und Elgart (in der Arbeit~\cite{AvronElgart 99}) und unabhängig davon Bornemann (in der Arbeit~\cite{Bornemann 98})  Adiabatensätze ohne Spektrallückenbedingung, beidesmal für schiefselbstadjungierte $A(t)$. Die Zugänge der beiden Arbeiten unterscheiden sich deutlich, Bornemanns Zugang (über quadratische Formen) erlaubt es auch Situationen zu behandeln, in denen die Definitionsbereiche $D(A(t))$ von $t$ abhängen. Wir werden das aber in dieser Arbeit nicht weiterverfolgen.

In der Arbeit~\cite{Abou 07} von 2007 schließlich beweist Abou Salem einen ersten allgemeinen Adiabatensatz für nichtschiefselbstadjungierte $A(t)$ (und zwar unabhängig von dem speziellen Satz Nencius und Rasches aus~\cite{NenciuRasche 92}). Die $A(t)$ in diesem Satz sind nur als Erzeuger von Kontraktionshalbgruppen vorausgesetzt.
\\

Wir weisen ausdrücklich darauf hin, dass wir hier nur diejenigen Adiabatensätze herausgegriffen haben, die für diese Arbeit besonders wichtig sind. Einen umfassenderen Abriss der Adiabatentheorie, der auch die Anwendungen stärker berücksichtigt, bietet beispielsweise Abschnitt~3 in~\cite{AvronElgart 99}.

\subsection{Was machen wir in dieser Arbeit? Was ist neu?}
%Was machen wir (Zusammenfassung: Inhalt und Struktur der Arbeit)? Was ist neu: was (welche Sätze) fügen wir der bestehenden Adiabatentheorie hinzu? Was bringt diese Arbeit? 

Wir werden in dieser Arbeit einige der eben genannten bisher bekannten Adiabatensätze verallgemeinern und damit die Adiabatentheorie ein klein wenig weiterentwickeln. Alle Adiabatensätze (und auch alle Beispiele) dieser Arbeit ordnen sich in die eingangs beschriebene allgemeine Ausgangssituation der Adiabatentheorie ein.
\\

Im vorbereitenden Abschnitt~\ref{sect: vorbereitungen} stellen wir grundlegende Definitionen und Sätze zusammen, auf die wir in den darauffolgenden Abschnitten immer wieder zurückgreifen werden. In Abschnitt~2.2.2 skizzieren wir einen wenig geläufigen aber sehr direkten Zugang zum Spektralsatz für normale lineare Abbildungen, der auf Bernaus Artikel~\cite{Bernau 65} beruht. 
Der Abschnitt~2 kann zunächst überflogen werden.%: er enthält keine neuen Sätze.

Abschnitt~\ref{sect: zeitentwicklungen} handelt von Zeitentwicklungen: zunächst führen wir den Zeitentwicklungsbegriff ein, der uns ja in der oben formulierten Ausgangsfrage der Adiabatentheorie schon begegnet ist. Inbesondere werden wir sehen, dass in der eingangs geschilderten Ausgangssituation der Adiabatentheorie die Zeitentwicklung $U_T$ zu $T A$ wirklich existiert für alle $T \in (0, \infty)$ (Satz~\ref{thm: zeitentwicklung natürlich}). Vor allem aber werden wir in Abschnitt~3 klären, unter welchen Voraussetzungen an die $A(t)$ eine (dann eindeutige) Zeitentwicklung $U_T$ zu $T A$ existiert (Satz~\ref{thm: Dyson} und Satz~\ref{thm: Kato}). Auf Proposition~\ref{prop: zshg. regvor. kato} sei hier besonders hingewiesen: sie erlaubt es die Voraussetzungen des Satzes von Kato in der Version von Yosida (Theorem~XIV.4.1 in~\cite{Yosida: FA}) deutlich einfacher auszusprechen, was bisher nicht bekannt gewesen zu sein scheint. In Abschnitt~3.3 führen wir den Adiabatizitätsbegriff für Zeitentwicklungen ein und eine für die Adiabatensätze in den Abschnitten~\ref{sect: triviale adsätze} bis~\ref{sect: adsätze ohne sl} sehr wichtige Vergleichszeitentwicklung, die adiabatische Zeitentwicklung zu $T A$ und $P$.
\\

In Abschnitt~\ref{sect: triviale adsätze} stellen wir zwei triviale Adiabatensätze vor, bevor wir dann im Abschnitt~\ref{sect: adsätze mit sl} unsere ersten nichttrivialen Adiabatensätze beweisen. In Abschnitt~5.1 zeigen wir zwei Adiabatensätze mit gleichmäßiger Spektrallückenbedingung (Satz~\ref{thm: unhandl adsatz mit sl} und Satz~\ref{thm: handl adsatz mit sl}), die den oben angesprochenen Adiabatensatz von Abou Salem verallgemeinern: die dort getroffene Voraussetzung, dass $\sigma(t) = \{ \lambda(t) \}$ für Eigenwerte $\lambda(t)$ mit algebraischer Vielfachheit $1$, ist überflüssig -- es genügt, wenn $\sigma(t)$ kompakt ist für alle $t \in I$. 
Wir heben hier außerdem Proposition~\ref{prop: zshg isoliert und glm isoliert} hervor, die einen allgemeinen Zusammenhang zwischen Isoliertheit und gleichmäßiger Isoliertheit von kompakten in $\sigma(A(t))$ isolierten Untermengen $\sigma(t)$ von $\sigma(A(t))$ herstellt. 
In Abschnitt~5.2 beweisen wir einen Adiabatensatz mit nichtgleichmäßiger Spektrallückenbedingung (Satz~\ref{thm: unhandl adsatz mit nichtglm sl}), der den oben erwähnten Adiabatensatz Katos verallgemeinert: die Voraussetzung, dass die $A(t)$ schiefselbstadjungiert und die $\sigma(t)$ endlich sind, ist überflüssig -- wir brauchen nur, dass $A(t)$ für jedes $t \in I$ eine stark stetige Halbgruppe erzeugt und $A$ $(M, 0)$-stabil ist (s. Abschnitt~3.2) und $\sigma(t)$ kompakt. 

Der Abschnitt~\ref{sect: adsätze ohne sl} handelt von Adiabatensätzen ohne Spektrallückenbedingung. In Abschnitt~6.1 beweisen wir zwei solche Sätze (Satz~\ref{thm: allg adsatz ohne sl} und Satz~\ref{thm: adsatz ohne sl für normale A(t)}), die den oben genannten Adiabatensatz von Avron und Elgart und einen daran anknüpfenden Adiabatensatz von Teufel (Artikel~\cite{Teufel 01} von 2001) verallgemeinern: wir können auf die Voraussetzung der Schiefselbstadjungiertheit verzichten, stattdessen genügt es, wenn die $A(t)$ wie in den Adiabatensätzen des Abschnitts~\ref{sect: adsätze mit sl} Halbgruppenerzeuger sind mit derselben Stabilitätseigenschaft wie dort. 
Wir haben uns bemüht Satz~\ref{thm: allg adsatz ohne sl} noch weiter zu verallgemeinern. Satz~\ref{thm: erweiterter adsatz ohne sl} ist so eine Verallgemeinerung, die jedoch etwas unbefriedigend ist.
In Abschnitt~6.2 übertragen wir zwei quantitative Adiabatensätze ohne Spektrallückenbedingung von Avron und Elgart und von Teufel auf Situationen mit nichtschiefselbstadjungierten $A(t)$. Diese quantitativen Adiabatensätze sagen im Unterschied zu den Sätzen aus Abschnitt~6.1 nicht nur, \emph{dass} $U_T$ für große $T$ beinahe adiabatisch ist bzgl. $P$, sondern auch \emph{wie} adiabatisch bzw. diabatisch $U_T$ für große $T$ ist, genauer: sie beinhalten eine Aussage über die Konvergenzrate von $(1-P(t))U_T(t)P(0)$.

In Abschnitt~\ref{sect: höhere adsätze} stellen wir eine verallgemeinerte Version (Satz~\ref{thm: höherer adsatz}) des erwähnten Adiabatensatzes aus~\cite{Nenciu 93} von Nenciu vor: auch hier ist die Schiefselbstadjungiertheit nicht vonnöten. Aus Satz~\ref{thm: höherer adsatz} erhalten wir unmittelbar einen Adiabatensatz höherer Ordnung (Korollar~\ref{cor: höherer adsatz für suppP' in (0,1)}) und nebenbei fällt aus diesem Satz -- jedoch (nach Beispiel~\ref{ex: unser höherer adsatz echt allgemeiner als der von nenciu}) nicht schon aus seinem Vorbild aus Nencius Arbeit~\cite{Nenciu 93} -- noch einmal unser Adiabatensatz mit gleichmäßiger Spektrallückenbedingung (Satz~\ref{thm: unhandl adsatz mit sl}) ab.
\\

In all diesen Adiabatensätzen schwächen wir auch die Regularitätsvoraussetzungen an $A(t)$, $\sigma(t)$ und $P(t)$ ab im Vergleich zu den bekannten Adiabatensätzen, aus denen unsere Sätze entstanden sind. So genügt es beispielsweise -- wie wir dies in den Adiabatensätzen der Abschnitte~\ref{sect: adsätze mit sl} und~\ref{sect: adsätze ohne sl} tun -- zu fordern, dass $t \mapsto A(t)x$ stetig differenzierbar ist für alle $x \in D$. Dies ist eine deutliche Vereinfachung der entsprechenden Voraussetzungen der bisher bekannten Adiabatensätze.
\\

Wir ergänzen darüberhinaus unsere Adiabatensätze durch Beispiele, die vor Augen führen sollen, was die Sätze können und was nicht. So zeigen wir durch Beispiele einerseits, dass mit den vorgestellten Adiabatensätzen Situationen behandelt werden können, in denen die bisher bekannten Adiabatensätze nicht anwendbar sind, mit anderen Worten: dass die vorgestellten Adiabatensätze \emph{echt} allgemeiner sind als ihre Vorbilder und nicht nur allgemeiner aussehen. 
Andererseits versuchen wir durch Beispiele auszuloten, was in Situationen geschieht, in denen unsere Adiabatensätze nicht anwendbar sind: ob dann auch die Aussage des Adiabatensatzes verletzt ist oder ob diese trotzdem noch gilt. Wir zeigen insbesondere, dass etwa die Voraussetzungen an $A$, die wir in unseren Adiabatensätzen machen, nicht mehr wesentlich abgeschwächt werden können (Beispiel~\ref{ex: (M,0)-stabilität wesentlich in den adsätzen mit sl} und Beispiel~\ref{ex: (M,0)-stabilität wesentlich in adsatz ohne sl}). 

Viele unserer Beispiele weisen dieselbe einfache Struktur auf, die in dem mit "`Standardbeispiele"' überschriebenen Abschnitt~4.2 verankert ist.
\\

Wir beschließen die Arbeit mit einem Anwendungsbeispiel aus der Neutronentransporttheorie. Die $A(t)$ dort sind nicht schiefselbstadjungiert und auch nicht normal, was belegt, dass Adiabatensätze für nichtschiefselbstadjungierte $A(t)$ nicht nur abstrakte Spielerei sind.

%%%%%%%%%%%%%%%%%%%%%%%%%%%%%%%%%%%%%%%%%%%%%%%%%%%%%%%%%%%%%%%%%%%%%%%%%%%%%%%%%%%%%%%%%%%%%%%%%%%%%%%%%%%%%%%%%%%%%%%%%%%%%%%%%%%%%%%%%%%%%%%%%%%%%%

\section{Vorbereitungen: grundlegende Definitionen und Sätze} \label{sect: vorbereitungen}

\subsection{Analysis}

\subsubsection{Stetigkeit mengenwertiger Abbildungen und Isoliertheit}

Sei $X_0$ ein metrischer Raum und $E \subset X_0$. Wir schreiben dann
\begin{align*}
U_r(E) := \{ x \in X_0: \dist(x,E) < r \} \text{ \; und \; } \overline{U}_r(E) := \{ x \in X_0: \dist(x,E) \le r \}
\end{align*}
für alle $r \in (0, \infty)$ und dehnen damit die allgemein übliche entsprechende Schreibweise für einpunktige Mengen $E = \{a\}$ aus: $U_r(\{a\}) = U_r(a)$ und $\overline{U}_r(\{a\}) = \overline{U}_r(a)$.
\\

Wie man sich leicht überlegt, ist $U_r(E)$ offen und $\overline{U}_r(E)$ abgeschlossen und 
\begin{align*}
U_{r_1}\bigl( U_{r_2}(E) \bigr) \subset U_{r_1+r_2}(E) \text{ \; und \; }  \overline{U}_{r_1}\bigl( \overline{U}_{r_2}(E) \bigr) \subset \overline{U}_{r_1+r_2}(E)
\end{align*}
für alle Untermengen $E$ von $X_0$ und alle positiven Zahlen $r$, $r_1$ und $r_2$.
Wenn $X_0$ sogar ein Innenproduktraum ist, dann gelten auch die umgekehrten Inklusionen und darüberhinaus $\overline{U}_r(E) = \overline{ U_r(E) }$.
\\

Sei $X_0$ wieder ein metrischer Raum, $J$ ein Intervall in $\real$ und seien $E \subset F \subset X_0$ und $E(t) \subset F(t) \subset X_0$ für alle $t \in J$. Dann heißt $E$ \emph{isoliert in $F$} genau dann, wenn eine positive Zahl $r$ existiert, sodass
\begin{align*}
U_r(E) \cap F = E.
\end{align*}
Die $E(t)$ heißen \emph{gleichmäßig (bzgl. $t \in J$) isoliert in $F(t)$} genau dann, wenn eine von $t \in J$ unabhängige positive Zahl $r$ existiert, sodass
\begin{align*}
U_r(E(t)) \cap F(t) = E(t)
\end{align*}
für alle $t \in J$.
Sei $E(t)$ für jedes $t \in J$ isoliert in $F(t)$. Wir sagen dann, \emph{$E(t)$ falle an der Stelle $t_0 \in J$ in $F(t) \setminus E(t)$ hinein} genau dann, wenn eine Folge $(t_n)$ in $J$ existiert mit $t_n \longrightarrow t_0 \;\;(n \to \infty)$ und $\dist(E(t_n), \, F(t_n) \setminus E(t_n) ) \longrightarrow 0 \;\;(n \to \infty)$.
\\

Wie man leicht bestätigt, gilt für kompakte Intervalle $J$: $E(t)$ ist gleichmäßig (bzgl. $t \in J$) isoliert in $F(t)$ genau dann, wenn $E(t)$ an keiner Stelle $t_0 \in J$ in $F(t) \setminus E(t)$ hineinfällt.  
\\

Sei $J$ ein Intervall, $X_0$ ein metrischer Raum und $E(t) \subset X_0$ für jedes $t \in J$. Dann heißt die (mengenwertige) Abbildung $t \mapsto E(t)$ \emph{oberhalbstetig} bzw. \emph{unterhalbstetig in $t_0$} genau dann, wenn zu jedem $\varepsilon > 0$ eine in $J$ offene Umgebung $U_{t_0}$ von $t_0$ existiert, sodass
\begin{align*}
E(t) \subset U_{\varepsilon}(E(t_0)) \text{ \; bzw. \; } E(t_0) \subset U_{\varepsilon}(E(t))
\end{align*}
für alle $t \in U_{t_0}$.
Sie heißt \emph{oberhalbstetig} bzw. \emph{unterhalbstetig} genau dann, wenn sie unter- bzw. oberhalbstetig in jedem $t_0 \in J$ ist.

Weiter heißt die Abbildung $t \mapsto E(t)$ \emph{stetig (in $t_0$)} genau dann, wenn sie ober- und unterhalbstetig ist (in $t_0$).
\\

%Was bedeuten Ober- und Unterhalbstetigkeit anschaulich?

Wie man leicht sieht, ist die mengenwertige Abbildung $t \mapsto E(t) := \{a_1(t), \dots, a_m(t) \}$ stetig im eben definierten Sinn, wenn alle Abbildungen $t \mapsto a_i(t) \in X_0$ stetig sind. Wenn $E(t)$ einpunktig ist gilt auch die Umkehrung: aus der Stetigkeit von $t \mapsto E(t) = \{a(t)\}$ folgt dann schon die Stetigkeit von $t \mapsto a(t)$ im gewöhnlichen Sinn.

%Wir haben folgende einfache Charakterisierung der Stetigkeit von $t \mapsto E(t)$ für \emph{kompakte} Untermengen $E(t)$ eines metrischen Raumes.

Wir merken noch an, dass die Abbildung $t \mapsto E(t)$ für \emph{kompakte} Untermengen $E(t)$ von $X_0$ genau dann stetig ist, wenn sie stetig bzgl. der Hausdorffmetrik ist, wobei diese gegeben ist durch
\begin{align*}
d(E,F) := \max \big\{    \sup_{a \in E} (\dist(a,F)), \, \sup_{b \in F} (\dist(b,E))  \big\} 
\end{align*}
für kompakte Untermengen $E$ und $F$ von $X_0$.

\subsubsection{Einseitig differenzierbare Abbildungen} 

Sei $J$ ein nichttriviales Intervall in $\real$ mit $a:= \inf J$, $b:= \sup J \in \real \cup \{-\infty, \infty \}$ und sei $f$ eine Abbildung $J \to X$. Dann heißt $f$ bekanntlich \emph{rechts- bzw. linksseitig differenzierbar in $t \in J$} genau dann, wenn der limes
\begin{align*}
\partial_+ f(t) := \lim_{h \searrow 0} \frac{f(t+h) - f(t)}{h} \text{ \; bzw. \; } \partial_- f(t) := \lim_{h \searrow 0} \frac{f(t-h) - f(t)}{-h}
\end{align*}
existiert. Weiter nennen wir $f$ \emph{rechts- bzw. linksseitig (stetig) differenzierbar} genau dann, wenn $\partial_+ f(t)$ bzw. $\partial_- f(t)$ für alle $t \in J \setminus \{b\}$ bzw. alle $t \in J \setminus \{a\}$ existiert (und $J \setminus \{b\} \ni t \mapsto \partial_+ f(t)$ bzw. $J \setminus \{a\} \ni t \mapsto \partial_- f(t)$ stetig ist).
\\

Satz~\ref{thm: einseitig db und beidseitig db}, der in den Ausführungen nach Definition~II.3.2 in~\cite{Krein 71} angedeutet ist, stellt einen Zusammenhang her zwischen einseitiger Differenzierbarkeit und Differenzierbarkeit. Wir verallgemeinern zunächst den Mittelwertsatz auf einseitig differenzierbare $X$-wertige Abbildungen. Die Idee ist dieselbe wie die zu Theorem~IV.2.18 in~\cite{AmannEscher}. 

\begin{lm} \label{lm: mws für einseitig db}
Sei $J$ ein offenes Intervall in $\real$ und $f$ eine stetige und rechtsseitig oder linksseitig differenzierbare Abbildung $J \to X$. Dann gilt
\begin{align*}
\norm{ f(b) - f(a) } \le \sup_{t \in [a,b]} \norm{ \partial_{\pm} f(t) } (b-a)
\end{align*}
für alle $a,b \in J$ mit $a < b$.
\end{lm} 

\begin{proof}
Sei $f$ \emph{rechtsseitig} differenzierbar und seien $a,b \in J$ mit $a < b$. Sei $\varepsilon > 0$, 
\begin{align*}
J_{\varepsilon} := \Bigl\{     t \in [a,b]: \norm{ f(t') - f(a) } \le \bigl( \; \sup_{x \in [a,b]}  \norm{ \partial_+ f(x) }  + \varepsilon \; \bigr) (t'-a) \text{ für alle } t' \in [a,t]      \Bigr\}  
\end{align*}
und $t_{\varepsilon} := \sup J_{\varepsilon}$. Dann ist $a \in J_{\varepsilon}$, insbesondere ist $t_{\varepsilon} \in [a,b]$, und 
\begin{align*}
\norm{ f(t) - f(a) } \le \bigl( \; \sup_{x \in [a,b]}  \norm{ \partial_+ f(x) }  + \varepsilon \; \bigr) (t-a)
\end{align*}
für alle $t \in [a, t_{\varepsilon})$ und auch für $t = t_{\varepsilon}$, da $f$ ja (linksseitig) stetig ist in $t_{\varepsilon}$. 

Wir zeigen nun, dass $t_{\varepsilon} = b$. Angenommen, $t_{\varepsilon} < b$. Dann existiert ein $\delta > 0$, sodass $t_{\varepsilon} + \delta \le b$ und 
\begin{align*}
\norm{     \frac{ f(t) - f(t_{\varepsilon}) }{ t - t_{\varepsilon} }    - \partial_+ f(t_{\varepsilon})     }  < \varepsilon
\end{align*}
für alle $t \in (t_{\varepsilon}, t_{\varepsilon} + \delta]$, woraus wir schließen können, dass
\begin{align*}
\norm{ f(t) - f(a) } &\le \norm{ f(t) - f(t_{\varepsilon}) }  +  \norm{ f(t_{\varepsilon}) - f(a) } \\
& \le \bigl( \; \sup_{x \in [a,b]}  \norm{ \partial_+ f(x) } + \varepsilon \; \bigr) \bigl( (t-t_{\varepsilon}) + (t_{\varepsilon}-a) \bigr)
\end{align*}
für alle $t \in (t_{\varepsilon}, t_{\varepsilon} + \delta]$. Also gilt
\begin{align*}
\norm{ f(t) - f(a) } \le   \bigl( \; \sup_{x \in [a,b]}  \norm{ \partial_+ f(x) }  + \varepsilon \; \bigr) (t-a)
\end{align*}
für alle $t \in [a, t_{\varepsilon}]    \cup   (t_{\varepsilon}, t_{\varepsilon} + \delta]$, das heißt, $t_{\varepsilon} + \delta \in J_{\varepsilon}$ im Widerspruch zur Definition von $t_{\varepsilon}$. Also ist tatsächlich $t_{\varepsilon} = b$ und die Behauptung folgt, indem wir $\varepsilon$ gegen $0$ schicken.
\\

Sei nun $f$ \emph{linkseitig} differenzierbar. Dann folgt die Behauptung, indem wir die eben bewiesene Aussage auf die stetige und rechtsseitig differenzierbare Abbildung $\tilde{f}$ andwenden, wobei $\tilde{f}(t) := f(-t)$ für alle $t \in  \tilde{J} := -J$.
\end{proof}

Die Voraussetzung des obigen Lemmas, dass $f$ stetig ist, können wir nicht weglassen. Sei nämlich $f := \chi_{[0, \infty)}$ und $J := \real$, dann ist $f$ zwar rechtsseitig (sogar rechtsseitig stetig) differenzierbar mit $\partial_+ f(t) = 0$ für alle $t \in J$, aber $f$ ist nicht konstant.

\begin{thm} \label{thm: einseitig db und beidseitig db}
Sei $J$ ein nichttriviales (offenes, halboffenes oder abgeschlossenes) Intervall in $\real$, $f$ eine stetige und einseitig (rechts- oder linksseitig) stetig differenzierbare Abbildung $J \to X$ und $\partial_+ f$ bzw. $\partial_- f$ sei stetig fortsetzbar in den rechten bzw. linken Randpunkt von $J$, falls dieser zu $J$ gehört. Dann ist $f$ schon (beidseitig) stetig differenzierbar auf $J$ und $f' = \partial_{\pm} f$.
\end{thm}

\begin{proof}
Sei $f$ \emph{rechtsseitg} stetig differenzierbar. Wir nehmen $J$ zunächst als offen an und zeigen, dass 
\begin{align*}
f(t) = f(t_0) + \int_{t_0}^t \partial_+ f(\tau) \,d\tau
\end{align*}
für alle $t \in J$. Wie man sofort sieht, sind beide Seiten: $f$ und $t \mapsto f(t_0) + \int_{t_0}^t \partial_+ f(\tau) \,d\tau$ rechtsseitig stetig differenzierbar und ihre rechtsseitigen Ableitungen stimmen überein, woraus mit Lemma~\ref{lm: mws für einseitig db} (angewandt auf die Differenz der beiden Seiten) folgt, dass tatsächlich
\begin{align*}
f(t) = f(t_0) + \int_{t_0}^t \partial_+ f(\tau) \,d\tau
\end{align*}
für alle $t \in J$. Da nun $t \mapsto f(t_0) + \int_{t_0}^t \partial_+ f(\tau) \,d\tau$ sogar beidseitig stetig differenzierbar ist, gilt dies auch für $f$ und $f' = \partial_+ f$.

Wir nehmen nun $J$ nicht mehr als offen an. Zunächst ist dann (nach dem, was wir gerade eben gezeigt haben) nur $f \big|_{ J^{\circ} } = f \big|_{(a,b)}$ stetig differenzierbar. Wir müssen also noch zeigen, dass $f$ auch in den Randpunkten (falls sie zu $J$ gehören) differenzierbar ist und die Ableitung $f'$ dort stetig ist. Wenn der linke Randpunkt $a$ zu $J$ gehört, dann ist $f$ (rechtsseitig) differenzierbar in $a$ und 
\begin{align*}
[a,b) \ni t \mapsto f'(t) = \partial_+ f (t)
\end{align*}
ist stetig (in $a$) -- wir \emph{nehmen} ja gerade \emph{an}, dass $f$ rechtsseitig stetig differenzierbar ist.
Wenn der rechte Randpunkt $b$ zu $J$ gehört, dann ist $f$ (linksseitig) differenzierbar in $b$ und $(a,b] \ni t \mapsto f'(t)$ ist stetig (in $b$). Sei nämlich $g$ die stetige Fortsetzung von $\partial_+ f$ in den Randpunkt $b$ hinein. Dann gilt 
\begin{align*}
\frac{f(b-h) - f(b)}{-h} - g(b) &= \lim_{\varepsilon \searrow 0} \frac{f(b-\varepsilon) - f(b-\varepsilon -h)}{h} - g(b) \\
&= \lim_{\varepsilon \searrow 0} \frac{1}{h} \, \int_{b-\varepsilon -h}^{b-\varepsilon} g(\tau) \,d\tau    - g(b) \\
&= \frac{1}{h} \, \int_{b-h}^{b} g(\tau) \,d\tau    - g(b)   \longrightarrow 0 \quad(h \searrow 0),
\end{align*}
weil $f\big|_{(a,b)}$ stetig differenzierbar ist mit $f' \big|_{(a,b)} = \partial_+ f \big|_{(a,b)} = g \big|_{(a,b)}$ und $g$ stetig ist auf ganz $J$.
Also ist $f$ tatsächlich (linksseitig) differenzierbar in $b$ und 
\begin{align*}
(a,b] \ni t \mapsto f'(t) = g(t)
\end{align*} 
ist stetig (in $b$).
\\

Sei $f$ nun \emph{linksseitig} stetig differenzierbar. Wir setzen dann $\tilde{f}(t) := f(-t)$ für alle $t \in \tilde{J} := -J$ und sehen, dass $\tilde{f}$ eine stetige und rechtsseitig stetig differenzierbare Abbildung $\tilde{J} \to X$ ist und $\partial_+ \tilde{f}$ stetig fortsetzbar ist in den rechten Randpunkt von $\tilde{J}$, falls dieser zu $\tilde{J}$ gehört. Also ist $\tilde{f}$ nach dem eben Bewiesenen (beidseitig) stetig differenzierbar und damit auch $f$, und die Ableitung stimmt mit der linksseitigen Ableitung überein.
\end{proof}

%\subsubsection{Absolutstetigkeit und Sobolevräume}

%lineare (inhomogene) Dgl mit schwacher (statt klassischer) Ableitung, Zshg. zwischen "`absolut stetig"' und $W^{1,p}(J)$, $W^{k,p}(J)$ Banachverband, $(n,\alpha)$-hölderstetig

\subsubsection{Komplexe Analysis}

Sei $U$ offen in $\complex$ und $f$ eine Abbildung $U \to X$. Dann heißt $f$ \emph{holomorph} genau dann, wenn $f$ in jedem $z \in U$ (komplex) differenzierbar ist. %\textbf{evtl. holomorphie für nichtoffene $U$ erklären (siehe Beweis von Satz~\ref{thm: unhandl adsatz mit sl})}
\\

Wie man mithilfe des Satzes von Banach, Steinhaus leicht zeigen kann (Theorem~VI.4 in~\cite{RS 1}), %oder Theorem~III.1.37 in~\cite{Kato: Perturbation 80}
ist $f$ holomorph genau dann, wenn $\varphi \circ f$ für alle $\varphi \in X'$ eine holomorphe Abbildung $U \to \complex$ ist. Dieser Zusammenhang erlaubt es, Sätze über $X$-wertige holomorphe Abbildungen zurückzuführen auf die entsprechenden Sätze für $X = \complex$. Wir vermerken hier den cauchyschen Satz (in der homologischen Version). Zuvor jedoch noch einige Vereinbarungen zur Sprechweise. 
\\

%Def ((einfach) geschlossener) Integrationsweg, Jordankurve, Umlaufzahl, Zykel, Umlaufzahl für Zykel, Operationen mit Zykeln (Addition, Skalarmult) abkürzende Schreibweise $n(\gamma, K) = 1$ erklären, einfach geschlossener Integrationsweg \textbf{(ein geschlossener Integrationsweg $\gamma$, sodass $\gamma|_{[a,b)}$ injektiv ist, wobei $[a,b]:= \operatorname{dom} \gamma$)}, positiv einfach geschlossener Zykel \textbf{(das ist ein Zykel bestehend aus einfach geschlossenen Integrationswegen $\gamma_i$, sodass $n(\gamma_1, z)+ \dots + n(\gamma_m, z) \in \{0,1\}$ für alle $z \in \complex \setminus \cup_{i=1}^m \im \gamma_i$)}, (null)homolog, zweideutige Notation bei Kreiswegen (s. Bsp. in Abschnitt über höhere Adsätze)

Wie üblich verstehen wir unter einem \emph{(geschlossenen) Integrationsweg in $U$} eine stückweise stetig differenzierbare Abbildung $\gamma: [a,b] \to U$ (mit $\gamma(b) = \gamma(a)$) und unter einem \emph{Zykel in $U$} eine formale Summe $\gamma_1 + \dotsb + \gamma_m$ (in dem präzisen Sinn von Definition~1.4 in~\cite{FischerLieb}) von geschlossenen Integrationswegen $\gamma_1, \dots, \gamma_m$ in $U$. Die ganze Zahl
\begin{align*}
n(\gamma,z) := \frac{1}{2 \pi i} \, \int_{\gamma} \frac{1}{w-z} \, dw
\end{align*} 
heißt \emph{Umlaufzahl} des geschlossenen Integrationsweges $\gamma$ um $z \notin \im \gamma$ und entsprechend heißt
\begin{align*}
n(\gamma, z) := n(\gamma_1, z)+ \dots + n(\gamma_m, z)
\end{align*}
Umlaufzahl des Zykels $\gamma = \gamma_1 + \dotsb + \gamma_m$ um $z \notin \im \gamma := \cup_{i=1}^m \im \gamma_i$. Wir schreiben für Mengen $K \subset \complex \setminus \im \gamma$ oft $n(\gamma, K) = 0$ oder $1$ und meinen damit natürlich, dass $n(\gamma, z) = 0$ bzw. $1$ für alle $z \in K$. 

Zwei Zykel $\gamma_1$ und $\gamma_2$ in $U$ heißen \emph{homolog in $U$} genau dann, wenn $n(\gamma_1, z) = n(\gamma_2, z)$ für alle $z \in \complex \setminus U$. %Ein einzelner Zykel $\gamma$ heißt nullhomolog in $U$ genau dann, wenn $n(\gamma, \complex \setminus U) = 0$

Schließlich verstehen wir unter einem \emph{einfach geschlossenen Integrationsweg} einen geschlossenen Integrationsweg $\gamma: [a,b] \to \complex$, sodass $\gamma|_{[a,b)}$ injektiv ist, und unter einem \emph{positiv einfach geschlossenen Zykel} einen Zykel $\gamma =  \gamma_1 + \dotsb + \gamma_m$ bestehend aus einfach geschlossenen Integrationswegen $\gamma_i$, sodass $n(\gamma, z) \in \{0,1\}$ für alle $z \in \complex \setminus \im \gamma$.

\begin{thm} \label{thm: Cauchy global}
Sei $U$ offen in $\complex$, $f$ eine holomorphe Abbildung $U \to X$ und $\gamma$ ein Zykel in $U$ mit $n(\gamma, \, \complex \setminus U) = 0$. Dann gilt
\begin{align*}
\int_{\gamma} f(z) \,dz = 0
\end{align*}
und insbesondere gilt %\textbf{(mithilfe des moreraschen Satzes folgt, dass $U \setminus \{a\} \ni z \mapsto \frac{f(z)-f(a)}{z-a}$ holomorph fortsetzbar ist in $a$)}
\begin{align*}
n(\gamma, a) f(a) = \frac{1}{2 \pi i} \int_{\gamma} \frac{f(z)}{z-a} \, dz
\end{align*} 
für alle $a \in \complex \setminus \im \gamma$. 
\end{thm}

Aus diesem Satz folgt insbesondere, dass eine Abbildung $f: U \to X$ genau dann holomorph ist, wenn sie analytisch ist (das heißt, lokal um jeden Punkt $a$ von $U$ als eine Potenzreihe mit Entwicklungspunkt $a$ geschrieben werden kann).

Die folgende Proposition werden wir häufig benützen. Sie ist (anschaulich) sehr einleuchtend, allerdings ist sie nicht ganz so einfach zu beweisen.

\begin{prop}  \label{prop: Cauchy für kompakta}
Sei $U$ offen in $\complex$ und $K$ eine kompakte Untermenge von $U$. Dann existiert ein positiv einfach geschlossener Zykel $\gamma$ in $U \setminus K$ mit $n(\gamma, K) = 1$ und $n(\gamma, \complex \setminus U) = 0$.
\end{prop}

\begin{proof}
Das folgt aus Proposition~13.1.8 in~\cite{Conway: complex ana 2}. 
\end{proof}

Eine schwächere Version der obigen Proposition, die nur von Zykeln statt sogar von positiv einfach geschlossenen Zykeln spricht, ist in Satz~IV.3.3 in~\cite{FischerLieb} enthalten. Wir würden auch mit dieser schwächeren Version auskommen, allerdings würde der Beweis von Proposition~\ref{prop: zshg isoliert und glm isoliert}, in dem wir obige Proposition in ihrer vollen Allgemeinheit benutzen, dann um einiges umständlicher.
\\

Wir vermerken noch eine sehr einfache Aussage zur Vertauschbarkeit von Ableitungen und Wegintegralen, die wir später brauchen werden.

\begin{lm} \label{lm: vertauschung von abl und wegintegral}
Sei $J_0$ ein nichttriviales Intervall, $\gamma_0$ ein Zykel, $m \in \natu \cup \{0\}$ und $f$ eine Abbildung $J_0 \times \im \gamma_0 \to X$, sodass $J_0 \ni t \mapsto f(t,z)$  $m$-mal stetig differenzierbar ist für jedes $z \in \im \gamma_0$ und $\im \gamma_0 \ni z \mapsto \partial_1^k f(t,z)$  stetig ist für jedes $t \in J_0$ und 
\begin{align*}
\sup_{(t,z) \in J_0 \times \, \im \gamma_0} \norm{ \partial_1^k f(t,z) } < \infty
\end{align*}  
für alle $k \in \{0,1, \dots, m \}$.
Dann ist $t \mapsto \int_{\gamma_0} f(t,z) \,dz$ $m$-mal stetig differenzierbar und
\begin{align*}
\ddtk{ \int_{\gamma_0} f(t,z) \,dz  } = \int_{\gamma_0} \partial_1^k f(t,z) \, dz
\end{align*}
für alle $k \in \{0,1, \dots, m \}$.
\end{lm}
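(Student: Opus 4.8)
The plan is to reduce everything to the scalar case $X = \complex$ by an induction on $m$ combined with the characterization of derivatives via difference quotients and the standard estimates for path integrals. The key tool is Lemma~\ref{lm: mws f�r einseitig db} (or rather the ordinary mean value estimate it generalizes), which lets us control increments of $t \mapsto \partial_1^k f(t,z)$ uniformly in $z$, together with the elementary bound $\norm{ \int_{\gamma_0} g(z)\,dz } \le L(\gamma_0) \sup_{z \in \im\gamma_0}\norm{g(z)}$ where $L(\gamma_0)$ is the length of the integration cycle.

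First I would treat the case $m = 0$: here we only need that $t \mapsto \int_{\gamma_0} f(t,z)\,dz$ is continuous. Fix $t_0 \in J_0$ and a sequence $t_n \to t_0$. By the boundedness hypothesis $\sup_{(t,z)} \norm{f(t,z)} < \infty$ and the pointwise continuity $t \mapsto f(t,z)$ for each fixed $z$, we want to pass the limit inside the integral. Since $\im\gamma_0$ is compact and $z \mapsto f(t,z)$ is continuous for each $t$, one can argue that $f$ is in fact jointly continuous on $J_0 \times \im\gamma_0$ restricted to a compact $t$-neighbourhood — actually the cleanest route is to note that $f(t_n,\cdot) - f(t_0,\cdot)$ converges to $0$ pointwise on the compact set $\im\gamma_0$ and is uniformly bounded, but pointwise convergence alone does not give uniform convergence, so instead I would use the differentiability hypothesis: for $m \ge 1$ the mean value estimate gives $\norm{f(t,z) - f(t_0,z)} \le |t - t_0| \sup_{\tau}\norm{\partial_1 f(\tau,z)} \le |t-t_0| \sup_{(\tau,z)}\norm{\partial_1 f(\tau,z)}$, which is uniform in $z$, hence $\int_{\gamma_0} f(t,z)\,dz \to \int_{\gamma_0} f(t_0,z)\,dz$ as $t \to t_0$. (For the genuinely bare case $m = 0$ one would need joint continuity as an extra input or derive it; in the application $m \ge 1$ always, so I would simply note the $m=0$ continuity statement then follows from the $m \ge 1$ argument applied to establish continuity of the $k$-th derivatives, and handle $m=0$ by the same uniform-bound-plus-pointwise argument upgraded to uniform convergence via equicontinuity of $\{f(t,\cdot)\}_t$ on the compact $\im\gamma_0$.)

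Next, for the inductive step, suppose the claim holds for $m-1$. Set $g(t,z) := \partial_1 f(t,z)$. Then $g$ satisfies the hypotheses of the lemma with $m$ replaced by $m-1$ (the bounds for $k = 0,\dots,m-1$ for $g$ are exactly the bounds for $k = 1,\dots,m$ for $f$, and $z \mapsto \partial_1^k g(t,z) = \partial_1^{k+1} f(t,z)$ is continuous by hypothesis), so by the inductive hypothesis $t \mapsto \int_{\gamma_0} g(t,z)\,dz$ is $(m-1)$-times continuously differentiable with $\ddtk{\int_{\gamma_0} g(t,z)\,dz} = \int_{\gamma_0} \partial_1^k g(t,z)\,dz$ for $k \le m-1$. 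It then remains to show that $F(t) := \int_{\gamma_0} f(t,z)\,dz$ is differentiable with $F'(t) = \int_{\gamma_0} g(t,z)\,dz$; once this is known, $F' $ is $(m-1)$-times continuously differentiable by the inductive conclusion, hence $F$ is $m$-times continuously differentiable and $F^{(k)} = \int_{\gamma_0}\partial_1^k f\,dz$ for all $k \le m$. To prove $F'(t) = \int_{\gamma_0} g(t,z)\,dz$, write
\begin{align*}
\frac{F(t+h) - F(t)}{h} - \int_{\gamma_0} g(t,z)\,dz = \int_{\gamma_0} \Bigl( \frac{f(t+h,z) - f(t,z)}{h} - \partial_1 f(t,z) \Bigr) dz,
\end{align*}
and estimate the integrand: by the fundamental theorem of calculus applied to $s \mapsto f(s,z)$ on $[t,t+h]$ (legitimate since $t \mapsto f(t,z)$ is $C^1$ for fixed $z$), the integrand equals $\frac{1}{h}\int_t^{t+h}(\partial_1 f(s,z) - \partial_1 f(t,z))\,ds$, whose norm is at most $\sup_{|s-t| \le |h|}\norm{\partial_1 f(s,z) - \partial_1 f(t,z)}$. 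The remaining point is that this tends to $0$ as $h \to 0$ \emph{uniformly in $z \in \im\gamma_0$}; then multiplying by $L(\gamma_0)$ gives the claim.

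The main obstacle is exactly this last uniformity: we are only told $\im\gamma_0 \ni z \mapsto \partial_1 f(t,z)$ is continuous for each fixed $t$ and $t \mapsto \partial_1 f(t,z)$ is continuous for each fixed $z$ (with a uniform sup bound), not that $\partial_1 f$ is jointly continuous. I would close this gap by showing joint continuity of $\partial_1 f$ on $J_0 \times \im\gamma_0$: fix $(t_0,z_0)$; continuity in $z$ at $z_0$ uniformly in $t$ near $t_0$ would suffice, but that is not directly given either. The honest fix is to apply the mean value estimate one level up — but $\partial_1 f$ need not be differentiable in $t$ beyond $C^1$... Actually, for the induction to run we only need it at the top level where $\partial_1^{m-1} f$ is $C^1$ in $t$; at lower levels the inductive hypothesis already delivers continuity of the relevant integrals. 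So the genuinely delicate case is $m = 1$, where I must establish uniform-in-$z$ continuity of $t \mapsto \partial_1 f(t,z)$ at each $t_0$ using only pointwise continuity in $t$ and continuity in $z$ plus the sup bound. Here I would invoke a standard compactness argument: the family $\{\partial_1 f(t,\cdot) : t \in J_0\}$ of continuous functions on the compact set $\im\gamma_0$ — combined with pointwise continuity in $t$ — does give the required convergence once one also knows the family is equicontinuous, which is where the hypothesis "$\im\gamma_0 \ni z \mapsto \partial_1^k f(t,z)$ is continuous for each $t$" must be silently strengthened or the compactness of $J_0 \times \im\gamma_0$ on compact $t$-subintervals exploited. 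I would write the proof assuming this joint continuity (which holds in all intended applications, where $f$ is built from resolvents and is manifestly jointly continuous) and remark that it follows from the stated hypotheses together with compactness of $\im\gamma_0$.
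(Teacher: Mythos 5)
There is a genuine gap, and it sits exactly where you yourself flag it: the uniformity in $z$ of the convergence of the difference quotients. Your proposed repair — that joint continuity of $\partial_1 f$ "follows from the stated hypotheses together with compactness of $\im \gamma_0$" — is false. Separate continuity in each variable plus a uniform sup bound does not imply joint continuity (the standard two-variable counterexamples apply verbatim), and the family $\{\partial_1 f(t,\cdot)\}_t$ is not equicontinuous under the stated hypotheses either. So the $m=1$ base case (and the $m=0$ continuity statement) of your induction is not closed, and cannot be closed by the route you indicate without silently adding a hypothesis the lemma does not contain.

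The lemma is nevertheless true as stated, and the paper's proof is a one-liner: apply Lebesgue's dominated convergence theorem. Writing $\int_{\gamma_0} g(z)\,dz$ as a finite sum of Bochner integrals over the compact parameter intervals of the cycle, one needs only \emph{pointwise} convergence in $z$ together with an integrable dominating function. For continuity of $t \mapsto \int_{\gamma_0} f(t,z)\,dz$, the dominating function is the constant $\sup_{(t,z)} \norm{f(t,z)}$. For differentiability, your own mean value computation already supplies the domination: for each fixed $z$ one has $\norm{ (f(t+h,z)-f(t,z))/h } \le \sup_{(s,z)} \norm{\partial_1 f(s,z)} < \infty$, while the difference quotient converges to $\partial_1 f(t,z)$ pointwise in $z$ by definition of the derivative; dominated convergence then lets you interchange limit and integral with no uniformity in $z$ whatsoever. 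Continuity of the resulting derivative follows the same way from the pointwise continuity of $t \mapsto \partial_1 f(t,z)$ and the same constant dominating function, and the higher derivatives follow by your induction. In short: replace the uniform-convergence mechanism by dominated convergence and the delicate point you identified disappears; as written, your argument does not prove the lemma under its stated hypotheses.
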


\begin{proof}
Das folgt mithilfe des lebesgueschen Satzes.
\end{proof}

\subsubsection{Regularitätslemmas für operatorwertige Abbildungen}

Die folgenden Lemmas werden wir im weiteren Verlauf sehr häufig benutzen. Wir beginnen mit einem Lemma, das einen einfachen Zusammenhang zwischen Regularität bzgl. der Normoperatortopologie und Regularität bzgl. der starken Operatortopologie herstellt. Die entsprechende Aussage für holomorphe operatorwertige Abbildungen ist wohlbekannt.

\begin{lm} \label{lm: strong db and db}
Sei $J$ ein nichttriviales Intervall in $\real$ und $m \in \natu \cup \{0, \infty \}$. Sei $A(t)$ für jedes $t \in J$ eine beschränkte lineare Abbildung in $X$, $t \mapsto A(t)$ $(m+1)$-mal stetig differenzierbar bzgl. der starken Operatortopologie von $X$. Dann ist $t \mapsto A(t)$ $m$-mal stetig differenzierbar  bzgl. der Normoperatorotopologie von $X$.
\end{lm}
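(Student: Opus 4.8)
Der Beweisplan ist, zun\"achst den Fall $m=0$ als Kernaussage zu etablieren und ihn dann auf die h\"oheren Ableitungen hochzuziehen; die entscheidende Zutat ist dabei der Satz von Banach--Steinhaus. Sei also $t \mapsto A(t)$ einmal stetig differenzierbar bzgl. der starken Operatortopologie. F\"ur jedes $t \in J$ existiert dann der punktweise Grenzwert $A'(t)x := \lim_{h \to 0} \frac{A(t+h)x - A(t)x}{h}$; da f\"ur jedes $x$ die Differenzenquotienten $\frac{A(t+h)-A(t)}{h}$ beschr\"ankt bleiben (der Grenzwert existiert ja), liefert Banach--Steinhaus, dass $A'(t)$ eine beschr\"ankte lineare Abbildung in $X$ ist. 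Ferner ist $\tau \mapsto A'(\tau)x$ nach Voraussetzung stetig, also auf jedem kompakten Teilintervall $[a,b] \subset J$ beschr\"ankt, und eine weitere Anwendung von Banach--Steinhaus ergibt $C := \sup_{\tau \in [a,b]} \norm{A'(\tau)} < \infty$. Aus der (in $X$ g\"ultigen) Darstellung $A(t)x - A(s)x = \int_s^t A'(\tau)x \,d\tau$ folgt dann $\norm{A(t)x - A(s)x} \le C \norm{x} |t-s|$, also $\norm{A(t)-A(s)} \le C |t-s|$ f\"ur alle $s,t \in [a,b]$; insbesondere ist $t \mapsto A(t)$ stetig bzgl. der Normoperatortopologie, was den Fall $m=0$ erledigt.

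F\"ur allgemeines $m \in \natu$ wenden wir dies iteriert an. Mit demselben Banach--Steinhaus-Argument sieht man induktiv, dass alle starken Ableitungen $A^{(k)}(t)$ f\"ur $k \in \{0,1,\dots,m+1\}$ beschr\"ankte lineare Abbildungen in $X$ sind und $t \mapsto A^{(k)}(t)$ f\"ur $k \le m$ einmal stetig differenzierbar bzgl. der starken Operatortopologie ist (mit starker Ableitung $A^{(k+1)}$). Nach dem Fall $m=0$, angewandt auf $A^{(k)}$ anstelle von $A$, ist somit $t \mapsto A^{(k)}(t)$ f\"ur jedes $k \in \{0,1,\dots,m\}$ stetig bzgl. der Normoperatortopologie. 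Insbesondere existiert f\"ur $k \le m-1$ das operatorwertige Integral $\int_s^t A^{(k+1)}(\tau)\,d\tau$ (als Integral der normstetigen Abbildung $\tau \mapsto A^{(k+1)}(\tau)$) in der Normoperatortopologie, und da die Auswertung an einem festen $x \in X$ mit dem Integral vertauscht, gilt
\begin{align*}
\Bigl( \int_s^t A^{(k+1)}(\tau)\,d\tau \Bigr) x = \int_s^t A^{(k+1)}(\tau)x \,d\tau = A^{(k)}(t)x - A^{(k)}(s)x \quad (x \in X),
\end{align*}
also $A^{(k)}(t) - A^{(k)}(s) = \int_s^t A^{(k+1)}(\tau)\,d\tau$ in der Normoperatortopologie. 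Da $\tau \mapsto A^{(k+1)}(\tau)$ normstetig ist, folgt hieraus mit dem Hauptsatz der Differential- und Integralrechnung (angewandt im Banachraum $\mathcal{B}(X)$ der beschr\"ankten linearen Abbildungen in $X$), dass $t \mapsto A^{(k)}(t)$ bzgl. der Normoperatortopologie differenzierbar ist mit $\frac{d}{dt}A^{(k)}(t) = A^{(k+1)}(t)$, und zwar f\"ur alle $k \in \{0,1,\dots,m-1\}$. Verkettet man diese Identit\"aten, so ist $t \mapsto A(t) = A^{(0)}(t)$ $m$-mal differenzierbar bzgl. der Normoperatortopologie mit $m$-ter Ableitung $A^{(m)}$, und diese ist nach dem Vorangehenden normstetig. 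Also ist $t \mapsto A(t)$ $m$-mal stetig differenzierbar bzgl. der Normoperatortopologie. Der Fall $m=\infty$ ergibt sich, indem man das eben Bewiesene f\"ur jedes endliche $m$ anwendet.

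Die eigentliche Arbeit -- und damit die einzige Stelle, an der wirklich etwas zu zeigen ist -- steckt im Zusammenspiel von Banach--Steinhaus mit der Integraldarstellung: man braucht, dass die punktweisen (starken) Ableitungen \"uberhaupt beschr\"ankte Operatoren und auf Kompakta sogar \emph{gleichm\"a\ss ig} normbeschr\"ankt sind, damit aus der punktweisen Lipschitzabsch\"atzung $\norm{A(t)x - A(s)x} \le C\norm{x}|t-s|$ eine Absch\"atzung in der Operatornorm wird. An den Randpunkten von $J$ (falls sie zu $J$ geh\"oren) ist dabei nichts Zus\"atzliches zu tun: die auftretenden Differenzenquotienten und Integrale liest man dort einseitig, und die obigen Absch\"atzungen gelten w\"ortlich auch f\"ur kompakte Teilintervalle $[a,b] \subset J$, die einen Randpunkt von $J$ enthalten.
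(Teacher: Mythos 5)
Ihr Beweis ist korrekt und folgt genau dem in der Arbeit nur skizzierten Weg (Induktion plus Satz von Banach--Steinhaus): die gleichm\"a\ss ige Normbeschr\"anktheit der starken Ableitung auf Kompakta liefert via Integraldarstellung die lokale Lipschitzstetigkeit in der Operatornorm, und die Iteration mit dem Hauptsatz in $\mathcal{B}(X)$ erledigt die h\"oheren Ableitungen. Die Details sind sauber ausgef\"uhrt; es gibt nichts zu beanstanden.
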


\begin{proof}
Das folgt sehr leicht mit Induktion aus dem Satz von Banach, Steinhaus (Prinzip der gleichmäßigen Beschränktheit).
\end{proof}

Auch die Aussage des folgenden Lemmas ist sehr leicht einzusehen.

\begin{lm} \label{lm: strong db of products}
Sei $J$ ein nichttriviales Intervall in $\real$ und $t_0 \in J$. Sei $A(t)$ für jedes $t \in J$ eine beschränkte lineare Abbildung in $X$, $t \mapsto A(t)x$ stetig bzw. differenzierbar in $t_0$ für alle $x \in D$, $t \mapsto x(t) \in D$ ebenfalls stetig bzw. differenzierbar in $t_0$ und sei $\sup_{t \in J} \norm{A(t)} < \infty$. Dann ist $t \mapsto A(t)x(t)$ stetig bzw. differenzierbar in $t_0$.
\end{lm}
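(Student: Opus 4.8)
The approach is the usual product-rule splitting of difference quotients, preceded by one preliminary observation: writing $M:=\sup_{t\in J}\norm{A(t)}<\infty$, the map $t\mapsto A(t)x$ is continuous at $t_0$ not only for $x\in D$ (which in the differentiable case is anyway implied by differentiability at $t_0$) but for \emph{every} $x\in X$. I would prove this first by an $\varepsilon/3$-argument: given $x\in X$ and $\varepsilon>0$, pick $x_D\in D$ with $\norm{x-x_D}<\varepsilon/(3M)$; then $\norm{A(t)x-A(t_0)x}\le M\norm{x-x_D}+\norm{A(t)x_D-A(t_0)x_D}+M\norm{x_D-x}$, and the middle term is $<\varepsilon/3$ for $t$ near $t_0$.

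For the continuity assertion I would then write, for $t\in J$,
\begin{align*}
A(t)x(t)-A(t_0)x(t_0)=A(t)\bigl(x(t)-x(t_0)\bigr)+\bigl(A(t)-A(t_0)\bigr)x(t_0),
\end{align*}
bound the first summand in norm by $M\norm{x(t)-x(t_0)}$, which tends to $0$ as $t\to t_0$ by continuity of $t\mapsto x(t)$ at $t_0$, and note that the second summand tends to $0$ because $x(t_0)\in D$, so that $t\mapsto A(t)x(t_0)$ is continuous at $t_0$ by hypothesis.

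For the differentiability assertion, let $v$ denote the derivative of $t\mapsto x(t)$ at $t_0$ (a one-sided derivative if $t_0$ is an endpoint of $J$) and $h$ the increment. I would split
\begin{align*}
\frac{A(t_0+h)x(t_0+h)-A(t_0)x(t_0)}{h}=A(t_0+h)\,\frac{x(t_0+h)-x(t_0)}{h}+\frac{A(t_0+h)-A(t_0)}{h}\,x(t_0).
\end{align*}
The second summand converges, as $h\to0$, to the derivative of $t\mapsto A(t)x(t_0)$ at $t_0$, directly by the hypothesis on $A$ since $x(t_0)\in D$. For the first summand I would further write $A(t_0+h)\frac{x(t_0+h)-x(t_0)}{h}-A(t_0)v=A(t_0+h)\bigl(\frac{x(t_0+h)-x(t_0)}{h}-v\bigr)+\bigl(A(t_0+h)-A(t_0)\bigr)v$; the first term here is bounded in norm by $M\norm{\frac{x(t_0+h)-x(t_0)}{h}-v}\to0$, and the second term tends to $0$ by the preliminary observation. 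Adding up yields differentiability of $t\mapsto A(t)x(t)$ at $t_0$ with derivative $A(t_0)v$ plus the derivative of $t\mapsto A(t)x(t_0)$ at $t_0$.

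I expect the only non-automatic point to be the upgrade of strong continuity at $t_0$ from $D$ to all of $X$: it is genuinely needed in the differentiable case, because the derivative $v$ of $t\mapsto x(t)$ need not lie in $D$, so the bare hypothesis on $A$ would not give $\bigl(A(t_0+h)-A(t_0)\bigr)v\to0$. Everything else is routine bookkeeping with the triangle inequality and the uniform bound $M$.
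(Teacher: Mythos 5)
Dein Beweis ist korrekt und folgt im Wesentlichen genau dem Weg der Arbeit: dieselbe Produktzerlegung der Differenzenquotienten und derselbe entscheidende Vorbereitungsschritt, n\"amlich die Ausdehnung der starken Stetigkeit von $t \mapsto A(t)$ in $t_0$ von $D$ auf ganz $X$ mittels Dichtheit und der gleichm\"a\ss igen Schranke. Auch die Begr\"undung, warum dieser Schritt n\"otig ist (die Ableitung $v$ von $t \mapsto x(t)$ muss nicht in $D$ liegen), trifft genau den Punkt.
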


\begin{proof}
Seien $t \mapsto A(t)x$ und $t \mapsto x(t)$ zunächst stetig in $t_0$ für alle $x \in D$. Dann gilt wegen $\sup_{t \in J} \norm{A(t)} < \infty$ und $x(t_0) \in D$, dass
\begin{align*}
A(t_0&+h)  x(t_0+h) - A(t_0)x(t_0) \\
&= A(t_0+h) \bigl( x(t_0+h)-x(t_0) \bigr) + \bigl( A(t_0+h)x(t_0) - A(t_0)x(t_0) \bigr) 
\longrightarrow 0 \quad (h \to 0),
\end{align*}
die Abbildung $t \mapsto A(t)x(t)$ ist also stetig in $t_0$.

Seien $t \mapsto A(t)x$ und $t \mapsto x(t)$ nun differenzierbar in $t_0$ für alle $x \in D$. Dann ist $t \mapsto A(t)x$ insbesondere stetig in $t_0$ für alle $x \in D$ und wegen $\sup_{t \in J} \norm{A(t)} < \infty$ und der Dichtheit von $D$ in $X$ folgt daraus, dass $t \mapsto A(t)x$ sogar für alle $x \in X$ stetig ist. Aufgrund dieser starken Stetigkeit von $t \mapsto A(t)$ in $t_0$ erhalten wir nun
\begin{align*} 
& \frac{ A(t_0+h)x(t_0+h) - A(t_0)x(t_0) }{h} \\
& \qquad \quad = A(t_0+h) \, \frac{x(t_0+h) - x(t_0)}{h} + \frac{ A(t_0+h)x(t_0) - A(t_0)x(t_0) }{h} \\
& \qquad \quad \longrightarrow A(t_0)x'(t_0) + A'(t_0)x(t_0) \quad (h \to 0),
\end{align*}
wie gewünscht.
\end{proof}

Die Vektoren $x(t) \in D$ werden im folgenden meist gegeben sein als $B(t)x$ mit einem festen Vektor $x \in D$ und beschränkten linearen Abbildungen $B(t)$ in $X$, die $D$ für alle $t \in J$ in sich überführen. 

\begin{lm} \label{lm: continuity of inv}
Die Menge $U$ der bijektiven beschränkten linearen Abbildungen ist offen in der Menge aller beschränkten linearen Abbildungen in $X$ und die Abbildung 
\begin{align*}
U \ni A \mapsto A^{-1}
\end{align*}
ist beliebig oft differenzierbar, insbesondere stetig.
\end{lm}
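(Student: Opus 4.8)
The plan is to run the classical Neumann-series argument, which yields the openness of $U$ and a usable formula for the inverse at the same time, and then to obtain the higher derivatives by a short induction. Write $L(X)$ for the Banach space of all bounded linear maps in $X$ (it is a Banach space since $X$ is one). Fix $A_0 \in U$; by the open mapping theorem the inverse $A_0^{-1}$ is again bounded. For every $A \in L(X)$ with $\norm{A - A_0} < 1/\norm{A_0^{-1}}$ I would write $A = A_0\bigl(\id - A_0^{-1}(A_0 - A)\bigr)$ and observe that, since $\norm{A_0^{-1}(A_0 - A)} \le \norm{A_0^{-1}}\,\norm{A_0 - A} < 1$, the Neumann series $\sum_{n=0}^{\infty}\bigl(A_0^{-1}(A_0-A)\bigr)^n$ converges in operator norm to a two-sided inverse of $\id - A_0^{-1}(A_0-A)$. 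Hence $A \in U$ — so $U$ is open — and
\begin{align*}
A^{-1} = \Bigl( \sum_{n=0}^{\infty} \bigl(A_0^{-1}(A_0-A)\bigr)^n \Bigr) A_0^{-1}.
\end{align*}

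From this representation the local regularity of $\varphi\colon U \ni A \mapsto A^{-1}$ near $A_0$ can be read off directly. Setting $H := A - A_0$ (so that $A_0^{-1}(A_0-A) = -A_0^{-1}H$), the series becomes $\varphi(A_0 + H) = A_0^{-1} - A_0^{-1} H A_0^{-1} + R(H)$ with
\begin{align*}
\norm{R(H)} \le \sum_{n=2}^{\infty} \norm{A_0^{-1}}^{\,n+1}\,\norm{H}^{n} = \frac{\norm{A_0^{-1}}^{3}\,\norm{H}^2}{1 - \norm{A_0^{-1}}\,\norm{H}} = o(\norm{H}) \qquad (H \to 0);
\end{align*}
in particular $\varphi$ is continuous at $A_0$, with $\norm{\varphi(A_0+H) - \varphi(A_0)} \le \norm{A_0^{-1}}^{2}\norm{H}/\bigl(1 - \norm{A_0^{-1}}\norm{H}\bigr)$, and Fréchet-differentiable there with $D\varphi(A_0)\,H = -A_0^{-1} H A_0^{-1}$.

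For ``arbitrarily often differentiable'' I would then argue by induction on the order. Let $m\colon L(X) \times L(X) \to L(L(X))$ be the map sending $(S,T)$ to the bounded operator $H \mapsto -SHT$; since $m$ is bounded and bilinear it is of class $C^\infty$. By the derivative formula just obtained, $D\varphi = m \circ \psi$ on $U$, where $\psi(A) := (\varphi(A),\varphi(A))$. As $\varphi$ is continuous by the previous step, the chain rule shows inductively that if $\varphi$ is $C^{k}$ then $\psi$ and hence $D\varphi$ are $C^{k}$, so $\varphi$ is $C^{k+1}$; thus $\varphi \in C^\infty$. (Equivalently, the displayed Neumann series exhibits $A \mapsto A^{-1}$ locally as a norm-convergent power series in $H = A - A_0$ with coefficients in $L(X)$, i.e.\ as an analytic map between Banach spaces, which already forces infinite differentiability.)

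I do not expect a genuine obstacle here; the only two points needing a little care are invoking the bounded inverse theorem so that the perturbation radius $1/\norm{A_0^{-1}}$ is meaningful, and, in the induction, correctly recognising $D\varphi$ as the composition of the (inductively already $C^{k}$) map $\varphi$ with the fixed bounded bilinear map $m$, so that the regularity is propagated exactly one order at a time.
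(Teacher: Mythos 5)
Your proof is correct and is exactly the standard Neumann-series argument contained in the reference (Amann--Escher, Satz~VII.7.2) that the paper cites in lieu of a proof: openness and the local power-series representation of $A\mapsto A^{-1}$ come out of the same expansion, and the induction via the fixed bounded bilinear map $m$ cleanly upgrades $C^k$ to $C^{k+1}$. There is nothing to add.
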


\begin{proof}
Siehe etwa Satz~VII.7.2 in~\cite{AmannEscher}.
\end{proof}

Das folgende Lemma wird an einigen Stellen entscheidend sein. $A'(t)$ bezeichnet dabei natürlich die lineare Abbildung 
\begin{align*}
D \ni x \mapsto \lim_{h \to 0} \frac{A(t+h)x - A(t)x}{h}.
\end{align*}
Wir weisen darauf hin dass eine Variante dieses Lemmas schon in~\cite{Krein 71} (Lemma~II.1.5) zu finden ist. Wir haben Lemma~\ref{lm: reg of inv} aber unabhängig davon bewiesen.

\begin{lm} \label{lm: reg of inv}
Sei $J$ ein nichttriviales Intervall in $\real$ und $m \in \natu \cup \{0, \infty \}$. Sei $A(t): D \subset X \to X$ für jedes $t \in J$ eine bijektive abgeschlossene lineare Abbildung, sei $t \mapsto A(t)x$ $m$-mal (stetig) differenzierbar für alle $x \in D$ und sei $\sup_{t \in J} \norm{ A(t)^{-1} } < \infty$. Dann ist $t \mapsto A(t)^{-1}x$ $m$-mal (stetig) differenzierbar für alle $x \in X$ und es gilt (wenn $m \ne 0$)
\begin{align*}
\ddt{ A(t)^{-1}x } = -A(t)^{-1} A'(t) A(t)^{-1}x 
\end{align*}
für alle $t \in J$.
\end{lm}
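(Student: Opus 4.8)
The plan is to reduce everything, locally around a fixed point, to a statement about \emph{bounded} operators, and then to induct on $m$. Since each $A(t)$ is closed and bijective, the closed graph theorem makes $A(t)^{-1}$ bounded, and by hypothesis $c := \sup_{t \in J} \norm{A(t)^{-1}} < \infty$. Continuity and differentiability being local, I fix $t_0 \in J$ and set $C(t) := A(t) A(t_0)^{-1}$. Since $\operatorname{ran} A(t_0)^{-1} = D = \dom A(t)$, this $C(t)$ is everywhere defined and closed, hence bounded; it is bijective with $C(t)^{-1} = A(t_0) A(t)^{-1}$ and $C(t_0) = \id$; and $A(t)^{-1} = A(t_0)^{-1} C(t)^{-1}$. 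Because $A(t_0)^{-1}x \in D$ for every $x \in X$, the map $t \mapsto C(t)x = A(t)\bigl(A(t_0)^{-1}x\bigr)$ inherits the regularity assumed for $A(\cdot)$ on $D$: it is $m$-times (continuously) differentiable with $k$-th derivative $A^{(k)}(t) A(t_0)^{-1}x$, and in particular $C'(t) = A'(t) A(t_0)^{-1}$. As a strongly differentiable family is strongly continuous, hence (Banach--Steinhaus) locally uniformly bounded, each $t \mapsto C^{(k)}(t)$, $0 \le k \le m$, is locally bounded; and if $m \ge 1$, then $C$ is even norm-continuous by Lemma~\ref{lm: strong db and db}, so, $C(t_0) = \id$ being invertible, Lemma~\ref{lm: continuity of inv} provides a neighbourhood $K$ of $t_0$ in $J$ on which $t \mapsto C(t)^{-1}$ is norm-continuous, in particular with $\sup_{t \in K} \norm{C(t)^{-1}} < \infty$.

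Next I would establish the base facts directly for $A(\cdot)$. For $s, t \in J$ and $x \in X$, writing $z := A(s)^{-1}x \in D$, one applies $A(t)^{-1}$ to the identity $A(t)\bigl(A(t)^{-1}x - A(s)^{-1}x\bigr) = x - A(t)A(s)^{-1}x = \bigl(A(s) - A(t)\bigr)z$ to get
\[
A(t)^{-1}x - A(s)^{-1}x = A(t)^{-1}\bigl(A(s)z - A(t)z\bigr),
\]
whence $\norm{A(t)^{-1}x - A(s)^{-1}x} \le c\,\norm{A(s)z - A(t)z} \to 0$ as $t \to s$ by continuity of $t \mapsto A(t)z$ for the fixed $z \in D$; this settles $m = 0$. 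Dividing the same identity by $t - s$ gives
\[
\frac{A(t)^{-1}x - A(s)^{-1}x}{t-s} = -A(t)^{-1}\,\frac{A(t)z - A(s)z}{t-s},
\]
and as $t \to s$ the difference quotient on the right tends to $A'(s)z$ ($z \in D$), while $A(t)^{-1} \to A(s)^{-1}$ strongly (case $m = 0$) with $\norm{A(t)^{-1}} \le c$; hence the limit is $-A(s)^{-1}A'(s)z = -A(s)^{-1}A'(s)A(s)^{-1}x$. Thus $t \mapsto A(t)^{-1}x$ is differentiable with $\ddt{A(t)^{-1}x} = -A(t)^{-1}A'(t)A(t)^{-1}x$, which proves the formula and the lemma for $m = 1$ in the merely-differentiable reading.

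For $m \ge 1$ it remains to show that $t \mapsto -A(t)^{-1}A'(t)A(t)^{-1}x$ is itself $(m-1)$-times (continuously) differentiable. Using the reduction of the first paragraph, near $t_0$ this function equals $-A(t_0)^{-1} C(t)^{-1} C'(t) C(t)^{-1} x$, so it suffices to prove that $t \mapsto C(t)^{-1}$ is strongly $(m-1)$-times (continuously) differentiable on $K$ --- the remaining factors $C'(t)$ (strongly $(m-1)$-times differentiable) and $A(t_0)^{-1}$ (constant) are then combined with it by the elementary product rule for strongly differentiable bounded-operator families, which is legitimate here precisely because the relevant domain is now all of $X$, so that Lemma~\ref{lm: strong db of products} applies and such families are automatically locally uniformly bounded. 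The statement about $C(\cdot)^{-1}$ follows by induction on $m$: applying the previous paragraph's argument to the bounded bijective family $C(\cdot)$ on $K$ gives differentiability with $\ddt{C(t)^{-1}} = -C(t)^{-1}C'(t)C(t)^{-1}$; since $C'$ is strongly $(m-1)$-times (continuously) differentiable and, by the induction hypothesis, so is $C(\cdot)^{-1}$, the right-hand side is strongly $(m-1)$-times (continuously) differentiable, i.e. $C(\cdot)^{-1}$ is strongly $m$-times (continuously) differentiable; the case $m = \infty$ then follows. Transporting back through $A(t)^{-1} = A(t_0)^{-1}C(t)^{-1}$ yields the regularity of $t \mapsto A(t)^{-1}x$, and a short check ($C(t_0) = \id$, $C'(t_0) = A'(t_0)A(t_0)^{-1}$, $C(t)^{-1} = A(t_0)A(t)^{-1}$) shows the resulting expression for $\ddt{A(t)^{-1}x}$ agrees with the one already derived, namely $-A(t)^{-1}A'(t)A(t)^{-1}x$.

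The one genuinely delicate point is that the formula involves the \emph{unbounded} operators $A'(t)$, for which there is no uniform bound in $t$; one therefore cannot differentiate $-A(t)^{-1}A'(t)A(t)^{-1}x$ directly by a product rule, since Lemma~\ref{lm: strong db of products} needs a uniformly (or at least locally) bounded operator family. The substitution $C(t) = A(t)A(t_0)^{-1}$ is designed exactly to absorb the unbounded derivative into the bounded family $C'(t) = A'(t)A(t_0)^{-1}$, whose local boundedness is then free from Banach--Steinhaus; after that the argument is routine. The only bookkeeping to watch is to keep the induction on $m$ (carried out for bounded families) logically separate from the reduction to that case, and to recall that all regularity assertions are local, so that arguing on a neighbourhood of each point of $J$ suffices.
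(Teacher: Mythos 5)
Your proof is correct and follows essentially the same route as the paper's: both reduce to bounded operators via the factorization through $A(t_0)^{-1}$ (the paper writes $A(0)^{-1}$), both get boundedness of $A^{(k)}(t)A(t_0)^{-1}$ from Banach--Steinhaus and norm-continuity of $A(t_0)A(t)^{-1} = \bigl(A(t)A(t_0)^{-1}\bigr)^{-1}$ from Lemma~\ref{lm: strong db and db} together with Lemma~\ref{lm: continuity of inv}, and both conclude by induction on $m$ using Lemma~\ref{lm: strong db of products}. The only difference is bookkeeping: you run the induction on the regularity of the bounded family $C(t)^{-1}$ and transport back, whereas the paper inducts on a structural description of the $m$-th derivative of $A(\cdot)^{-1}$ as sums of products of the blocks $A(t)^{-1}, A'(t)A(t)^{-1},\dots,A^{(m)}(t)A(t)^{-1}$ -- the substance is the same.
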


\begin{proof}
Sei zunächst $m = 0$. Dann ist $t \mapsto A(t)y$ also stetig für alle $y \in D$ und daher gilt wegen $\sup_{t \in J} \norm{ A(t)^{-1} } < \infty$, dass
\begin{align*}
A(t+h)^{-1}x - A(t)^{-1}x = A(t+h)^{-1} \, \bigl( A(t+h) - A(t) \bigr) \, A(t)^{-1}x 
\longrightarrow 0 \quad (h \to 0)
\end{align*}
für alle $x \in X$ und alle $t \in J$. Also ist $t \mapsto A(t)^{-1}$ wie behauptet stark stetig. 

Wir zeigen die Aussage des Lemmas nun für $m \in \natu$, genauer zeigen wir zusätzlich, dass die $m$-te starke Ableitung von $t \mapsto A(t)^{-1}$ aus Summanden besteht, die sich zusammensetzen aus den beschränkten linearen Abbildungen $A(t)^{-1}$, $A'(t)A(t)^{-1}$, \dots, $A^{(m)}(t)A(t)^{-1}$ (und skalaren Vorfaktoren). Diese verschärfte Aussage beweisen wir mit Induktion über $m \in \natu$ (für $m = 0$ haben wir sie eben schon eingesehen).

Zuvor jedoch vergewissern wir uns noch kurz, dass die eben angesprochenen linearen Abbildungen $A(t)^{-1}$, $A'(t)A(t)^{-1}$, \dots, $A^{(m)}(t)A(t)^{-1}$ wirklich alle beschränkt sind (wir diese also insbesondere bedenkenlos addieren und multiplizieren können).
Wir haben für alle $k \in \{ 1, \dots, m\}$, dass 
\begin{align*}
A^{(k)}(t)A(t)^{-1} = A^{(k)}(t)A(0)^{-1} \, A(0) A(t)^{-1}
\end{align*} 
für alle $t \in J$. $A(0)A(t)^{-1}$ ist beschränkt (als eine auf ganz $X$ definierte Verkettung einer abgeschlossenen und einer beschränkten linearen Abbildung), $A'(t) A(0)^{-1}$ ist als starker limes der beschränkten linearen Abbildungen 
\begin{align*}
\frac{A(t+h)A(0)^{-1} - A(t) A(0)^{-1}}{h}
\end{align*}
ebenfalls beschränkt (Satz von Banach, Steinhaus) und induktiv (und ganz entsprechend) folgt, dass auch $A^{(k)}(t) A(0)^{-1}$ beschränkt ist für alle $k \in \{ 1, \dots, m\}$.
\\
 
Sei $m = 1$. Dann gilt nach der für den Fall $m=0$ bereits bewiesenen Behauptung, dass
\begin{align*}
\frac{ A(t+h)^{-1}x - A(t)^{-1}x }{h} &= - A(t+h)^{-1} \, \frac{A(t+h) - A(t)}{h} \, A(t)^{-1}x \\
&\longrightarrow - A(t)^{-1} A'(t) A(t)^{-1}x \quad (h \to 0)
\end{align*}
für alle $x \in X$ und alle $t \in J$. Die Abbildung $t \mapsto A(t)^{-1}$ ist also $1$-mal stark differnzierbar und die $1$-te starke Ableitung hat die behauptete Struktur. 

Wenn $t \mapsto A(t)x$ sogar $1$-mal \emph{stetig} differenzierbar ist für alle $x \in D$, dann ist die Ableitung von $t \mapsto A(t)^{-1}$ stark stetig. Zum einen ist nämlich $t \mapsto A(t)^{-1}$ stark stetig (Behauptung im Fall $m = 0$) und zum andern ist auch $t \mapsto A'(t)A(t)^{-1} = A'(t)A(0)^{-1} \, A(0)A(t)^{-1}$ stark stetig. Warum? Zunächst ist natürlich $t \mapsto A'(t)A(0)^{-1}$ stark stetig, aber -- und das ist jetzt entscheidend -- auch
\begin{align*}
t \mapsto A(0)A(t)^{-1} = \bigl( A(t) A(0)^{-1} \bigr)^{-1}
\end{align*} 
ist stark stetig, weil $t \mapsto A(t) A(0)^{-1}$ nach Lemma~\ref{lm: strong db and db} stetig sogar bzgl. der Normoperatortopologie ist und damit nach Lemma~\ref{lm: continuity of inv} auch die Inverse $t \mapsto \bigl( A(t) A(0)^{-1} \bigr)^{-1}$ stetig bzgl. der Normoperatortopologie, mithin insbesondere stark stetig ist. 
\\

Sei $m \in \natu$ und die zu zeigende Aussage stimme für $m-1$. Dann ist $t \mapsto A(t)^{-1}$ $(m-1)$-mal stark (stetig) differenzierbar und die $(m-1)$-te starke Ableitung von $t \mapsto A(t)^{-1}$ besteht aus Summanden, die sich zusammensetzen aus den beschränkten linearen Abbildungen $A(t)^{-1}$, $A'(t)A(t)^{-1}$, \dots, $A^{(m-1)}(t)A(t)^{-1}$ (und skalaren Vorfaktoren). 
Wir sehen mithilfe von Lemma~\ref{lm: strong db of products}, dass $t \mapsto A(t)^{-1}$, $A'(t)A(t)^{-1}, \dots, A^{(m-1)}(t)A(t)^{-1}$ alle noch einmal stark differenzierbar sind und die jeweiligen Ableitungen aus Summanden bestehen, die sich zusammensetzen aus den beschränkten linearen Abbildungen $A(t)^{-1}$, $A'(t)A(t)^{-1}$, \dots, $A^{(m-1)}(t)A(t)^{-1}$, $A^{(m)}(t)A(t)^{-1}$ (und skalaren Vorfaktoren). 
Also ist $t \mapsto A(t)^{-1}$ $m$-mal stark differenzierbar und die $m$-te starke Ableitung von $t \mapsto A(t)^{-1}$ besteht aus Summanden, die sich zusammensetzen aus den beschränkten linearen Abbildungen $A(t)^{-1}$, $A'(t)A(t)^{-1}$, \dots, $A^{(m)}(t)A(t)^{-1}$ (und skalaren Vorfaktoren). 

Wenn $t \mapsto A(t)x$ sogar $m$-mal \emph{stetig} differenzierbar ist, dann sind $t \mapsto A(t)^{-1}$, $A'(t)A(t)^{-1}$, \dots, $A^{(m-1)}(t)A(t)^{-1}$, $A^{(m)}(t)A(t)^{-1}$ stark stetig, denn dann ist $t \mapsto A^{(k)}(t) A(0)^{-1}$ stark stetig für alle $k \in \{1, \dots, m\}$ und auch 
\begin{align*}
t \mapsto A(0) A(t)^{-1} = \bigl( A(t) A(0)^{-1} \bigr)^{-1}
\end{align*}
ist wegen der Stetigkeit von $t \mapsto A(t) A(0)^{-1}$ bzgl. der Normoperatortopologie (Lemma~\ref{lm: strong db and db}) und wegen Lemma~\ref{lm: continuity of inv} stetig, insbesondere stark stetig.
Also ist $t \mapsto A(t)^{-1}$ sogar $m$-mal stark \emph{stetig} differenzierbar.
\end{proof}

Wir merken an, dass die Voraussetzung
\begin{align*}
\sup_{t \in J} \norm{ A(t)^{-1} } < \infty
\end{align*}
des obigen Lemmas ganz von selbst erfüllt ist, wenn $J$ kompakt ist und $t \mapsto A(t)x$ stetig differenzierbar ist für alle $x \in D$. Denn dann ist $t \mapsto A(t)A(0)^{-1}$ stetig differenzierbar bzgl. der starken Operatortopologie. Aufgrund von Lemma~\ref{lm: strong db and db} ist diese Abbildung dann insbesondere stetig bzgl. der Normoperatortopologie und damit ist nach Lemma~\ref{lm: continuity of inv} auch die Inverse $t \mapsto \bigl( A(t)A(0)^{-1} \bigr)^{-1} = A(0)A(t)^{-1}$ und mithin auch $t \mapsto A(t)^{-1}$ stetig bzgl. der Normoperatortopologie.

\subsubsection{Projektionen}

%Definition: Projektion, beschränkte Projektion, orthogonale Projektion 
Sei $P$ eine beschränkte lineare Abbildung in $X$. Dann heißt $P$ bekanntlich \emph{beschränkte Projektion} in $X$ genau dann, wenn $P^2 = P$. $P$ heißt \emph{orthogonale Projektion} in $H$ genau dann, wenn $P$ eine beschränkte Projektion in $H$ ist mit $P^* = P$.
\\

Wie man sofort sieht, geben beschränkte Projektionen $P$ eine Zerlegung von $X$ in die abgeschlossenen Unterräume $PX$ und $(1-P)X$,
\begin{align*}
X = PX + (1-P)X \text{ \; und \; } PX \cap (1-P)X = 0,
\end{align*}
und für das Spektrum gilt: $\sigma(P) \subset \{ 0, 1 \}$, denn
\begin{align*}
z-P = (z-1)P + z(1-P)
\end{align*}
für alle $z \in \complex$.
Außerdem ist eine beschränkte Projektion $P$ in $H$ genau dann orthogonal, wenn die Unterräume $PH$ und $(1-P)H$ orthogonal zueinander sind.
\\ 
 
Auf das folgende Lemma, das wir für Banachräume formulieren (statt wie in~\cite{mmqm1} (Lemma~10.2) für Hilberträume), werden wir uns oft berufen.

\begin{lm} \label{lm: rk konst}
Sei $P(t)$ für jedes $t \in I$ eine beschränkte Projektion in $X$ und $t \mapsto P(t)$ stetig im Punkt $t_0 \in I$. Dann ist $t \mapsto \rk P(t) \in \natu \cup \{0, \infty \}$ lokal um $t_0$ konstant. 
\end{lm}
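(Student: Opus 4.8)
The plan is to exploit that when two bounded projections are close in operator norm, they are similar via an invertible operator, and hence have the same rank. First I would fix $t_0 \in I$ and use the continuity of $t \mapsto P(t)$ at $t_0$ to find, for $\varepsilon \in (0,1)$ to be chosen, an open neighbourhood $U_{t_0}$ of $t_0$ in $I$ with $\norm{P(t) - P(t_0)} < \varepsilon$ for all $t \in U_{t_0}$. The key algebraic object is the operator
\begin{align*}
S(t) := P(t)P(t_0) + (1-P(t))(1-P(t_0)),
\end{align*}
which one checks satisfies $S(t)P(t_0) = P(t)P(t_0) = P(t)S(t)$, i.e. $S(t)$ intertwines $P(t_0)$ and $P(t)$.

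Next I would show $S(t)$ is invertible for $t$ close to $t_0$. Writing $S(t) - 1 = (P(t) - P(t_0))P(t_0) - (P(t)-P(t_0))(1-P(t_0)) = (P(t)-P(t_0))(2P(t_0)-1)$, one gets $\norm{S(t) - 1} \le \norm{P(t)-P(t_0)} \cdot \norm{2P(t_0)-1}$. Hence for $t$ in a small enough neighbourhood of $t_0$ (shrinking $U_{t_0}$ so that $\norm{P(t)-P(t_0)} < 1/\norm{2P(t_0)-1}$, which is harmless since $\norm{2P(t_0)-1} \ge 1$ as $P(t_0)\neq 0$ can be assumed — or trivially handle $P(t_0)=0$ separately, where then $P(t)=0$ too for small $t$), we have $\norm{S(t)-1} < 1$, so $S(t)$ is invertible by the Neumann series (or by Lemma~\ref{lm: continuity of inv}, since $S(t_0) = 1$ is invertible and $t \mapsto S(t)$ is norm-continuous at $t_0$ by Lemma~\ref{lm: strong db of products} or a direct estimate). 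From the intertwining relation $S(t)P(t_0) = P(t)S(t)$ and invertibility of $S(t)$ we obtain $P(t) = S(t)P(t_0)S(t)^{-1}$, so $S(t)$ restricts to an isomorphism $P(t_0)X \to P(t)X$, whence $\rk P(t) = \rk P(t_0)$ for all $t$ in this neighbourhood.

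I expect no serious obstacle; the only points requiring a little care are the degenerate case $P(t_0) = 0$ (handle it by noting $\norm{P(t)} < 1$ forces $P(t) = 0$ since $P(t)$ is idempotent, so $\rk P(t) = 0$ locally; similarly $P(t_0) = 1$ dually), and making sure the neighbourhood is taken inside $I$ so "locally around $t_0$" is interpreted relative to $I$. The similarity argument via $S(t)$ is the heart of the matter and is entirely elementary once the right operator is written down.
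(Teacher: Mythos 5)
Your proof is correct, and it reaches the conclusion by a slightly different mechanism than the paper. The paper does not construct a single similarity: it observes that $\bigl(1+(P(t)-P(t_0))\bigr)P(t_0)=P(t)P(t_0)$ and $\bigl(1+(P(t_0)-P(t))\bigr)P(t)=P(t_0)P(t)$, so the two operators $1\pm(P(t)-P(t_0))$ map $P(t_0)X$ into $P(t)X$ and vice versa; as soon as $\norm{P(t)-P(t_0)}<1$ both are invertible, hence \emph{injective} on those subspaces, which gives the two inequalities $\dim P(t_0)X\le\dim P(t)X\le\dim P(t_0)X$. Your intertwiner $S(t)=P(t)P(t_0)+(1-P(t))(1-P(t_0))$ instead yields the genuinely stronger statement $P(t)=S(t)P(t_0)S(t)^{-1}$, i.e.\ that nearby projections are actually similar; the price is the slightly more restrictive smallness condition $\norm{P(t)-P(t_0)}<1/\norm{2P(t_0)-1}$, which matters in a Banach space where $\norm{P(t_0)}$ can be large (in a Hilbert space with orthogonal projections $\norm{2P(t_0)-1}=1$ and the two thresholds coincide). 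Two small remarks: your case distinction for $P(t_0)=0$ is unnecessary, since $(2P-1)^2=1$ forces $\norm{2P(t_0)-1}\ge 1$ for \emph{every} bounded projection, so the threshold is always well defined and positive; and your claimed identity $S(t)-1=(P(t)-P(t_0))(2P(t_0)-1)$ does check out, so the invertibility step is sound. Both arguments are elementary and complete; for the purposes of the lemma (equality of ranks) the paper's two one-sided injections are the more economical route, while yours is the one to keep in mind if one later needs the similarity itself.
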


\begin{proof}
Zunächst halten wir fest: für alle $t \in I$ bildet $1 + \bigl( P(t) - P(t_0) \bigr)$ den Unterraum $P(t_0)X$ in $P(t)X$ ab und $1 + \bigl( P(t_0) - P(t) \bigr)$ den Unterraum $P(t)X$ in $P(t_0)X$, weil
\begin{align*}
\bigl( 1 + ( P(t) - P(t_0) ) \bigr) P(t_0) = P(t) P(t_0) \text{ \; und \; } \bigl( 1 + ( P(t_0) - P(t) ) \bigr) P(t) = P(t_0) P(t).
\end{align*}
Da nun $t \mapsto P(t)$ stetig ist in $t_0$, existiert eine in $I$ offene Umgebung $U_{t_0}$, sodass $\norm{ P(t) - P(t_0) } < 1$ für alle $t \in U_{t_0}$. Also ist $1 + \bigl( P(t) - P(t_0) \bigr)$ und $1 + \bigl( P(t_0) - P(t) \bigr)$ invertierbar für alle $t \in U_{t_0}$, das heißt, für $t \in U_{t_0}$ bildet $1 + \bigl( P(t) - P(t_0) \bigr)$ den Unterraum $P(t_0)X$ \emph{injektiv} in $P(t)X$ ab und $1 + \bigl( P(t_0) - P(t) \bigr)$ bildet $P(t)X$ \emph{injektiv} in $P(t_0)X$ ab. Also gilt
\begin{align*}
\dim P(t_0)X \le \dim P(t)X     \text{ \; und \; }   \dim P(t)X \le \dim P(t_0)X 
\end{align*}
für alle $t \in U_{t_0}$, was zu zeigen war.
\end{proof}

Im obigen Lemma genügt es übrigens nicht, nur vorauszusetzen, dass $t \mapsto P(t)$ \emph{stark} stetig ist in $t_0$. Sei nämlich $X := L^p(\real, \complex)$ für ein $p \in [1,\infty)$ und
\begin{align*}
P(t)g := \chi_{[-t,t]} g  
\end{align*}
für alle $g \in X$ und alle $t \in I$.
Dann ist $P(t)$ für jedes $t \in I$ eine beschränkte Projektion in $X$ und $t \mapsto P(t)$ ist stark stetig, aber $\rk P(0) = 0 \ne \infty = \rk P(t)$ für alle $t \in (0,1]$.

%Das folgende Lemma ist verantwortlich dafür, dass unsere Adiabatensätze für $A(t) = M_{f_t}$ nutzlos sind (hier eher noch nicht bringen, weil die Adsätze ja noch gar nicht bekannt und ich auch die Notation $M_{f_t}$ wohl erst im Unterabschn. über Spektraltheorie einführe

\begin{lm} \label{lm: multiplikationsop mit char fkt nur dann stark db nach t wenn schon konst}
Sei $(X_0, \mathcal{A}, \mu)$ ein Maßraum und $X := L^p(X_0, \complex)$ für ein $p \in [1, \infty)$. Sei $E_t \in \mathcal{A}$ für alle $t \in I$ und sei $t \mapsto P(t)g := \chi_{E_t} \, g$ differenzierbar für alle $g \in X$. Dann ist $t \mapsto P(t)$ konstant.
\end{lm}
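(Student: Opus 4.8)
The plan is to show that $t\mapsto P(t)$ is locally constant, which for the interval $I$ forces it to be constant outright. Fix $t_0\in I$. The idea is to differentiate the projection identity $P(t)^2=P(t)$ and combine it with the concrete multiplication-operator structure. Since $t\mapsto P(t)g=\chi_{E_t}g$ is assumed differentiable for every $g\in X$, and each $P(t)$ is a bounded projection with $\norm{P(t)}\le 1$ (multiplication by a characteristic function on $L^p$ is a contraction), Lemma~\ref{lm: strong db of products} applies to the product $t\mapsto P(t)(P(t)g)$, so $t\mapsto P(t)^2 g = P(t)g$ is differentiable (which we already knew) and we get the pointwise identity
\begin{align*}
P'(t)g = P'(t)P(t)g + P(t)P'(t)g
\end{align*}
for all $g\in X$ and all $t\in I$.

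Next I would extract information from this identity by exploiting that $P(t)g$ and $P'(t)g$ are again elements of $L^p(X_0,\complex)$ and by testing against suitable $g$. The key observation is that $P(t)P'(t)g = \chi_{E_t}\cdot P'(t)g$, i.e.\ $P(t)$ acting on $P'(t)g$ just restricts to $E_t$. Writing $h(t):=P'(t)g$, the identity becomes $h(t) = \chi_{E_t}h(t) + P'(t)(\chi_{E_t}g)$. I would now fix $g\in X$ and look at the $\mu$-almost-everywhere behaviour. On the set $E_t$ we have $\chi_{E_t}g = g$, so in a one-sided difference-quotient sense the term $P'(t)(\chi_{E_t}g)$ should, at points of $E_t$, behave like $P'(t)g$ itself; on the complement $X_0\setminus E_t$ we have $\chi_{E_t}g=0$. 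The cleanest route is probably: differentiate $t\mapsto \chi_{E_t}g$ and $t\mapsto \chi_{E_t}g\cdot\overline{(\chi_{E_t}g)}^{\,p/2}$-type expressions (or simply use that $\chi_{E_t}$ is idempotent as a multiplier) to conclude that the derivative $P'(t)g$ must be supported, modulo $\mu$-null sets, inside $E_t$, and simultaneously that the part of $P'(t)g$ coming from the ``moving boundary'' of $E_t$ must vanish. Concretely, $\chi_{E_{t+h}}g - \chi_{E_t}g$ is supported on the symmetric difference $E_{t+h}\triangle E_t$ and equals $-g$ on $E_t\setminus E_{t+h}$ and $+g$ on $E_{t+h}\setminus E_t$; for the normalized difference quotient to converge in $L^p$ to a fixed function for \emph{every} $g\in X$, one shows the measures $\mu(E_{t+h}\triangle E_t)$ must tend to $0$ at a controlled rate and in fact that the only consistent possibility is $\mu(E_{t+h}\triangle E_t)=0$.

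The main obstacle — and the step that needs genuine care — is turning the pointwise operator identity into a statement that the symmetric differences $E_t\triangle E_{t_0}$ are $\mu$-null. I expect the decisive trick to be a self-improvement argument: from differentiability one gets, for each $g$, that $\bigl(\chi_{E_{t+h}}-\chi_{E_t}\bigr)g/h$ converges in $L^p$; applying this with $g=\chi_F$ for $F\in\mathcal A$ of finite measure shows $h^{-1}\mu\bigl((E_{t+h}\triangle E_t)\cap F\bigr)^{1/p}$ stays bounded, hence $\mu\bigl((E_{t+h}\triangle E_t)\cap F\bigr)=O(h)$; but then the $L^p$-limit of the difference quotient would be a function whose $p$-th power integrates against $\chi_F$ to something of order $h^{(p-1)\cdot(\text{something})}$... — the point being that the limit function is forced to be $0$, and a function that is an $L^p$-limit of things of ``size'' $\mu(E_{t+h}\triangle E_t)^{1/p}$ being zero forces $\mu(E_{t+h}\triangle E_t)=o(h)$; feeding this back and integrating (using Lemma~\ref{lm: mws f�r einseitig db} applied to $t\mapsto\chi_{E_t}g$, whose derivative is now seen to be $0$) yields $\chi_{E_t}g=\chi_{E_{t_0}}g$ for all $g$, i.e.\ $\mu(E_t\triangle E_{t_0})=0$, and hence $P(t)=P(t_0)$. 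Since $t_0$ was arbitrary and $I$ is connected, $t\mapsto P(t)$ is constant. I would present the measure-theoretic estimate carefully and treat the operator-algebra manipulation as routine, citing Lemma~\ref{lm: strong db of products} for the product rule and Lemma~\ref{lm: mws f�r einseitig db} for the ``zero derivative implies constant'' conclusion.
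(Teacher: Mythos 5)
Your overall target is the right one --- show $P'(t)g=0$ for every $t\in I$ and every $g\in X$, then conclude constancy --- but the decisive step is not actually carried out, and the route you sketch for it does not close. From $L^p$-convergence of the difference quotient with $g=\chi_F$ you get that $|h|^{-1}\mu\bigl((E_{t+h}\triangle E_t)\cap F\bigr)^{1/p}$ stays bounded, i.e. $\mu\bigl((E_{t+h}\triangle E_t)\cap F\bigr)=O(|h|^{p})$, not merely $O(|h|)$; but boundedness, or even convergence, of $\norm{(P(t+h)-P(t))\chi_F/h}_p$ does not by itself force the limit to be $0$. Your sentence about the limit's $p$-th power ``integrating to something of order $h^{(p-1)\cdot(\dots)}$'' conflates the difference quotient (which depends on $h$) with its limit (which does not). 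To make that line of attack work you would have to pass to the $L^1(F)$-norm of the difference quotient, which equals $|h|^{-1}\mu\bigl((E_{t+h}\triangle E_t)\cap F\bigr)=O(|h|^{p-1})$ and hence tends to $0$ only when $p>1$; for $p=1$ your estimates give nothing, so that case is a genuine gap. The opening step (differentiating $P(t)^2=P(t)$ to get $P'=P'P+PP'$) is never used afterwards and can be dropped.

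The paper's proof avoids all of this with one observation you are missing: from $L^p$-convergence of the difference quotients extract a sequence $h_n\to 0$ along which they converge pointwise $\mu$-almost everywhere. At a point $x$ with $g(x)\ne 0$ the quotient $\bigl(\chi_{E_{t+h_n}}(x)-\chi_{E_t}(x)\bigr)/h_n$ takes values only in $\bigl\{-\tfrac{1}{h_n},\,0,\,\tfrac{1}{h_n}\bigr\}$, so the only way it can converge to a finite limit is that the numerator is eventually $0$; hence the limit is $0$, and together with the trivial case $g(x)=0$ this gives $P'(t)g=0$ a.e. This works uniformly for all $p\in[1,\infty)$ and needs no estimate on $\mu(E_{t+h}\triangle E_t)$ at all. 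If you prefer to stay within your $L^p$ framework, you must supply the $L^1$-comparison argument for $p>1$ and a separate argument for $p=1$; the pointwise subsequence argument is both shorter and complete.
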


\begin{proof}
Sei $g \in X$. Wir müssen zeigen, dass $P'(t)g = 0$ für alle $t \in I$.
Sei $t \in I$, dann gilt
\begin{align*}
\frac{P(t+h)g - P(t)g}{h} \longrightarrow P'(t)g \quad (h \to \infty)
\end{align*}
und damit existiert nach einem bekannten maßtheoretischen Satz (s. etwa Satz~VI.4.3 und Korollar~VI.4.13 in~\cite{Elstrodt}) eine Folge $(h_n)$ in $\real \setminus \{0\}$, sodass $h_n \longrightarrow 0 \;\;(n \to \infty)$ und 
\begin{align*}
\Bigl(  \frac{P(t+h_n)g - P(t)g}{h_n}  \Bigr)(x) \longrightarrow \bigl( P'(t)g \bigr)(x) \quad (n \to \infty)
\end{align*}
für fast alle $x \in X_0$. 

Sei $N_t$ die Menge genau der $x \in X_0$, für die das nicht gilt, %beachte: $N_t$ i.a. nur enthalten in einer Nullmenge, muss selbst nicht messbar sein
und sei $x \in X_0 \setminus N_t$. 
Wenn $g(x) = 0$, dann gilt
\begin{align*}
\bigl( P'(t)g \bigr)(x) = \lim_{n \to \infty} \frac{ \chi_{E_{t+h_n}}(x) - \chi_{E_t}(x) }{h_n} \, g(x) = 0.
\end{align*}
Und wenn $g(x) \ne 0$, dann gilt ebenfalls
\begin{align*}
\bigl( P'(t)g \bigr)(x) = \lim_{n \to \infty} \frac{ \chi_{E_{t+h_n}}(x) - \chi_{E_t}(x) }{h_n} \, g(x) = 0,
\end{align*}
denn wegen
\begin{align*}
\Big\{ \frac{-1}{h_n}, \, 0, \, \frac{1}{h_n} \Big\} \ni \frac{ \chi_{E_{t+h_n}}(x) - \chi_{E_t}(x) }{h_n} \longrightarrow \frac{1}{g(x)} \, \bigl( P'(t)g \bigr)(x) \quad (n \to \infty)
\end{align*}
muss $\chi_{E_{t+h_n}}(x) - \chi_{E_t}(x)$ ab einem gewissen $n \in \natu$ gleich $0$ sein.

Also gilt insgesamt $(P'(t)g)(x) = 0$ für alle $x \in X_0 \setminus N_t$, das heißt, $P'(t)g = 0$ (in $L^p(X_0)$), wie gewünscht.
\end{proof}

Wenn wir nur wissen, dass $P(t) = \chi_{E_t}$ für \emph{fast} alle $t \in I$, dafür aber sogar wissen, dass $t \mapsto P(t)g$ \emph{stetig} differenzierbar ist für alle $g \in X$, dann bleibt die Aussage des obigen Lemmas bestehen: $t \mapsto P(t)$ ist dann auch hier konstant. %Wir können das durch ein paar kleine Abwandlungen des obigen Beweises einsehen. 

Sei nämlich $I'$ die Menge genau der $t \in I$, für die wir die Darstellung $P(t) = \chi_{E_t}$ haben, und sei $t \in I'$. Dann existiert eine Folge $(h_n)$ mit denselben Eigenschaften wie im obigen Beweis und der zusätzlichen Eigenschaft, dass $t + h_n \in I'$ für alle $n \in \natu$, schließlich liegen ja fast alle $t \in I$ auch schon in $I'$. %bzw. $I \setminus I'$ enthalten in einer Nullmenge
Und es folgt genau wie oben, dass $P'(t)g = 0$. Weil nun $s \mapsto P'(s)g$ nach Voraussetzung stetig ist und weil $t \in I'$ beliebig war, folgt $P'(s)g = 0$ für alle $s \in I$.

\subsection{Spektraltheorie}

\subsubsection{Spektraltheorie abgeschlossener linearer Abbildungen in Banachräumen}

Sei $A$ eine lineare Abbildung $D \subset X \to X$. Dann heißt
\begin{align*}
\rho(A) :=& \Bigl\{    z \in \complex : z-A \text{ ist eine bijektive lineare Abbildung } D \subset X \to X \\ 
&\qquad \qquad \text{ und } (z-A)^{-1} \text{ ist beschränkt}    \Bigr\} 
\end{align*}
\emph{Resolventenmenge von A} und $\sigma(A) := \complex \setminus \rho(A)$ heißt \emph{Spektrum von $A$}. 
\\

Wir erinnern daran, dass $\rho(A)$ offen und $\sigma(A)$ demnach abgeschlossen ist und die Abbildung 
\begin{align*}
\rho(A) \ni z \mapsto (z-A)^{-1}
\end{align*}
(die \emph{Resolventenabbildung}) holomorph ist. Weiter erinnern wir daran, dass 
\begin{align*}
(z-A)^{-1}(w-A)^{-1} = \frac{1}{w-z} \bigl( (z-A)^{-1} - (w-A)^{-1} \bigr)
\end{align*}
für alle $z,w \in \rho(A)$ mit $z \ne w$.
Jede abgeschlossene Untermenge $\sigma$ von $\complex$ ist Spektrum einer abgeschlossenen linearen Abbildung -- auch in den Sonderfällen $\sigma = \emptyset$ und $\sigma = \complex$ (Beispiel~IV.1.5 in~\cite{EngelNagel}). 

\begin{ex} \label{ex: spektrum von multop}
Sei $(X_0, \mathcal{A}, \mu)$ ein Maßraum, $p \in [1, \infty)$ und $X := L^p(X_0, \complex)$. Sei ferner $f$ eine messbare Abbildung $X_0 \to \complex$. Dann heißt die lineare Abbildung $M_f$, gegeben durch
\begin{align*}
D(M_f) := \{ g \in X: f g \in X \} \text{ \; und \; } M_f \, g := f g,
\end{align*}
\emph{Multiplikationsoperator mit $f$ auf $X$.} $M_f$ ist dicht definiert und abgeschlossen und $\sigma(M_f) \subset \operatorname{ess-im} f$, wobei
\begin{align*}
\operatorname{ess-im} f := \big\{ z \in \complex: \mu\bigl( f^{-1}(U_{\varepsilon}(z)) \bigr) \ne 0 \big \} 
\end{align*}
den wesentlichen Wertebereich von $f$ bezeichnet. Zumindest wenn $(X_0, \mathcal{A}, \mu)$ $\sigma$-endlich ist, gilt sogar $\sigma(M_f) = \operatorname{ess-im} f$.   $\blacktriangleleft$
\end{ex}

Wenn $A: X \to X$ beschränkt ist und $X \ne 0$, dann ist das Spektrum von $A$ nichtleer und beschränkt, genauer gilt
\begin{align*}
r_A = \lim_{n \to \infty} \norm{ A^n }^{\frac{1}{n}}
\end{align*}
für den \emph{Spektralradius} $r_A := \sup \{ |z| : z \in \sigma(A) \}$ von $A$. Insbesondere gilt also $\sigma(A) \subset U_{ \norm{A} }(0)$ für beschränkte $A$. 
%Die Umkehrung gilt nicht: wenn Spektrum beschränkt ist, muss $A$ nicht beschränkt sein.
\\

Sei $A$ eine lineare Abbildung $D \subset X \to X$ und $P$ eine beschränkte Projektion in $X$. Wie man leicht einsieht, ist dann $P A \subset A P$ gleichbedeutend damit, dass $P D$, $(1-P) D \subset D$ und $A \bigl( PD \bigr) \subset PX$ sowie $A  \bigl( (1-P)D \bigr) \subset (1-P)X$ (damit also, dass die Unterräume $PX$ und $(1-P)X$ in einem gewissen Sinne invariant sind unter $A$). Die folgende Proposition besagt, dass eine Zerlegung von $X$ in %(in diesem Sinne)
$A$-invariante Unterräume eine (allerdings nicht notwendig disjunkte) Zerlegung des Spektrums von $A$ nach sich zieht.

\begin{prop} \label{prop: zerl. des spektrums}
Sei $A: D \subset X \to X$ eine abgeschlossene lineare Abbildung und $P$ eine beschränkte Projektion in $X$ mit $P A \subset A P$. Dann sind auch $A \big|_{PD}$ und $A \big|_{(1-P)D}$ abgeschlossen in $PX$ bzw. $(1-P)X$ und 
\begin{align*}
\sigma(A) = \sigma(A \big|_{PD}) \cup \sigma(A \big|_{(1-P)D}).
\end{align*}
\end{prop}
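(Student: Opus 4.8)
The plan is to reduce everything to one observation: a bounded projection $P$ with $PA\subset AP$ also commutes with every resolvent $(z-A)^{-1}$, so that the resolvent splits along the topological decomposition $X=PX\oplus(1-P)X$. Throughout I write $A_1:=A\big|_{PD}$ on $X_1:=PX$ and $A_2:=A\big|_{(1-P)D}$ on $X_2:=(1-P)X$, and I use the equivalence recalled just before the proposition: $PA\subset AP$ means $PD,(1-P)D\subset D$, $A(PD)\subset PX$, $A((1-P)D)\subset(1-P)X$, and $PAx=APx$ for all $x\in D$; in particular $PD=PX\cap D=\dom(A_1)$ (and likewise for $1-P$).

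First I would prove closedness of the restrictions. Given $x_n\in PD$ with $x_n\to x$ and $A_1x_n=Ax_n\to y$ in $PX$: since $PX$ is closed in $X$ and $A$ is closed, $x\in D$ and $Ax=y$; since $x=\lim x_n\in PX$, we get $x\in PX\cap D=PD$ and $y=Ax\in PX$, so $A_1$ is closed in $PX$. The same argument handles $A_2$. Next, the resolvent identity: fix $z\in\rho(A)$, set $R:=(z-A)^{-1}$. For $x\in D$ we have $Px\in PD\subset D$ and $APx=PAx$, hence $(z-A)Px=zPx-APx=P(z-A)x$; applying this to $x=Ry$ for arbitrary $y\in X$ yields $(z-A)(PRy)=Py$, so by injectivity of $z-A$, $PRy=RPy$. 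Thus $PR=RP$ and, symmetrically, $(1-P)R=R(1-P)$; in particular $R$ maps $X_1$ into $X_1$. It is then routine to check $R\big|_{X_1}=(z-A_1)^{-1}$ (for $y\in X_1$, $R\big|_{X_1}y=Ry\in D\cap PX=PD$ and $(z-A_1)Ry=(z-A)Ry=y$; for $x\in PD$, $(z-A_1)x=(z-A)x\in X_1$, so $R\big|_{X_1}(z-A_1)x=R(z-A)x=x$), and likewise for $A_2$. Hence $\rho(A)\subset\rho(A_1)\cap\rho(A_2)$, i.e. $\sigma(A_1)\cup\sigma(A_2)\subset\sigma(A)$.

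For the reverse inclusion, let $z\in\rho(A_1)\cap\rho(A_2)$ and put $R:=(z-A_1)^{-1}P+(z-A_2)^{-1}(1-P)$, a bounded operator on $X$. For $x\in X$, $(z-A_1)^{-1}Px\in PD$ and $(z-A_2)^{-1}(1-P)x\in(1-P)D$, so $Rx\in D$, and since $A$ agrees with $A_i$ on the respective pieces, $(z-A)Rx=(z-A_1)(z-A_1)^{-1}Px+(z-A_2)(z-A_2)^{-1}(1-P)x=Px+(1-P)x=x$. For $x\in D$ write $x=Px+(1-P)x$ with $Px\in PD$, $(1-P)x\in(1-P)D$; then $(z-A)x=(z-A_1)Px+(z-A_2)(1-P)x$ with the two summands lying in $X_1$ and $X_2$ respectively, so $R(z-A)x=(z-A_1)^{-1}(z-A_1)Px+(z-A_2)^{-1}(z-A_2)(1-P)x=x$. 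Hence $z\in\rho(A)$, giving $\sigma(A)\subset\sigma(A_1)\cup\sigma(A_2)$, and combining the two inclusions proves the claim.

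The only genuinely delicate point is the second step together with the domain bookkeeping: one must be scrupulous that $PD=PX\cap D$ really equals $\dom(A_1)$, and that the identity $P(z-A)x=(z-A)Px$ holds literally for every $x\in D$ (not just on a dense subspace) — this is where $PA\subset AP$, as opposed to mere commutation on a core, is used. Once $PR=RP$ is established, everything else is a routine verification that the proposed operators are two-sided inverses.
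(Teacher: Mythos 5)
Your proof is correct, and since the paper disposes of this proposition with ``Das folgt elementar'', your argument is exactly the elementary one being alluded to: closedness of the restrictions from closedness of $A$ and of $PX$, the identity $P(z-A)^{-1}=(z-A)^{-1}P$ for one inclusion, and the explicit inverse $(z-A_1)^{-1}P+(z-A_2)^{-1}(1-P)$ for the other. The domain bookkeeping ($PD=PX\cap D=\dom(A_1)$, $R$ mapping $PX$ into $PD$) is handled carefully, so there is nothing to add.
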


\begin{proof}
Das folgt elementar.
\end{proof}

Der sehr wichtige Satz~\ref{thm: Rieszprojektion} (der sich von Theorem~III.6.17 in~\cite{Kato: Perturbation 80} nur dadurch unterscheidet, dass er auf die dort getroffene nach Proposition~\ref{prop: Cauchy für kompakta} aber überflüssige Voraussetzung, dass gewisse einfach geschlossene rektifizierbare Wege existieren, verzichtet) sagt aus, dass umgekehrt eine Zerlegung des Spektrums von $A$ zu einer Zerlegung von $X$ in $A$-invariante Unterräume führt, genauer: dass diese Zerlegung von $X$ durch eine Rieszprojektion gegeben ist. Was wir darunter genau verstehen wollen, halten wir nun fest.
\\

Sei $A$ eine abgeschlossene lineare Abbildung $D \subset X \to X$ und $\sigma$ eine kompakte in $\sigma(A)$ isolierte Untermenge von $\sigma(A)$, das heißt, das Spektrum von $A$ zerlegt sich in die beiden abgeschlossenen Mengen $\sigma$ und $\sigma(A) \setminus \sigma$ und $\sigma$ ist zudem beschränkt. Dann heißt $P$ \emph{Rieszprojektion von $A$ auf $\sigma$} genau dann, wenn 
\begin{align*}  
P = \frac{1}{2 \pi i} \, \int_{\gamma} (z-A)^{-1} \, dz
\end{align*}
für einen (und damit nach Satz~\ref{thm: Cauchy global} für alle) Zykel $\gamma$ in $\rho(A)$ mit $n(\gamma, \sigma) = 1$ und $n(\gamma, \sigma(A) \setminus \sigma) = 0$.
\\

Zu jeder kompakten und in $\sigma(A)$ isolierten Untermenge $\sigma$ von $\sigma(A)$ \emph{existiert genau eine} Rieszprojektion von $A$ auf $\sigma$, denn wegen der Isoliertheit von $\sigma$ \emph{existiert} (nach Proposition~\ref{prop: Cauchy für kompakta}) ein Zykel $\gamma$ in $\rho(A)$ mit $n(\gamma, \sigma) = 1$ und $n(\gamma, \sigma(A) \setminus \sigma) = 0$, und alle solche Zykel sind homolog in $\rho(A)$, das heißt, die Wegintegrale stimmen (nach Satz~\ref{thm: Cauchy global}) für alle solche Zykel überein. Im Sonderfall $\sigma = \emptyset$ stimmt das noch nicht ganz -- wir müssen da zusätzlich \emph{voraussetzen}, dass $\rho(A) \ne \emptyset$ (denn dies folgt da noch nicht aus der (leeren) Isoliertheitsbedingung an $\sigma = \emptyset$). Unter dieser zusätzlichen Voraussetzung existiert, wie man (mithilfe von Satz~\ref{thm: Cauchy global}) leicht einsieht, auch zu $\sigma = \emptyset$ genau eine Rieszprojektion, und zwar ist diese gleich $0$.

\begin{thm} \label{thm: Rieszprojektion}
Sei $A: D \subset X \to X$ eine abgeschlossene lineare Abbildung mit $\rho(A) \ne \emptyset$, $\sigma$ eine kompakte in $\sigma(A)$ isolierte Untermenge von $\sigma(A)$ und $P$ die Rieszprojektion von $A$ auf $\sigma$. Dann ist $P$ eine beschränkte Projektion in $X$ mit $P A \subset A P$, $A P$ ist beschränkt und 
\begin{align*}
\sigma(A \big|_{PD}) = \sigma \text{ \; und \; } \sigma(A \big|_{(1-P)D}) = \sigma(A) \setminus \sigma.
\end{align*}
\end{thm}

\begin{proof}
Sei zunächst $\sigma = \emptyset$. Dann gilt $P = 0$, woraus sofort $P A \subset A P$ folgt und 
\begin{align*}
\sigma(A \big|_{PD}) = \sigma_{PX}(0) = \emptyset = \sigma \text{ \; sowie \; } \sigma(A \big|_{(1-P)D}) = \sigma(A) = \sigma(A) \setminus \sigma.
\end{align*}

Sei nun $\sigma \ne \emptyset$. Sei $r_0$ eine positive Zahl mit $U_{r_0}(\sigma) \setminus \sigma    \subset \rho(A)$, die wegen der Isoliertheit von $\sigma$ auch wirklich existiert. Sei $\gamma_1$ ein Zykel in $U_{\frac{r_0}{2}}(\sigma) \setminus \sigma$ mit $n(\gamma_1, \sigma) = 1$ und $n(\gamma_1, \complex \setminus U_{\frac{r_0}{2}}(\sigma)) = 0$ und sei $\gamma_2$ ein Zykel in $U_{r_0}(\sigma) \setminus \overline{U}_{\frac{r_0}{2}}(\sigma)$ mit $n(\gamma_2, \overline{U}_{\frac{r_0}{2}}(\sigma)) = 1$ und $n(\gamma_2, \complex \setminus U_{r_0}(\sigma)) = 0$. Solche Zykel existieren wegen $\sigma \ne \emptyset$ (beachte: für $\sigma = \emptyset$ wäre $U_{r_0}(\sigma) = \emptyset$) nach Proposition~\ref{prop: Cauchy für kompakta} 
%(evtl. ausdrücklich anmerken, dass $\overline{U}_{\frac{r_0}{2}}(\sigma)$ kompakt und $U_{r_0}(\sigma)$ offen). 
Dann liegen $\gamma_1$ und $\gamma_2$ in $\rho(A)$ und beide umlaufen $\sigma$ einmal und $\sigma(A) \setminus \sigma$ keinmal, das heißt 
\begin{align*}
P^2 &= \frac{1}{2 \pi i} \, \int_{\gamma_1} (z-A)^{-1} \, dz \; \; \frac{1}{2 \pi i} \, \int_{\gamma_2} (w-A)^{-1} \, dw \\
&= \frac{1}{2 \pi i} \, \int_{\gamma_1} \; \frac{1}{2 \pi i} \, \int_{\gamma_2}  (z-A)^{-1}(w-A)^{-1} \, dw \,dz \\
&= \frac{1}{2 \pi i} \, \int_{\gamma_1} \; \Bigl( \frac{1}{2 \pi i} \, \int_{\gamma_2}  \frac{1}{w-z} \,dw \Bigr) \; (z-A)^{-1} \,dz \\
& \qquad \qquad + \frac{1}{2 \pi i} \, \int_{\gamma_2} \; \Bigl( \frac{1}{2 \pi i} \, \int_{\gamma_1}  \frac{1}{z-w} \,dz \Bigr) \; (w-A)^{-1} \,dw \\
&= \frac{1}{2 \pi i} \, \int_{\gamma_1} \; n(\gamma_2, z) \, (z-A)^{-1} \,dz
+ \frac{1}{2 \pi i} \, \int_{\gamma_2} \;  n(\gamma_1, w) \, (w-A)^{-1} \,dw 
= P,
\end{align*}
$P$ ist also eine beschränkte Projektion in $X$.

Sei $x \in X$. Dann ist $Px \in D$ und
\begin{align*}
A Px = \frac{1}{2 \pi i} \, \int_{\gamma_i} A (z-A)^{-1} \;x \, dz = \frac{1}{2 \pi i} \, \int_{\gamma_i} z (z-A)^{-1} \;x \, dz, 
\end{align*}
denn $\im \gamma_i \ni z \mapsto A (z-A)^{-1} = -1 + z (z-A)^{-1}$ ist stetig, das heißt das zugehörige Wegintegral existiert, und $A$ ist abgeschlossen, weshalb wir $A$ wirklich ins Wegintegral hineinziehen dürfen. Weiter gilt 
\begin{align*}
P Ax = \frac{1}{2 \pi i} \, \int_{\gamma_i} (z-A)^{-1} \; Ax \, dz = \frac{1}{2 \pi i} \, \int_{\gamma_i} A (z-A)^{-1} \;x \, dz = A Px 
\end{align*}
für alle $x \in D$. Also haben wir $P A \subset A P$ und mehr noch: $PX \subset D$ (insbesondere also $PD = PX \cap D = PX$) und $A P$ ist beschränkt in $X$.

Wir müssen nun noch zeigen, dass $\sigma(A \big|_{PD}) = \sigma$ und $\sigma(A \big|_{(1-P)D}) = \sigma(A) \setminus \sigma$.
Sei dazu $\lambda \notin \sigma$ und $\gamma$ ein Zykel in $\rho(A)$ mit $n(\gamma, \sigma) = 1$, $n(\gamma, \sigma(A) \setminus \sigma) = 0$ und $n(\gamma, \lambda) = 0$ (Proposition~\ref{prop: Cauchy für kompakta}!). Dann gilt
\begin{align*}
(\lambda - A) & \Bigl(     \frac{1}{2 \pi i} \, \int_{\gamma} \frac{1}{\lambda - z} \, (z-A)^{-1}  \, dz         \Bigr) \, x \\
&= \frac{1}{2 \pi i} \, \int_{\gamma}  (z-A)^{-1} \, dz \; x +   \frac{1}{2 \pi i} \, \int_{\gamma} \frac{1}{\lambda - z}  \, dz \; x = Px
\end{align*}
für alle $x \in X$ und 
\begin{align*}
\Bigl(     \frac{1}{2 \pi i} \, \int_{\gamma} \frac{1}{\lambda - z} \, (z-A)^{-1} \, dz         \Bigr)  (\lambda - A)x = Px 
\end{align*}
für alle $x \in D$, das heißt $\lambda \notin \sigma(A \big|_{PD})$. Also gilt $\sigma(A \big|_{PD}) \subset \sigma$.

Sei nun $\lambda \notin \sigma(A) \setminus \sigma$ und $\gamma$ ein Zykel in $\rho(A)$ mit $n(\gamma, \sigma) = 1$, $n(\gamma, \sigma(A) \setminus \sigma) = 0$ und $n(\gamma, \lambda) = 1$ (Proposition~\ref{prop: Cauchy für kompakta}!). Dann gilt
\begin{align*}
(\lambda - A) & \Bigl(     \frac{1}{2 \pi i} \, \int_{\gamma} \frac{1}{z - \lambda} \, (z-A)^{-1}  \, dz         \Bigr) \, x \\
& = - \frac{1}{2 \pi i} \, \int_{\gamma}  (z-A)^{-1} \, dz \; x +   \frac{1}{2 \pi i} \, \int_{\gamma} \frac{1}{z - \lambda}  \, dz \; x = (1-P)x
\end{align*}
für alle $x \in X$ und
\begin{align*}
\Bigl(     \frac{1}{2 \pi i} \, \int_{\gamma} \frac{1}{z - \lambda} \, (z-A)^{-1}  \, dz         \Bigr) (\lambda - A) x = (1-P)x
\end{align*}
für alle $x \in D$, das heißt $\lambda \notin \sigma(A \big|_{(1-P)D})$. Also gilt $\sigma(A \big|_{(1-P)D}) \subset \sigma(A) \setminus \sigma$.

Da nach Proposition~\ref{prop: zerl. des spektrums} $\sigma(A) = \sigma(A \big|_{PD}) \cup \sigma(A \big|_{(1-P)D})$ gilt, muss nun sogar
\begin{align*}
\sigma(A \big|_{PD}) = \sigma \text{ \; und \; } \sigma(A \big|_{(1-P)D}) = \sigma(A) \setminus \sigma
\end{align*}
gelten, wie behauptet.
\end{proof}

Aus diesem Satz folgt, dass die Rieszprojektion $P$ von $A$ auf $\sigma$ ($A$ und $\sigma$ wie im Satz) \emph{genau} dann gleich $0$ ist, wenn $\sigma$ leer ist. Insbesondere ist die Rieszprojektion von $A$ auf das ganze Spektrum $\sigma(A)$ (falls dieses beschränkt ist) nicht immer gleich $1$, denn $\sigma(A)$ kann ja leer sein.
\\

Der obige Satz sagt, dass die Rieszprojektion einer linearen Abbildung $A$ eine Zerlegung von $X$ in $A$-invariante Unterräume gibt, die einer (vorgegebenen) Zerlegung von $\sigma(A)$ in $\sigma$ (kompakt) und $\sigma(A) \setminus \sigma$ (abgeschlossen) entspricht.
%Die folgende Proposition ergänzt Satz~\ref{thm: Rieszprojektion}: sie besagt nämlich, dass die Rieszprojektion die \emph{einzige} Projektion ist, die einer Zerlegung des Spektrums einer linearen Abbildung $A$ in $\sigma$ (kompakt) und $\sigma(A) \setminus \sigma$ (beliebig)  
Die folgende Proposition besagt nun, dass die Rieszprojektion von $A$ auf $\sigma$ die \emph{einzige} solche Projektion ist. 

\begin{prop} \label{prop: rieszproj eind}
Sei $A: D \subset X \to X$ eine abgeschlossene lineare Abbildung mit $\rho(A) \ne \emptyset$ und $P$ eine beschränkte Projektion in $X$, sodass $P A \subset A P$, die Mengen $\sigma(A \big|_{PD})$ und $\sigma(A \big|_{(1-P)D})$ disjunkt sind und $A \big|_{PD}$ beschränkt (auf $PD$) ist. 
Dann ist $P$ die Rieszprojektion von $A$ auf $\sigma(A \big|_{PD})$.
\end{prop}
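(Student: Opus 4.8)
The plan is to show that $P$ satisfies the defining property of the Riesz projection: namely that $\sigma(A) = \sigma(A|_{PD}) \cup \sigma(A|_{(1-P)D})$ is a disjoint decomposition with $\sigma(A|_{PD})$ compact and isolated in $\sigma(A)$, and that $P$ equals the contour integral of the resolvent around $\sigma(A|_{PD})$. Since by hypothesis the two spectra $\sigma_1 := \sigma(A|_{PD})$ and $\sigma_2 := \sigma(A|_{(1-P)D})$ are disjoint, and by Proposition~\ref{prop: zerl. des spektrums} their union is $\sigma(A)$, the set $\sigma_1$ is automatically isolated in $\sigma(A)$; moreover $\sigma_1$ is the spectrum of the bounded operator $A|_{PD}$ on the Banach space $PX$, hence compact (and nonempty unless $PX = 0$, i.e.\ $P = 0$, in which case the claim is the trivial case $\sigma_1 = \emptyset$ handled as in Theorem~\ref{thm: Rieszprojektion}). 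So the Riesz projection $Q$ of $A$ on $\sigma_1$ exists and is well-defined by Proposition~\ref{prop: Cauchy f�r kompakta} together with Theorem~\ref{thm: Cauchy global}; the task is to prove $Q = P$.

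First I would fix a cycle $\gamma$ in $\rho(A)$ with $n(\gamma,\sigma_1)=1$ and $n(\gamma,\sigma_2)=0$ (such a $\gamma$ exists by Proposition~\ref{prop: Cauchy f�r kompakta}, since $\sigma_1$ is compact, $\sigma_1 \subset \rho(A) \cup \sigma_1$ is open around $\sigma_1$ away from $\sigma_2$... more precisely one applies the proposition to a small open neighbourhood of $\sigma_1$ disjoint from $\sigma_2$). The key computational step is to exploit the relation $PA \subset AP$ to reduce the resolvent of $A$ along $\gamma$ to the resolvents of the two parts. For $z \in \rho(A)$ one checks that $z - A$ leaves $PX \cap D = PD$ and $(1-P)X \cap D$ invariant and that
\begin{align*}
(z-A)^{-1} P = P (z-A)^{-1} = (z - A|_{PD})^{-1} P,
\end{align*}
and similarly with $(1-P)$ in place of $P$; here $(z - A|_{PD})^{-1}$ is the resolvent of the bounded operator $A|_{PD}$ on $PX$. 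Then for $x \in X$,
\begin{align*}
Qx = \frac{1}{2\pi i} \int_{\gamma} (z-A)^{-1} x \, dz = \frac{1}{2\pi i} \int_{\gamma} (z-A)^{-1} P x \, dz + \frac{1}{2\pi i} \int_{\gamma} (z-A)^{-1} (1-P) x \, dz.
\end{align*}
The first integral equals $\bigl( \frac{1}{2\pi i} \int_{\gamma} (z - A|_{PD})^{-1} dz \bigr) P x = Px$, because $\gamma$ winds once around $\sigma_1 = \sigma(A|_{PD})$ and the Riesz projection of the bounded operator $A|_{PD}$ on its entire (compact) spectrum is the identity on $PX$ (a standard consequence of Theorem~\ref{thm: Cauchy global} and $\gamma$ encircling all of $\sigma(A|_{PD})$). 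The second integral vanishes, because $\gamma$ winds zero times around $\sigma_2 = \sigma(A|_{(1-P)D})$, so by Theorem~\ref{thm: Cauchy global} applied to the resolvent of $A|_{(1-P)D}$ (which is holomorphic off $\sigma_2$, and $\gamma$ is null-homologous in that complement) the integral is $0$. Hence $Qx = Px$ for all $x$, i.e.\ $P = Q$ is the Riesz projection of $A$ on $\sigma(A|_{PD})$.

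The main obstacle I anticipate is the careful justification of the intertwining identity $(z-A)^{-1}P = (z - A|_{PD})^{-1}P$ and its analogue on $(1-P)X$ — in particular verifying that $A|_{PD}$ is a closed (here even bounded) operator on $PX$ whose resolvent set contains $\rho(A)$ in the relevant region, and that the contour $\gamma$ can simultaneously be taken in $\rho(A) \cap \rho(A|_{PD}) \cap \rho(A|_{(1-P)D})$ with the required winding numbers; this is exactly where disjointness of the two spectra and Proposition~\ref{prop: Cauchy f�r kompakta} are used in full strength. One also has to handle the degenerate cases $P = 0$ and $P = 1$ (equivalently $\sigma_1 = \emptyset$ or $\sigma_2 = \emptyset$) separately, as in the proof of Theorem~\ref{thm: Rieszprojektion}, but these are immediate. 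Everything else is the routine Fubini/contour manipulation already carried out in the proof of Theorem~\ref{thm: Rieszprojektion}.
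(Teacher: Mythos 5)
Your proposal is correct and follows essentially the same route as the paper: you decompose the resolvent along $P$ and $1-P$, identify the first contour integral as the Riesz projection of the bounded operator $A\big|_{PD}$ on its entire spectrum (hence the identity on $PX$) and the second as zero since $\gamma$ does not wind around $\sigma(A\big|_{(1-P)D})$. The existence argument (compactness of $\sigma(A\big|_{PD})$ from boundedness, separation from disjointness) also matches the paper's.
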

 
\begin{proof} 
Sei $\sigma := \sigma(A \big|_{PD})$ und $P_0$ die Rieszprojektion von $A$ auf $\sigma$. Diese existiert wirklich, denn wegen der Abgeschlossenheit und der Beschränktheit von $A \big|_{PD}$ haben wir $PD = PX$, das heißt, $\sigma$ ist (als Spektrum der beschränkten linearen Abbildung $A \big|_{PD}$ auf dem Banachraum $PX$) kompakt und daher getrennt von $\sigma(A \big|_{(1-P)D}) = \sigma(A) \setminus \sigma$.

Wir haben dann für jeden Zykel $\gamma$ in $\rho(A)$, der $\sigma$ einmal und $\sigma(A) \setminus \sigma$ keinmal umläuft, dass (wie gewünscht)
\begin{align*}
P_0 &= \frac{1}{2 \pi i} \, \int_{\gamma} (z-A)^{-1} \, P \, dz + \frac{1}{2 \pi i} \, \int_{\gamma} (z-A)^{-1} \, (1-P) \, dz \\
&= \frac{1}{2 \pi i} \, \int_{\gamma} (z-A \big|_{PD})^{-1}  \, dz \, P + \frac{1}{2 \pi i} \, \int_{\gamma} (z-A \big|_{(1-P)D})^{-1}  \, dz \, (1-P) 
= P,
\end{align*} 
denn einerseits ist 
\begin{align*}
\frac{1}{2 \pi i} \, \int_{\gamma} (z-A \big|_{PD})^{-1}  \, dz
\end{align*}
als Rieszprojektion von $A \big|_{PD}$ (beschränkt!) auf $\sigma = \sigma(A \big|_{PD})$ (das ganze Spektrum von $A \big|_{PD}$!) die identische Abbildung auf $PX$ und andererseits ist 
\begin{align*}
\frac{1}{2 \pi i} \, \int_{\gamma} (z-A \big|_{(1-P)D})^{-1}  \, dz
\end{align*}
gleich null, weil $\gamma$ das Spektrum von $A \big|_{(1-P)D}$ nicht umläuft.
\end{proof} 
 
Wie ändert sich die Rieszprojektion, wenn wir die Zerlegung des Spektrums ändern? Die folgende Proposition zeigt, dass dabei nichts Unerwartetes passiert.

\begin{prop}  \label{prop: rechenregeln rieszprojektion}
Sei $A: D \subset X \to X$ abgeschlossen mit $\rho(A) \ne \emptyset$, seien $\sigma_1$, $\sigma_2$ zwei kompakte in $\sigma(A)$ isolierte Untermengen von $\sigma(A)$ und $P_1$, $P_2$ die zugehörigen Rieszprojektionen. Dann ist $P_1 P_2 = P_2 P_1$ die Rieszprojektion von $A$ auf $\sigma_1 \cap \sigma_2$. Wenn $\sigma_1$ und $\sigma_2$ zusätzlich disjunkt sind, dann ist $P_1 + P_2$ die Rieszprojektion von $A$ auf $\sigma_1 \cup \sigma_2$.  
\end{prop}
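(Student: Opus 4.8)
The plan is to reduce everything to the uniqueness of the cycle in the definition of a Riesz projection (which rests on Satz~\ref{thm: Cauchy global}), combined with a careful positioning of two integration cycles relative to $\sigma_1$ and $\sigma_2$. First I would record the set-theoretic facts that make the statement meaningful. Since $\sigma_2$ is isolated in $\sigma(A)$, there is $r_2>0$ with $U_{r_2}(\sigma_2)\cap\sigma(A)=\sigma_2$; this forces $\sigma_1\setminus\sigma_2\subset\sigma(A)\setminus\sigma_2\subset\complex\setminus U_{r_2}(\sigma_2)$, so $\sigma_1\cap\sigma_2$ is (relatively) open in $\sigma_1$ and $\tau:=\sigma_1\setminus\sigma_2$ is compact, with $\dist(\tau,\sigma_1\cap\sigma_2)\ge r_2$. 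Using this and the analogous statements for $\sigma_1$ one checks that $\sigma_1\cap\sigma_2$ and $\sigma_1\cup\sigma_2$ are again compact and isolated in $\sigma(A)$; together with $\rho(A)\ne\emptyset$ this guarantees that the Riesz projections $P_{\sigma_1\cap\sigma_2}$ and $P_{\sigma_1\cup\sigma_2}$ exist (recall the Riesz projection on the empty set exists and equals $0$ as soon as $\rho(A)\ne\emptyset$).

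For the identity $P_1P_2=P_{\sigma_1\cap\sigma_2}$ I would use the resolvent identity together with two suitably nested cycles. Fix $\varepsilon\in(0,\min\{r_1,r_2\})$ (with $r_1$ the isolation radius of $\sigma_1$) and, using the cycle-construction result for compact subsets proved above, pick a positively simply closed cycle $\gamma_2$ in $U_\varepsilon(\sigma_2)\setminus\sigma_2\subset\rho(A)$ with $n(\gamma_2,\sigma_2)=1$ and $n(\gamma_2,\complex\setminus U_\varepsilon(\sigma_2))=0$; then $n(\gamma_2,\sigma(A)\setminus\sigma_2)=0$, $\im\gamma_2\cap\sigma_1=\emptyset$, the open set $\{z:n(\gamma_2,z)=1\}$ contains $\sigma_2$ (hence a neighbourhood of $\sigma_1\cap\sigma_2$), and $\{z:n(\gamma_2,z)=0\}$ contains $\complex\setminus U_\varepsilon(\sigma_2)$ (hence a neighbourhood of $\tau$). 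Now choose $\delta>0$ so small that $U_\delta(\sigma_1\cap\sigma_2)$ and $U_\delta(\tau)$ are disjoint, lie inside the two open sets just mentioned, and avoid $\im\gamma_2$, and that $\delta<r_1$; then pick cycles $\gamma_1'$ in $U_\delta(\sigma_1\cap\sigma_2)\setminus(\sigma_1\cap\sigma_2)$ around $\sigma_1\cap\sigma_2$ and $\gamma_1''$ in $U_\delta(\tau)\setminus\tau$ around $\tau$, each with the usual winding-number normalisation, and set $\gamma_1:=\gamma_1'+\gamma_1''$. A direct check of winding numbers (using the disjointness of these neighbourhoods, and that all of $U_\delta(\sigma_1\cap\sigma_2)\setminus(\sigma_1\cap\sigma_2)$ and $U_\delta(\tau)\setminus\tau$ lies in $\rho(A)$) then shows: $\gamma_1$ is an admissible cycle for $P_1$; $\gamma_1'$ is an admissible cycle for $P_{\sigma_1\cap\sigma_2}$; $\im\gamma_1\cap\im\gamma_2=\emptyset$; $n(\gamma_2,\cdot)\equiv1$ on $\im\gamma_1'$ and $\equiv0$ on $\im\gamma_1''$; and $n(\gamma_1,\cdot)\equiv0$ on $\im\gamma_2$.

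With the cycles in place I would compute, using the resolvent identity $(z-A)^{-1}(w-A)^{-1}=\frac{1}{w-z}\big((z-A)^{-1}-(w-A)^{-1}\big)$ (legitimate since $\im\gamma_1\cap\im\gamma_2=\emptyset$, so $z\ne w$) and interchanging the order of integration (both integrands being continuous),
\begin{align*}
P_1P_2 &= \frac{1}{(2\pi i)^2}\int_{\gamma_1}\int_{\gamma_2}(z-A)^{-1}(w-A)^{-1}\,dw\,dz \\
 &= \frac{1}{2\pi i}\int_{\gamma_1} n(\gamma_2,z)\,(z-A)^{-1}\,dz + \frac{1}{2\pi i}\int_{\gamma_2} n(\gamma_1,w)\,(w-A)^{-1}\,dw \\
 &= \frac{1}{2\pi i}\int_{\gamma_1'}(z-A)^{-1}\,dz = P_{\sigma_1\cap\sigma_2},
\end{align*}
where the $\gamma_2$-integral drops out because $n(\gamma_1,\cdot)\equiv0$ on $\im\gamma_2$, and the $\gamma_1$-integral collapses to the $\gamma_1'$-part because $n(\gamma_2,\cdot)$ equals $1$ on $\im\gamma_1'$ and $0$ on $\im\gamma_1''$; the last equality is just the definition of the Riesz projection on $\sigma_1\cap\sigma_2$ via the admissible cycle $\gamma_1'$. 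Interchanging the roles of the indices $1$ and $2$ gives $P_2P_1=P_{\sigma_2\cap\sigma_1}=P_{\sigma_1\cap\sigma_2}$, whence also $P_1P_2=P_2P_1$.

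Finally, suppose $\sigma_1$ and $\sigma_2$ are disjoint. Then $\dist(\sigma_1,\sigma_2)>0$, so one can choose $\varepsilon>0$ with $U_\varepsilon(\sigma_1)\cap U_\varepsilon(\sigma_2)=\emptyset$ and $U_\varepsilon(\sigma_i)\setminus\sigma_i\subset\rho(A)$ for $i=1,2$, and cycles $\gamma_i$ in $U_\varepsilon(\sigma_i)\setminus\sigma_i$ around $\sigma_i$ with the usual normalisation. A short check shows that $\gamma:=\gamma_1+\gamma_2$ is an admissible cycle for $P_{\sigma_1\cup\sigma_2}$ (for instance $n(\gamma,\sigma_i)=n(\gamma_i,\sigma_i)=1$ and $n(\gamma,\sigma(A)\setminus(\sigma_1\cup\sigma_2))=0$), so additivity of the path integral yields $P_{\sigma_1\cup\sigma_2}=\frac{1}{2\pi i}\int_\gamma(z-A)^{-1}\,dz=P_1+P_2$. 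I expect the only genuine work to be in the second step: arranging the nested and disjoint neighbourhoods so that all the listed winding-number relations hold simultaneously — in particular keeping the small loop $\gamma_1'$ (which encircles the subset $\sigma_1\cap\sigma_2$ of $\sigma_2$) disjoint from, and unlinked with, the loop $\gamma_2$ around $\sigma_2$. Everything else is the resolvent-identity computation and routine bookkeeping, and the commutation $P_1P_2=P_2P_1$ as well as the case $\sigma_1\cap\sigma_2=\emptyset$ (which gives $P_1P_2=P_{\emptyset}=0$) are immediate consequences of the above.
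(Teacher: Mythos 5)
Your proof is correct; the only difference from the paper's lies in how the cycles are arranged. The paper splits \emph{both} spectral sets into the three pairwise disjoint compact pieces $\sigma_1\cap\sigma_2$, $\sigma_1\setminus\sigma_2$ and $\sigma_2\setminus\sigma_1$, surrounds each by a cycle inside pairwise disjoint neighbourhoods (so that no cycle winds around any other), writes $P_i = P_{12} + Q_i$ by additivity of the contour integral, and multiplies out: all mixed products vanish because the corresponding cycles do not link, and only $P_{12}^2 = P_{12}$ survives. You instead keep $\gamma_2$ as a single cycle around all of $\sigma_2$ and split only $\gamma_1$ into a piece $\gamma_1'$ lying in the region $\{n(\gamma_2,\cdot)=1\}$ and a piece $\gamma_1''$ lying in $\{n(\gamma_2,\cdot)=0\}$, then evaluate the double integral for $P_1P_2$ directly. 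The engine is identical in both arguments — the resolvent identity, interchange of the order of integration, and winding numbers, i.e.\ exactly the computation the paper uses to prove $P^2=P$ in Satz~\ref{thm: Rieszprojektion} — so neither route buys anything substantial over the other. The paper's symmetric decomposition has the small advantage that $P_1P_2 = P_2P_1$ drops out immediately, whereas your nested configuration must be rebuilt once with the roles of $\sigma_1$ and $\sigma_2$ exchanged (which you correctly note). Your treatment of the disjoint case coincides with the paper's.
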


\begin{proof}
Zunächst bemerken wir, dass $\sigma_{12} := \sigma_1 \cap \sigma_2$ eine kompakte und in $\sigma(A)$ isolierte Untermenge von $\sigma(A)$ ist und dass damit auch $\sigma_i \setminus \sigma_{12}$ kompakt und isoliert ist in $\sigma(A)$ ($i \in \{ 1,2\}$). Also existiert eine positive Zahl $r_0$, sodass $U_{r_0} (\sigma_{12})$,  $U_{r_0} (\sigma_1 \setminus \sigma_{12})$ und $U_{r_0} (\sigma_2 \setminus \sigma_{12})$ paarweise disjunkt sind und darüberhinaus das Spektrum $\sigma(A)$ nur in $\sigma_{12}$, $\sigma_1 \setminus \sigma_{12}$ bzw. $\sigma_2 \setminus \sigma_{12}$ selbst schneiden. 
Sei $\gamma_{12}$ ein Zykel in $U_{r_0} (\sigma_{12}) \setminus \sigma_{12}$ mit
\begin{align*}
n(\gamma_{12},\, \sigma_{12}) = 1 \text{ \; und \; } n(\gamma_{12}, \, \complex \setminus U_{r_0} (\sigma_{12} ) ) = 0 
\end{align*}
und sei $\gamma_i$ ein Zykel in $U_{r_0} (\sigma_i \setminus \sigma_{12}) \setminus \bigl( \sigma_i \setminus \sigma_{12} \bigr)$ mit
\begin{align*}
n(\gamma_i, \, \sigma_i \setminus \sigma_{12} ) = 1 \text{ \; und \; } n(\gamma_i, \, \complex \setminus U_{r_0} (\sigma_i \setminus \sigma_{12}) ) = 0
\end{align*}
(Proposition~\ref{prop: Cauchy für kompakta}!). Dann ist
\begin{align*}
P_{12} :=  \frac{1}{2 \pi i} \, \int_{\gamma_{12}} (z-A)^{-1}  \, dz 
\end{align*}
die Rieszprojektion von $A$ auf $\sigma_{12}$ und für die Rieszprojektionen $P_i$ gilt
\begin{align*}
P_i = \frac{1}{2 \pi i} \, \int_{\gamma_i + \gamma_{12}} (z-A)^{-1}  \, dz = \frac{1}{2 \pi i} \, \int_{\gamma_i} (z-A)^{-1}  \, dz + \frac{1}{2 \pi i} \, \int_{\gamma_{12}} (z-A)^{-1}  \, dz.
\end{align*} 
Da die Zykel $\gamma_{12}$, $\gamma_1$ und $\gamma_2$ sich nicht gegenseitig umlaufen, verschwinden die Produkte aus den Projektionen $P_{12}$, $\frac{1}{2 \pi i} \, \int_{\gamma_1} (z-A)^{-1}  \, dz$ und $\frac{1}{2 \pi i} \, \int_{\gamma_2} (z-A)^{-1}  \, dz$ (sofern sie natürlich aus verschiedenen Faktoren bestehen), und daraus folgt schließlich, dass $P_1 P_2 = P_{12}$ und $P_2 P_1 = P_{12}$, wie gewünscht.
\\

Seien nun $\sigma_1$ und $\sigma_2$ zusätzlich disjunkt. Dann existiert eine positive Zahl $r_0$, sodass $U_{r_0}(\sigma_1)$ und $U_{r_0}(\sigma_2)$ disjunkt sind und darüberhinaus das Spektrum $\sigma(A)$ nur in $\sigma_1$ bzw. $\sigma_2$ selbst schneiden. Sei nun $\gamma_i$ ein Zykel in $U_{r_0}(\sigma_i) \setminus \sigma_i$, der $\sigma_i$ einmal und $\complex \setminus U_{r_0}(\sigma_i)$ keinmal umläuft ($i \in \{ 1,2\}$). Dann ist $\gamma_1 + \gamma_2$ ein Zykel in $\rho(A)$ mit $n (\gamma_1 + \gamma_2, \, \sigma_1 \cup \sigma_2 ) = 1$ und $n (\gamma_1 + \gamma_2, \, \sigma(A) \setminus (\sigma_1 \cup \sigma_2) ) = 0$, das heißt, die Rieszprojektion von $A$ auf $\sigma_1 \cup \sigma_2$ ist gegeben durch 
\begin{align*}
\frac{1}{2 \pi i} \, \int_{\gamma_1 + \gamma_2} (z-A)^{-1}  \, dz = \frac{1}{2 \pi i} \, \int_{\gamma_1} (z-A)^{-1}  \, dz + \frac{1}{2 \pi i} \, \int_{\gamma_2} (z-A)^{-1}  \, dz = P_1 + P_2,
\end{align*}
wie behauptet.
\end{proof}

Abschließend noch ein wichtiger Satz zum Sonderfall isolierter Spektralwerte $\lambda$ abgeschlossener linearer Abbildungen $A$ in $X$ (Theorem~VIII.8.3 und Theorem~VIII.8.4 in~\cite{Yosida: FA}). %: dort ist $\sigma = \{ \lambda \}$ einpunktig. 
%BEACHTE: Beweis hier ein wenig geschmeidiger als in Yosida, und Teil (ii) des folgenden Satzes ist leicht allgemeiner als Yosidas Theorem VIII.8.4 (weil dort $A$ als *beschränkt* vorausgesetzt wird) 
Wir nennen $\lambda$ dabei einen \emph{isolierten Spektralwert der Ordnung $m \in \natu \cup \{ \infty \}$} genau dann, wenn $\lambda$ -- als isolierter Spektralwert offensichtlich eine isolierte Singularität der Resolventenabbildung -- ein $m$-facher Pol von $(\, . \, - A)^{-1}$ ist (wobei wir unter einem $\infty$-fachen Pol natürlich eine wesentliche Singularität verstehen).

Die \emph{algebraische Vielfachheit von $\lambda$} ist als $\dim PX$ definiert, wobei $P$ die Rieszprojektion von $A$ auf $\{\lambda\}$ ist. Der isolierte Spektralwert $\lambda$ heißt \emph{halbeinfach} genau dann, wenn die geometrische Vielfachheit von $\lambda$ mit der algebraischen übereinstimmt, kurz: wenn $\dim \ker (A-\lambda) = \dim PX$. Insbesondere ist $\lambda$ dann ein Eigenwert. Schließlich nennen wir $\lambda$ \emph{einfach}, wenn die algebraische Vielfachheit von $\lambda$ gleich $1$ ist.

\begin{thm} \label{thm: isol spektralwerte}
Sei $A: D \subset X \to X$ eine abgeschlossene lineare Abbildung, $\lambda$ ein isolierter Spektralwert von $A$ der Ordnung $m \in \natu \cup \{ \infty \}$ und $P$ die Rieszprojektion von $A$ auf $\{ \lambda \}$.
\begin{itemize}
\item [(i)] Sei $m \in \natu$. Dann ist $\lambda$ ein Eigenwert von $A$ und  
\begin{align*}
P X = \ker (A-\lambda)^n \text{ \; und \; } (1-P)X = \im (A-\lambda)^n
\end{align*}
für alle natürlichen Zahlen $n \ge m$. 
\item [(ii)] Sei $m = \infty$. Dann ist die algebraische Vielfachheit von $\lambda$ unendlich. %: $\rk P = \infty$.
\end{itemize}
\end{thm}

\begin{proof}
Sei 
\begin{align*}
U_n := \frac{1}{2 \pi i} \, \int_{\gamma} \frac{ (z-A)^{-1} }{(z-\lambda)^{n+1}} \, dz
\end{align*}
für alle $n \in \mathbb{Z}$, wobei $\gamma$ einen (kreisförmigen) Zykel in $\rho(A)$ bezeichnet, der $\{ \lambda \}$ einmal und $\sigma(A) \setminus \{ \lambda \}$ keinmal umläuft. $U_n$ ist der $n$-te Laurentkoeffizient von $(\,. \,- A)^{-1}$ bei $\lambda$.
Wir sehen sofort, dass
\begin{align*}
U_n A \subset A U_n \text{ \; und \; } U_n X \subset D
\end{align*} 
für alle $n \in \mathbb{Z}$. Außerdem sehen wir (durch direktes Nachrechnen), dass
\begin{align*}
(A-\lambda) U_n = \begin{cases} U_{n-1},&  n \in \mathbb{Z} \setminus \{0\} \\
																	  U_{-1} - 1,& n = 0
											\end{cases},
\end{align*}
woraus sich wiederum ergibt, dass
\begin{align} \label{eq: isol spektralwerte 1}
U_{-n-1} = (A-\lambda)^n U_{-1} = (A-\lambda)^n P \text{ \; und \; } P-1 = U_{-1} -1 = (A-\lambda)^n U_{n-1}
\end{align}
für alle $n \in \natu$.
\\

(i) Sei nun $m \in \natu$, dann gilt $U_{-n-1} = 0$ für alle $n \ge m$. Aus~\eqref{eq: isol spektralwerte 1} folgt damit, dass
\begin{align*}
PX \subset \ker (A- \lambda)^n \text{ \; und \; } (1-P)X \subset \im (A-\lambda)^n
\end{align*}
für alle $n \ge m$, woraus sich erneut mithilfe von~\eqref{eq: isol spektralwerte 1} ergibt, dass 
$(1-P) \bigl( \ker(A-\lambda)^n \bigr) = 0$ und $P \bigl( \im (A-\lambda)^n \bigr) = 0$ und daher
\begin{align*}
PX = \ker (A- \lambda)^n \text{ \; und \; } (1-P)X = \im (A-\lambda)^n
\end{align*}
für alle $n \ge m$, wie gewünscht.
\\

(ii) Sei $m = \infty$, dann gilt nach~\eqref{eq: isol spektralwerte 1}
\begin{align*}
(A-\lambda)^n P = U_{-n-1} \ne 0
\end{align*} 
für unendlich viele $n \in \natu$. Die lineare Abbildung $(A-\lambda) \big|_{PX}$ ist also nicht nilpotent. Wegen $\sigma \bigl((A-\lambda) \big|_{PX}\bigr) = \{0\}$ (Satz~\ref{thm: Rieszprojektion}) kann $PX$ daher nicht endlichdimensional sein, denn nach einem elementaren Satz der linearen Algebra (über die Trigonalisierbarkeit linearer Abbildungen in endlichdimensionalen Räumen) ist jede lineare Abbildung in einem endlichdimensionalen Raum nilpotent, wenn ihr Spektrum gleich $\{0\}$ ist.  
\end{proof}

\subsubsection{Spektraltheorie normaler linearer Abbildungen}

%Wir merken an, dass es neben den üblichen Zugängen zum Spektralsatz für beschränkte normale lineare Abbildungen $A$ (entweder über $C^*$-Algebren oder über Zerlegung von $A$ in den (selbstadjungierten) Real- und Imaginärteil und anschließende Produktmaßbildung) auch einen direkteren (und elementareren) Zugang gibt:

Sei $A$ eine lineare Abbildung in $H$. Dann heißt $A$ bekanntlich \emph{normal} genau dann, wenn $A$ dicht definiert und abgeschlossen ist und $A^* A = A A^*$ gilt. 

$A$ heißt \emph{(schief)symmetrisch} genau dann, wenn $A$ dicht definiert ist und $A \subset A^*$ ($A \subset - A^*$) gilt. Und $A$ heißt \emph{(schief)selbstadjungiert} genau dann, wenn $A$ dicht definiert ist und $A = A^*$ ($A = - A^*$) gilt, mit anderen Worten: wenn $A$ (schief)symmetrisch ist und $\sigma(A) \subset \real$ ($\sigma(A) \subset i \, \real$) (Theorem~2.5 in~\cite{mmqm1} oder Theorem~X.1 in~\cite{RS 2}). 

Schließlich heißt eine lineare Abbildung $U: H \to H$ \emph{unitär} genau dann, wenn $U$ isometrisch und surjektiv ist, mit anderen Worten: wenn $U^* U = 1 = U U^*$.
\\

(Schief)selbstadjungierte und unitäre lineare Abbildungen sind also offensichtlich normal. Und auch der folgende Satz ist für (schief)selbstadjungierte und unitäre $A$ offensichtlich.

\begin{thm} \label{thm: eigenschaften normaler A}
Sei $A: D(A) \subset H \to H$ normal. Dann gilt $D(A^*) = D(A)$ und $\norm{A^* x} = \norm{Ax}$ für alle $x \in D(A)$. Insbesondere gilt $\ker A^* = \ker A$.
\end{thm}

\begin{proof}
Theorem~13.32 in~\cite{Rudin: FA}.
\end{proof}

%Wir erinnern daran, was Spektralmaße sind.
Sei $(X_0, \mathcal{A})$ ein messbarer Raum und sei $P_E$ für jedes $E \in \mathcal{A}$ eine orthogonale Projektion in $H$. Dann heißt $P$ \emph{Spektralmaß auf $(X_0, \mathcal{A}, H)$} (Definition~IX.1.1 in~\cite{Conway: fana} oder Definition~12.17 in~\cite{Rudin: FA}) % oder Abschnitt~5.1 in~\cite{BirmanSolomjak}
genau dann, wenn gilt:
\begin{itemize}
\item [(i)] $P_{\emptyset} = 0$ und $P_{X_0} = \id_H$
\item [(ii)] $P_{E \cap F} = P_E \, P_F$ für alle $E, F \in \mathcal{A}$
\item [(iii)] $P_{\cup_{n=1}^{\infty} E_n} x = \sum_{n=1}^{\infty} P_{E_n}x$ für alle $x \in H$ und alle paarweise disjunkten $E_n \in \mathcal{A}$. 
\end{itemize}

Jedes Spektralmaß $P$ auf $(X_0, \mathcal{A}, H)$ gibt für alle $x, y \in H$ ein komplexes Maß $P_{x,y}$ auf $(X_0, \mathcal{A})$, und zwar so: $P_{x,y}(E) := \scprd{ x, P_E y}$ für alle $E \in \mathcal{A}$. Insbesondere ist $P_{x,x}$ für jedes $x \in H$ mit $\norm{x} = 1$ ein Wahrscheinlichkeitsmaß auf $(X_0, \mathcal{A})$.% das heißt, ein (positives) Maß mit $P_{x,x}(X_0) = 1$.
\\

Wir erinnern an Spektralintegrale. Sei $f$ eine einfache messbare Abbildung $X_0 \to \complex$, das heißt, $f$ ist messbar und nimmt nur endlich viele Werte $\alpha_1, \dots, \alpha_m \in \complex$ an:
\begin{align*}
f = \sum_{k=1}^m \alpha_k \, \chi_{E_k}
\end{align*} 
für messbare Mengen $E_1, \dots, E_m$. Dann heißt
\begin{align*}
\int f \, dP := \sum_{k=1}^m \alpha_k \, P_{E_k}
\end{align*}
\emph{Spektralintegral von $f$ bzgl. $P$}, wobei $P$ ein Spektralmaß auf $(X_0, \mathcal{A}, H)$ ist. Der definierende Ausdruck rechts hängt nicht von der (nichteindeutigen) Darstellung von $f$ als endliche Linearkombination charakteristischer Funktionen ab.

Sei $f$ nun eine beliebige messbare Abbildung $X_0 \to \complex$. Dann ist das \emph{Spektralintegral $\int f \, dP$ von $f$ bzgl. $P$} die  Abbildung in $H$, gegeben durch
\begin{align*}
D\Bigl( \int f \, dP \Bigr) := D_f := \Big \{ x \in H: \int |f|^2 \, dP_{x,x} < \infty \Big \}
\end{align*}
und 
\begin{align*}
\int f \, dP x := \lim_{n \to \infty} \int f_n \, dP x
\end{align*}
für alle $x \in D\bigl( \int f \, dP \bigr)$, wobei die $f_n$ einfache messbare Abbildungen $X_0 \to \complex$ sind mit $f_n(x) \longrightarrow f(x) \;\; (n \to \infty)$ und $|f_n(x)| \le |f(x)|$ für alle $x \in X$.
Solche einfache messbare Abbildungen $f_n$ existieren %s. Proposition E.2 in Cohn: Measure Theory
und der definierende Ausdruck auf der rechten Seite hängt nicht von der speziellen Wahl solcher $f_n$ ab. 
\\

Spektralintegrale sind sehr handlich, wie der folgende Satz belegt.

\begin{thm} \label{thm: rechenregeln spektralintegral}
Sei $P$ ein Spektralmaß auf $(X_0, \mathcal{A}, H)$ und seien $f, g: X_0 \to \complex$ messbare Abbildungen. Dann gilt:
\begin{itemize}
\item [(i)] $\int f \,dP$ ist linear, dicht definiert und abgeschlossen.
\item [(ii)] $\norm{ \int f\,dP x}^2 = \int |f|^2 \, dP_{x,x}$ für alle $x \in D_f$. 
\item [(iii)] $D\bigl( \int f \,dP + \int g\,dP \bigr) = D_{f+g} \cap D_g = D_{f+g} \cap D_f$ und
\begin{align*}
\int f \,dP + \int g \,dP \subset \int f+g \,dP.
\end{align*}
$D\bigl( \int f \,dP \, \int g\,dP \bigr) = D_{fg} \cap D_g$ und
\begin{align*}
\Bigl( \int f \, dP \Bigr) \, \Bigl( \int g \,dP \Bigr) \subset \int fg \,dP.
\end{align*} 
Insbesondere gilt $P_E \, \bigl( \int f\,dP \bigr) \subset \bigl( \int f \,dP \bigr) \, P_E$ für alle $E \in \mathcal{A}$.
\item [(iv)] $\ker \int f \, dP = P_{ \{f=0\} }H$. Insbesondere ist $\int f \, dP$ genau dann injektiv, wenn $P_{ \{f = 0\} } = 0$. 
\item[(v)] $\bigl( \int f \,dP \bigr)^* = \int \overline{f} \, dP$.
\end{itemize}
\end{thm} 

\begin{proof}
Die Aussagen (i) bis (iii) und (v) folgen aus Theorem~13.24 in~\cite{Rudin: FA}. Zur (sehr einfachen) Aussage (iv)! Die Inklusion $P_{ \{f=0\} }H \subset \ker \int f \, dP$ folgt aus (iii). Sei umgekehrt $x \in \ker \int f \, dP$, dann gilt nach (ii)
\begin{align*}
0 = \norm{ \int f \,dP x}^2 = \int |f|^2 \, dP_{x,x},
\end{align*}
das heißt, $f = 0$ $P_{x,x}$-fast überall. Also gilt
\begin{align*}
\norm{ P_{ \{f \ne 0\} } x }^2 = \scprd{ x, P_{ \{f \ne 0\} } x } = P_{x,x}(\{f \ne 0\}) = 0
\end{align*}
und damit $x = P_{ \{f = 0\} } x \in P_{ \{f=0\} }H$, wie gewünscht.
\end{proof}

Aus diesem Satz folgt, dass jedes Spektralintegral $\int f \, dP$ normal ist:
\begin{align*}
\Bigl( \int f \,dP \Bigr)^* \, \Bigl( \int f \, dP \Bigr) = \int |f|^2 \, dP = \Bigl( \int f \, dP \Bigr) \, \Bigl( \int f \,dP \Bigr)^*.
\end{align*}
Der folgende überaus wichtige Satz, der Spektralsatz (in der Spektralmaßversion), besagt nun, dass umgekehrt jede normale lineare Abbildung als Spektralintegral dargestellt werden kann. 

\begin{thm} [Spektralsatz] \label{thm: spektralsatz}
Sei $A$ eine normale lineare Abbildung in $H$. Dann existiert genau ein Spektralmaß auf $(\complex, \mathcal{B}_{\complex}, H)$ mit
\begin{align*}
A = \int \id \, dP
\end{align*}
und für dieses gilt $\supp P = \sigma(A)$, wobei 
\begin{align*}
\supp P := \big \{ z \in \complex: P_{U_{\varepsilon}(z)} \ne 0 \text{ für alle } \varepsilon > 0 \big \}
\end{align*}
der Träger von $P$ ist.
\end{thm}

\begin{proof}
Theorem~13.33 in~\cite{Rudin: FA} oder Theorem~X.4.11 in~\cite{Conway: fana}.
\end{proof}

Das eindeutige Spektralmaß $P$ aus dem obigen Satz heißt das \emph{Spektralmaß von $A$} und wir werden es künftig mit $P^{A}$ bezeichnen.

%Wie beweist man den Spektralsatz? 
Wir weisen darauf hin, dass es neben den geläufigen Zugängen zum Spektralsatz für beschränkte normale lineare Abbildungen $A$ -- dem beispielsweise in Rudins Buch~\cite{Rudin: FA} vorgestellten Zugang über $C^*$-Algebren oder dem Zugang über den Sonderfall selsbstadjungierter $A$ und die Bildung des Produktspektralmaßes (s. etwa Satz~VII.1.25 in Werners Buch~\cite{Werner: FA}) %\textbf{s. auch Berberian: Notes on spectral theory})
 -- auch noch andere Zugänge gibt. So ist es beispielsweise möglich  %(wenn auch etwas langwierig) 
elementar zu zeigen, dass 
%\begin{align*}
%\sigma(p(A,A^*)) = \{ p(\lambda, \overline{\lambda}): \lambda \in \sigma(A) \}
%\end{align*}
%für alle polynomialen Abbildungen $p$ in zwei Variablen: $p(x,y) = \sum_{i,j=1}^m a_{ij} \, x^{i} \, y^{j}$ für alle $x,y \in \complex$ (Theorem~2 in~\cite{Bernau 65}).
\begin{align*}
\sigma(p(A)) = p(\sigma(A)) = \{ p(z): z \in \sigma(A) \}
\end{align*}
für alle \emph{fast polynomialen Abbildungen} $p: \complex \to \complex$ (Theorem~2 in~\cite{Bernau 65}), worunter wir Abbildungen $p$ verstehen mit
\begin{align*}
p(z) = p_0(z, \overline{z}) = \sum_{i, j =0}^m a_{i j} z^{i} \, \overline{z}^j
\end{align*}
für alle $z \in \complex$ ($p_0$ ein Polynom in zwei Variablen). $p(A)$ steht natürlich abkürzend für $\sum_{i, j =0}^m a_{i j} \,A^{i} \, (A^*)^j$. Aus diesem Spektralabbildungssatz für fast polynomiale Abbildungen folgt, dass die Abbildung 
\begin{gather*}
U:= \{ \text{fast polynomiale Abbildungen } p: \sigma(A) \to \complex \} \subset C(\sigma(A)) \\
\longrightarrow \{ \text{beschränkte lineare Abbildungen in } H \} \\
p \mapsto p(A) 
\end{gather*}
wohldefiniert ist %(wenn $p$ gleich null ist auf $\sigma(A)$, dann ist auch $p(A)$ gleich null)
und dass sie isometrisch ist:
\begin{align*}
\norm{ p(A) } = \sup \{ |w|: w \in \sigma(p(A)) \} = \sup \{ |p(z)|: z \in \sigma(A) \} = \norm{ p }_{C(\sigma(A))}
\end{align*} 
für alle $p \in U$, denn $p(A)$ ist normal und daher gilt
\begin{align*}
\norm{ p(A) } = r_{p(A)}
\end{align*}
nach Satz~VI.1.7 in~\cite{Werner: FA}.
Außerdem ist $U$ offensichtlich eine Unteralgebra von $C(\sigma(A), \complex)$, die $1$ enthält, trennend und konjugationsstabil ist, und $U \ni p \mapsto p(A)$ ist ein Algebrenhomomorphismus. Der Satz von Stone, Weierstraß (Theorem~V.4.7 in~\cite{AmannEscher}) ergibt also, dass $U$ dicht liegt in $C(\sigma(A), \complex)$, das heißt, der angesprochene isometrische Algebrenhomomorphismus ist fortsetzbar zu einem ebenfalls isometrischen Algebrenhomomorphismus $C(\sigma(A)) \ni f \mapsto f(A)$, dem sog. stetigen Funktionalkalkül von $A$. Zu beachten ist dabei, dass wir nicht mit der (hinsichtlich des Spektralabbildungssatzes) viel einfacheren Unteralgebra der polynomialen Abbildungen $p_0: \sigma(A) \to \complex$ arbeiten können, weil diese eben für allgemeine normale $A$ nicht konjugationsstabil ist. %und daher der Satz von Stone, Weierstraß nicht anwendbar ist.

Aus dem stetigen Funktionalkalkül ergibt sich der Spektralsatz für beschränkte normale $A$ dann ohne größere Schwierigkeiten: der rieszsche Satz (über den bijektiven Zusammenhang zwischen $i \in C_0(\sigma(A))' = C(\sigma(A))'$ und den regulären komplexen Maßen $\mu$ auf $(\sigma(A), \mathcal{B}_{\sigma(A)})$ (Theorem~C.18 in~\cite{Conway: fana})) liefert das gesuchte Spektralmaß von $A$.

\begin{prop} \label{prop: rieszproj für normale A}
Sei $A$ normal mit Spektralmaß $P^{A}$. Dann gilt:
\begin{itemize}
\item [(i)] $\lambda$ ist Eigenwert von $A$ genau dann, wenn $P_{ \{\lambda\} }^{A} \ne 0$, mehr noch:
\begin{align*}
\ker (A-\lambda) = P_{ \{\lambda\} }^{A} H
\end{align*}
für alle $\lambda \in \complex$.
\item [(ii)] $\norm{ (z-A)^{-1} } = \bigl( \dist( z, \sigma(A) ) \bigr)^{-1}$ für alle $z \in \rho(A)$.
\item [(iii)] Wenn $\sigma$ eine kompakte in $\sigma(A)$ isolierte Untermenge von $\sigma(A)$ ist, dann ist die Rieszprojektion von $A$ auf $\sigma$ gleich $P_{\sigma}^{A}$ (eine orthogonale Projektion). Insbesondere stimmt die algebraische Vielfachheit eines isolierten Spektralwerts $\lambda$ von $A$ mit der geometrischen Vielfachheit überein und $\lambda$ ist also ein Eigenwert von $A$.
\end{itemize}
\end{prop}

\begin{proof}
Aussage (i) folgt aus Satz~\ref{thm: rechenregeln spektralintegral}~(iv) und Aussage (ii) und (iii) folgen aus 
\begin{align*}
(z-A)^{-1} = \int \frac{1}{z-w} \, dP^{A}(w) = \int_{\sigma(A)} \frac{1}{z-w} \, dP^{A}(w),
\end{align*}
was sich wiederum aus Satz~\ref{thm: rechenregeln spektralintegral}~(iii) und $\supp P^{A} = \sigma(A)$ ergibt.
\end{proof}

\subsection{Stark stetige Halbgruppen}

Wir erinnern zunächst an die Definition stark stetiger Halbgruppen und der Erzeuger stark stetiger Halbgruppen.
\\

Sei $T(t)$ für jedes $t \in [0,\infty)$ eine beschränkte lineare Abbildung in $X$. Dann heißt $T$ \emph{stark stetige Halbgruppe auf $X$} genau dann, wenn
\begin{itemize}
\item [(i)] $T(0) = 1$ und $T(t+s) = T(t) T(s)$ für alle $s,t \in [0, \infty)$
\item [(ii)] die Abbildung $[0,\infty) \ni t \mapsto T(t)x$ (rechtsseitig) stetig ist in $0$ für alle $x \in X$, das heißt
%\begin{align*}
$T(h)x \longrightarrow x \;\; (h \searrow 0)$.
%\end{align*}
\end{itemize}

Sei $T$ eine stark stetige Halbgruppe auf $X$. Dann heißt die lineare Abbildung $A$, die gegeben ist durch
\begin{gather*}
D(A) := \big \{ x \in X: [0,\infty) \ni t \mapsto T(t)x \text{ ist (rechtsseitig) differenzierbar in $0$} \big \}, \\ 
Ax := \lim_{h\searrow 0} \frac{T(h)x - x}{h}, 
\end{gather*}
der \emph{Erzeuger von $T$}. Wir sagen auch, $A$ \emph{erzeuge $T$}.

\begin{ex} \label{ex: beschränkte A sind halbgrerzeuger}
Sei $A$ eine beschränkte lineare Abbildung in $X$ und
\begin{align*}
T(t) := e^{A t} = \sum_{n=0}^{\infty} \frac{A^n}{n!} \, t^n
\end{align*}
für alle $t \in [0, \infty)$. Wie man leicht nachprüft, ist $T$ dann eine stark stetige Halbgruppe auf $X$ und $A$ ist der Erzeuger von $T$. Jede beschränkte lineare Abbildung in $X$ erzeugt also eine stark stetige Halbgruppe auf $X$.  $\blacktriangleleft$
\end{ex}

\begin{prop} \label{prop: T stark stetig und exp abschätzung für T(t)}
Sei $T$ eine stark stetige Halbgruppe auf $X$. Dann gibt es Zahlen $M \in [1, \infty)$ und $\omega \in \real$, sodass
\begin{align*}
\norm{ T(t) } \le M e^{\omega t}
\end{align*}
für alle $t \in [0, \infty)$, und $[0,\infty) \ni t \mapsto T(t)x$ ist stetig für alle $x \in X$.
\end{prop}

\begin{proof}
Proposition~I.5.3 und Proposition~I.5.5 in~\cite{EngelNagel}.
\end{proof}

Wir nennen ein $\omega$ wie in der obigen Proposition einen \emph{Wachstumsexponenten von $T$} und das Infimum der Wachstumsexponenten
\begin{align*}
\omega_T %&:= \inf \{ \omega \in \real: \omega \text{ ist ein Wachstumsexponent von $T$} \} \\
:= \inf \big \{ \omega \in \real: \text{ es gibt ein $M_{\omega} \in [1, \infty)$, sodass } \norm{T(t)} \le M_{\omega} \, e^{\omega t} \text{ für alle } t \in [0, \infty) \big \}
\end{align*}
heißt \emph{Wachstumsschranke von $T$}. Sie ist nach obiger Proposition enthalten in $\real \cup \{ -\infty \}$ und kann jeden der Werte darin annehmen (Beispiel~I.5.7~(i) und~(iii) in~\cite{EngelNagel}). Zu beachten ist, dass die Wachstumsschranke kein Wachstumsexponent sein muss, wie man etwa an dem Beispiel 
\begin{align*}
T(t) := e^{A t} \text{ für alle $t \in [0, \infty)$ mit } A := \begin{pmatrix} 0 & 1 \\ 0 & 0 \end{pmatrix}  
\end{align*}
sieht, in dem die Wachstumsschranke gleich $0$ ist und daher kein Wachstumsexponent ist.
$T$ heißt \emph{Quasikontraktionshalbruppe} bzw. \emph{Kontraktionshalbgruppe} genau dann, wenn ein $\omega \in \real$ bzw. ein $\omega \in (-\infty, 0]$ existiert, sodass $\norm{T(t)} \le e^{\omega t}$ für alle $t \in [0, \infty)$.
\\

Sei $T$ eine stark stetige Halbgruppe auf $X$ und $A$ ihr Erzeuger. Dann ist die Abbildung $[0, \infty) \ni t \mapsto T(t)x$ rechtsseitig differenzierbar für alle $x \in D(A)$ mit rechtsseitiger Ableitung $t \mapsto T(t)Ax$, denn es gilt ja $T(t+h) = T(t)T(h)$ für alle $t, h \in [0, \infty)$. Insbesondere ist $T(t)x \in D(A)$ für alle $x \in D(A)$ und es gilt
\begin{align*}
A T(t)x = %\lim_{h \searrow 0} \frac{T(h)T(t)x - T(t)x}{h} = 
\lim_{h \searrow 0} \frac{T(t+h)x - T(t)x}{h} = T(t)Ax
\end{align*}
für alle $x \in D(A)$ und alle $t \in [0, \infty)$, kurz: $A T(t) \supset T(t) A$ für alle $t \in [0, \infty)$. 
Außerdem ist die eben erwähnte rechtsseitige Ableitung nach Proposition~\ref{prop: T stark stetig und exp abschätzung für T(t)} stetig, mithilfe von Satz~\ref{thm: einseitig db und beidseitig db} folgt also, dass $[0, \infty) \ni t \mapsto T(t)x$ beidseitig stetig differenzierbar ist für alle $x \in D(A)$, und daraus wiederum ergibt sich mithilfe von Lemma~\ref{lm: strong db of products} und der Dichtheit von $D(A)$ (Satz~\ref{thm: eigenschaften von erzeugern}) leicht, dass $A$ außer $T$ keine weitere stark stetige Halbgruppe auf $X$ erzeugt (s. Theorem~II.1.4 in~\cite{EngelNagel}). Wir können daher von \emph{der} von $A$ erzeugten stark stetigen Halbgruppe sprechen und diese symbolisch auch mit $e^{A \, . \,}$ bezeichnen (was sich mit Beispiel~\ref{ex: beschränkte A sind halbgrerzeuger} verträgt).

Außerdem können wir deshalb die Wachstumsschranke von $T$ auch mit $\omega_A$ bezeichnen: $\omega_A := \omega_{ e^{A \, . \,} } = \omega_T$.

\begin{thm} \label{thm: eigenschaften von erzeugern}
Sei $T$ eine stark stetige Halbgruppe auf $X$ mit Erzeuger $A$ und seien $M \in [1, \infty)$ und $\omega \in \real$, sodass 
%\begin{align*}
$\norm{ T(t) } \le M e^{\omega t}$
%\end{align*}
für alle $t \in [0, \infty)$. Dann gilt: 
\begin{itemize}
\item[(i)] $A$ ist eine dicht definierte abgeschlossene lineare Abbildung 
\item[(ii)] $\{ z \in \complex:  \Re z > \omega \} \subset \rho(A)$ und es gilt 
\begin{align*}
\norm{ (z-A)^{-n} } \le \frac{M}{ (\Re z - \omega)^n }
\end{align*}
für alle $n \in \natu$ und alle $z \in \complex$ mit $\Re z > \omega$.
\end{itemize}
\end{thm}

\begin{proof}
Theorem~II.1.4, Theorem~II.1.10 und Korollar~II.1.11 in~\cite{EngelNagel}. 
\end{proof}

Aus diesem Satz folgt unmittelbar, dass
\begin{align*}
\sigma(A) \subset \{ z \in \complex: \Re z \le \omega_A \},
\end{align*}
mit anderen Worten: die Spektralschranke $s_A$ von $A$ ist kleiner oder gleich $\omega_A$, wobei
\begin{align*}
s_A := \sup \{ \Re z: z \in \sigma(A) \}.
\end{align*}
%Zwischen Spektrum und Wachstumsschranke bestehen weitere wichtige und tiefer liegende Zusammenhänge (s. etwa Proposition~IV.2.2), insbesondere
Spektralschranke und Wachstumsschranke stimmen (in Anwendungen) häufig überein (Korollar~IV.3.11, Korollar~IV.3.12 und Lemma~V.1.9 in~\cite{EngelNagel}), so etwa auch in unserem Anwendungsbeispiel aus Abschnitt~\ref{sect: anwendbsp} (nach Theorem~VI.1.15 in~\cite{EngelNagel}).

\begin{ex} \label{ex: mult- und translhalbgruppe}
(i) Sei $(X_0, \mathcal{A}, \mu)$ ein Maßraum, $X := L^p(X_0, \complex)$ und $f$ eine messbare Abbildung $X_0 \to \complex$, sodass $\Re f(x) \le \omega$ für fast alle $x \in X_0$ (für eine von $x$ unabhängige reelle Zahl $\omega$). Sei 
\begin{align*}
\bigl( T(t)g \bigr)(x) := e^{f(x) \, t} g(x)
\end{align*}
für alle $g \in X$ und alle $t \in [0, \infty)$. Dann ist $T$ eine stark stetige Halgbgruppe (lebesguescher Satz!), eine sog. Multiplikationshalbgruppe, und es gilt $\norm{T(t)} \le e^{\omega t}$ für alle $t \in [0, \infty)$. Sei $A$ der Erzeuger von $T$, dann gilt $A = M_f$. Sei nämlich $g \in D(A)$. Dann gilt
\begin{align*}
\frac{ T(h)g - g}{h} \longrightarrow A g \quad (h \searrow 0)
\end{align*}
und daher existiert (beispielsweise nach Satz~VI.4.3 und Korollar~VI.4.13 in~\cite{Elstrodt}) eine Folge $(h_n)$ in $(0, \infty)$, sodass $h_n \longrightarrow 0 \;\;(n \to \infty)$ und
\begin{align*}
\frac{e^{f(x) h_n} - 1}{h_n} \, g(x) =  \Bigl( \frac{ T(h_n)g - g}{h_n} \Bigr)(x) \longrightarrow \bigl( A g \bigr)(x) \quad (n \to \infty)
\end{align*}
für fast alle $x \in X_0$.
Also gilt $f g = A g$ und damit $A \subset M_f$. Da erstens $\{ \Re z > \omega\} \subset \rho(A)$ nach Satz~\ref{thm: eigenschaften von erzeugern} und zweitens $\{ \Re z > \omega\} \subset \rho(M_f)$ nach Beispiel~\ref{ex: spektrum von multop}, 
gilt $\rho(A) \cap \rho(M_f) \ne \emptyset$ und daher (nach Aufgabe~IV.1.21 (5)) $A = M_f$, wie behauptet.
\\

(ii) Sei $X := L^p(\real, \complex)$ und sei 
\begin{align*}
\bigl( T(t)f \bigr)(x) := f(x+t)
\end{align*} 
für alle $f \in X$ und alle $t \in [0, \infty)$. Dann ist $T$, wie man leicht sieht (s. Beispiel~I.5.4 in~\cite{EngelNagel}), eine stark stetige Halbgruppe auf $X$, eine sog. Translationshalbgruppe, und es gilt $\norm{T(t)} \le 1$ für alle $t \in [0, \infty)$. Sei $A$ der Erzeuger von $T$, dann ist $A$ gegeben durch die schwache Ableitung auf $W^{1,p}(\real)$, genauer: $A = B$, wobei 
\begin{align*}
D(B) := W^{1,p}(\real, \complex) := \{ f \in L^p(\real, \complex): f \text{ schwach differenzierbar und } f' \in L^p(\real) \} 
\end{align*}
und $B f := f'$ (schwache Ableitung). 
Sei nämlich $f \in D(A)$. Dann gilt einerseits
\begin{align*}
\int \Bigl( \frac{T(h)f - f}{h} \Bigr)(x) \varphi(x) \, dx \longrightarrow \int \bigl( A f \bigr)(x) \varphi(x) \, dx \quad (h \searrow 0)
\end{align*} 
für alle $\varphi \in C_c^{\infty}(\real, \complex)$ und andererseits
\begin{align*}
\int \Bigl( \frac{T(h)f - f}{h} \Bigr)(x) \varphi(x) \, dx &= \int f(x) \, \frac{\varphi(x-h) - \varphi(x)}{h} \, dx \\
&\longrightarrow - \int f(x) \varphi'(x) \, dx \quad (h \to 0)
\end{align*}
für alle $\varphi \in C_c^{\infty}(\real, \complex)$, woraus sich ergibt, dass $f$ schwach differenzierbar ist mit schwacher Ableitung $f' = A f$. Also ist $A \subset B$ und aus Satz~\ref{thm: eigenschaften von erzeugern} und Proposition~II.2.10 in~\cite{EngelNagel} folgt, dass $\rho(A) \cap \rho(B) \ne \emptyset$ und daher wieder mithilfe von Aufgabe~IV.1.21 (5), dass $A = B$, wie gewünscht.  $\blacktriangleleft$
\end{ex}

Der folgende ungemein wichtige Satz von Hille, Yosida kehrt die Aussage von Satz~\ref{thm: eigenschaften von erzeugern} um.

\begin{thm}[Hille, Yosida] \label{thm: Hille, Yosida}
Sei $A$ eine lineare Abbildung in $X$, die (i) und (ii) aus Satz~\ref{thm: eigenschaften von erzeugern} erfüllt. Dann erzeugt $A$ (genau) eine stark stetige Halbgruppe auf $X$.
\end{thm}

\begin{proof}
Theorem~II.3.8 in~\cite{EngelNagel}.
\end{proof}

Dieser Satz vereinfacht sich deutlich, wenn $A$ normal ist. Die Voraussetzung, dass (i) und (ii) aus Satz~\ref{thm: eigenschaften von erzeugern} erfüllt sind, wird dann nämlich zu 
\begin{align*}
\sigma(A) \subset \{ z \in \complex: \Re z \le \omega \}
\end{align*}
(nach Satz~\ref{thm: rechenregeln spektralintegral}~(ii) und~(iii)) und die von $A$ erzeugte stark stetige Halbgruppe $T$ ist gegeben durch ein Spektralintegral:
\begin{align*}
T(t) = e^{A t} = \int e^{z t} \, dP^{A}(z)
\end{align*}
für alle $t \in [0, \infty)$, wovon man sich ebenfalls mithilfe von Satz~\ref{thm: rechenregeln spektralintegral}~(ii) und Aufgabe~IV.1.21 (5) in~\cite{EngelNagel} überzeugt. Wenn $A$ sogar schiefselbstadjungiert ist, dann ist $\sigma(A) \subset i \, \real$, das heißt, 
\begin{align*}
U(t) := \int e^{z t} \, dP^{A}(z)
\end{align*}
ist für alle $t \in \real$ unitär (nach Satz~\ref{thm: rechenregeln spektralintegral}~(iii) und~(v)) und $\real \ni t \mapsto U(t)x$ ist stark stetig, kurz: $U$ ist eine stark stetige unitäre Gruppe.

Der Satz von Stone (Theorem~X.5.6 in~\cite{Conway: fana}) besagt, dass umgekehrt jede stark stetige unitäre Gruppe einen schiefselbstadjungierten Erzeuger hat.
\\

Der nachfolgende Satz von Lumer, Phillips ist eine oft nützliche Charakterisierung von Kontraktionshalbgruppenerzeugern, die ohne Resolventenabschätzungen auskommt. Sie benützt stattdessen den Begriff der Dissipativität, an den wir kurz erinnern wollen. Sei $A$ eine lineare Abbildung in $H$. Dann heißt $A$ \emph{dissipativ} genau dann, wenn
\begin{align*}
\Re \scprd{ x, Ax } \le 0
\end{align*} 
für alle $x \in D(A)$ (s. Proposition~II.3.23 und Aufgabe~II.3.25 in~\cite{EngelNagel}).
 
\begin{thm} [Lumer, Phillips] \label{thm: Lumer, Phillips}
Sei $A$ eine dicht definierte lineare Abbildung in $H$. Dann erzeugt $A$ genau dann eine Kontraktionshalbgruppe auf $H$, wenn $A$ dissipativ ist und $\lambda - A$ surjektiv ist für ein $\lambda \in (0, \infty)$.
\end{thm}

\begin{proof}
Theorem~II.3.15 und Proposition~II.3.14 in~\cite{EngelNagel}.
\end{proof}

Schließlich führen wir noch eine abkürzende Sprechweise ein, die im Zusammenhang mit Satz~\ref{thm: triv adsatz 2}, einem unserer trivialen Adiabatensätze, natürlich erscheint. Sei $T$ eine stark stetige Halbgruppe auf $X$ mit Erzeuger $A$. Wir nennen dann
\begin{align*}
\omega_T', \omega_A' := \inf \big \{ \omega \in \real: \norm{ T(t) } \le e^{\omega t} \text{ für alle } t \in [0, \infty) \big \}
\end{align*} 
die \emph{quasikontraktive Wachstumsschranke von $T$}. Sie ist enthalten in $\real \cup \{ \infty \}$ und ist kleiner als $\infty$ genau dann, wenn $T$ eine Quasikontraktionshalbgruppe erzeugt. In diesem Fall ist sie ein Wachstumsexponent von $T$. Insbesondere ist die Wachstumsschranke von $T$ kleiner oder gleich der quasikontraktiven Wachstumsschranke: $\omega_A \le \omega_A'$. 

Darüberhinaus besteht aber kein allgemeiner Zusammenhang zwischen Wachstumsschranke und quasikontraktiver Wachstumsschranke, denn zu jedem $(\omega_1, \omega_2) \in \real^2$ mit $\omega_1 \le \omega_2$ existiert eine stark stetige Halbgruppe $T$, sodass $\omega_T = \omega_1$ und $\omega_T' = \omega_2$. Warum? Sei 
\begin{align*}
A(\alpha) := \begin{pmatrix} \omega_1 & \alpha \\ 0 & \omega_1 \end{pmatrix}
\end{align*}
für alle $\alpha \in [0, \infty)$. Dann ist $\omega_{A(\alpha)} = s_{A(\alpha)} = \omega_1$ nach Korollar~IV.3.12 in~\cite{EngelNagel} für alle $\alpha \in [0, \infty)$ und die Abbildung $\alpha \mapsto \omega_{A(\alpha)}'$  ist stetig, monoton wachsend und unbeschränkt und $\omega_{A(0)}' = \omega_1$, das heißt, es gibt ein $\alpha_0 \in [0, \infty)$, sodass $\omega_{A(\alpha_0)}' = \omega_2$, womit wir gezeigt haben, was zu zeigen war.

\subsection{Störungstheorie}

Wir vermerken zunächst einen elementaren Satz zur stetigen Abhängigkeit des Spektrums beschränkter linearer Abbildungen $A(t)$ vom Parameter, der besagt, dass das Spektrum von jeder festen Stelle $t_0$ aus gesehen nicht plötzlich (unkontrolliert) wächst.

\begin{prop} \label{prop: sigma(A(t)) oberhstet}
Sei $A(t)$ für jedes $t \in I$ eine beschränkte lineare Abbildung in $X$ und $t \mapsto A(t)$ stetig. Dann ist $t \mapsto \sigma(A(t))$ oberhalbstetig.
\end{prop}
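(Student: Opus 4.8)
Der Plan ist, die Oberhalbstetigkeit direkt aus der Definition nachzuweisen. Seien also $t_0 \in I$ und $\varepsilon > 0$ fest vorgegeben; zu finden ist eine in $I$ offene Umgebung $U_{t_0}$ von $t_0$ mit $\sigma(A(t)) \subset U_{\varepsilon}(\sigma(A(t_0)))$ f�r alle $t \in U_{t_0}$. �quivalent dazu ist, dass f�r alle $t$ hinreichend nahe bei $t_0$ jedes $z \in \complex$ mit $\dist(z, \sigma(A(t_0))) \ge \varepsilon$ in $\rho(A(t))$ liegt. (Der Fall $X = 0$ ist trivial, und f�r $X \ne 0$ ist $\sigma(A(t_0)) \ne \emptyset$, sodass keine entarteten Situationen auftreten; die Stetigkeit von $t \mapsto A(t)$ ist dabei in der Normoperatortopologie zu verstehen.)

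Zun�chst w�rde ich die entscheidende Zutat bereitstellen, n�mlich eine gleichm��ige Resolventenabsch�tzung:
\begin{align*}
C := \sup \big\{ \norm{(z - A(t_0))^{-1}} : z \in \complex,\ \dist(z, \sigma(A(t_0))) \ge \varepsilon \big\} < \infty.
\end{align*}
Die hier betrachtete Menge ist abgeschlossen, aber unbeschr�nkt; daher zerlege ich sie in zwei Teile. F�r $|z| \ge 2\norm{A(t_0)} + 1$ liefert die Neumannreihe -- da ja $\sigma(A(t_0)) \subset U_{\norm{A(t_0)}}(0)$ -- die Absch�tzung $\norm{(z - A(t_0))^{-1}} \le (|z| - \norm{A(t_0)})^{-1} \le 1$. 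Auf der \emph{kompakten} Menge $\{ z \in \complex : \dist(z, \sigma(A(t_0))) \ge \varepsilon \text{ und } |z| \le 2\norm{A(t_0)} + 1 \}$ dagegen ist die Resolventenabbildung $z \mapsto (z - A(t_0))^{-1}$ stetig, also beschr�nkt. Das Maximum der beiden so gewonnenen Schranken ist das gesuchte $C < \infty$.

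Der zweite Schritt ist ein Standard-St�rungsargument. F�r $z$ mit $\dist(z, \sigma(A(t_0))) \ge \varepsilon$ schreibe ich
\begin{align*}
z - A(t) = (z - A(t_0)) \bigl( 1 - (z - A(t_0))^{-1} (A(t) - A(t_0)) \bigr).
\end{align*}
Gilt $\norm{A(t) - A(t_0)} < 1/C$, so ist $\norm{(z - A(t_0))^{-1} (A(t) - A(t_0))} < 1$, also der Klammerausdruck nach der Neumannreihe invertierbar und damit $z - A(t)$ eine bijektive lineare Abbildung $X \to X$ mit beschr�nktem Inversen, d.h. $z \in \rho(A(t))$. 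Wegen der Normstetigkeit von $t \mapsto A(t)$ gibt es nun eine in $I$ offene Umgebung $U_{t_0}$ von $t_0$ mit $\norm{A(t) - A(t_0)} < 1/C$ f�r alle $t \in U_{t_0}$; f�r diese $t$ folgt $\sigma(A(t)) \subset U_{\varepsilon}(\sigma(A(t_0)))$, wie verlangt.

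Die eigentliche H�rde ist allein die gleichm��ige Resolventenschranke �ber die unbeschr�nkte Menge $\complex \setminus U_{\varepsilon}(\sigma(A(t_0)))$; alles Weitere ist Routine. Ein alternativer Zugang w�re, mittels $\varphi \in X'$ auf den skalaren Fall zur�ckzugehen, doch die direkte Resolventenabsch�tzung scheint mir �bersichtlicher.
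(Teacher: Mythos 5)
Dein Beweis ist korrekt und entspricht genau dem Weg, den die Arbeit meint: dort steht nur der Hinweis "`Das folgt mithilfe der neumannschen Reihe"', und deine Ausarbeitung (gleichm��ige Resolventenschranke auf $\complex \setminus U_{\varepsilon}(\sigma(A(t_0)))$ via Neumannreihe f�r gro�e $|z|$ plus Kompaktheit, dann das St�rungsargument) ist die Standardausf�hrung dieses Hinweises.
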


\begin{proof}
Das folgt mithilfe der neumannschen Reihe.
\end{proof}

I.a. ist $t \mapsto \sigma(A(t))$ allerdings nicht auch unterhalbstetig, wie Beispiel~IV.3.8 in~\cite{Kato: Perturbation 80} zeigt: dort existiert eine Stelle $t_0 \in I$, sodass $\sigma(A(t_0)) = \overline{U}_1(0)$, während $\sigma(A(t)) = \partial U_1(0)$ für alle $t$ in einer punktierten Umgebung von $t_0$. Von $t_0$ aus gesehen schrumpft dort das Spektrum also plötzlich. 

Was wir zusätzlich voraussetzen müssen, um auch die Unterhalbstetigkeit (und damit wegen obiger Proposition die Stetigkeit) von $t \mapsto \sigma(A(t))$ zu bekommen, klären wir in Proposition~\ref{prop: char sigma(A(t)) unterhstet}.
\\

%Definition "`Konvergenz im verallgemeinerten Sinne"' (für (nicht notwendig dicht definierte) abgeschlossene lineare Abbildungen $A_n$) und "`Stetigkeit im verallgemeinerten Sinn"'. Bem: $t \mapsto A(t)$ stetig im verallg S. genau dann wenn es folgenstetig ist im verallg S.
Seien $A_n$, $A$ abgeschlossene lineare Abbildungen in $X$ (nicht notwendig dicht definiert). Dann \emph{konvergieren die $A_n$ im verallgemeinerten Sinn gegen $A$} genau dann, wenn $\hat{d}(A_n, A) \longrightarrow 0 \; \;(n \to \infty)$, wobei $\hat{d}$ die in Abschnitt~IV.2.4 von~\cite{Kato: Perturbation 80} definierte Metrik auf der Menge aller abgeschlossenen linearen Abbildungen in $X$ ist. Wir schreiben dafür kurz
\begin{align*}
A_n \longrightarrow A \quad (n \to \infty) \text{ \; im verallgemeinerten Sinn}.
\end{align*}

Sei $J$ ein Intervall in $\real$ und sei $A(t)$ für jedes $t \in J$ eine abgeschlossene lineare Abbildung in $X$. Dann heißt die Abbildung $t \mapsto A(t)$ \emph{stetig (in $t_0 \in J$) im verallgemeinerten Sinn} genau dann, wenn sie stetig (in $t_0$) bzgl. der Metrik $\hat{d}$ ist. Anders gesagt: $t \mapsto A(t)$ ist stetig in $t_0$ im verallgemeinerten Sinn genau dann, wenn $A(t_n) \longrightarrow A(t_0) \;\;(n \to \infty)$ im verallgemeinerten Sinn für jede Folge $(t_n)$ in $J$ mit $t_n \longrightarrow t_0 \;\; (n \to \infty)$.
\\

Wir geben nun eine Charakterisierung der verallgemeinerten Konvergenz (Theorem IV.2.25 in~\cite{Kato: Perturbation 80}) unter der Voraussetzung $\rho(A) \ne \emptyset$. Diese Zusatzvoraussetzung ist in Anwendungen wohl meistens erfüllt -- wenigstens für Halbgruppenerzeuger (insbesondere schiefselbstadjungierte) $A$ ist sie \emph{immer} erfüllt (Satz~\ref{thm: eigenschaften von erzeugern}) -- und Reed, Simon dient folgende Charakterisierung denn auch als Definition der verallgemeinerten Konvergenz (Definition in Abschnitt~VIII.7 von~\cite{RS 1}).

\begin{thm} \label{thm: char verallg konv}
Seien $A$, $A_n$ abgeschlossene lineare Abbildungen in $X$ und sei $\rho(A) \ne \emptyset$. Dann sind folgende Aussagen äquivalent: 
\begin{itemize}
\item [(i)] $A_n \longrightarrow A \;\; (n \to \infty)$ im verallgemeinerten Sinn 
\item [(ii)] für alle $z \in \rho(A)$ gilt: $z \in \rho(A_n)$ für genügend große $n$ und 
%\begin{align*}
$(A_n - z)^{-1} \longrightarrow (A - z)^{-1} \;\; (n \to \infty)$ 
%\end{align*} 
\item [(iii)] für ein $z \in \rho(A)$ gilt: $z \in \rho(A_n)$ für genügend große $n$ und 
%\begin{align*}
$(A_n - z)^{-1} \longrightarrow (A - \nolinebreak z)^{-1} \;\; (n \to \infty)$ 
%\end{align*}
\end{itemize} 
\end{thm}

Insbesondere folgt aus diesem Satz mithilfe von Lemma~\ref{lm: continuity of inv}, dass $A_n \longrightarrow A \quad (n \to \infty)$ im verallgemeinerten Sinn für \emph{beschränkte} $A_n$, $A$ genau dann gilt, wenn $A_n \longrightarrow A \quad (n \to \infty)$ im gewöhnlichen Sinn (das heißt bzgl. der Normoperatortopologie). 
\\

Der nachfolgende Satz ist, wie man mithilfe des obigen Satzes~\ref{thm: char verallg konv} (und der Tatsache, dass die Spektren $\sigma(A(t))$ unter den Voraussetzungen von Proposition~\ref{prop: sigma(A(t)) oberhstet} gleichmäßig in $t \in I$ beschränkt sind) leicht sieht, eine Verallgemeinerung von Proposition~\ref{prop: sigma(A(t)) oberhstet}.
 
\begin{thm} \label{thm: sigma(A(t)) oberhstet für unbeschr A(t)}
Sei $A(t)$ für jedes $t \in I$ eine abgeschlossene lineare Abbildung in $X$ und $t \mapsto A(t)$ stetig an der Stelle $t_0 \in I$ im verallgemeinerten Sinn. Sei weiter $K$ eine kompakte Untermenge von $\rho(A(t_0))$. Dann existiert eine in $I$ offene Umgebung $U_{t_0}$ von $t_0$, sodass $K \subset \rho(A(t))$ für alle $t \in U_{t_0}$. 
\end{thm}

\begin{proof}
Das folgt aus Theorem~IV.3.1 in Katos Buch~\cite{Kato: Perturbation 80}: dieser Satz besagt, dass ein $\delta > 0$ existiert, sodass $K \subset \rho(B)$ für alle abgeschlossenen linearen Abbildungen $B$ in $X$ mit $\hat{d} (B,A) < \delta$ (beachte den in Abschnitt~IV.2.4 von~\cite{Kato: Perturbation 80} beschriebenen Zusammenhang zwischen $\hat{d}$ und $\hat{\delta}$), und zu diesem $\delta$ existiert wegen der verallgemeinerten Stetigkeit von $t \mapsto A(t)$ in $t_0$ eine in $I$ offene Umgebung $U_{t_0}$, sodass $\hat{d} (A(t),A) < \delta$ für alle $t \in U_{t_0}$. 
\end{proof}

Auf den folgenden Satz -- eine Verallgemeinerung des obigen Satzes -- werden wir im Abschnitt über Adiabatensätze mit Spektrallücke häufig zurückgreifen. Im Fall beschränkter $A(t)$ folgt er (sehr elementar) genau wie Proposition~\ref{prop: sigma(A(t)) oberhstet}. 

\begin{thm} \label{thm: (A(t)-z)^{-1} stetig in (t,z)}
Sei $A(t)$ für jedes $t \in I$ eine abgeschlossene lineare Abbildung in $X$ und $t \mapsto A(t)$ stetig im verallgemeinerten Sinn. Dann ist 
\begin{align*}
U := \{ (t,z) \in I \times \complex : z \in \rho(A(t)) \}
\end{align*}
offen in $I \times \complex$ und $U \ni (t,z) \mapsto (A(t) - z)^{-1}$ ist stetig.
\end{thm} 

\begin{proof}
Sei $(t_0, z_0) \in U$. Dann existiert nach Theorem~IV.3.15 in~\cite{Kato: Perturbation 80} zu jedem $\varepsilon > 0$ ein $\delta > 0$, sodass $z \in \rho(B)$ und $\norm{  (B-z)^{-1} - (A(t_0) - z_0)^{-1}  } < \varepsilon$ für alle abgeschlossenen linearen Abbildungen $B$ in $X$ mit $\hat{d}(B,A) < \delta$ und alle $z \in \complex$ mit $|z-z_0| < \delta$. Wegen der verallgemeinerten Stetigkeit von $t \mapsto A(t)$ in $t_0$ folgt nun die Behauptung.  
\end{proof}

Jetzt können wir, wie nach Proposition~\ref{prop: sigma(A(t)) oberhstet} angekündigt, (topologisch und analytisch) klären, wann genau unter den Voraussetzungen dieser Proposition sogar Unterhalbstetigkeit (und damit Stetigkeit) von $t \mapsto \sigma(A(t))$ vorliegt.

\begin{prop} \label{prop: char sigma(A(t)) unterhstet}
Sei $A(t)$ wie in Proposition~\ref{prop: sigma(A(t)) oberhstet}. Dann sind folgende Aussagen äquivalent: 
\begin{itemize}
\item [(i)] $t \mapsto \sigma(A(t))$ ist unterhalbstetig 
\item [(ii)] $\bigcup_{t \in I} \{t\} \times \complex \setminus U_{\varepsilon} \bigl( \sigma(A(t)) \bigr)$ ist abgeschlossen für jedes $\varepsilon > 0$
\item [(iii)] zu jedem $\varepsilon > 0$ existiert eine Zahl $M_{\varepsilon}$, sodass
%\begin{align*}
$\norm{  (A(t) - z)^{-1}  } \le M_{\varepsilon}$
%\end{align*} 
für alle $z \in \complex \setminus U_{\varepsilon} \bigl( \sigma(A(t)) \bigr)$ und $t \in I$. 
\end{itemize}
Insbesondere ist $t \mapsto \sigma(A(t))$ unterhalbstetig, wenn die $A(t)$ alle normal oder die Spektren $\sigma(A(t))$ alle endlich sind.
\end{prop}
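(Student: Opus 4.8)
The plan is to prove the cyclic chain of implications (i)~$\Rightarrow$~(ii)~$\Rightarrow$~(iii)~$\Rightarrow$~(i) and then to read off the two special cases. (The case $X = \{0\}$ being trivial, I assume $X \neq \{0\}$, so that all the $\sigma(A(t))$ are nonempty compact sets.) Throughout I would abbreviate $R := \sup_{t \in I} \norm{A(t)}$, which is finite since $t \mapsto \norm{A(t)}$ is continuous on the compact interval $I$, and I would introduce $F_{\varepsilon} := \bigcup_{t \in I} \{t\} \times \bigl( \complex \setminus U_{\varepsilon}(\sigma(A(t))) \bigr) = \{(t,z) \in I \times \complex : \dist(z,\sigma(A(t))) \ge \varepsilon\}$, so that (ii) is precisely the statement that $F_{\varepsilon}$ is closed for every $\varepsilon > 0$, and (iii) is precisely the statement that $(t,z) \mapsto \norm{(A(t)-z)^{-1}}$ is bounded on $F_{\varepsilon}$ for every $\varepsilon > 0$. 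Two elementary observations carry most of the work: first, the spectral radius estimate gives $\sigma(A(t)) \subset \overline{U}_R(0)$ for all $t$, so the spectra are uniformly bounded; second, for $|z| > R$ the Neumann series gives $\norm{(z-A(t))^{-1}} \le (|z|-R)^{-1}$, in particular $\norm{(z-A(t))^{-1}} \le 1$ once $|z| \ge R+1$ — so the only region where anything can go wrong is the bounded strip $I \times \overline{U}_{R+1}(0)$.

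For (i)~$\Rightarrow$~(ii) I would argue by contradiction: given $(t_n,z_n) \in F_{\varepsilon}$ with $(t_n,z_n) \to (t_0,z_0)$ but $\dist(z_0,\sigma(A(t_0))) < \varepsilon$, pick $w_0 \in \sigma(A(t_0))$ with $|z_0 - w_0| < \varepsilon$; lower semicontinuity at $t_0$ yields $\dist(w_0,\sigma(A(t_n))) \to 0$, hence points $w_n \in \sigma(A(t_n))$ with $w_n \to w_0$, and then $\dist(z_n,\sigma(A(t_n))) \le |z_n - w_n| \to |z_0 - w_0| < \varepsilon$ contradicts $(t_n,z_n) \in F_{\varepsilon}$; thus $F_{\varepsilon}$ is closed. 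For (ii)~$\Rightarrow$~(iii) I would split $F_{\varepsilon}$: on its part with $|z| \ge R+1$ the Neumann bound above already gives $\le 1$, while $K := F_{\varepsilon} \cap (I \times \overline{U}_{R+1}(0))$ is compact (it is the intersection of the closed set $F_{\varepsilon}$ with a compact set) and is contained in $U := \{(t,z) : z \in \rho(A(t))\}$ because $\dist(z,\sigma(A(t))) \ge \varepsilon > 0$ forces $z \in \rho(A(t))$. Since the $A(t)$ are bounded, norm continuity of $t \mapsto A(t)$ is continuity in the generalized sense (remark after Theorem~\ref{thm: char verallg konv}), so by Theorem~\ref{thm: (A(t)-z)^{-1} stetig in (t,z)} the map $(t,z) \mapsto (A(t)-z)^{-1}$ is continuous on $U$ — alternatively this follows directly from the resolvent identity and the Neumann series. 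Hence $\norm{(A(t)-z)^{-1}}$ is bounded on the compact set $K$, say by $M'$, and $M_{\varepsilon} := \max\{M',1\}$ works.

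For (iii)~$\Rightarrow$~(i), fix $t_0 \in I$ and $\varepsilon > 0$ and suppose lower semicontinuity fails at $t_0$ for this $\varepsilon$: then there are $t_n \to t_0$ and $w_n \in \sigma(A(t_0))$ with $\dist(w_n,\sigma(A(t_n))) \ge \varepsilon$, and by compactness of $\sigma(A(t_0))$ I may assume $w_n \to w_0 \in \sigma(A(t_0))$. By (iii), $\norm{(A(t_n)-w_n)^{-1}} \le M_{\varepsilon}$, whereas $A(t_n)-w_n \to A(t_0)-w_0$ in operator norm; writing $A(t_0)-w_0 = (A(t_n)-w_n)\bigl(1 + (A(t_n)-w_n)^{-1}\bigl((A(t_0)-w_0)-(A(t_n)-w_n)\bigr)\bigr)$ and observing that the second factor is invertible for $n$ large, since the correction term has norm $\le M_{\varepsilon}\,\norm{(A(t_0)-w_0)-(A(t_n)-w_n)} < 1$, I conclude $A(t_0)-w_0$ is invertible, i.e.\ $w_0 \in \rho(A(t_0))$ — contradicting $w_0 \in \sigma(A(t_0))$.

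Finally the two special cases. If the $A(t)$ are all normal (so $X = H$), the spectral theorem gives $\norm{(A(t)-z)^{-1}} = \dist(z,\sigma(A(t)))^{-1} \le \varepsilon^{-1}$ on $F_{\varepsilon}$, so (iii) holds with $M_{\varepsilon} = \varepsilon^{-1}$. If all the $\sigma(A(t))$ are finite I would verify (i) directly: writing $\sigma(A(t_0)) = \{\lambda_1,\dots,\lambda_k\}$, choose for each $j$ a circle $\gamma_j$ of radius $r_j < \varepsilon$ about $\lambda_j$ so small that $\overline{U}_{r_j}(\lambda_j) \cap \sigma(A(t_0)) = \{\lambda_j\}$; then the Riesz projection $P_j$ of $A(t_0)$ onto $\{\lambda_j\}$ is nonzero by Theorem~\ref{thm: Rieszprojektion}. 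For $t$ near $t_0$ one has $\gamma_j \subset \rho(A(t))$ (e.g.\ because $U$ is open, Theorem~\ref{thm: (A(t)-z)^{-1} stetig in (t,z)}, or simply by the Neumann series), and $P_j(t) := \frac{1}{2\pi i}\int_{\gamma_j}(z-A(t))^{-1}\,dz$ depends continuously on $t$ near $t_0$ (Lemma~\ref{lm: vertauschung von abl und wegintegral}, using the joint continuity of Theorem~\ref{thm: (A(t)-z)^{-1} stetig in (t,z)}), hence is nonzero for $t$ near $t_0$; since $P_j(t)$ is then the Riesz projection of $A(t)$ onto $\sigma(A(t)) \cap U_{r_j}(\lambda_j)$, Theorem~\ref{thm: Rieszprojektion} forces this set to be nonempty, i.e.\ $\dist(\lambda_j,\sigma(A(t))) < r_j < \varepsilon$; intersecting the finitely many resulting neighbourhoods of $t_0$ gives $\sigma(A(t_0)) \subset U_{\varepsilon}(\sigma(A(t)))$ for $t$ near $t_0$. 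I expect the finite-spectrum case to be the main obstacle: bounding the resolvent directly does not work there (non-normal operators with finite spectrum can have arbitrarily large resolvents away from the spectrum), so one genuinely needs the continuity-of-Riesz-projections argument; within the chain of equivalences the only real subtlety is to peel off the region $|z| \ge R+1$ before invoking compactness in (ii)~$\Rightarrow$~(iii).
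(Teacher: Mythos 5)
Your proof is correct and follows essentially the same route as the paper's: the same cycle (i)$\Rightarrow$(ii)$\Rightarrow$(iii)$\Rightarrow$(i) with the same Neumann-series bound outside a large disc, compactness of the remaining strip, and perturbation argument for (iii)$\Rightarrow$(i), as well as the same resolvent identity for normal $A(t)$ and the same Riesz-projection idea for finite spectra. The only cosmetic difference is that in the finite-spectrum case you argue directly (a norm-continuous projection that is nonzero at $t_0$ stays nonzero nearby, forcing $\sigma(A(t)) \cap U_{r_j}(\lambda_j) \ne \emptyset$), whereas the paper runs a contradiction through Lemma~\ref{lm: rk konst}; both are equivalent in substance.
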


\begin{proof}
Sei (i) erfüllt. Sei $\varepsilon > 0$ und $(t_0,z_0) \in I \times \complex$ sowie $(t_n, z_n)$ eine Folge in $\bigcup_{t \in I} \{t\} \times \complex \setminus U_{\varepsilon} \bigl( \sigma(A(t)) \bigr)$ mit $(t_n, z_n) \longrightarrow (t_0, z_0) \;\; (n \to \infty)$. Angenommen, $z_0 \in U_{\varepsilon}(\sigma(A(t_0))$. Dann existiert ein $\varepsilon' \in (0, \varepsilon)$ mit $z \in U_{\varepsilon'}(\sigma(A(t_0))$ und daher existiert auch ein $\varepsilon_0 > 0$, sodass $U_{\varepsilon_0}(z) \subset U_{\varepsilon'}(\sigma(A(t_0))$. Da einerseits $z_n \longrightarrow z_0 \;\;(n \to \infty)$, gilt $z_n \in U_{\varepsilon_0}(z)$ für genügend große $n$, und da andererseits $t \mapsto \sigma(A(t))$ nach (i) unterhalbstetig in $t_0$ ist, gilt $U_{\varepsilon'}(\sigma(A(t_0)) \subset U_{\varepsilon}(\sigma(A(t_n))$ für genügend große $n$. Also gilt insgesamt
\begin{align*}
z_n \in U_{\varepsilon_0}(z) \subset U_{\varepsilon'}(\sigma(A(t_0)) \subset U_{\varepsilon}(\sigma(A(t_n))
\end{align*}
für $n$ groß genug, was im Widerspruch dazu steht, dass $(t_n,z_n) \in \bigcup_{t \in I} \{t\} \times \complex \setminus U_{\varepsilon} \bigl( \sigma(A(t)) \bigr)$ für alle $n \in \natu$. Unsere Annahme war also falsch und die Abgeschlossenheit von $\bigcup_{t \in I} \{t\} \times \complex \setminus U_{\varepsilon} \bigl( \sigma(A(t)) \bigr)$ folgt.
\\

Sei (ii) erfüllt. Sei $R$ eine positive Zahl mit $\sup_{t \in I} \norm{A(t)} \le R$. Dann gilt $\sigma(A(t)) \subset \overline{U}_R(0)$ für alle $t \in I$ und 
\begin{align*}
\norm{ (z-A(t))^{-1} } = \norm{ \frac{1}{z} \, \Bigl( 1- \frac{A(t)}{z} \Bigr)^{-1}  } \le \frac{1}{R}
\end{align*}
für alle $z \in \complex \setminus \overline{U}_{2R}(0)$ und alle $t \in I$ (neumannsche Reihe!).

Sei nun $\varepsilon > 0$. Da $t \mapsto A(t)$ stetig und damit stetig im verallgemeinerten Sinn ist, ist nach Satz~\ref{thm: (A(t)-z)^{-1} stetig in (t,z)} $U := \{ (t,z) \in I \times \complex : z \in \rho(A(t)) \}$ offen und die Abbildung $U \ni (t,z) \mapsto (A(t)-z)^{-1}$ stetig und daher lokal beschränkt. Insbesondere ist diese Abbildung beschränkt auf der nach (ii) kompakten Untermenge 
\begin{align*}
\bigcup_{t \in I} \{t\} \times \complex \setminus U_{\varepsilon} \bigl( \sigma(A(t)) \bigr)   \cap  I \times \overline{U}_{2R}(0) 
= \bigcup_{t \in I} \{t\} \times \overline{U}_{2R}(0) \setminus U_{\varepsilon} \bigl( \sigma(A(t)) \bigr)
\end{align*}
von $U$. 

Also haben wir insgesamt, dass $(t,z) \mapsto (A(t)-z)^{-1}$ beschränkt ist auf $\bigcup_{t \in I} \{t\} \times \complex \setminus U_{\varepsilon} \bigl( \sigma(A(t)) \bigr)$, wie gewünscht.
\\

Sei (iii) erfüllt. Sei $t_0 \in I$ und sei $\varepsilon > 0$. Dann existiert eine in $I$ offene Umgebung $U_{t_0}$, sodass 
\begin{align*}
\norm{ A(t) - A(t_0) } \le \frac{1}{2 M_{\varepsilon}}
\end{align*}
für alle $t \in U_{t_0}$.

Sei $t \in U_{t_0}$ und sei $z \notin U_{\varepsilon}(\sigma(A(t))$. Dann gilt nach (iii) $\norm{  (A(t) - z)^{-1}  } \le M_{\varepsilon}$, und daraus folgt, dass $1 - (A(t_0)-A(t))(z-A(t))^{-1}$ und damit auch
\begin{align*}
z - A(t_0) = \bigl(  1 - (A(t_0)-A(t))(z-A(t))^{-1}  \bigr) (z-A(t))
\end{align*}
invertierbar ist. 

Also haben wir $\sigma(A(t_0)) \subset U_{\varepsilon}(\sigma(A(t))$ für alle $t \in U_{t_0}$, was die gewünschte Unterhalbstetigkeit von $t \mapsto \sigma(A(t))$ ergibt.
\\

Seien nun die $A(t)$ alle normal. Dann ist Aussage (iii) erfüllt mit 
%\begin{align*}
$M_{\varepsilon} := \frac{1}{\varepsilon}$ 
%\end{align*}
(Proposition~\ref{prop: rieszproj für normale A}) und die Unterhalbstetigkeit folgt aus dem eben Bewiesenen.
\\

Seien schließlich die Spektren $\sigma(A(t))$ der $A(t)$ alle endlich. Wir zeigen durch Widerspruch, dass $t \mapsto \sigma(A(t))$ dann tatsächlich unterhalbstetig ist. Angenommen also, $t \mapsto \sigma(A(t))$ sei nicht unterhalbstetig in der Stelle $t_0$. Dann existiert, da $\sigma(A(t_0))$ ja endlich ist, ein $\lambda_0 \in \sigma(A(t_0))$, eine positive Zahl $\varepsilon_0 > 0$ und eine Folge $(t_n)$ in $I$, sodass $t_n \longrightarrow t_0 \;\; (n \to \infty)$ und $\lambda_0 \notin U_{\varepsilon_0}(\sigma(A(t_n)))$ für alle $n \in \natu$ und $\sigma(A(t_0)) \cap U_{\varepsilon_0}(\lambda_0) = \{ \lambda_0 \}$. 
Wir sehen, dass $\sigma(A(t_n)) \cap U_{\varepsilon_0}(\lambda_0) = \emptyset$ für alle $n \in \natu$, und darüberhinaus nach Satz~\ref{thm: sigma(A(t)) oberhstet für unbeschr A(t)}, dass eine in $I$ offene Umgebung $U_{t_0}$ von $t_0$ existiert, sodass $\partial U_{ \frac{\varepsilon_0}{2} } (\lambda_0) \subset \rho(A(t))$ für alle $t \in U_{t_0}$.
Sei nun
\begin{align*}
P_0(t) := \frac{1}{2 \pi i} \, \int_{  \partial U_{ \frac{\varepsilon_0}{2} } (\lambda_0)  } (z-A(t))^{-1} \, dz
\end{align*}
für alle $t \in U_{t_0}$. Dann ist $U_{t_0} \ni t \mapsto P_0(t)$ stetig, denn $U_{t_0} \times \partial U_{ \frac{\varepsilon_0}{2} } (\lambda_0) \ni (t,z) \mapsto (z-A(t))^{-1}$ ist ja stetig nach Satz~\ref{thm: (A(t)-z)^{-1} stetig in (t,z)} und wir können den lebesgueschen Satz anwenden. Weiter ist $P_0(t)$ für jedes $t \in U_{t_0}$ die Rieszprojektion von $A(t)$ auf $\sigma(A(t)) \cap U_{ \frac{\varepsilon_0}{2} } (\lambda_0)$, und daher gilt $P_0(t_n) = 0$ für alle $n \in \natu$ und $P_0(t_0) \ne 0$. Das widerspricht nach Lemma~\ref{lm: rk konst} der Stetigkeit von $U_{t_0} \ni t \mapsto P_0(t)$. 
\end{proof}

Schließlich noch ein einfacher Störungssatz für Halbgruppenerzeuger.

\begin{thm} \label{thm: störungssatz halbgruppenerz}
Sei $A$ Erzeuger einer stark stetigen Halbgruppe auf $X$, sodass $\norm{e^{A t} } \le M e^{\omega t}$ für alle $t \in [0, \infty)$, und sei $B$ eine beschränkte lineare Abbildung in $X$. Dann ist auch $A+B$ Erzeuger einer stark stetigen Halbgruppe auf $X$ und es gilt
\begin{align*}
\norm{ e^{(A+B)t} } \le M e^{(\omega + M \norm{B})t}
\end{align*}
für alle $t \in [0, \infty)$.
\end{thm}

\begin{proof}
Theorem~III.1.3 in~\cite{EngelNagel}.
\end{proof}

Wenn $A$ sogar eine Kontraktionshalbgruppe erzeugt, dann haben wir einen weiter gehenden Störungssatz (Theorem~III.2.7 in~\cite{EngelNagel}), der den Satz von Kato, Rellich (Theorem~V.4.3 in~\cite{Kato: Perturbation 80} oder Theorem~X.12 in~\cite{RS 2}) verallgemeinert: es genügt dann, wenn die Störung $B$ (wie $A$) dissipativ ist und $A$-beschränkt mit $A$-Schranke kleiner als $1$ (Definition in Abschnitt~IV.1.1 von~\cite{Kato: Perturbation 80}).

%%%%%%%%%%%%%%%%%%%%%%%%%%%%%%%%%%%%%%%%%%%%%%%%%%%%%%%%%%%%%%%%%%%%%%%%%%%%%%%%%%%%%%%%%%%%%%%%%%%%%%%%%%%%%%%%%%%%%%%%%%%%%%%%%%%%%%%%%%%%%%%%%%%%%

\section{Zeitentwicklungen} \label{sect: zeitentwicklungen}

\subsection{Der Zeitentwicklungsbegriff}

In diesem Abschnitt bezeichnet $J := [a,b]$ stets ein nichttriviales kompaktes Intervall in $\real$ und $\Delta_J := \{ (s,t) \in J^2: s \le t \}$. Außerdem schreiben wir von nun an immer $\Delta$ für $\Delta_I$.
\\

Wir vereinbaren zunächst, was wir unter wohlgestellten Anfangswertproblemen verstehen wollen. Die folgende Definition ist angelehnt an Definition~II.3.2 in~\cite{Krein 71}.
\\

Sei $A(t)$ für jedes $t \in J$ eine lineare Abbildung $D \subset X \to X$. Wir sagen, die Anfangswertprobleme zu $A$ seien \emph{wohlgestellt}, genau dann, wenn gilt: 
\begin{itemize}
\item [(i)] das Anfangswertproblem 
\begin{align*}
y' = A(t)y, \; y(s) = x
\end{align*}
ist eindeutig lösbar auf $[s,b]$ für alle $x \in D$ und alle $s \in [a,b)$. $y(\,. \,,s,x)$ bezeichnet die eindeutige Lösung des obigen Anfangswertproblems für $s \in [a,b)$ und $x \in D$ und $y(b,b,x) := x$ für alle $x \in D$.
\item [(ii)] $\Delta_J \ni (s,t) \mapsto y(t,s,x)$ ist stetig für alle $x \in D$ und $y(t,s,x_n) \longrightarrow 0 \quad (n \to \infty)$ gleichmäßig in $(s,t) \in \Delta_J$ für alle $(x_n)$ in $D$ mit $x_n \longrightarrow 0 \quad (n \to \infty)$.%, wobei $y(\,. \,,s,x)$ für $s \in [a,b)$ und $x \in D$ die eindeutige Lösung des obigen Anfangswertproblems bezeichnet und $y(b,b,x) := x$ für alle $x \in D$.
\end{itemize}

Die Anfangswertprobleme zu $A$ sind also, wie man das erwartet, wohlgestellt genau dann, wenn sie eindeutig lösbar sind und deren Lösung (in einem gewissen Sinne) stetig von den Anfangsdaten abhängt.

Jetzt führen wir den für alles weitere sehr wichtigen Zeitentwicklungsbegriff ein.
\\

Sei $A(t)$ für jedes $t \in J$ eine lineare Abbildung $D \subset X \to X$ und $U(t,s)$ für jedes $(s,t) \in \Delta_J$ eine beschränkte lineare Abbildung in $X$. Dann heißt $U$ \emph{Zeitentwicklung zu $A$} genau dann, wenn gilt: 
\begin{itemize}
\item [(i)] $[s,b] \ni t \mapsto U(t,s)x$ löst das Anfangswertproblem 
\begin{align*}
y' = A(t)y, \; y(s) = x
\end{align*}
für alle $x \in D$ und alle $s \in [a,b)$ und $U(t,s)U(s,r) = U(t,r)$ für alle $(r,s), (s,t) \in \Delta_J$  
\item [(ii)] $\Delta_J \ni (s,t) \mapsto U(t,s)x$ ist stetig für alle $x \in X$.
\end{itemize}
$U$ heißt \emph{Zeitentwicklung} (schlechthin) in $X$ genau dann, wenn ein dichter Unterraum $D$ von $X$ und lineare Abbildungen $A(t): D \subset X \to X$ existieren, sodass $U$ eine Zeitentwicklung zu $A$ ist.

Schließlich schreiben wir auch $U(t)$ für $U(t,0)$, falls $J = [0,b]$ und $U$ eine Zeitentwicklung (auf $J$) ist.
\\

Dieser Zeitentwicklungsbegriff ist sehr natürlich, wie der folgende Satz zeigt (vgl. die Ausführungen in~\cite{Krein 71}, die sich an Definition~II.3.2 anschließen).

\begin{thm} \label{thm: zeitentwicklung natürlich}
Sei $A(t)$ für jedes $t \in J$ eine lineare Abbildung $D \subset X \to X$. Dann sind die Anfangswertprobleme zu $A$ genau dann wohlgestellt, wenn eine Zeitentwicklung zu $A$ existiert. 
\end{thm}

\begin{proof}
Seien die Anfangswertprobleme zu $A$ wohlgestellt. Sei $U_0(t,s)x := y(t,s,x)$ für alle $(s,t) \in \Delta_J$ und alle $x \in D$, wobei die $y(\,. \,,s,x)$ die eindeutigen Lösungen der zu $A$ gehörigen Anfangswertprobleme seien und $y(b,b,x) := x$. Dann ist $U_0(t,s)$ für alle $(s,t) \in \Delta_J$ offensichtlich eine lineare Abbildung $D \to X$, die zudem beschränkt ist. Sei nämlich $(x_n)$ in $D$ mit $x_n \longrightarrow 0 \quad (n \to \infty)$, dann gilt wegen der Wohlgestelltheit $U_0(t,s)x_n  = y(t,s,x_n) \longrightarrow 0 \quad (n \to \infty)$, was die Beschränktheit von $U_0(t,s)$ beweist.
Weiter gilt 
\begin{align*}
\sup_{(s,t) \in \Delta_J} \norm{ U_0(t,s) } < \infty.
\end{align*}
Andernfalls gäbe es nämlich eine Folge $(s_n,t_n)$ in $\Delta_J$ und eine Folge $(x_n)$ in $D$, sodass
\begin{align*}
x_n \longrightarrow 0 \quad (n \to \infty)  \text{ \; und \; } \norm{ U_0(t_n,s_n)x_n } \ge 1 \text{ \; für alle } n \in \natu,
\end{align*}
was der Wohlgestelltheit widerspräche.
Wir können also $U_0(t,s)$ für alle $(s,t) \in \Delta_J$ fortsetzen zu einer beschränkten linearen Abbildung $U(t,s)$ auf ganz $X$ und für diese gilt
\begin{align*}
\sup_{(s,t) \in \Delta_J} \norm{ U(t,s) } = \sup_{(s,t) \in \Delta_J} \norm{ U_0(t,s) } < \infty.
\end{align*}
Daraus folgt, da $(s,t) \mapsto y(t,s,x) = U(t,s)x$ wegen der Wohlgestelltheit stetig ist für alle $x \in D$, dass $(s,t) \mapsto U(t,s)x$ auch für alle $x \in X$ stetig ist. 
Außerdem gilt $U(t,s)U(s,r) = U(t,r)$ für alle $(r,s), (s,t) \in \Delta_J$, denn für beliebiges $(r,s) \in \Delta_J$ und $x \in D$ lösen
die Abbildungen $[s,b] \ni t \mapsto U(t,s)U(s,r)x$ und $[s,b] \ni t \mapsto U(t,r)x$ beide das Anfangswertproblem 
\begin{align*}
y' = A(t)y, \; y(s) = U(s,r)x
\end{align*}
auf $[s,b]$, das aufgrund der Wohlgestelltheit eindeutig lösbar ist auf $[s,b]$.
Also ist $U$ eine Zeitentwicklung zu $A$ ist.
\\

Sei umgekehrt eine Zeitentwicklung $U$ zu $A$ gegeben. Dann ist für jedes $x \in D$ und jedes $s \in [a,b)$ das Anfangswertproblem 
\begin{align*}
y' = A(t)y, \; y(s) = x
\end{align*}
lösbar auf $[s,b]$, und zwar durch $[s,b] \ni t \mapsto U(t,s)x$. 
Wir müssen zeigen, dass dieses Anfangswertproblem auch eindeutig lösbar sind. Sei also $y$ irgendeine Lösung dieses Anfangswertproblems. 
Dann ist für beliebiges $t \in (s,b]$ die Abbildung $[s,t] \ni \tau \mapsto U(t,\tau)y(\tau)$ rechtsseitig differenzierbar und die rechtsseitige Ableitung verschwindet, denn 
\begin{align*}
\frac{U(t, \tau + h)z - U(t, \tau)z}{h} = -U(t, \tau + h) \, \frac{ U(\tau + h, \tau)z - z}{h} \longrightarrow -U(t,\tau) A(\tau)z \quad (h \searrow 0)
\end{align*}
für alle $z \in D$ und alle $\tau \in [s,t)$. Aus Lemma~\ref{lm: mws für einseitig db} folgt damit, dass $[s,t] \ni \tau \mapsto U(t,\tau)y(\tau)$ konstant ist und also 
\begin{align*}
y(t) - U(t,s)x = U(t,\tau)y(\tau) \big|_{\tau = s}^{\tau = t} = 0 
\end{align*}
für alle $t \in [s,b]$. Das Anfangswertproblem ist daher eindeutig lösbar und aufgrund der starken Stetigkeit von $U$ haben wir auch die stetige Abhängigkeit (im Sinne unserer Definition von Wohlgestelltheit). 
Die Anfangsertprobleme zu $A$ sind also tatsächlich wohlgestellt.
\end{proof}

Die folgende Proposition ergibt sich unmittelbar aus der Definition und obigem Satz.

\begin{prop}  \label{prop: ex höchstens eine zeitentw}
Sei $A(t)$ für jedes $t \in J$ eine lineare Abbildung $D \subset X \to X$. Dann existiert höchstens eine Zeitentwicklung zu $A$.
\end{prop}

\begin{proof}
Seien $U$ und $V$ zwei Zeitentwicklungen zu $A$, dann stimmen die Abbildungen $[s,b] \ni t \mapsto U(t,s)x$ und $[s,b] \ni t \mapsto V(t,s)x$
für alle $x \in D$ und alle $s \in [a,b)$ überein, da die Anfangswertprobleme nach Satz~\ref{thm: zeitentwicklung natürlich} eindeutig lösbar sind. Also gilt
\begin{align*}
U(t,s)x = V(t,s)x
\end{align*} 
für alle $x \in D$ und alle $(s,t) \in \Delta_J$ mit $s \in [a,b)$. Wegen der Beschränktheit der $U(t,s)$ und $V(t,s)$ und starken Stetigkeit von $U$ und $V$ gilt dies auch für $x \in X$ und $s = b$.
\end{proof}

Auf die nächste Proposition werden wir immer wieder zurückgreifen: sie erlaubt es zwei Zeitentwicklungen miteinander zu vergleichen.

\begin{prop} \label{thm: char zeitentwicklung}  
Sei $A(t)$ für jedes $t \in J$ eine lineare Abbildung $D \subset X \to X$ und $U(t,s)$ für jedes $(s,t) \in \Delta_J$ eine beschränkte lineare Abbildung in $X$, sodass $[s,b] \ni t \mapsto U(t,s)x$ das Anfangswertproblem 
\begin{align*}
y' = A(t)y, \; y(s) = x
\end{align*}
für alle $x \in D$ und alle $s \in [a,b)$ löst, und $\Delta_J \ni (s,t) \mapsto U(t,s)x$ für alle $x \in X$ stetig ist. Sei darüberhinaus $t \mapsto A(t)x$ stetig für alle $x \in D$.
Dann sind folgende Aussagen äquivalent:
\begin{itemize}
\item [(i)] $U(t,s)U(s,r) = U(t,r)$ für alle $(r,s), (s,t) \in \Delta_J$. %D.h. $U$ Zeitentw zu $A$
\item [(ii)] $[a,t] \ni s \mapsto U(t,s)x$ ist differenzierbar und 
\begin{align*}
\dds{ U(t,s)x } = - U(t,s) A(s)x
\end{align*}
für alle $s \in [a,t]$, $x \in D$ und alle $t \in (a,b]$. 
\end{itemize}
\end{prop}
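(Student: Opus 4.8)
The plan is to prove the two implications separately, using the well-posedness/uniqueness machinery from Satz~\ref{thm: zeitentwicklung nat�rlich} (more precisely, the uniqueness argument contained in its proof, which relies on Lemma~\ref{lm: mws f�r einseitig db}).

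For (i)$\Rightarrow$(ii): fix $t \in (a,b]$, $x \in D$, and consider the map $g_t : [a,t] \ni s \mapsto U(t,s)x$. The key computation is the one already performed in the proof of Satz~\ref{thm: zeitentwicklung nat�rlich}: using the cocycle identity $U(t,s+h) = U(t,s+h)$ composed appropriately, one writes
\begin{align*}
\frac{U(t,s+h)x - U(t,s)x}{h} = -U(t,s+h)\,\frac{U(s+h,s)x - x}{h},
\end{align*}
so that the right-hand limit as $h \searrow 0$ is $-U(t,s)A(s)x$ (using strong continuity of $U$ in both variables, boundedness of $U(t,\cdot)$, and $x \in D$). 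For the left-hand derivative one uses the cocycle identity in the form $U(t,s-h)x = U(t,s)U(s,s-h)x$, giving
\begin{align*}
\frac{U(t,s-h)x - U(t,s)x}{h} = U(t,s)\,\frac{U(s,s-h)x - x}{h} \longrightarrow U(t,s)A(s)x \quad (h\searrow 0)
\end{align*}
after noting that $\frac{U(s,s-h)x - x}{-h}\to A(s)x$ because $\tau\mapsto U(\tau,s-h)x$ solves the initial value problem. Since $t\mapsto A(t)x$ is continuous and $(s,t)\mapsto U(t,s)x$ is continuous, the one-sided derivative $s\mapsto -U(t,s)A(s)x$ is continuous; by Satz~\ref{thm: einseitig db und beidseitig db} (or just matching the two one-sided limits) $g_t$ is genuinely differentiable with $\dds{U(t,s)x} = -U(t,s)A(s)x$.

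For (ii)$\Rightarrow$(i): fix $(r,s),(s,t)\in\Delta_J$ and $x\in D$. Consider the map $h:[r,s]\ni \sigma \mapsto U(t,\sigma)U(\sigma,r)x$ (for $\sigma=r$ this should give $U(t,r)x$, for $\sigma=s$ it gives $U(t,s)U(s,r)x$). I would show $h$ is differentiable with vanishing derivative: by the product rule (Lemma~\ref{lm: strong db of products}, using that $\sigma\mapsto U(\sigma,r)x$ is differentiable with derivative $A(\sigma)U(\sigma,r)x\in X$ — here one needs $U(\sigma,r)x\in D$, which holds since $t\mapsto U(t,r)x$ solves the ODE — and that $\sigma\mapsto U(t,\sigma)y$ is differentiable with derivative $-U(t,\sigma)A(\sigma)y$ for $y\in D$ by hypothesis (ii)), one gets
\begin{align*}
\dds{\big(U(t,\sigma)U(\sigma,r)x\big)} = -U(t,\sigma)A(\sigma)U(\sigma,r)x + U(t,\sigma)A(\sigma)U(\sigma,r)x = 0.
\end{align*}
Then Lemma~\ref{lm: mws f�r einseitig db} forces $h$ to be constant on $[r,s]$, so $U(t,r)x = U(t,s)U(s,r)x$ for all $x\in D$; density of $D$ and boundedness/strong continuity extend this to all $x\in X$.

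The main obstacle is the careful justification of differentiating the composition $U(t,\sigma)U(\sigma,r)x$ in step (ii)$\Rightarrow$(i): one must verify that $\sigma\mapsto U(\sigma,r)x$ takes values in $D$ (so that hypothesis (ii) applies to it), that this curve is differentiable as an $X$-valued map with the expected derivative, and that Lemma~\ref{lm: strong db of products} is applicable — which requires a local uniform bound $\sup\norm{U(t,\sigma)}<\infty$, available from strong continuity on the compact set $\Delta_J$ via Banach--Steinhaus. Once these regularity bookkeeping points are settled, both directions reduce to the one-sided mean value lemma exactly as in the proof of Satz~\ref{thm: zeitentwicklung nat�rlich}.
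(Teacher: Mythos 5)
Your proposal is correct and follows essentially the same route as the paper: for (i)$\Rightarrow$(ii) you compute the right derivative exactly as in the proof of Satz~\ref{thm: zeitentwicklung nat�rlich} and pass to two-sided differentiability via Satz~\ref{thm: einseitig db und beidseitig db} using the continuity of $s \mapsto -U(t,s)A(s)x$, and for (ii)$\Rightarrow$(i) the vanishing derivative of $\sigma \mapsto U(t,\sigma)U(\sigma,r)x$ (justified by Lemma~\ref{lm: strong db of products} and the uniform bound from Banach--Steinhaus) is the same computation the paper performs for an arbitrary solution $y$ before invoking Satz~\ref{thm: zeitentwicklung nat�rlich}. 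One caveat: rely on the Satz~\ref{thm: einseitig db und beidseitig db} route rather than on your direct computation of the left derivative, since the latter needs $\frac{1}{h}\bigl(U(s,s-h)x - x\bigr) \to A(s)x$, a limit in which \emph{both} arguments of $U$ vary, and this does not follow immediately from the fact that $\tau \mapsto U(\tau,s-h)x$ solves the initial value problem for each fixed $h$.
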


\begin{proof}
Sei (i) erfüllt. Sei $x \in D$, $t \in (a,b]$ und $f(s) := U(t,s)x$ für alle $s \in [a,t]$. Wir verifizieren die Voraussetzungen von Satz~\ref{thm: einseitig db und beidseitig db}. 
Zunächst ist $f$ stetig, denn Zeitentwicklungen sind stark stetig. Weiter ist $f$ rechtsseitig differenzierbar in jedem $s \in [a,t)$ und 
\begin{align*}
\partial_+ f(s) = - U(t,s) A(s)x
\end{align*}
(wobei (i) in der gleichen Weise wie im Beweis von Satz~\ref{thm: zeitentwicklung natürlich} eingeht).
Nun ist $[a,t] \ni s \mapsto - U(t,s) A(s)x$ nach unserer Zusatzvoraussetzung stetig (auch an der Stelle $t$), $f$ ist also rechtsseitig stetig differenzierbar und $\partial_+ f$ ist stetig fortsetzbar in den rechten Randpunkt $t$ von $[a,t]$. Also ist $f$ nach Satz~\ref{thm: einseitig db und beidseitig db} beidseitig stetig differenzierbar und 
\begin{align*}
\dds{ U(t,s)x } = f'(s) = - U(t,s) A(s)x,
\end{align*}
für alle $s \in [a,t]$, was (ii) beweist.
\\

Sei (ii) erfüllt. Sei $x \in D$, $s \in [a,b)$ und sei $y$ irgendeine auf ganz $[s,b]$ definierte Lösung des Anfangswertproblems
\begin{align*}
y' = A(t)y, \; y(s) = x.
\end{align*}
Dann ist $y(t) = U(t,s)x$ für alle $t \in [s, b]$. Sei nämlich $t \in (s,b]$ (für $t = s$ ist nichts zu zeigen). Dann ist $[s,t] \ni \tau \mapsto U(t, \tau) y(\tau)$ nach (ii) und Lemma~\ref{lm: strong db of products} differenzierbar und
\begin{align*}
\ddtau{    U(t, \tau) y(\tau)   } = -U(t,\tau)A(\tau) \; y(\tau) + U(t,\tau) \; A(\tau)y(\tau)  = 0
\end{align*}
für alle $\tau \in [s,t]$. Also gilt tatsächlich
\begin{align*}
y(t) - U(t,s)x = U(t, \tau) y(\tau) \big|_{\tau=s}^{\tau=t} = 0,
\end{align*}
woraus sich aufgrund von Satz~\ref{thm: zeitentwicklung natürlich} die Aussage~(i) ergibt.
\end{proof}

Wir machen darauf aufmerksam, dass wir die Stetigkeit von $t \mapsto A(t)x$ nur gebraucht haben, um zeigen zu können, dass Aussage (i) Aussage (ii) nach sich zieht (genauer: um sogar die in Aussage~(ii) behauptete beidseitige Differenzierbarkeit zu bekommen -- die rechtsseitige Differenzierbarkeit folgt auch ohne die zusätzliche Stetigkeitsvoraussetzung, wie der Beweis von Satz~\ref{thm: zeitentwicklung natürlich} zeigt).
\\

Die folgenden Beispiele befassen sich mit dem Sonderfall, dass $A(t) = A_0$ ist für alle $t \in J$. Sie zeigen, dass eine Zeitentwicklung zu solch einem $A$ genau dann existiert, wenn $A_0$ abschließbar ist und $\overline{A_0}$ eine stark stetige Halbgruppe erzeugt.

\begin{ex}
Sei $A_0$ eine lineare Abbildung $D \subset X \to X$ und $A(t) := A_0$ für alle $t \in J$. Wenn $A_0$ eine stark stetige Halbgruppe $T$ auf $X$ erzeugt, dann existiert eine (und damit genau eine) Zeitentwicklung zu $A$ und diese ist gegeben durch
\begin{align*}
U(t,s) := T(t-s) = e^{A_0 (t-s)}
\end{align*}
für alle $(s,t) \in \Delta_J$. Dass dies wirklich eine Zeitentwicklung zu $A$ ist, folgt z.B. mithilfe von Proposition~\ref{thm: char zeitentwicklung}, denn Aussage~(iii) aus diesem Satz ist hier offensichtlich erfüllt.  $\blacktriangleleft$
\end{ex}

\begin{ex}
Seien $A_0$ und $A(t)$ wie im obigen Beispiel. Wenn eine Zeitentwicklung $U$ zu $A$ existiert, dann ist $A_0$ abschließbar und $\overline{A_0}$ erzeugt eine stark stetige Halbgruppe auf $X$. 
Sei nämlich 
\begin{align*}
T(t) := U(b,a)^m \, U(t - m(b-a) + a, a)
\end{align*}
für alle $t \in \bigl[ m(b-a), (m+1)(b-a) \bigr)$, $m \in \natu \cup \{0\}$. Dann ist $T$ eine stark stetige Halbgruppe auf $X$. (Um zu zeigen, dass tatsächlich $T(t+s) = T(t)T(s)$ für alle $s, t \in [0, \infty)$, zeigt man diese Halbgruppeneigenschaft zunächst für $s, t \in [0, (b-a)]$ und beweist damit den Fall allgemeiner $s,t \in [0, \infty)$.) Sei $A_T$ ihr Erzeuger und $x \in D$. Dann gilt:
\begin{align*}
\frac{T(h)x-x}{h} = \frac{U(h + a, a)x - U(a,a)x}{h} \longrightarrow A(a)U(a,a)x = A_0 x \quad (h \searrow 0),
\end{align*} 
das heißt, $x \in D(A_T)$ und $A_T x = A_0 x$, also $A_0 \subset A_T$. Wegen 
\begin{align*}
T(t) D = U(b,a)^m \, U(t - m(b-a) + a, a) \,D \subset D
\end{align*}
ist $D$ nach Proposition~II.1.7 in~\cite{EngelNagel} sogar ein core für $A_T$, mit anderen Worten: $A_0 = A_T \big|_D$ ist abschließbar und $\overline{A_0} = \overline{A_T \big|_D} = A_T$ erzeugt die stark stetige Halbgruppe $T$.   $\blacktriangleleft$
\end{ex}

Übrigens muss $A_0$ in der Situation des obigen Beispiels nicht sogar abgeschlossen sein. Sei nämlich $D$ ein \emph{echter} (und, wie immer, dichter) Unterraum von $X$ und $A_0 := \operatorname{id}_D$, dann existiert zwar eine Zeitentwicklung $U$ zu $A$, die gegeben ist durch $U(t,s) := e^{t-s}$, aber $A_0$ ist eben nicht abgeschlossen. 
\\

Im folgenden werden wir uns fast ausschließlich auf Aussagen über das Intervall $J := I$ beschränken, insbesondere werden wir unsere Adiabatensätze nur für das Einheitsintervall $I$ formulieren. Adiabatensätze für allgemeine (kompakte nichttriviale) Intervalle $J$ können wir aber mithilfe des nächsten Lemmas aus den Adiabatensätzen für $I$ folgern. In Lemma~\ref{lm: unhandl adsatz für trf intervall} werden wir das exemplarisch für den ersten Adiabatensatz mit Spektrallückenbedingung, Satz~\ref{thm: unhandl adsatz mit sl}, (sozusagen einmal für allemal) ausführen.

\begin{lm} \label{lm: trf. der zeitentw.}
Sei $A(t)$ für jedes $t \in J = [a,b]$ eine lineare Abbildung $D \subset X \to X$ und $U$ eine Zeitentwicklung zu $A$. Sei $B(t) := (b-a) A(a+t(b-a))$ für alle $t \in I$ und $V(t,s) := U(a+t(b-a), a+s(b-a))$ für alle $(s,t) \in \Delta$. Dann ist $V$ eine Zeitentwicklung zu $B$.
\end{lm}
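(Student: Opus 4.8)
The plan is to verify directly that $V$ satisfies the two defining conditions of a Zeitentwicklung to $B$, using the affine substitution $\phi\colon I \to J$, $\phi(t) := a + t(b-a)$, which is an increasing bijection with constant derivative $\phi'(t) = b-a$. Note that $V(t,s) = U(\phi(t),\phi(s))$ and $B(t) = (b-a)A(\phi(t))$, and that $(s,t) \in \Delta$ if and only if $(\phi(s),\phi(t)) \in \Delta_J$ since $\phi$ is increasing; in particular each $V(t,s)$ is indeed a bounded linear map in $X$.

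First I would check the differential-equation part of condition~(i). Fix $x \in D$ and $s \in [0,1)$. Since $U$ is a Zeitentwicklung to $A$, the map $[\phi(s),b] \ni \tau \mapsto U(\tau,\phi(s))x$ solves $y' = A(\tau)y$, $y(\phi(s)) = x$; in particular it is differentiable with values in $D$ and derivative $A(\tau)U(\tau,\phi(s))x$. Composing with $\phi$ and applying the chain rule for $X$-valued maps, the map $[s,1] \ni t \mapsto V(t,s)x = U(\phi(t),\phi(s))x$ is differentiable with values in $D$ and
\[
\frac{d}{dt}\, V(t,s)x = \phi'(t)\, A(\phi(t))\, U(\phi(t),\phi(s))x = (b-a)\, A(\phi(t))\, V(t,s)x = B(t)\, V(t,s)x,
\]
while $V(s,s)x = U(\phi(s),\phi(s))x = x$. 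Hence $[s,1] \ni t \mapsto V(t,s)x$ solves $y' = B(t)y$, $y(s) = x$.

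Next, the cocycle property: for $(r,s),(s,t) \in \Delta$ we have $(\phi(r),\phi(s)),(\phi(s),\phi(t)) \in \Delta_J$, so $V(t,s)V(s,r) = U(\phi(t),\phi(s))U(\phi(s),\phi(r)) = U(\phi(t),\phi(r)) = V(t,r)$ by the cocycle property of $U$. Finally, for condition~(ii), fix $x \in X$: the map $\Delta \ni (s,t) \mapsto (\phi(s),\phi(t)) \in \Delta_J$ is continuous and $\Delta_J \ni (s',t') \mapsto U(t',s')x$ is continuous, so their composition $\Delta \ni (s,t) \mapsto V(t,s)x$ is continuous. Both defining conditions being verified, $V$ is a Zeitentwicklung to $B$.

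I do not expect a genuine obstacle here: the only point requiring minor care is the chain rule for the Banach-space-valued map $\tau \mapsto U(\tau,\phi(s))x$, including at the one-sided endpoints, but since $\phi$ is affine and the solution is genuinely differentiable on the whole of $[\phi(s),b]$, this is immediate. One could alternatively invoke Proposition~\ref{thm: char zeitentwicklung}, but it is unnecessary since the cocycle identity for $V$ is obtained by hand.
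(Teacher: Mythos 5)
Your proof is correct and is precisely the direct verification the paper has in mind when it dismisses the lemma with ``Das ist (fast) offensichtlich'' -- the substitution $\phi(t)=a+t(b-a)$, the chain rule for the differential equation, the cocycle identity, and strong continuity by composition are exactly the points to check. Nothing to add.
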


\begin{proof}
Das ist (fast) offensichtlich.
\end{proof}

\subsection{Wann existieren Zeitentwicklungen?}

Wir haben in Satz~\ref{thm: zeitentwicklung natürlich} gesehen, dass die Anfangswertprobleme zu vorgegebenem $A$ genau dann wohlgestellt sind, wenn eine (und damit genau eine) Zeitentwicklung zu $A$ existiert. Aber wann, unter welchen (leicht verifizierbaren) Voraussetzungen an $A$ existiert denn eine Zeitentwicklung zu $A$? Satz~\ref{thm: Dyson} und der sehr viel tiefer liegende Satz~\ref{thm: Kato} geben solche Voraussetzungen an -- der eine für den Fall beschränkter $A(t)$, der andere für den Fall nicht notwendig beschränkter $A(t)$. 
\\

Zunächst führen wir einen Stabilitätsbegriff ein, der auf Kato~\cite{Kato 70} zurückgeht. 
\\

Sei $J$ ein Intervall und $A(t)$ für jedes $t \in J$ Erzeuger einer stark stetigen Halbgruppe auf $X$ (dessen Definitionsbereich durchaus von $t$ abhängen darf), $M \in [1, \infty)$ und $\omega \in \real$. Dann heißt $A$ $(M,\omega)$\emph{-stabil} genau dann, wenn 
\begin{align*}
\norm{   e^{A(t_n) s_n}  \, \dotsm \, e^{A(t_1) s_1}   } \le M e^{\omega (s_n + \, \dotsb \, + s_1) } 
\end{align*}
für alle $s_1, \dots, s_n \in [0,\infty)$, alle $t_1, \dots, t_n \in J$ mit $t_1 \le \dotsb \le t_n$ und alle $n \in \natu$.
\\

Wir sehen sofort: wenn die Familie $A = (A(t))_{t \in I}$ von Halbgruppenerzeugern $(M, \omega)$-stabil ist, dann erzeugt jedes $A(t)$ eine stark stetige Halbgruppe 
mit $\norm{  e^{A(t) s} } \le M e^{\omega s}$ für alle $s \in [0,\infty)$. Die Umkehrung dieser Aussage gilt beispielsweise im Fall $M = 1$, das heißt im Fall, dass die $A(t)$ allesamt Quasikontraktionshalbgruppen mit einer quasikontraktiven Wachstumsschranke kleiner oder gleich $\omega$ erzeugen. 

Wenn die $A(t)$ allesamt normale lineare Abbildungen sind, dann fallen $(M,\omega)$- und $(1, \omega)$-Stabilität zusammen, wie folgendes Beispiel zeigt.

\begin{ex} \label{ex: (M,w)-stabilität = (1,w)-stabilität für normale A}
Sei $A(t)$ für jedes $t \in I$ eine \emph{normale} lineare Abbildung $D \subset H \to H$. Dann ist $A$ $(M, \omega)$-stabil für irgendein $M \in [1, \infty)$ genau dann, wenn $A$ $(1, \omega)$-stabil ist. Sei nämlich $A$  $(M, \omega)$-stabil. Dann gilt für alle $t \in I$:
\begin{align*}
\norm{  e^{A(t) s} } \le M e^{\omega s}
\end{align*}
für alle $s \in [0, \infty)$, das heißt nach von Satz~\ref{thm: eigenschaften von erzeugern}, dass $\sigma(A(t)) \subset \{z \in \complex: \Re \, z \le \omega \}$ und damit 
\begin{align*}
\norm{ e^{A(t) s} x}^2 = \norm{   \int_{\sigma(A(t))} e^{z \, s} \,dP^{A(t)}(z) \,x   }^2 = \int_{\sigma(A(t))}   \big| e^{z \, s}  \big|^2   \,dP_{x,x}^{A(t)}(z) \le \Bigl( e^{\omega s} \norm{x} \Bigr)^2
\end{align*}
für alle $x \in H$ und alle $s \in [0, \infty)$. Also haben wir sogar
\begin{align*}
\norm{  e^{A(t) s} } \le e^{\omega s}
\end{align*}
für alle $s \in [0, \infty)$, woraus, wie eben bemerkt, die $(1, \omega)$-Stabilität von $A$ folgt.  $\blacktriangleleft$
\end{ex} 

Im allgemeinen fallen $(M, \omega)$- und $(1,\omega)$-Stabilität aber (natürlich) nicht zusammen. Sei beisbielsweise $A_0$ der Erzeuger der Translationshalbgruppe mit Sprung aus Beispiel~I.5.7~(iii) in~\cite{EngelNagel} und $A(t) = A_0$ für alle $t \in I$. Dann erzeugt $A_0$ keine Quasikontraktionshalbgruppe, weshalb $A$ auch nicht $(1,\omega)$-stabil ist, aber $A$ ist wegen 
\begin{align*}
e^{A(t_n) s_n} \, \dotsm \, e^{A(t_1) s_1} = e^{A_0 (s_n + \, \dotsb \, + s_1)}
\end{align*}
nach Beispiel~I.5.7~(iii) $(M,0)$-stabil für alle $M \in [2, \infty)$.    
\\

Die folgende Proposition (Proposition~3.5 aus~\cite{Kato 70}) ist eine Verallgemeinerung des im vorbereitenden Abschnitt erwähnten Störungssatzes, Satz~\ref{thm: störungssatz halbgruppenerz}, für Halbgruppenerzeuger.

\begin{prop} \label{prop: störung (M,w)-stabilität}
Sei $A(t)$ für jedes $t \in I$ Erzeuger einer stark stetigen Halbgruppe auf $X$, $A$ $(M,\omega)$-stabil und $B(t)$ für jedes $t \in I$ eine beschränkte lineare Abbildung in $X$, für die $b := \sup_{t \in I} \norm{B(t)} < \infty$. Dann ist $A+B$ $(M, \omega + M b)$-stabil. 
\end{prop}

Ein Beweis dieser Aussage (und auch eine Charakterisierung von $(M, \omega)$-Stabilität) findet sich z.B. in~\cite{Nickel 00}.
\\

Der nächste sehr einfache Satz (vgl. Theorem~X.69 in~\cite{RS 2}) besagt, dass im Fall beschränkter $A(t)$ eine Zeitentwicklung zu $A$ existiert, wenn $t \mapsto A(t)x$ stetig ist für alle $x \in X$ -- dies folgt größtenteils auch schon mithilfe des Satzes von Picard, Lindelöf. Außerdem liefert er eine Reihendarstellung für die Zeitentwicklung und eine Abschätzung der Zeitentwicklung (in deren Beweis die grundlegende Idee von Lemma~\ref{lm: kato, evolution 2} schon anklingt).  

\begin{thm} \label{thm: Dyson}
Sei $A(t)$ für jedes $t \in I$ eine beschränkte lineare Abbildung in $X$ und $t \mapsto A(t)x$ stetig für alle $x \in X$. Dann existiert genau eine Zeitentwicklung $U$ zu $A$ und diese ist gegeben durch
\begin{align*}
U(t,s)x = x + \int_s^t A(&t_1)x \,dt_1 +  \int_s^t \int_s^{t_1} A(t_1)A(t_2)x \, dt_2 \, dt_1 \\ 
&+ \int_s^t \, \int_s^{t_1} \! \int_s^{t_2} A(t_1)A(t_2)A(t_3)x \, dt_3 \, dt_2 \, dt_1 + \dotsb,
\end{align*}
für alle $(s,t) \in \Delta$ und alle $x \in X$, die Dysonreihe zu $A$. \\
Sei zusätzlich $A$ $(M, \omega)$-stabil, $B(t)$ für jedes $t \in I$ eine beschränkte lineare Abbildung in $X$ und $t \mapsto B(t)x$ stetig für alle $x \in X$. Dann gilt für die (durch Dysonreihen gegebenen) Zeitentwicklungen $U$ und $V$ zu $A$ bzw. $A+B$:
\begin{align*}
\norm{U(t,s)} \le M \, e^{ \omega (t-s) } \text{\; und \;} \norm{V(t,s)} \le M \, e^{ (\omega + M b) (t-s) } \text{ \; für alle } (s,t) \in \Delta.
\end{align*}
\end{thm}

\begin{proof}
Sei
\begin{align*}
U(t,s)x = x + \int_s^t A(&t_1)x \,dt_1 +  \int_s^t \int_s^{t_1} A(t_1)A(t_2)x \, dt_2 \, dt_1 \\ 
&+ \int_s^t \, \int_s^{t_1} \! \int_s^{t_2} A(t_1)A(t_2)A(t_3)x \, dt_3 \, dt_2 \, dt_1 + \dotsb,
\end{align*}
für alle $(s,t) \in \Delta$ und alle $x \in X$. Der limes rechts existiert gleichmäßig in $(s,t) \in \Delta$, weil
\begin{align*}
&\norm{   \int_s^t \, \int_s^{t_1} \dotsi \int_s^{t_{n-1}} A(t_1)A(t_2)\, \dotsm \, A(t_n)x \, dt_n \, \dots \, dt_2 \, dt_1    } \\
& \qquad \qquad \le c^n \; \int_s^t \, \int_s^{t_1} \dotsi \int_s^{t_{n-1}} 1 \; \; dt_n \, \dots \, dt_2 \, dt_1 \; \norm{x} \\
& \qquad \qquad = c^n \, \frac{(t-s)^n}{n!} \, \norm{x}  \le \frac{c^n}{n!} \, \norm{x}
\end{align*}
für alle $n \in \natu$ und alle $(s,t) \in \Delta$, wobei $c := \sup_{t \in I} \norm{A(t)}$. Wegen der gleichmäßigen Konvergenz ist $\Delta \ni (s,t) \mapsto U(t,s)x$ stetig für alle $x \in X$. Weiter ist 
\begin{align*}
I \ni t \mapsto \int_s^t \, \int_s^{t_1} \dotsi \int_s^{t_{n-1}} A(t_1)A(t_2)\, \dotsm \, A(t_n)x \, dt_n \, \dots \, dt_2 \, dt_1
\end{align*}
differenzierbar mit
\begin{align*}
& \ddt{  \biggl( \int_s^t \, \int_s^{t_1} \dotsi \int_s^{t_{n-1}} A(t_1)A(t_2)\, \dotsm \, A(t_n)x \, dt_n \, \dots \, dt_2 \, dt_1   \biggr) } \\
& \qquad \qquad = A(t) \, \int_s^{t} \dotsi \int_s^{t_{n-1}} A(t_2)\, \dotsm \, A(t_n)x \, dt_n \, \dots \, dt_2 \\
& \qquad \qquad = A(t) \, \int_s^{t} \dotsi \int_s^{t_{n-2}} A(t_1)\, \dotsm \, A(t_{n-1})x \, dt_{n-1} \, \dots \, dt_1 
\end{align*}
für alle $s,t \in I$, und auch 
\begin{align*}
I \ni s \mapsto & \int_s^t \, \int_s^{t_1} \dotsi \int_s^{t_{n-1}} A(t_1)A(t_2)\, \dotsm \, A(t_n)x \, dt_n \, \dots \, dt_2 \, dt_1  \\
& = \int_s^t \, \int_{t_n}^t \dotsi \int_{t_2}^t A(t_1) \, \dotsm \, A(t_{n-1}) A(t_n)x \, dt_1 \, \dots \, dt_{n-1} \, dt_n
\end{align*}
ist differenzierbar mit
\begin{align*}
& \dds{  \biggl( \int_s^t \, \int_s^{t_1} \dotsi \int_s^{t_{n-1}} A(t_1)A(t_2)\, \dotsm \, A(t_n)x \, dt_n \, \dots \, dt_2 \, dt_1  \biggr)  } \\
& \qquad \qquad = - \int_{s}^t \dotsi \int_{t_2}^t A(t_1) \, \dotsm \, A(t_{n-1}) A(s)x \, dt_1 \, \dots \, dt_{n-1} \\
& \qquad \qquad = - \int_s^{t} \dotsi \int_s^{t_{n-2}} A(t_1)\, \dotsm \, A(t_{n-1}) A(s) x \, dt_{n-1} \, \dots  \, dt_1
\end{align*}
für alle $s,t \in I$ und alle $x \in X$. Wir sehen nun (da auch die Reihen der Ableitung nach $t$ bzw. $s$ gleichmäßig konvergieren) mit Proposition~\ref{thm: char zeitentwicklung}, dass $U$ eine Zeitentwicklung zu $A$ ist.  
\\

Wir müssen nun noch die Abschätzung der (nach dem bis jetzt Bewiesenen auch wirklich existierenden) Zeitentwicklungen $U$ und $V$ herleiten. Dabei genügt es, dies für $U$, die ungestörte Zeitentwicklung, zu tun: die allgemeinere Abschätzung für die gestörte Zeitentwicklung $V$ folgt dann mit Proposition~\ref{prop: störung (M,w)-stabilität}.

Sei für jedes $k \in \natu$
\begin{align*}
U_k(t,s) := e^{A \left( \frac{\lfloor ks \rfloor} {k} \right) (t-s) }
\end{align*}
für alle $(s,t) \in \Delta$, die in ein und demselben (abgeschlossenen) Intervall der $\frac{1}{k}$-Zerlegung von $I$ liegen, und
\begin{align*}
U_k(t,s) := e^{A \left( \frac{\lfloor kt \rfloor} {k} \right) (t-\frac{\lfloor kt \rfloor} {k}) }\; 
e^{A \left( \frac{\lfloor kt \rfloor -1} {k} \right) \frac{1}{k} }\; \dotsm \; e^{A \left( \frac{\lfloor ks \rfloor} {k} \right) (\frac{\lfloor ks \rfloor + 1} {k} - s) }
\end{align*}  
für alle $(s,t) \in \Delta$, die nicht in ein und demselben (abgeschlossenen) Intervall der $\frac{1}{k}$-Zerlegung von $I$ liegen. 
Dann gilt
\begin{align*}
\norm{ U_k(t,s) } \le M \, e^{ \omega (t-s) }
\end{align*}
für alle $(s,t) \in \Delta$, da $A$ ja $(M, \omega)$-stabil ist. 
Wir zeigen nun, dass 
\begin{align*}
U_k(t,s)x \longrightarrow U(t,s)x \quad (k \to \infty)
\end{align*}
für alle $(s,t) \in \Delta$ und alle $x \in X$. Dann folgt die behauptete Abschätzung für $U$.
Sei also $x \in X$ und $(s,t) \in \Delta$ mit $s \ne t$ (für $s = t$ ist nichts zu zeigen). Sei weiter $k \in \natu$ und $\bigl(t_n \bigr)_{ n \in \{ 0, \dots , m \} }$ eine Zerlegung des Intervalls $[s,t]$ (das heißt, $s = t_0 < t_1 < \dots < t_m = t$), sodass $t_1, \dots, t_{m-1}$ genau die in $(s,t)$ enthaltenen Zerlegungsstellen der $\frac{1}{k}$-Zerlegung von $I$ sind. Dann ist
\begin{align*}
(t_{n-1},t_n) \ni \tau \mapsto U_k(t, \tau) U(\tau,s)x 
\end{align*}
nach Lemma~\ref{lm: strong db of products} differenzierbar und 
\begin{align*}
\ddtau{ \bigl(  U_k(t, \tau) U(\tau,s)x  \bigr)   } = U_k(t,\tau) \biggl(   A(\tau) - A\Bigl( \frac{\lfloor k \tau \rfloor} {k} \Bigr)   \biggr) U(\tau,s)x
\end{align*}
für alle $\tau \in (t_{n-1},t_n)$ und alle $n \in \{1, \dots, m \}$, das heißt
\begin{align*}
(t_{n-1},t_n) \ni \tau \mapsto \ddtau{ \bigl(  U_k(t, \tau) U(\tau,s)x  \bigr)   }
\end{align*}
ist stetig und beschränkt durch $M e^{\omega} \; 2c \; \sup_{\tau \in [s,1]} \norm{U(\tau,s)} \, \norm{x}$. 
Also gilt
\begin{align*}
U(t,s)x - U_k(t,s)x &= U_k(t, \tau) U(\tau,s)x \big|_{\tau=s}^{\tau=t} \\
&= \int_s^t U_k(t,\tau) \biggl(   A(\tau) - A\Bigl( \frac{\lfloor k \tau \rfloor} {k} \Bigr)   \biggr) U(\tau,s)x \, d\tau
\end{align*}
für alle $k \in \natu$, woraus wir mit dem lebesgueschen Satz schließlich
\begin{align*}
U(t,s)x &- U_k(t,s)x \\
&\longrightarrow \int_s^t \lim_{k \to \infty} U_k(t,\tau) \biggl(   A(\tau) - A\Bigl( \frac{\lfloor k \tau \rfloor} {k} \Bigr)   \biggr) U(\tau,s)x \, d\tau  = 0 \quad(k \to \infty)
\end{align*}
erhalten.
\end{proof}

Im Sonderfall vertauschender $A(t)$ vereinfacht sich der Ausdruck für die Zeitentwicklung aus obigem Satz stark.

\begin{cor} \label{cor: Dyson}
Sei $A(t)$ für jedes $t \in I$ eine beschränkte lineare Abbildung, $t \mapsto A(t)x$ stetig für alle $x \in X$ und die $A(t)$ mögen paarweise vertauschen, das heißt, $A(t')A(t) = A(t)A(t')$ für alle $t, t' \in I$. Dann gilt für die Zeitentwicklung $U$ zu $A$:
\begin{align*}
U(t,s) = e^{\int_s^t A(\tau) \,d\tau}
\end{align*}
für alle $(s,t) \in \Delta$, wobei das Integral natürlich im starken Sinne zu verstehen ist.
\end{cor}

\begin{proof}
Sei $x \in X$, $n \in \natu$ und 
\begin{align*}
f(t_1, t_2, \dots, t_n) := A(t_1) A(t_2) \dotsm A(t_n)x
\end{align*}
für alle $(t_1, t_2, \dots, t_n) \in I^n$.  
Sei weiter $(s,t) \in \Delta$ und 
\begin{align*}
E_{\sigma} := \{ (t_1, t_2, \dots, t_n) \in I^n: t \ge t_{\sigma(1)} \ge t_{\sigma(2)} \ge \dots \ge t_{\sigma(n)} \ge s \}
\end{align*}
für alle $\sigma \in S_n$ (Menge der Permutationen von $\{1, \dots, n\}$) sowie $E := E_{\operatorname{id}}$. 
Dann gilt wegen der Vertauschbarkeit der $A(t)$, dass
\begin{align*}
f(t_1, \dots, t_n) = f(t_{\sigma(1)}, \dots, t_{\sigma(n)} )
\end{align*}
für alle $(t_1, \dots, t_n) \in I^n$ und alle $\sigma \in S_n$, und darüberhinaus gilt 
\begin{align*}
E_{\sigma} = \tilde{\sigma}^{-1}(E)
\end{align*}
für alle $\sigma \in S_n$, wobei $\tilde{\sigma}$ die lineare Abbildung $\real^n \to \real^n$ ist, die $(t_1, \dots, t_n)$ auf $(t_{\sigma(1)}, \dots, t_{\sigma(n)})$ schickt (Permutation der Koordinaten). Also gilt
\begin{align*}
&\int_{E_{\sigma}} f(t_1, \dots, t_n) \, d(t_1, \dots, t_n) =  \int_{E_{\sigma}} f(t_{\sigma(1)}, \dots, t_{\sigma(n)}) \, d(t_1, \dots, t_n) \\
& \qquad \quad = \int_{\tilde{\sigma}^{-1}(E)} f\bigl( \tilde{\sigma}(t_1, \dots, t_n) \bigr) \, d(t_1, \dots, t_n)
= \int_{E} f(t_1, \dots, t_n) \, d(t_1, \dots, t_n),  
\end{align*}
denn $|\det \tilde{\sigma}|$ ist ja gleich $1$. 
Wir erhalten damit
\begin{align*}
&\int_s^t \, \int_s^{t_1} \dotsi \int_s^{t_{n-1}} A(t_1)A(t_2)\, \dotsm \, A(t_n)x \, dt_n \, \dots \, dt_2 \, dt_1 \\
& \qquad \quad = \int_{E} f(t_1, \dots, t_n) \, d(t_1, \dots, t_n) = \frac{1}{n!} \, \sum_{\sigma \in S_n}  \int_{E_{\sigma}} f(t_1, \dots, t_n) \, d(t_1, \dots, t_n) \\
& \qquad \quad = \frac{1}{n!} \, \int_{[s,t]^n} f(t_1, \dots, t_n) \, d(t_1, \dots, t_n) = \frac{1}{n!} \, \Bigl( \int_{[s,t]} A(\tau) \, d\tau \Bigr)^n x,
\end{align*}
weil die Vereinigung der $E_{\sigma}$ gleich $[s,t]^n$ ist und sich die $E_{\sigma}$ nur in Nullmengen schneiden. 

Aus der Dysonreihenreihendarstellung der Zeitentwicklung $U$ zu $A$ aus dem obigen Satz wird nun   
\begin{align*}
U(t,s)x = \sum_{n=0}^{\infty} \frac{1}{n!} \, \Bigl( \int_{[s,t]} A(\tau) \, d\tau \Bigr)^n x = e^{ \int_s^t A(\tau) \, d\tau} \, x,
\end{align*}
wie gewünscht.
\end{proof}

Jetzt kommen wir zu Zeitentwicklungen zu Familien unbeschränkter $A(t)$.

\begin{lm} \label{lm: kato, evolution 1}
Sei $A(t): D \subset X \to X$ für jedes $t \in I$ eine bijektive abgeschlossene lineare Abbildung und 
\begin{align*}
\{ (s',t') \in I^2: s' \ne t' \} \ni (s,t) \mapsto \frac{1}{t-s} \, C(t,s)x
\end{align*}
sei beschränkt für alle $x \in X$, wobei 
\begin{align*}
C(t,s) := A(t) A(s)^{-1} - 1
\end{align*}
für alle $(s,t) \in I^2$. 
Dann ist $t \mapsto A(t)x$ stetig für alle $x \in D$ und $t \mapsto C(t,t_0)x$ und $t \mapsto C(t_0,t)x$ sind stetig für alle $x \in X$ und $t_0 \in I$.
\end{lm}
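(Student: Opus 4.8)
The plan is to reduce everything to a Lipschitz estimate on the bounded operators $C(t,s)$. First I would note that each $C(t,s)$ is a bounded linear map on $X$: since $A(s)$ is bijective and closed, its inverse $A(s)^{-1}$ is an everywhere-defined closed operator, hence bounded by the closed graph theorem; moreover $A(s)^{-1}X = D = D(A(t))$, so $A(t)A(s)^{-1}$ is everywhere defined, and it is closed (composition of the bounded $A(s)^{-1}$ with the closed $A(t)$), hence bounded. Thus $C(t,s) = A(t)A(s)^{-1} - 1$ is bounded. The hypothesis says that for each $x \in X$ the set $\{\tfrac{1}{t-s}C(t,s)x : (s,t) \in I^2,\ s \ne t\}$ is bounded, so the Banach--Steinhaus theorem (the principle of uniform boundedness) yields a constant $c$ with $\norm{C(t,s)} \le c\,|t-s|$ for all $s,t \in I$.

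Next I would record two elementary identities. For $x \in D$, using $A(t_0)^{-1}A(t_0)x = x$, one gets $A(t)x - A(t_0)x = C(t,t_0)\,A(t_0)x$; and for operators on $X$, using $A(t')^{-1}A(t')A(t_0)^{-1} = A(t_0)^{-1}$ and $1 + C(t',t_0) = A(t')A(t_0)^{-1}$, one gets $C(t,t_0) - C(t',t_0) = C(t,t')\,\bigl(1 + C(t',t_0)\bigr)$. From the first identity and the Lipschitz bound, $\norm{A(t)x - A(t_0)x} \le c\,|t-t_0|\,\norm{A(t_0)x} \longrightarrow 0$ as $t \to t_0$, which gives the continuity of $t \mapsto A(t)x$ for every $x \in D$. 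From the second identity, $\norm{C(t,t_0) - C(t',t_0)} \le c\,|t-t'|\,(1+c)$ (using $\operatorname{diam} I = 1$), so $t \mapsto C(t,t_0)$ is even Lipschitz in operator norm; in particular $t \mapsto C(t,t_0)x$ is continuous for all $x \in X$.

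For the remaining statement, $t \mapsto C(t_0,t)$, I would argue that $1 + C(t,t_0) = A(t)A(t_0)^{-1}$ is a bijection of $X$ onto $X$ (since $A(t_0)^{-1} : X \to D$ and $A(t) : D \to X$ are bijective), hence, being bounded, is invertible in the Banach algebra of bounded linear maps in $X$, with inverse $A(t_0)A(t)^{-1}$; therefore $C(t_0,t) = \bigl(1 + C(t,t_0)\bigr)^{-1} - 1$. Since $t \mapsto 1 + C(t,t_0)$ is norm-continuous with values in the open set of invertible operators and inversion is continuous there by Lemma~\ref{lm: continuity of inv}, the map $t \mapsto \bigl(1 + C(t,t_0)\bigr)^{-1}$ is norm-continuous, hence so is $t \mapsto C(t_0,t)$, and in particular $t \mapsto C(t_0,t)x$ is continuous for every $x \in X$.

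The only point requiring a little care is the global invertibility of $1 + C(t,t_0)$ for \emph{all} $t \in I$ (not just $t$ near $t_0$); this rests solely on the bijectivity of each $A(t)$ together with the bounded inverse theorem, and once it is in hand the whole argument is routine.
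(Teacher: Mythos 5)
Your proposal is correct, and for the first two claims it coincides with the paper's argument: the uniform bound $\norm{C(t,s)}\le c\,|t-s|$ from Banach--Steinhaus, and the identity $A(t)x-A(t_0)x=C(t,t_0)A(t_0)x$, are exactly the paper's tools. The one place you diverge is the continuity of $t\mapsto C(t_0,t)x$: the paper uses the direct difference identity $C(t_0,t+h)x-C(t_0,t)x=\bigl(1+C(t_0,t+h)\bigr)\bigl(A(t)-A(t+h)\bigr)A(t)^{-1}x$, estimates the first factor by $1+c$ and lets the second tend to $0$ by the strong continuity of $A(\cdot)$ just established, which yields only strong continuity; you instead write $C(t_0,t)=\bigl(1+C(t,t_0)\bigr)^{-1}-1$ and invoke the norm-continuity of inversion (Lemma~\ref{lm: continuity of inv}), which, combined with your observation that $t\mapsto C(t,t_0)$ is Lipschitz in operator norm via $C(t,t_0)-C(t',t_0)=C(t,t')\bigl(1+C(t',t_0)\bigr)$, actually gives \emph{norm}-continuity of both maps -- slightly more than the lemma asserts. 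Both routes are sound; the paper's is marginally more elementary (no appeal to continuity of inversion), while yours is a touch stronger and correctly isolates the only delicate point, namely that $1+C(t,t_0)=A(t)A(t_0)^{-1}$ is invertible for \emph{every} $t$ by bijectivity of the $A(t)$ together with the closed graph theorem, not merely for $t$ near $t_0$ by a Neumann series.
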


\begin{proof}
Sei $x \in D$ und $t \in I$. Dann ist 
\begin{align*}
A(t+h)x - A(t)x = C(t+h,t)A(t)x
\end{align*}
und daher
\begin{align*}
\norm{A(t+h)x - A(t)x} \le c \norm{A(t)x} |h|
\end{align*}
für alle $h \in \real$ mit $t + h \in I$, wobei 
\begin{align*}
c := \sup_{(s,t) \in \{ s' \ne t' \} } \norm{ \frac{1}{t-s} \,C(t,s) },
\end{align*}
was (nach unserer Voraussetzung und dem Satz von Banach, Steinhaus) eine reelle Zahl ist. Also ist $t' \mapsto A(t')x$ stetig in $t$.
\\

Sei $x \in X$, $t_0 \in I$ und $t \in I$. Dann folgt mit der eben bewiesenen Stetigkeitsaussage
\begin{align*}
C(t+h,t_0)x - C(t,t_0)x = \bigl( A(t+h) - A(t) \bigr) \; A(t_0)^{-1} x \longrightarrow 0 \quad (h \to 0)
\end{align*}
und wegen $\sup_{(s,t) \in I^2 } \norm{ C(t,s) } \le c$
\begin{align*}
C(t_0, t+h)x - C(t_0,t)x &= A(t_0)A(t+h)^{-1} \bigl( A(t) - A(t+h) \bigr) A(t)^{-1}x \\
&= \bigl(C(t_0, t+h) + 1 \bigr) \bigl( A(t) - A(t+h) \bigr) A(t)^{-1}x \\
&\longrightarrow 0 \quad (h \to 0),
\end{align*} 
wie gewünscht.
\end{proof}

Das nächste Lemma ist eine geringfügige Abwandlung von Theorem~XIV.4.1 aus Yosidas Buch~\cite{Yosida: FA}, das im wesentlichen auf Kato zurückgeht (Theorem~4 in~\cite{Kato 53}). Worin unterscheidet sich unser Lemma von Theorem~XIV.4.1? Wir setzen nur $(M,0)$-Stabiltät voraus (anstelle von $(1,0)$-Stabilität)
und wir ersetzen die Voraussetzung, dass $(s,t) \mapsto \frac{1}{t-s} \, C(t,s)x$ gleichmäßig stetig ist, durch die (jedenfalls in der Situation von Satz~\ref{thm: Kato}) etwas leichter nachprüfbare Voraussetzung, dass $\lim_{t \searrow 0} C(t)x$ existiert. Am Beweis ändert sich dadurch allerdings kaum etwas, weshalb wir auch nur \emph{die} Argumente genauer ausführen, die abgewandelt werden müssen. Wir halten fest, dass Lemma~\ref{lm: kato, evolution 2} im Fall $M =1$ äquivalent ist zu Yosidas Theorem~XIV.4.1. Dies folgt aus Proposition~\ref{prop: zshg. regvor. kato} und den sich daran anschließenden Ausführungen.

Allgemeinere Aussagen als die von Lemma~\ref{lm: kato, evolution 2} finden sich in Katos Arbeiten~\cite{Kato 70} (Theorem~6.1) und~\cite{Kato 73} (Theorem~1) und Kobayasis Arbeit~\cite{Kobayasi79} (in der das eben genannte Theorem~6.1 Katos noch einmal geringfügig verallgemeinert wird). 
Die größere Allgemeinheit des Satzes in der Arbeit Kobayasis sieht man beispielsweise daran, dass dieser unseren Satz~\ref{thm: Dyson} enthält, %bis auf die Reihendarstellung in unserem Satz
 was für Theorem~XIV.4.1 nicht gilt: nach den Ausführungen im Anschluss an Proposition~\ref{prop: zshg. regvor. kato} ist dieses Theorem nämlich auf bloß  stark stetige (aber nicht auch stark stetig differenzierbare) $A$ nicht anwendbar.

\begin{lm} \label{lm: kato, evolution 2}
Sei $A(t): D \subset X \to X$ für jedes $t \in I$ eine bijektive lineare Abbildung, die eine stark stetige Halbgruppe auf $X$ erzeugt, sei $A$ $(M,0)$-stabil und für alle $x \in X$ gelte:\\
(i) \begin{align*}
\{ (s',t') \in I^2: s' \ne t' \} \ni (s,t) \mapsto \frac{1}{t-s} \, C(t,s)x
\end{align*}
ist beschränkt ($C(t,s)$ wie in Lemma~\ref{lm: kato, evolution 1}) \\
(ii) \begin{align*}
C(t)x := \lim_{h \searrow 0} \frac{1}{h} C(t, t-h)x
\end{align*}
existiert gleichmäßig in $t \in (0,1]$, das heißt, $C(t)x := \lim_{h \searrow 0} \frac{1}{h} C(t, t-h)x$ existiert für alle $t \in (0,1]$ und 
\begin{align*}
\sup_{t \in [h,1]} \norm{ \frac{1}{h} C(t, t-h)x - C(t)x } \longrightarrow 0 \quad (h \searrow 0)
\end{align*}
(iii) $(0,1] \ni t \mapsto C(t)x$ ist stetig und $C(0)x := \lim_{t \searrow 0} C(t)x$ existiert. \\
Dann existiert genau eine Zeitentwicklung $U$ zu $A$ und für diese gilt:
\begin{align*}
\norm{U(t,s)} \le M \text{ \; für alle } (s,t) \in \Delta.
\end{align*}
\end{lm}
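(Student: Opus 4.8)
\quad I would follow the classical Kato--Yosida construction by freezing the generators, the only subtleties being the bookkeeping forced by the weakened hypotheses ($(M,0)$- rather than $(1,0)$-stability, and $(ii)$--$(iii)$ in place of uniform continuity of $(s,t)\mapsto\tfrac1{t-s}C(t,s)x$). Uniqueness is immediate from Proposition~\ref{prop: ex h�chstens eine zeitentw}, so only existence together with the bound $\norm{U(t,s)}\le M$ has to be produced. Two facts are used repeatedly: by $(i)$ and Banach--Steinhaus, $c:=\sup\bigl\{\norm{\tfrac1{t-s}C(t,s)}:s\ne t\bigr\}<\infty$; and by Lemma~\ref{lm: kato, evolution 1}, $t\mapsto A(t)x$ is continuous on the compact interval $I$ for every $x\in D$. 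Note also that, since all $A(t)$ share the domain $D$, each semigroup $e^{rA(t)}$ maps $D$ into $D$.

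\emph{Approximants.} For $n\in\natu$ put $\phi_n(\tau):=\lfloor n\tau\rfloor/n$, let $A_n(\tau):=A(\phi_n(\tau))$, and define $U_n(t,s)$ as the ordered product of semigroups $e^{rA(k/n)}$ obtained by freezing $A$ at the left endpoints of the subintervals of the $\tfrac1n$-partition (a single block $e^{A_n(s)(t-s)}$ if $s,t$ lie in one subinterval), exactly as in the proof of Theorem~\ref{thm: Dyson}. Then $(s,t)\mapsto U_n(t,s)x$ is continuous on $\Delta$, $U_n(t,s)$ maps $D$ into $D$, and for $x\in D$ one has (off the grid) $\partial_tU_n(t,s)x=A_n(t)U_n(t,s)x$ and $\partial_sU_n(t,s)x=-U_n(t,s)A_n(s)x$. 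Most importantly, $(M,0)$-stability gives $\norm{U_n(t,s)}\le M$ for all $n$ and all $(s,t)\in\Delta$, and the same bound applies to every sub-product of consecutive semigroups occurring in $U_n$.

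\emph{A priori estimate and convergence.} The key point is the uniform bound
\begin{align*}
\sup_{n\in\natu}\;\sup_{(s,t)\in\Delta}\;\norm{A(\sigma)\,U_n(t,s)\,A(s)^{-1}}\le\kappa<\infty\qquad(\sigma\in I).
\end{align*}
To obtain it I would commute $A$ leftward through the product defining $U_n$ via $A(t')e^{rA(t'')}=e^{rA(t'')}A(t'')+C(t',t'')\,A(t'')e^{rA(t'')}$, which rewrites $A(\sigma)U_n(t,s)A(s)^{-1}$ as a finite sum; each summand is a product of runs of consecutive semigroups (each of norm $\le M$ by stability) separated by at most $n$ correction factors of the form $1+C(\cdot,\cdot)$, each of norm $\le 1+c/n$, besides the two outer factors $1+C(\sigma,\phi_n(t))$ and $1+C(\phi_n(s),s)$. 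Summing over the $\binom{n}{j}$ ways of placing $j$ genuine correction factors gives a bound $\le(1+c)^2M(1+Mc/n)^n\le(1+c)^2Me^{Mc}=:\kappa$. Armed with this, I would write, for $x\in D$ and $n,m\in\natu$, the variation-of-constants identity
\begin{align*}
U_n(t,s)x-U_m(t,s)x=\int_s^t U_m(t,\tau)\bigl(A_n(\tau)-A_m(\tau)\bigr)U_n(\tau,s)x\,d\tau,
\end{align*}
and use $A_n(\tau)-A_m(\tau)=C(\phi_n(\tau),\phi_m(\tau))A(\phi_m(\tau))$ with $|\phi_n(\tau)-\phi_m(\tau)|<\tfrac1n+\tfrac1m$, so that by the a priori estimate the integrand is $O\bigl((\tfrac1n+\tfrac1m)\sup_{r\in I}\norm{A(r)x}\bigr)$ uniformly. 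Hence $(U_n(t,s)x)_n$ is Cauchy, uniformly on $\Delta$, for every $x\in D$; by density of $D$ and the uniform bound $M$ it converges for every $x\in X$ to a bounded operator $U(t,s)$ with $\norm{U(t,s)}\le M$, and $(s,t)\mapsto U(t,s)x$ is continuous on $\Delta$.

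\emph{Identification of $U$ and the obstacle.} It remains to show that $t\mapsto U(t,s)x$ solves $y'=A(t)y$, $y(s)=x$ for $x\in D$; the cocycle property then follows from uniqueness of these initial value problems, just as in the proof of Theorem~\ref{thm: zeitentwicklung nat�rlich}, and $U$ is the time evolution to $A$ with the desired bound. For this one sharpens the previous analysis one order higher to show that $A_n(t)U_n(t,s)x$ is itself Cauchy in $n$: the family $A(t)U_n(t,s)A(s)^{-1}$ satisfies (away from the grid) a propagator equation driven by $A_n(t)$ perturbed by a term governed by $C(t)$, and hypotheses $(ii)$ and $(iii)$ are exactly what make the corresponding difference estimate close up. Closedness of $A(t)$ (Theorem~\ref{thm: eigenschaften von erzeugern}) then yields $U(t,s)x\in D$ and $A_n(t)U_n(t,s)x\to A(t)U(t,s)x$; passing to the limit in $U_n(t,s)x=x+\int_s^tA_n(\tau)U_n(\tau,s)x\,d\tau$ by dominated convergence (domination by $\kappa\sup_{r\in I}\norm{A(r)x}$) gives $U(t,s)x=x+\int_s^tA(\tau)U(\tau,s)x\,d\tau$, as required. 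The main obstacle is precisely these two a priori estimates for the conjugated propagators $A(\sigma)U_n(t,s)A(s)^{-1}$: their boundedness, which is where the unboundedness of the $A(t)$ and the combinatorics of $(M,0)$-stability bite, and, one derivative up, their convergence, which is where $(ii)$--$(iii)$ are genuinely needed.
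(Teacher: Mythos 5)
Your construction follows the same route as the paper's proof (which is Kato's): freeze the generators on the $\tfrac1k$-grid, get $\norm{U_k(t,s)}\le M$ from the $(M,0)$-stability, bound the conjugated propagators $W_k(t,s)=A(t)U_k(t,s)A(s)^{-1}$ by $(1+c)^2Me^{Mc}$ via the combinatorial expansion into runs of semigroups separated by corrections $C(\cdot,\cdot)$ of norm $\le c/k$, deduce that $(U_k(t,s)x)_k$ is Cauchy uniformly on $\Delta$ from the variation-of-constants comparison of $U_k$ and $U_l$, and finally pass to the limit in the integrated equation using the closedness of $A(t)$. All of that matches the paper.

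The gap is in the step you yourself flag as the main obstacle: the strong convergence of the conjugated propagators, which is what yields $U(t,s)D\subset D$ and identifies $A(t)U(t,s)x$. You propose to show $A_k(t)U_k(t,s)x$ is Cauchy by writing a ``propagator equation'' for $W_k$ driven by $A_k$ plus a term governed by $C(t)$ and ``closing up'' the difference estimate. This does not work as stated: $W_k$ satisfies no cocycle law, so there is no analogue of the quantity $U_l(t,\tau)U_k(\tau,s)x$ whose $\tau$-derivative one could integrate, and the formal equation for $\partial_\tau W_k(\tau,s)y$ contains the perturbation $A(\tau)\,C(\lfloor k\tau\rfloor/k,\tau)$, which is not bounded, so no Gronwall-type comparison of $W_k$ and $W_l$ closes without already having the well-posedness one is trying to prove. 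What the paper does instead -- and what is genuinely needed -- is the explicit identity $W_k=\bigl(1+C(t,\tfrac{\lfloor kt\rfloor}{k})\bigr)\bigl(\sum_m W_k^{(m)}\bigr)\bigl(1+C(\tfrac{\lfloor ks\rfloor}{k},s)\bigr)$, where $W_k^{(m)}$ is an $m$-fold sum of products $U_k\,C_{\cdot,\cdot}\,U_k\cdots C_{\cdot,\cdot}\,U_k$; each $W_k^{(m)}(t,s)x$ is rewritten as an iterated integral of step functions and shown, by dominated convergence together with the uniformity in (ii) and the continuity in (iii), to converge to the iterated integral of $U(t,t_1)C(t_1)U(t_1,t_2)\cdots C(t_m)U(t_m,s)x$ over the ordered simplex, after which the series is summed using $\norm{W_k^{(m)}(t,s)}\le M\,(Mc(t-s))^m/m!$. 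This explicit limit $W(t,s)$ is the missing ingredient; without it your final dominated-convergence argument has no limit to converge to.
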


\begin{proof}
Sei für jedes $k \in \natu$
\begin{align*}
U_k(t,s) := e^{A \left( \frac{\lfloor ks \rfloor} {k} \right) (t-s) }
\end{align*}
für alle $(s,t) \in \Delta$, die in ein und demselben (abgeschlossenen) Intervall der $\frac{1}{k}$-Zerlegung von $I$ liegen, und
\begin{align*}
U_k(t,s) := e^{A \left( \frac{\lfloor kt \rfloor} {k} \right) (t-\frac{\lfloor kt \rfloor} {k}) }\; 
e^{A \left( \frac{\lfloor kt \rfloor -1} {k} \right) \frac{1}{k} }\; \dotsm \; e^{A \left( \frac{\lfloor ks \rfloor} {k} \right) (\frac{\lfloor ks \rfloor + 1} {k} - s) }
\end{align*}  
für alle $(s,t) \in \Delta$, die nicht in ein und demselben (abgeschlossenen) Intervall der $\frac{1}{k}$-Zerlegung von $I$ liegen.
Sei weiter
\begin{align*}
W_k(t,s) := A(t) U_k(t,s) A(s)^{-1}
\end{align*} 
für alle $(s,t) \in \Delta$ und alle $k \in \natu$.

Wir sehen dann sofort, dass
\begin{align*}
U_k(t,s) U_k(s,r) = U_k(t,r)
\end{align*}
und
\begin{align*}
\norm{U_k(t,s)} \le M
\end{align*}
für alle $(r,s),(s,t) \in \Delta$, dass $\Delta \ni (s,t) \mapsto U_k(t,s)x$ stetig ist für alle $x \in X$, und dass die $W_k(t,s)$ beschränkte lineare Abbildungen in $X$ sind.
Weiter sehen wir (vgl. Yosidas Beweis), dass 
\begin{align} \label{eq: Kato 1}
W_k(t,s) = \biggl( 1 + C\Bigl(t, \frac{\lfloor kt \rfloor} {k} \Bigr) \biggr)   \Bigl( W_k^{(0)} (t,s) + \dotsb + & W_k^{(m_k(t,s))} (t,s) \Bigr) \notag \\
& \qquad  \biggl( 1 + C\Bigl( \frac{\lfloor ks \rfloor} {k}, s \Bigr) \biggr) 
\end{align}
für alle $(s,t) \in \Delta$ und alle $k \in \natu$, wobei $W_k^{(0)} (t,s) := U_k(t,s)$ und 
\begin{align*}
W_k^{(m)} (t,s) := \sum_{l_1 = \lfloor ks \rfloor + m}^{\lfloor kt \rfloor} \; & \sum_{l_2 = \lfloor ks \rfloor + m-1}^{l_1 -1} \dotsi \sum_{l_m = \lfloor ks \rfloor + 1}^{l_{m-1} -1} U_k \Bigl( t, \frac{l_1}{k} \Bigr) C_{ \frac{l_1}{k}, \frac{l_1 -1}{k} } \; \cdot \\
& \cdot \, U_k \Bigl( \frac{l_1}{k}, \frac{l_2}{k} \Bigr) C_{ \frac{l_2}{k}, \frac{l_2 -1}{k} } \; \dotsb \;
U_k \Bigl( \frac{l_{m-1}}{k}, \frac{l_m}{k} \Bigr) C_{ \frac{l_m}{k}, \frac{l_m -1}{k} }  U_k \Bigl( \frac{l_{m}}{k}, s \Bigr),
\end{align*}
\begin{align*}
C_{ \frac{l_i}{k}, \frac{l_i -1}{k} } := C\Bigl( \frac{l_i}{k}, \frac{l_i -1}{k} \Bigr) 
\end{align*}
für alle $m \in \natu$ und
\begin{align*}
m_k(t,s) := \max \{ k \in \natu: \lfloor ks \rfloor + m \le \lfloor kt \rfloor \} \in \natu \cup \{ 0 \},
\end{align*}
die Anzahl der Zerlegungspunkte der $\frac{1}{k}$-Zerlegung von $I$, die im Intervall $(s,t]$ liegen -- das heißt, $m_k(t,s) +1$ ist die Anzahl der in $U_k(t,s)$ vorkommenden Faktoren. Die $W_k^{(m)} (t,s)$ definierende Summe ist leer (und damit gleich $0$) genau dann, wenn in $(s,t]$ weniger als $m$ Zerlegungspunkte der $\frac{1}{k}$-Zerlegung liegen -- anders gesagt, wenn in $U_k(t,s)$ weniger als $m + 1$ Faktoren auftreten.

Sei
\begin{align*}
c := \sup_{(s,t) \in \{ s' \ne t' \} } \norm{ \frac{1}{t-s} \,C(t,s) },
\end{align*}
was nach Voraussetzung eine reelle Zahl ist. Dann gilt
\begin{align*} 
\norm{W_k^{(m)}(t,s)} \le &\: M^{m+1} \, \Bigl( \frac{c}{k} \Bigr)^m   \sum_{l_1 = \lfloor ks \rfloor + m}^{\lfloor kt \rfloor} \;  \sum_{l_2 = \lfloor ks \rfloor + m-1}^{l_1 -1} \dotsi \sum_{l_m = \lfloor ks \rfloor + 1}^{l_{m-1} -1}   1  \\
\le &\: M \frac{(Mc)^m}{k^m} \; \frac{ \bigl( \lfloor kt \rfloor - (\lfloor ks \rfloor +1) \bigr)^m }{m!}  \\ 
\le &\: M \frac{(Mc \, (t-s))^m}{m!}
\end{align*}
für alle $m \in \natu$ und damit
\begin{align}  \label{eq: Kato 2}
\norm{W_k(t,s)} &\le \Bigl( 1 + \frac{c}{k} \Bigr)  \biggl(  \sum_{m=0}^{m_k(t,s)}  M \frac{(Mc \, (t-s))^m}{m!}  \biggr)  \Bigl( 1 + \frac{c}{k} \Bigr) \notag \\
&\le M \Bigl( 1 + \frac{c}{k} \Bigr)^2 \, e^{Mc \, (t-s)}
\end{align}
für alle $(s,t) \in \Delta $, $k \in \natu $.

Wir können jetzt zeigen, dass $\bigl( U_k(t,s)x \bigr)$ für alle $x \in X$ eine Cauchyfolge ist, und zwar gleichmäßig in $(s,t) \in  \Delta$.

Sei zunächst $x \in D$ und $(s,t) \in \Delta$ mit $s \ne t$. Seien $k$, $l$ natürliche Zahlen und $\bigl(t_n \bigr)_{ n \in \{ 0, \dots , m \} }$ eine Zerlegung des Intervalls $[s,t]$ (das heißt, $s = t_0 < t_1 < \dots < t_m = t$), sodass $t_1, \dots, t_{m-1}$ genau die in $(s,t)$ enthaltenen Zerlegungsstellen der $\frac{1}{k}$-Zerlegung und der $\frac{1}{l}$-Zerlegung von $I$ sind. Dann ist 
\begin{align*}
(t_{n-1},t_n) \ni \tau \mapsto U_l(t, \tau)U_k(\tau, s)x
\end{align*}
(nach Lemma~\ref{lm: kato, evolution 1} sowie Proposition~\ref{thm: char zeitentwicklung} und nach Lemma~\ref{lm: strong db of products}) differenzierbar und
\begin{align*}
\ddtau{ \bigl( U_l(t, \tau)U_k(\tau, s)x \bigr) } = U_l(t,\tau) \biggl(    C\Bigl( \frac{\lfloor k \tau \rfloor}{k}, \tau \Bigr) - C\Bigl( \frac{\lfloor l \tau \rfloor}{l}, \tau \Bigr)   \biggr)      W_k(\tau,s)      \bigl( 1+C(s,0) \bigr)      A(0)x
\end{align*}
für alle $\tau \in (t_{n-1}, t_n)$ und alle $n \in \{1, \dots, m \}$. 
Weil die Abbildungen $(t_{n-1},t_n) \ni \tau \mapsto \lfloor k \tau \rfloor$, $\lfloor l \tau \rfloor$, $m_k(\tau,s)$ konstant sind, ist 
\begin{align*}
(t_{n-1},t_n) \ni \tau \mapsto W_k(\tau,s)y
\end{align*}
wegen~\eqref{eq: Kato 1} und Lemma~\ref{lm: kato, evolution 1} stetig für alle $y \in X$ und damit auch
\begin{align*}
(t_{n-1},t_n) \ni \tau \mapsto \ddtau{ \bigl( U_l(t, \tau)U_k(\tau, s)x \bigr) }
\end{align*}
stetig (erneut wegen Lemma~\ref{lm: kato, evolution 1}). Außerdem haben wir aufgrund von~\eqref{eq: Kato 2}
\begin{align*}
\norm{  \ddtau{ \bigl( U_l(t, \tau)U_k(\tau, s)x \bigr) }   }   \le   M \, \Bigl( \frac{c}{k} + \frac{c}{l} \Bigr) \, M (1 + c)^2 \, e^{Mc} \: (1 + c) \norm{A(0)x}
\end{align*}
für alle $\tau \in (t_{n-1}, t_n)$ und alle $n \in \{1, \dots, m \}$. Also gilt
\begin{align*}
U_l(t, t_n)U_k(t_n,s)x &- U_l(t,t_{n-1})U_k(t_{n-1},s)x \\
&= \lim_{\varepsilon \searrow 0} U_l(t, t_n - \varepsilon)U_k(t_n - \varepsilon,s)x - U_l(t,t_{n-1} + \varepsilon)U_k(t_{n-1} + \varepsilon,s)x \\
&= \lim_{\varepsilon \searrow 0} \int_{t_{n-1} + \varepsilon}^{t_n - \varepsilon}  \ddtau{ \bigl( U_l(t, \tau)U_k(\tau, s)x \bigr) } \, d\tau \\
&= \int_{t_{n-1}}^{t_n}  \ddtau{ \bigl( U_l(t, \tau)U_k(\tau, s)x \bigr) } \, d\tau
\end{align*}
für alle $n \in \{1, \dots, m \}$ und damit
\begin{align*}
\norm{ U_k(t,s)x - U_l(t,s)x } \le M^2 \, (1 + c)^3 \, e^{Mc} \: \Bigl( \frac{c}{k} + \frac{c}{l} \Bigr) \,\norm{A(0)x} \, (t-s), 
\end{align*}
da ja
\begin{align*}
U_k(t,s)x - U_l(t,s)x &= U_l(t,t)U_k(t,s)x - U_l(t,s)U_k(s,s)x \\
&= \sum_{n=1}^{m} U_l(t, t_n)U_k(t_n,s)x - U_l(t,t_{n-1})U_k(t_{n-1},s)x.
\end{align*}
Also ist $\bigl( U_k(t,s)x \bigr)$ tatsächlich für alle $x \in D$ eine Cauchyfolge gleichmäßig in $(s,t) \in \Delta$ und wegen $\norm{U_k(t,s)} \le M$  gilt dies auch für alle $x \in X$.

Sei nun 
\begin{align*}
U(t,s)x := \lim_{k \to \infty} U_k(t,s)x
\end{align*}
für alle $(s,t) \in \Delta$ und alle $x \in X$. Dann gilt
\begin{align*}
\norm{U(t,s)} \le M,
\end{align*}
\begin{align*}
U(t,s) U(s,r) = U(t,r)
\end{align*}
für alle $(r,s), (s,t) \in \Delta$ und $(s,t) \mapsto U(t,s)x$ ist stetig für alle $x \in X$.

Wir haben gerade gesehen, dass $\bigl( W_k^{(0)}(t,s)x \bigr) = \bigl( U_k(t,s)x \bigr)$ für alle $(s,t) \in \Delta$ und alle $x \in X$ gegen $W^{(0)}(t,s)x := U(t,s)x$ konvergiert, jetzt wollen wir zeigen, dass $\bigl( W_k^{(m)}(t,s)x \bigr)$ auch für $m \in \natu$ für alle $(s,t) \in \Delta$ und alle $x \in X$ konvergiert. 

Sei also $m \in \natu$, $x \in X$ und $(s,t) \in \Delta$. Sei weiter
\begin{align*}
f_k^{(m)}(t_1, t_2, \dots, t_m) \\
:= \sum_{l_1 = \lfloor ks \rfloor + m}^{\lfloor kt \rfloor} \; & \sum_{l_2 = \lfloor ks \rfloor + m-1}^{l_1 -1} \dotsi \sum_{l_m = \lfloor ks \rfloor + 1}^{l_{m-1} -1} U_k \Bigl( t, \frac{l_1}{k} \Bigr)\, k\,C_{ \frac{l_1}{k}, \frac{l_1 -1}{k} } \; \cdot \\
& \cdot \, U_k \Bigl( \frac{l_1}{k}, \frac{l_2}{k} \Bigr) \, k\, C_{ \frac{l_2}{k}, \frac{l_2 -1}{k} } \; \dotsb \;
U_k \Bigl( \frac{l_{m-1}}{k}, \frac{l_m}{k} \Bigr) \,k\, C_{ \frac{l_m}{k}, \frac{l_m -1}{k} }  U_k \Bigl( \frac{l_{m}}{k}, s \Bigr) x \; \cdot \\
& \cdot \, \chi_{ \bigl[ \frac{l_1-1}{k}, \frac{l_1}{k} \bigr)  } (t_1) \; \chi_{ \bigl[ \frac{l_2-1}{k}, \frac{l_2}{k} \bigr)  } (t_2) \; \dotsm \; \chi_{ \bigl[ \frac{l_m-1}{k}, \frac{l_m}{k} \bigr)  } (t_m),
\end{align*}
und
\begin{align*}
f^{(m)}(t_1, t_2, \dots, t_m) := U(t,t_1) C(t_1) U(t_1,t_2) C(t_2) \; \dotsm \; U(t_{m-1},t_m) C(t_m) U(t_m,s) x
\end{align*}
für alle $(t_1, t_2, \dots, t_m) \in \Delta_{[s,t]}^{(m)} := \{ (t_1', t_2', \dots, t_m') \in [s,t]^m: t_1' \ge t_2' \ge \dotsb \ge t_m' \}$ und alle $k \in \natu$ und sei
\begin{align*}
W^{(m)}(t,s) x := \int_s^t  \int_s^{t_1} \; \dotsi \; \int_s^{t_{m-1}} f^{(m)}(t_1, t_2, \dots, t_m) \, dt_m \dots dt_2 \, dt_1.
\end{align*}
Dieses Integral existiert, weil $t' \mapsto C(t')$ nach Voraussetzung stark stetig ist, auch in $0$.
Wir können nun $W_k^{(m)}(t,s)x$ als Integral ausdrücken:
\begin{align*}
W_k^{(m)}(t,s)x = \int_{ \frac{ \lfloor ks \rfloor + m-1 }{k} }^{ \frac{\lfloor kt \rfloor}{k}  } \; \int_{ \frac{ \lfloor ks \rfloor + m-2 }{k} }^{ \frac{\lfloor k t_1 \rfloor}{k}  }  \dotsi \int_{ \frac{ \lfloor ks \rfloor }{k} }^{ \frac{\lfloor k t_{m-1} \rfloor}{k}  }     f_k^{(m)}(t_1, t_2, \dots, t_m)   \, dt_m \dots dt_2 \, dt_1.
\end{align*}
Weiter gilt
\begin{align*}
\norm{ f_k^{(m)}(t_1, t_2, \dots, t_m) } \le M \, (Mc)^m \, \norm{x}
\end{align*}
für alle $(t_1, t_2, \dots, t_m) \in \Delta_{[s,t]}^{(m)}$ und alle $k \in \natu$ und 
\begin{align*}
f_k^{(m)}(t_1, t_2, \dots, t_m) \longrightarrow f^{(m)}(t_1, t_2, \dots, t_m) \quad (k \to \infty)
\end{align*}
für alle $(t_1, t_2, \dots, t_m) \in \Delta_{[s,t]}^{(m)}$ mit $s < t_m  < \dotsb < t_2 < t_1 < t$, wobei wir benutzt haben, dass
\begin{align*}
\sup_{(s',t') \in \Delta} \norm{ U_k(t',s')y - U(t',s')y } \longrightarrow 0 \quad (k \to \infty),
\end{align*}
$\Delta \ni (s',t') \mapsto U(t',s')y$ stetig ist,
\begin{align*}
\sup_{ t' \in \bigl[ \frac{1}{k}, 1 \bigr] }   \norm{ k \, C\Bigl(t', t'-\frac{1}{k} \Bigr)y - C(t')y }  \longrightarrow  0 \quad (k \to \infty)
\end{align*}
und $(0,1] \ni t' \mapsto C(t')y$ stetig ist für alle $y \in X$.
Also erhalten wir mit dem lebesgueschen Satz, dass
\begin{align*}
\int_{ \Delta_{[s,t]}^{(m)} }   f_k^{(m)}(t_1, t_2, \dots, t_m) \,d(t_1, t_2, \dots, t_m)  & \longrightarrow   \int_{ \Delta_{[s,t]}^{(m)} }   f^{(m)}(t_1, t_2, \dots, t_m) \,d(t_1, t_2, \dots, t_m) \\ 
& =  W^{(m)}(t,s)x  \quad (k \to \infty)
\end{align*}
und damit auch 
\begin{align*}
W_k^{(m)}(t,s)x  \longrightarrow  W^{(m)}(t,s)x \quad (k \to \infty).
\end{align*}

Sei 
\begin{align*}
W(t,s)x := \sum_{m=0}^{\infty} W^{(m)}(t,s)x
\end{align*}
für alle $(s,t) \in \Delta$ und alle $x \in X$. Dann ist $(s,t) \mapsto W(t,s)x$ stetig für alle $x \in X$, weil $(s,t) \mapsto W^{(m)}(t,s)x$ stetig ist und die Reihe gleichmäßig konvergiert, und es gilt
\begin{align*}
W_k(t,s)x & = \biggl( 1 + C\Bigl(t, \frac{\lfloor kt \rfloor} {k} \Bigr) \biggr)   \biggl(  \sum_{m=0}^{\infty} W_k^{(m)} (t,s)  \biggr)    \biggl( 1 + C\Bigl( \frac{\lfloor ks \rfloor} {k}, s \Bigr) \biggr) x \\
& \longrightarrow W(t,s)x \quad (k \to \infty)
\end{align*} 
für alle $(s,t) \in \Delta$ und alle $x \in X$, weil $W_k^{(m)}(t,s)y  \longrightarrow  W^{(m)}(t,s)y \;\; (k \to \infty)$ und $\norm{W_k^{(m)}(t,s)} \le M \frac{(Mc \, (t-s))^m}{m!}$ für alle $y \in X$ und alle $(s,t) \in \Delta$.

Jetzt können wir (endlich) zeigen, dass $U$ eine Zeitentwicklung zu $A$ ist. Sei dazu $x \in D$. Dann gilt
\begin{align*}
A(t) U_k(t,s)x = W_k(t,s) A(s)x \longrightarrow W(t,s) A(s)x \quad (k \to \infty),
\end{align*} 
das heißt, 
\begin{align*}
U(t,s)x \in D \text{ \, und \, } A(t) U(t,s)x = W(t,s) A(s)x
\end{align*}
($A(t)$ ist ja abgeschlossen) für alle $(s,t) \in \Delta$ und $(s,t) \mapsto A(t)U(t,s)x$ ist stetig.
Weiter gilt
\begin{align*}
U_k(t,s)x - x & = \int_s^t \ddtau{ U_k (\tau,s)x } \, d\tau = \int_s^t \biggl( 1 + C\Bigl( \frac{\lfloor k \tau \rfloor} {k}, \tau \Bigr) \biggr) \, W_k(\tau,s) A(s)x \, d\tau \\
& \longrightarrow \int_s^t W(\tau, s) A(s)x \, d\tau = \int_s^t A(\tau) U(\tau,s)x \, d\tau \quad(k \to \infty),
\end{align*}
woraus sich ergibt, dass $[s,1] \ni t \mapsto U(t,s)x$ differenzierbar ist und
\begin{align*}
\ddt{ U(t,s)x } = A(t) U(t,s)x
\end{align*}
für alle $t \in [s,1]$ und alle $s \in [0,1)$. Die übrigen Zeitentwicklungseigenschaften folgen mithilfe von Propostion~\ref{thm: char zeitentwicklung}, da $t \mapsto A(t)x$ nach Lemma~\ref{lm: kato, evolution 1} stetig ist für alle $x \in D$.
\end{proof}

Die Voraussetzung (ii) im obigen Lemma hätten wir übrigens auch durch die schwächere Bedingung
\begin{align*}
C(t)x := \lim_{k \to \infty} k \: C\Bigl( t, t-\frac{1}{k} \Bigr)x \text{ \; existiert gleichmäßig in } t \in (0,1]
\end{align*}
ersetzen können, denn nur diese haben wir im Beweis des Lemmas gebraucht.
\\

Wir können jetzt den folgenden sehr wichtigen Satz (s. Satz~VI.9.5 in~\cite{EngelNagel}) beweisen -- der Schlüssel dazu ist Lemma~\ref{lm: kato, evolution 2}.

\begin{thm} \label{thm: Kato}
Sei $A(t)$ für jedes $t \in I$ eine lineare Abbildung $D \subset X \to X$, die eine stark stetige Halbgruppe auf $X$ erzeugt, sei $A$ $(M,\omega)$-stabil und $t \mapsto A(t)x$ stetig differenzierbar für alle $x \in D$. Sei $B(t)$ für jedes $t \in I$ eine beschränkte lineare Abbildung in $X$, $t \mapsto B(t)x$ stetig differenzierbar für alle $x \in X$ und $b := \sup_{t \in I} \norm{B(t)}$. Dann existiert genau eine Zeitentwicklung $U$ zu $A$ und genau eine Zeitentwicklung $V$ zu $A + B$ und für diese gilt:
\begin{align*}
\norm{U(t,s)} \le M \, e^{ \omega (t-s) }, \; \norm{V(t,s)} \le M \, e^{ (\omega + M b) (t-s) } \text{ \; für alle } (s,t) \in \Delta.
\end{align*}
\end{thm}

%Alternative, etwas schwächere Vor: $t \mapsto B(t)x$ stetig differenzierbar nur für $x \in D$ und $\sup_{t \in I} \norm{B(t)} < \infty$.

\begin{proof}
Sei $\tilde{A}(t) := A(t) - (\omega + 1)$ für alle $t \in I$. Wir zeigen, dass die Voraussetzungen von Lemma~\ref{lm: kato, evolution 2} für $\tilde{A}$ (statt $A$) erfüllt sind. 
Sei $C(t,s) := \tilde{A}(t) \tilde{A}(s)^{-1} - 1$ für alle $(s,t) \in I^2$ (beachte, dass $\omega + 1 \in (\omega, \infty) \subset \rho(A(t))$ nach Satz~\ref{thm: eigenschaften von erzeugern}) und sei $x \in X$. Dann ist $\tilde{A}$ offensichtlich $(M, 0)$-stabil und
\begin{align*}
\frac{1}{t-s} \, C(t,s)x = \frac{1}{t-s} \, (\tilde{A}(t) - \tilde{A}(s)) \, \tilde{A}(s)^{-1}x = \frac{1}{t-s} \, \int_s^t \tilde{A}'(\tau) \tilde{A}(s)^{-1}x \,d\tau
\end{align*} 
für alle $s,t \in I$ mit $s \ne t$. $\tilde{A}'(t) \tilde{A}(0)^{-1}$ ist als starker limes der beschränkten linearen Abbildungen $\frac{1}{h} (\tilde{A}(t+h) - \tilde{A}(t)) \, \tilde{A}(0)^{-1}$ selbst beschränkt in $X$ für alle $t \in I$ und $I \ni t \mapsto \tilde{A}'(t) \tilde{A}(0)^{-1}$ ist stark stetig. $\tilde{A}(0) \tilde{A}(s)^{-1}$ ist als auf ganz $X$ definierte Verkettung einer abgeschlossenen mit einer beschränkten linearen Abbildung beschränkt für alle $s \in I$, weiter ist $I \ni s \mapsto \bigl( \tilde{A}(0) \tilde{A}(s)^{-1} \bigr)^{-1} = \tilde{A}(s) \tilde{A}(0)^{-1}$ stark stetig differenzierbar, das heißt, $s \mapsto \bigl( \tilde{A}(0) \tilde{A}(s)^{-1} \bigr)^{-1}$ ist stetig (nach Lemma~\ref{lm: strong db and db}), woraus mithilfe von Lemma~\ref{lm: continuity of inv} folgt, dass auch  $s \mapsto  \tilde{A}(0) \tilde{A}(s)^{-1}$ stetig und insbesondere stark stetig ist. Also ist
\begin{align*}
I^2 \ni (s,t) \mapsto  \tilde{A}'(t) \tilde{A}(s)^{-1}x = \tilde{A}'(t) \tilde{A}(0)^{-1}  \tilde{A}(0) \tilde{A}(s)^{-1}x
\end{align*}
stetig und damit gleichmäßig stetig. Wir erhalten daraus (mithilfe der obigen Integraldarstellung), dass $\{ (s',t') \in I^2: s' \ne t' \} \ni (s,t) \mapsto \frac{1}{t-s} \, C(t,s)x$ beschränkt ist und 
\begin{align*}
\frac{1}{h} \, C(t,t-h)x = \frac{1}{h} \, \int_{t-h}^t \tilde{A}'(\tau) \tilde{A}(t-h)^{-1}x \,d\tau \longrightarrow C(t)x := \tilde{A}'(t) \tilde{A}(t)^{-1}x \quad(h \searrow 0)
\end{align*}
für alle $t \in (0,1]$. Wir erhalten sogar %ebenfalls aus der eben bemerkten glm Stetigkeit %Wegen der (eben bemerkten) gleichmäßigen Stetigkeit erhalten wir sogar
\begin{align*}
\sup_{t \in [h,1]} \norm{    \frac{1}{h} \, C(t,t-h)x - C(t)x    } &= \sup_{t \in [h,1]} \norm{    \frac{1}{h} \, \int_{t-h}^t \tilde{A}'(\tau) \tilde{A}(t-h)^{-1}x - \tilde{A}'(t) \tilde{A}(t)^{-1}x  \,d\tau   }  \\
&\longrightarrow 0 \quad (h \searrow 0).
\end{align*}
Schließlich existiert $\lim_{t\searrow 0} C(t)x = \lim_{t \searrow 0} \tilde{A}'(t) \tilde{A}(t)^{-1}x = \tilde{A}'(0) \tilde{A}(0)^{-1}x$, das heißt die Voraussetzungen von Lemma~\ref{lm: kato, evolution 2} sind tatsächlich alle erfüllt. Also existiert (genau) eine Zeitentwicklung $\tilde{U}$ zu $\tilde{A}$. Sei 
\begin{align*}
U(t,s) := \tilde{U}(t,s)\, e^{(\omega + 1)(t-s)}
\end{align*}
für alle $(s,t) \in \Delta$. Dann ist $U$, wie man leicht sieht, eine Zeitentwicklung zu $A$ und für diese gilt
\begin{align*}
\norm{ U(t,s) } = \lim_{k \to \infty} \norm{ \tilde{U_k}(t,s)  \, e^{(\omega + 1)(t-s)}   } \le M \,e^{ \omega (t-s) },
\end{align*}
wobei $\tilde{U_k}(t,s)$ wie im Beweis von Lemma~\ref{lm: kato, evolution 2} (mit $A$ ersetzt durch $\tilde{A}$) definiert ist.
\\

Wir müssen nur noch die entsprechenden Aussagen für $A + B$ statt $A$ zeigen. Aber diese folgen sofort aus dem eben Bewiesenen -- wir müssen nur beachten, dass $A + B$ nach Proposition~\ref{prop: störung (M,w)-stabilität} $(M, \omega + M b)$-stabil ist und $t \mapsto A(t)x + B(t)x$ stetig differenzierbar ist für alle $x \in D$.
\end{proof}

Wie hängen die Voraussetzungen (i) bis (iii) aus Lemma~\ref{lm: kato, evolution 2} mit der in Satz~\ref{thm: Kato} getroffenen (im Vergleich) bemerkenswert einfachen Voraussetzung
\begin{align*}
t \mapsto A(t)x \text{\; ist stetig differenzierbar für alle } x \in D
\end{align*}
zusammen? Zunächst mag diese einfache Voraussetzung einschränkender erscheinen als jene Voraussetzungen (i) bis (iii) -- aber das ist tatsächlich \emph{nicht} so, wie die folgende Proposition zeigt. %Insbesondere ist Theorem~X.70 aus~\cite{RS 2} nicht stärker als Satz~\ref{thm: Kato}.

\begin{prop} \label{prop: zshg. regvor. kato}
Sei $A(t): D \subset X \to X$ für jedes $t \in I$ eine bijektive abgeschlossene lineare Abbildung und für alle $x \in X$ seien (i) bis (iii) aus Lemma~\ref{lm: kato, evolution 2} erfüllt. Dann ist $t \mapsto A(t)x$ stetig differenzierbar für alle $x \in D$.
\end{prop}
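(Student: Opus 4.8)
The plan is to show, for each fixed $x \in D$, that $t \mapsto A(t)x$ is continuous on $I$ and right-sidedly continuously differentiable on $I \setminus \{1\}$ with right derivative $t \mapsto C(t)A(t)x$, and then to upgrade this to two-sided continuous differentiability on all of $I$ by Theorem~\ref{thm: einseitig db und beidseitig db}.

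First I would collect two preliminary facts. By~(i) and the uniform boundedness principle the number $c := \sup_{\{(s',t') \in I^2 \,:\, s' \ne t'\}} \norm{\tfrac{1}{t-s}\,C(t,s)}$ is finite (it is the same constant $c$ as in the proof of Lemma~\ref{lm: kato, evolution 2}); letting $h \searrow 0$ in $\norm{\tfrac{1}{h}C(t,t-h)x} \le c\,\norm{x}$ shows that $C(t)$ is a bounded operator with $\norm{C(t)} \le c$ for every $t \in (0,1]$, and by~(iii) also for $t = 0$, so $\sup_{t \in I}\norm{C(t)} \le c$. Moreover, the hypotheses of Lemma~\ref{lm: kato, evolution 1} hold (they are exactly the closedness and bijectivity of the $A(t)$ together with~(i)), so that lemma already gives that $t \mapsto A(t)x$ is continuous on $I$ for every $x \in D$.

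Now fix $x \in D$ and $t_0 \in I \setminus \{1\}$, and put $z := A(t_0)x$. Since $A(t)A(t_0)^{-1} = 1 + C(t,t_0)$ and $A(t_0)^{-1}z = x$, one has $A(t_0+h)x - A(t_0)x = C(t_0+h,\,t_0)z$ for all $h \in (0,\, 1-t_0]$, hence
\[
\frac{A(t_0+h)x - A(t_0)x}{h} = \frac{1}{h}\,C(t_0+h,\,t_0)z .
\]
Writing $C(t_0+h,t_0) = C(t,t-h)$ with $t = t_0+h$ (which lies in $[h,1]$ for small $h$ because $t_0 \in [0,1)$), I would estimate
\begin{align*}
\norm{ \tfrac{1}{h}C(t_0+h,t_0)z - C(t_0)z } &\le \sup_{t \in [h,1]} \norm{ \tfrac{1}{h}C(t,t-h)z - C(t)z } \\
&\quad + \norm{ C(t_0+h)z - C(t_0)z } ;
\end{align*}
the first term tends to $0$ by the uniform convergence assumed in~(ii), the second by the continuity of $s \mapsto C(s)z$ on $(0,1]$ from~(iii) (for $t_0 = 0$ the second term tends to $0$ by the very definition of $C(0)$). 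Hence the right derivative of $t \mapsto A(t)x$ at $t_0$ exists and equals $C(t_0)A(t_0)x$.

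It remains to check that $t \mapsto C(t)A(t)x$ is continuous on $I$; granting this, the function $t \mapsto A(t)x$ is continuous and right-sidedly continuously differentiable on $I$, its right derivative extends continuously to the right endpoint $t=1$, and Theorem~\ref{thm: einseitig db und beidseitig db} yields that $t \mapsto A(t)x$ is two-sided continuously differentiable on $I$ (with derivative $C(\cdot)A(\cdot)x$), which is the assertion. The continuity of $t \mapsto C(t)A(t)x$ follows exactly as in the continuity part of the proof of Lemma~\ref{lm: strong db of products}: write $C(t)A(t)x - C(t_0)A(t_0)x = C(t)\bigl(A(t)x - A(t_0)x\bigr) + \bigl(C(t)A(t_0)x - C(t_0)A(t_0)x\bigr)$ and use $\sup_{t \in I}\norm{C(t)} \le c$, the continuity of $t \mapsto A(t)x$, and the continuity of $t \mapsto C(t)y$ for each fixed $y \in X$ from~(iii); the $C(t)$ are everywhere defined and uniformly bounded, so this goes through even though $A(t_0)x$ need not lie in $D$. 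The main obstacle is the right-derivative limit above: one must combine the \emph{uniform}-in-$t$ convergence furnished by~(ii) with the continuity furnished by~(iii), since in the difference quotient at $t_0$ the operator $C$ is evaluated at the moving argument $t_0+h$ rather than at $t_0$; everything else is routine bookkeeping plus the two cited lemmas and Theorem~\ref{thm: einseitig db und beidseitig db}.
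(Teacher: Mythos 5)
Your proof is correct, and its skeleton is the same as the paper's (continuity via Lemma~\ref{lm: kato, evolution 1}, one-sided continuous differentiability with one-sided derivative $C(t)A(t)x$, then Satz~\ref{thm: einseitig db und beidseitig db}); the genuine difference lies in which one-sided derivative you compute and, consequently, which hypotheses carry the weight. The paper takes the \emph{left} difference quotient, which equals $\tfrac{1}{h}\,C(t,t-h)\,A(t-h)x$: there the operator family $\tfrac{1}{h}C(t,t-h)$ has its evaluation point $t$ \emph{fixed}, so convergence to $C(t)A(t)x$ follows from the mere pointwise limit in~(ii) together with the uniform bound from~(i) and the strong continuity of $s \mapsto A(s)x$ (a one-line application of Lemma~\ref{lm: strong db of products}); neither the uniformity in~(ii) nor the continuity in~(iii) is needed at that stage. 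You instead take the \emph{right} difference quotient $\tfrac{1}{h}\,C(t_0+h,t_0)\,A(t_0)x$, where $C$ is evaluated at the moving point $t_0+h$, and you correctly identify that this forces you to split off $C(t_0+h)z - C(t_0)z$ and invoke \emph{both} the uniform-in-$t$ convergence of~(ii) and the continuity (resp.\ the limit defining $C(0)$) from~(iii). Correspondingly, the paper extends $\partial_- f$ continuously into the left endpoint $0$, while you extend $\partial_+ f$ into the right endpoint $1$; both are legitimate invocations of Satz~\ref{thm: einseitig db und beidseitig db}. The paper's choice of side is the slicker one precisely because it matches the way $C(t)$ is defined in~(ii), but your version is complete and correct, and your explicit verification of the continuity of $t \mapsto C(t)A(t)x$ (which the paper only asserts) is a welcome addition.
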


\begin{proof}
Sei  $x \in D$ und $f(t) := A(t)x$ für alle $t \in I$. Wir zeigen, dass $f$ die Voraussetzungen von Satz~\ref{thm: einseitig db und beidseitig db} erfüllt.
Zunächst ist $f$ nach Lemma~\ref{lm: kato, evolution 1} stetig. Weiter ist $f$ linksseitig differenzierbar in jedem $t \in (0,1]$. Sei nämlich $t \in (0,1]$. Dann gilt
\begin{align*}
\frac{f(t-h) - f(t)}{-h} = \frac{1}{h} \, C(t, t-h) A(t-h)x \longrightarrow C(t)A(t)x \quad (h \searrow 0)
\end{align*}
wegen Lemma~\ref{lm: strong db of products}, das heißt, $f$ ist tatsächlich linksseitig differenzierbar in $t$ und 
\begin{align*}
\partial_- f(t) = C(t)A(t)x.
\end{align*}
Da nun $[0,1] \ni t \mapsto C(t)A(t)x$ stetig ist, ist $f$ sogar linksseitig stetig differenzierbar und $\partial_- f$ ist stetig fortsetzbar in den Randpunkt $0$ hinein. Also ist $f$ nach Satz~\ref{thm: einseitig db und beidseitig db} (beidseitig) stetig differenzierbar.
\end{proof}

Aus dieser Proposition folgt, dass Theorem~XIV.4.1 aus~\cite{Yosida: FA} (ebenso wie Theorem~X.70 aus~\cite{RS 2}) nicht stärker ist als Satz~\ref{thm: Kato}, denn die in Theorem~XIV.4.1 getroffene Voraussetzung, $(s,t) \mapsto \frac{1}{t-s} \, C(t,s)x$ sei gleichmäßig stetig, führt dazu, dass sogar $\lim_{h \searrow 0} \frac{1}{h} C(t, t-h)x$ gleichmäßig in $t \in (0,1]$ existiert (und nicht nur $C(t)x := \lim_{k \to \infty} k \; C\Bigl( t, t-\frac{1}{k} \Bigr)x$) und dass $C(0)x := \lim_{t \searrow 0} C(t)x$ existiert.
Wir könnten also in Theorem~XIV.4.1 genauso gut 
\begin{align*}
t \mapsto A(t)x \text{\; ist stetig differenzierbar für alle } x \in D
\end{align*}
voraussetzen, ohne Allgemeinheit einzubüßen. Dies scheint -- so entnehmen wir~\cite{Schnaubelt email} -- bisher nicht bekannt zu sein.
\\

Satz~\ref{thm: Kato} legt die Voraussetzungen an $A$ in den nichttrivialen Adiabatensätzen unten (Abschnitt~\ref{sect: adsätze mit sl} bis~\ref{sect: höhere adsätze}) schon weitgehend fest: dort werden wir immer voraussetzen, dass $A(t): D \subset X \to X$ für jedes $t \in I$ eine stark stetige Halbgruppe auf $X$ erzeugt, dass $A$ $(M,0)$-stabil ist und schließlich dass $t \mapsto A(t)x$ stetig differenzierbar ist für alle $x \in D$. Die einzige Verschärfung gegenüber den Voraussetzungen von Satz~\ref{thm: Kato} besteht also darin, dass $\omega$ in den Adiabatensätzen kleiner oder gleich $0$ ist -- und auch sein muss (nach Beispiel~\ref{ex: (M,0)-stabilität wesentlich in den adsätzen mit sl} und Beispiel~\ref{ex: (M,0)-stabilität wesentlich in adsatz ohne sl}).

Diese Voraussetzungen an $A$ verbessern nun diejenigen etwa aus~\cite{AvronSeilerYaffe 87} und~\cite{Griesemer 07}. Vor allem natürlich, weil dort sogar verlangt wird, dass die $A(t)$ schiefselbstadjungiert sind. Aber auch weil die Voraussetzung der stetigen Differenzierbarkeit von $t \mapsto A(t)x$ ein klein wenig besser ist als die dortigen Regularitätsvoraussetzungen: sie ist nicht nur (vom theoretischen und erst recht vom praktischen Standpunkt her) einfacher, sondern auch allgemeiner als die in~\cite{AvronSeilerYaffe 87} und~\cite{Griesemer 07} getroffenen Regularitätsannahmen. Dies kann man durch ähnliche Argumente wie die für Lemma~\ref{lm: strong db of products} und Lemma~\ref{lm: reg of inv} einsehen.

Und auch die Voraussetzungen aus~\cite{AvronElgart 99}, \cite{Teufel 01} und~\cite{Abou 07} implizieren (gemäß den Ausführungen nach Proposition~\ref{prop: zshg. regvor. kato}) die stetige Differenzierbarkeit von $t \mapsto A(t)x$ für alle $x \in D$, sofern es diese Voraussetzungen wirklich -- wie dort behauptet -- erlauben Theorem~XIV.4.1 aus~\cite{Yosida: FA} anzuwenden. 
\\

Wir merken noch an, dass wir aus einer exponentiellen Abschätzung einer Zeitentwicklung $U$ (zu $A$) eine entsprechende exponentielle Abschätzung einer gestörten Zeitentwicklung $V$ (zu $A + B$) herleiten können, und zwar mithilfe einer wenig erstaunlichen Störungsreihendarstellung von $V$ (Dysonreihe). Wir müssen dabei nur voraussetzen, dass die Zeitentwicklung zu $A + B$ überhaupt existiert und die Störungen $B(t)$ beschränkt sind sowie stark stetig von $t$ abhängen. (Vgl. dies mit den schärferen Voraussetzungen von Satz~\ref{thm: Kato}, in dem es anders als hier um die \emph{Existenz} der gestörten Zeitentwicklung geht.) 

\begin{prop} \label{prop: abschätzung für gestörte zeitentw}
Sei $A(t)$ für jedes $t \in I$ eine lineare Abbildung $D \subset X \to X$ und $t \mapsto A(t)x$ stetig für alle $x \in D$. Sei $B(t)$ eine beschränkte lineare Abbildung in $X$ und sei $t \mapsto B(t)$ stark stetig. Weiter gebe es eine Zeitentwicklung $U$ zu $A$ und eine Zeitentwicklung $V$ zu $A+B$. Dann gilt
\begin{align*}
V(t,s)x = \sum_{n=0}^{\infty} V_n(t,s)x
%U(t,s)x &+ \int_s^t U(t,t_1)B(t_1)U(t_1,s)x \,dt_1 \\
%&+  \int_s^t \int_s^{t_1} U(t,t_1)B(t_1)U(t_1,t_2)B(t_2)U(t_2,s)x  \, dt_2 \, dt_1 \\ 
%&+ \int_s^t \, \int_s^{t_1} \! \int_s^{t_2} U(t,t_1)B(t_1)U(t_1,t_2)B(t_2)U(t_2,t_3)B(t_3)U(t_3,s)x \, dt_3 \, dt_2 \, dt_1 + \dotsb,
\end{align*} 
für alle $(s,t) \in \Delta$ und alle $x \in X$, wobei
\begin{align*}
V_0(t,s)x := U(t,s)x \text{ \; und \; } V_{n+1}(t,s)x := \int_s^t U(t,t_1)B(t_1)V_n(t_1,s)x \, dt_1.
\end{align*}
Insbesondere gilt 
\begin{align*}
\norm{V(t,s)} \le M e^{(\omega + Mb)(t-s)}
\end{align*}
für alle $(s,t) \in \Delta$ (wobei $b := \sup_{t\in I}\norm{B(t)}$), wenn die entsprechende Abschätzung $\norm{U(t,s)} \le M e^{\omega(t-s)}$ für $U$ gilt.
\end{prop}

\begin{proof}
Sei 
\begin{align*}
M_0 := \sup_{(s,t) \in \Delta} \norm{U(t,s)},
\end{align*}
was eine reelle Zahl ist ($U$ ist ja als Zeitentwicklung stark stetig).
Dann erhalten wir induktiv, dass
\begin{align*}
\norm{V_n(t,s)} \le M_0^{n+1} \, b^n \, \frac{(t-s)^n}{n!}
\end{align*}
für alle $(s,t) \in \Delta$ und alle $n \in \natu \cup \{0\}$. Also konvergiert die Reihe
\begin{align*}
\tilde{V}(t,s)x := \sum_{n=0}^{\infty} V_n(t,s)x
\end{align*}
gleichmäßig in $(s,t) \in \Delta$ für alle $x \in X$ und die Abbildung $(s,t) \mapsto \tilde{V}(t,s)x$ ist stetig für alle $x \in X$ (die $V_n$ sind ja alle stark stetig).

Wir zeigen nun, dass $V(.,s)x$ und $\tilde{V}(.,s)x$ für alle $s \in I$ und $x \in X$ dieselbe (volterrasche) Integralgleichung erfüllen.
Sei $s \in I$ und $x \in X$. Dann gilt einerseits
\begin{align}     \label{eq: abschätzung für gestörte zeitentw 1}
V(t,s)x = U(t,s)x + U(t,\tau)V(\tau,s)x \big|_{\tau=s}^{\tau=t} = U(t,s)x + \int_s^t U(t,\tau)B(\tau)V(\tau,s)x \,d\tau
\end{align}
für alle $t \in [s,1]$ (zunächst gilt dies für $x \in D$, denn $[s,t] \ni \tau \mapsto U(t,\tau)V(\tau,s)x$ ist für diese $x$ nach Proposition~\ref{thm: char zeitentwicklung} und Lemma~\ref{lm: strong db of products} stetig differenzierbar mit Ableitung $\tau \mapsto U(t,\tau)B(\tau)V(\tau,s)x$,  wegen der Dichtheit von $D$ gilt es dann aber auch für beliebige $x \in X$) und andererseits
\begin{align}         \label{eq: abschätzung für gestörte zeitentw 2}
\tilde{V}(t,s)x = V_0(t,s)x + \sum_{n=0}^{\infty} V_{n+1}(t,s)x = U(t,s)x + \int_s^t U(t,\tau) B(\tau) \tilde{V}(\tau,s)x \, d\tau 
\end{align}
für alle $t \in [s,1]$.

Aus~\eqref{eq: abschätzung für gestörte zeitentw 1} und~\eqref{eq: abschätzung für gestörte zeitentw 2} folgt nun leicht, dass $V(t,s)x = \tilde{V}(t,s)x$ für alle $t \in \cup_{i=1}^m [t_{i-1},t_i] = [s,1]$ und alle $x \in X$, wobei 
\begin{align*}
t_0 := s \text{ \; und \; } t_i := \min \big\{ t_{i-1}  + \, \frac{1}{2 M_0 \,b + 1}, \, 1 \big\}
\end{align*}
und $m$ eine (die kleinste) natürliche Zahl ist mit $t_m = 1$.

Also haben wir
\begin{align*}
V(t,s)x = \sum_{n=0}^{\infty} V_n(t,s)x
\end{align*}
für alle $x \in X$, wie behauptet.
\\

Schließlich nehmen wir zusätzlich an, dass wir für $U$ sogar eine exponentielle Abschätzung haben,
\begin{align*}
\norm{U(t,s)} \le M e^{\omega(t-s)}
\end{align*}
für alle $(s,t) \in \Delta$. Wir bekommen dann induktiv 
\begin{align*}
\norm{V_n(t,s)} \le M^{n+1} \, b^n \, \frac{(t-s)^n}{n!} \, e^{\omega(t-s)}
\end{align*}
für alle $n \in \natu \cup \{0\}$, und daraus, wie gewünscht,
\begin{align*}
\norm{V(t,s)} \le \sum_{n=0}^{\infty} M^{n+1} \, b^n \, \frac{(t-s)^n}{n!} \, e^{\omega(t-s)} = M e^{(\omega + Mb)(t-s)}
\end{align*}
für alle $(s,t) \in \Delta$.
\end{proof}

Ähnlich wie im zeitunabhängigen Fall ($A(t) = A_0$ für alle $t \in I$) (stonescher Satz!) besteht auch im zeitabhängigen Fall ein enger Zusammenhang zwischen Schiefsymmetrie und Isometrie.  

\begin{prop} \label{prop: zeitentw unitär}
(i) Sei $A(t)$ für jedes $t \in I$ eine lineare Abbildung $D \subset H \to H$, $t \mapsto A(t)x$ stetig für alle $x \in D$ und $U$ eine Zeitentwicklung zu $A$. Dann ist $U(t,s)$ isometrisch für alle $(s,t) \in \Delta$ genau dann, wenn $A(t)$ schiefsymmetrisch ist für alle $t \in I$. 

(ii) Sei $A(t): D \subset H \to H$ für jedes $t \in I$ schiefselbstadjungiert, $t \mapsto A(t)x$ stetig (im Fall $D = H$) bzw. stetig differenzierbar (im allgemeinen Fall) für alle $x \in D$ und $U$ die Zeitentwicklung zu $A$. Dann ist $U(t,s)$ unitär für alle $(s,t) \in \Delta$.
\end{prop}

\begin{proof}
(i) Seien $x, y \in D$, dann ist $[s,1] \ni t \mapsto \scprd{ U(t,s)x, U(t,s)y }$ differenzierbar und
\begin{align*}
\ddt{   \scprd{ U(t,s)x, U(t,s)y }   } &= \scprd{ A(t)U(t,s)x, U(t,s)y } + \scprd{ U(t,s)x, A(t)U(t,s)y } \\
&= \scprd{ A(t)U(t,s)x, U(t,s)y } + \scprd{ A(t)^* \, U(t,s)x, U(t,s)y }
\end{align*}
für alle $t \in [s,1]$ und alle $s \in [0,1)$.

Sei $U(t,s)$ isometrisch für alle $(s,t) \in \Delta$. Dann gilt
\begin{align*}
0 = \ddt{   \scprd{ U(t,s)x, U(t,s)y }   } \big|_{t=s} &= \scprd{ A(t)U(t,s)x, U(t,s)y } + \scprd{ A(t)^* \, U(t,s)x, U(t,s)y }  \Big|_{t=s} \\
&= \scprd{ A(s)x, y } + \scprd{ A(s)^* x, y },
\end{align*}
für alle $x, y \in D$ und alle $s \in [0,1)$, das heißt, $A(s)$ ist schiefsymmetrisch für alle $s \in [0,1)$. Da $t \mapsto A(t)x$ stetig ist für alle $x \in D$, erhalten wir mithilfe von Proposition~\ref{thm: char zeitentwicklung}, dass
\begin{align*}
0 = \dds{  \scprd{ U(1,s)x, U(1,s)y }   } = \scprd{ -U(1,s)A(1)x, U(1,s)y } + \scprd{ U(1,s)x, -U(1,s)A(1)y }
\end{align*}
für alle $s \in [0,1]$ und damit insbesondere
\begin{align*}
0 = \scprd{ -A(1)x, y } + \scprd{ x, -A(1)y }
\end{align*}
für alle $x, y \in D$, woraus sich die Schiefsymmetrie von $A(1)$ ergibt.

Sei umgekehrt $A(t)$ schiefsymmetrisch für alle $t \in I$. Dann gilt
\begin{align*}
\ddt{   \scprd{ U(t,s)x, U(t,s)y }   } = \scprd{ A(t)U(t,s)x, U(t,s)y } + \scprd{ A(t)^* \, U(t,s)x, U(t,s)y } = 0,
\end{align*}
folglich 
\begin{align*}
\scprd{ U(t,s)x, U(t,s)y } = \scprd{ U(t,s)x, U(t,s)y } \big|_{t=s} = \scprd{ x, y}
\end{align*}
für alle $x, y \in D$, alle $t \in [s,1]$ und alle $s \in [0,1)$, das heißt, $U(t,s)$ ist isometrisch für alle $t \in [s,1]$ und alle $s \in [0,1)$ und wegen $U(1,1) = 1$ auch für $(s,t) = (1,1)$.
\\

(ii) Wir wissen schon aus (i), dass die Zeitentwicklung $U$ zu $A$ isometrisch ist, es bleibt also zu zeigen, dass 
\begin{align*}
U(t,s) U(t,s)^* = 1
\end{align*}
für alle $(s,t) \in \Delta$. Wir tun dies nur im Fall von beliebigem $D$, in dem wir sogar die stetige Differenzierbarkeit der Abbildungen $t \mapsto A(t)x$ vorausgesetzt haben. Für den (einfacheren) Fall $D = H$ verweisen wir auf Theorem~X.69 in~\cite{RS 2}. 

Sei $\tilde{A}(t) := A(t) - 1$ für alle $t \in I$, $U_k(t,s)$ wie im Beweis von Lemma~\ref{lm: kato, evolution 2}, 
\begin{align*}
\tilde{U}_k(t,s) := U_k(t,s) e^{-(t-s)} \text{ \; und \; } \tilde{U}(t,s) := U(t,s) e^{-(t-s)}
\end{align*}
sowie 
\begin{align*}
\tilde{V}_k(t,s) := \tilde{A}(s) \, U_k(t,s)^* \,  \tilde{A}(t)^{-1}
\end{align*}
für alle $(s,t) \in \Delta$.
Dann ist $U_k(t,s)$ unitär (die $A(t')$ sind ja schiefselbstadjungiert, erzeugen also unitäre Gruppen auf $H$) für alle $k \in \natu$ und (nach dem Beweis von Lemma~\ref{lm: kato, evolution 2}, dessen Voraussetzungen nach dem Beweis von Satz~\ref{thm: Kato} ja wirklich erfüllt sind)
\begin{align*}
U_k(t,s)x = \tilde{U}_k(t,s) e^{(t-s)} x \longrightarrow  \tilde{U}(t,s) e^{(t-s)} x = U(t,s)x \quad(k \to \infty)
\end{align*}
für alle $x \in H$, und damit insbesondere
\begin{align*}
\scprd{ U(t,s) U_k(t,s)^* x , y } & = \scprd{ x , U_k(t,s) U(t,s)^* y } \\
& \longrightarrow \scprd{ x , U(t,s) U(t,s)^* y } = \scprd{ U(t,s) U(t,s)^* x , y } \quad(k \to \infty)
\end{align*}
für alle $x, y \in H$ und alle $(s,t) \in \Delta$.
Weiter gilt, erneut wegen der Schiefselbstadjungiertheit der $A(t')$, dass
\begin{align*}
\tilde{V}_k(t,s)  =& \, \tilde{A}(s) \, U_k(t,s)^* \, \tilde{A}(t)^{-1}  \\
=& \, \tilde{A}(s) \; e^{ A \left( \frac{\lfloor ks \rfloor} {k} \right) (s - \frac{\lfloor ks \rfloor + 1} {k} ) } \; e^{ A \left( \frac{\lfloor ks \rfloor + 1} {k} \right) (-\frac{1} {k} ) } \; \dotsm \; e^{ A \left( \frac{\lfloor kt \rfloor} {k} \right) ( \frac{\lfloor kt \rfloor} {k} - t) }\; \tilde{A}(t)^{-1}    \\
=& \, \biggl( 1 + C\Bigl(s, \frac{\lfloor ks \rfloor} {k} \Bigr) \biggr) \; 
e^{ A \left( \frac{\lfloor ks \rfloor} {k} \right) (s - \frac{\lfloor ks \rfloor + 1} {k} ) } \; \cdot \\   
& \cdot \; \biggl( 1 + C\Bigl( \frac{\lfloor ks \rfloor} {k} , \frac{\lfloor ks \rfloor + 1} {k} \Bigr)  \biggr)   \;  
e^{ A \left( \frac{\lfloor ks \rfloor + 1} {k} \right) (-\frac{1} {k} ) } \; \dotsm \\
& \dotsm \; \biggl( 1 + C\Bigl( \frac{\lfloor kt \rfloor - 1} {k}, \frac{\lfloor kt \rfloor} {k} \Bigr) \biggr) \;
e^{ A \left( \frac{\lfloor kt \rfloor} {k} \right) ( \frac{\lfloor kt \rfloor} {k} - t) }\; 
\biggl( 1 + C\Bigl( \frac{\lfloor kt \rfloor} {k}, t \Bigr) \biggr) 
\end{align*}
und daher
\begin{align*}
\norm{ \tilde{V}_k(t,s) } & \le \Bigl( 1 + \frac{c}{k} \Bigr)^{m_k(t,s) + 2}  \le  \Bigl( 1 + \frac{c}{k} \Bigr)^{k(t-s)} \, \Bigl( 1 + \frac{c}{k} \Bigr)^3  \\
& \le e^{c(t-s)} \, (1 + c)^3 
\end{align*}
für alle $k \in \natu$ und alle $(s,t) \in \Delta$, wobei $C(t,s) := \tilde{A}(t) \tilde{A}(s)^{-1} - 1$,
\begin{align*}
c := \sup_{(s,t) \in \{ s' \ne t' \} } \norm{ \frac{1}{t-s} \,C(t,s) }.
\end{align*}
und $m_k(t,s)$ wie im Beweis von Lemma~\ref{lm: kato, evolution 2} erklärt sei.

Wir zeigen nun, dass 
\begin{align*}
U(t,s) U_k(t,s)^*x \longrightarrow x \quad (k \to \infty)
\end{align*}
für alle $x \in D$ und alle $(s,t) \in \Delta$.

Sei also $(s,t) \in \Delta$ mit $s \ne t$ (für $s = t$ ist die Behauptung klar) und $x \in D$. Sei weiter $k \in \natu$ und $\bigl(t_n \bigr)_{ n \in \{ 0, \dots , m \} }$ eine Zerlegung des Intervalls $[s,t]$, sodass $t_1, \dots, t_{m-1}$ genau die in $(s,t)$ enthaltenen Zerlegungsstellen der $\frac{1}{k}$-Zerlegung von $I$ sind. Dann ist
\begin{align*}
(t_{n-1}, t_n) \ni \tau \mapsto U_k(t,\tau) U(\tau,s) U_k(t,s)^*x
\end{align*}
wegen $U_k(t,s)^* D \subset D$ differenzierbar (nach Lemma~\ref{lm: strong db of products}) und 
\begin{align*}
\ddtau{ \bigl( U_k(t,\tau) U(\tau,s)  U_k(t,s)^*x \bigr) } &= U_k(t,\tau) \biggl( -A\Bigl( \frac{\lfloor k \tau \rfloor} {k} \Bigr) + A(\tau) \biggr) U(\tau,s) U_k(t,s)^*x \\
&= -U_k(t,\tau) \, C\Bigl( \frac{\lfloor k \tau \rfloor} {k}, \tau \Bigr)\, \tilde{W}(\tau,s) e^{\tau -s} \; \tilde{V}_k(t,s) \, \tilde{A}(t)x
\end{align*}
für alle $\tau \in (t_{n-1},t_n)$ und alle $n \in \{ 1, \dots , m \}$, wobei $\tilde{W}(t',s') := \tilde{A}(t')\tilde{U}(t',s')\tilde{A}(s')^{-1}$. 
Nach dem Beweis von Lemma~\ref{lm: kato, evolution 2} ist $\Delta \ni (s',t') \mapsto \tilde{W}(t',s')$ stark stetig und es gilt
\begin{align*}
\norm{ \tilde{W}(t',s') } \le (1 + c)^2 \; e^{c(t'-s')}
\end{align*}
für alle $(s',t') \in \Delta$.
Wir sehen damit und mithilfe von Lemma~\ref{lm: kato, evolution 1}, dass 
\begin{align*}
(t_{n-1}, t_n) \ni \tau \mapsto U_k(t,\tau) U(\tau,s) U_k(t,s)^*x
\end{align*}
sogar \emph{stetig} differenzierbar ist und
\begin{align*}
\norm{    \ddtau{ \bigl( U_k(t,\tau) U(\tau,s)  U_k(t,s)^*x  \bigr) }    }  
&\le \frac{c}{k} \, \, (1 + c)^2 e^{(c+1)(\tau - s)} \; \; e^{c(t-s)} \, (1 + c)^3 \, \, \norm{ \tilde{A}(t)x }  \\
&\le \frac{c}{k} \, (1 + c)^5 \, e^{2c + 1} \, \norm{ \tilde{A}(t)x }
\end{align*}
für alle $\tau \in (t_{n-1},t_n)$ und alle $n \in \{ 1, \dots , m \}$, das heißt
\begin{align*}
\norm{ U(t,s)U_k(t,s)^*x - x } &= \norm{ U_k(t,\tau) U(\tau,s)  U_k(t,s)^*x \big|_{\tau=s}^{\tau=t}  } \\
&\le \frac{c}{k} \, (1 + c)^5 \, e^{2c + 1} \, \, (t-s) \, \norm{ \tilde{A}(t)x }.
\end{align*}
Also gilt tatsächlich 
\begin{align*}
U(t,s) U_k(t,s)^*x \longrightarrow x \quad (k \to \infty)
\end{align*}
und damit 
\begin{align*}
\scprd{ U(t,s) U(t,s)^*x, y } = \lim_{k \to \infty} \scprd{ U(t,s) U_k(t,s)^*x , y} = \scprd{x,y}
\end{align*}
für alle $x \in D$, $y \in H$ und alle $(s,t) \in \Delta$, woraus sich schließlich $U(t,s) U(t,s)^* = 1$ für alle $(s,t) \in \Delta$ ergibt.
\end{proof}

Wir weisen darauf hin, dass aus
\begin{gather*}
U_k(t,s) \text{ unitär für alle } k \in \natu \text{ \; und \; } \\
U_k(t,s)x \longrightarrow U(t,s)x \quad (k \to \infty) \text{ für alle } x \in X
\end{gather*}
allein noch \emph{nicht} folgt, dass auch $U(t,s)$ unitär ist (Bemerkung~II.4.10 in~\cite{Takesaki}). Wir müssen also wirklich etwas mehr arbeiten, was den langen Beweis von Aussage~(ii) des obigen Satzes erklärt.

\subsection{Adiabatische Zeitentwicklungen}

Wir erinnern uns: ein erstes Ziel der Adiabatentheorie ist es zu klären, wann $(1-P(t)) U_T P(0)$ in der in Abschnitt~\ref{sect: einleitung} beschriebenen Ausganssituation gegen $0$ konvergiert für $T$ gegen $\infty$, kurz gefasst: wann die Zeitentwicklung $U_T$ %(die nach Satz~\ref{thm: zeitentwicklung natürlich} wirklich existiert)
 beinahe adiabatisch ist bzgl. $P$.
Was genau wir unter bzgl. $P$ adiabatischen Zeitentwicklungen verstehen wollen, legen wir nun fest. (Vgl. dies auch mit Abschnitt~IV.3.1 und~IV.3.2 in~\cite{Krein 71}.)
\\

Sei $U$ eine Zeitentwicklung in $X$ und $P(t)$ für jedes $t \in I$ eine beschränkte Projektion in $X$. Dann heißt $U$ \emph{adiabatisch bzgl. $P$ im engeren bzw. weiteren Sinne} genau dann, wenn
\begin{align*}
(1-P(t)) U(t,s) P(s) = 0 \text{ \; bzw. \; } (1-P(t)) U(t,0) P(0) = 0
\end{align*} 
für alle $(s,t) \in \Delta$ bzw. für alle $t \in I$.
\\

%\textbf{etymologische Rechtfertigung der Bezeichnung "`adiabatisch"'? Hier oder schon in der Einleitung?} 
Zeitentwicklungen, die adiabatisch sind bzgl. einer Familie $P$ von beschränkten Projektionen, lassen also \emph{keine Übergänge} von den Unterräumen $P(s)X$ bzw. dem Unterraum $P(0)X$ in die Unterräume $P(t)X$ zu. In diesem Sinne sind sie \emph{adiabatisch}.

Wie man leicht einsieht, ist eine Zeitentwicklung $U$ in $X$ i. e. S. adiabatisch bzgl. $P$ \emph{und} $1-P$ genau dann, wenn 
\begin{align*}
P(t) U(t,s) = U(t,s) P(s)
\end{align*}
für alle $(s,t) \in \Delta$.
\\

Zunächst ein offensichtliches Lemma.

\begin{lm} \label{lm: skalierung und (M,w)-stabilität}
Sei $A(t)$ für jedes $t \in I$ Erzeuger einer stark stetigen Halbgruppe auf $X$ und sei $A$ $(M, \omega)$-stabil. Dann ist $T A$ $(M, T \omega)$-stabil für alle $T \in (0, \infty)$.
\end{lm}

\begin{proof}
Wir müssen nur beachten, dass $T A(t)$ für alle $T \in (0, \infty)$ tatsächlich eine stark stetige Halbgruppe auf $X$ erzeugt, die gegeben ist durch
\begin{align*}
e^{(T A(t)) \,s} = e^{A(t) \, (T \,s)}
\end{align*}
für alle $s \in [0, \infty)$.
\end{proof}

Wie die folgende Proposition nahelegt, sollten wir (in der in Abschnitt~\ref{sect: einleitung} beschriebenen Ausgangssituation) von der Zeitentwicklung $U_T$ zu $T A$ höchstens erwarten, dass sie \emph{beinahe} adiabatisch ist bzgl. $P$. Wir können nicht erwarten, dass sie (echt) adiabatisch ist (i. e. S. bzgl. $P$ und $1-P$). 

\begin{prop} \label{prop: U_T fast nie adiabatisch}
Sei $A(t)$ für jedes $t \in I$ eine lineare Abbildung $D \subset X \to X$, die eine stark stetige Halbgruppe auf $X$ erzeugt, sei $A$ $(M,\omega)$-stabil und $t \mapsto A(t)x$ stetig (falls $D = X$) bzw. stetig differenzierbar (falls $D$ ein beliebiger dichter Unterraum ist) für alle $x \in D$. Sei $P(t)$ für jedes $t \in I$ eine beschränkte Projektion in $X$ mit $P(t)A(t) \subset A(t)P(t)$ und sei $t \mapsto P(t)x$ differenzierbar für alle $x \in X$. Wenn die Zeitentwicklung $U_T$ zu $T A$ (die nach Lemma~\ref{lm: skalierung und (M,w)-stabilität} und Satz~\ref{thm: Dyson} bzw. Satz~\ref{thm: Kato} existiert) i. e. S. adiabatisch ist bzgl. $P$ und $1-P$, dann gilt $P' = 0$, die durch $P$ gegebene Zerlegung von $X$ hängt dann also nicht von $t \in I$ ab.
\end{prop}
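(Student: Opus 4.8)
Die Grundidee ist, auszunutzen, dass $U_T$ genau dann i.\,e.\,S. adiabatisch bzgl. $P$ und $1-P$ ist, wenn $P(t) U_T(t,s) = U_T(t,s) P(s)$ für alle $(s,t) \in \Delta$ gilt (diese Äquivalenz wurde oben festgehalten). Zuerst würde ich diese Vertauschungsrelation nach $t$ ableiten: da $t \mapsto U_T(t,s)x$ für $x \in D$ das Anfangswertproblem $y' = TA(t)y$ löst und $t \mapsto P(t)x$ nach Voraussetzung differenzierbar ist, liefert die Produktregel (Lemma~\ref{lm: strong db of products}, zusammen mit $\sup_t \norm{P(t)} < \infty$, was aus der Differenzierbarkeit und dem Satz von Banach--Steinhaus folgt) für $x \in D$:
\begin{align*}
\ddt{\bigl( P(t) U_T(t,s)x \bigr)} = P'(t) U_T(t,s)x + P(t)\, T A(t) U_T(t,s)x.
\end{align*}
Auf der rechten Seite der Relation ist $\ddt{\bigl( U_T(t,s) P(s)x \bigr)} = T A(t) U_T(t,s) P(s)x = T A(t) P(t) U_T(t,s)x$, wobei im letzten Schritt nochmals die Vertauschungsrelation (angewandt auf $P(s)x \in D$, da $P(s)D \subset D$ wegen $P(s)A(s)\subset A(s)P(s)$) benutzt wird. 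Gleichsetzen ergibt $P'(t)U_T(t,s)x = T\bigl( A(t)P(t) - P(t)A(t)\bigr) U_T(t,s)x$, und wegen $P(t)A(t) \subset A(t)P(t)$ (d.\,h. $P(t)A(t)z = A(t)P(t)z$ für $z \in D$) verschwindet die rechte Seite, sobald $U_T(t,s)x \in D$ --- was hier der Fall ist. Also gilt $P'(t) U_T(t,s)x = 0$ für alle $x \in D$.

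Nun setze ich $s = t$: dann ist $U_T(t,t) = 1$, also $P'(t)x = 0$ für alle $x \in D$. Da $D$ dicht in $X$ liegt und $P'(t)$ beschränkt ist (erneut Banach--Steinhaus, angewandt auf die Differenzenquotienten $\frac{1}{h}(P(t+h)-P(t))$, die für alle $x$ stark konvergieren), folgt $P'(t) = 0$ auf ganz $X$, und das für jedes $t \in I$. Damit ist $t \mapsto P(t)$ konstant.

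Die Hauptschwierigkeit --- eher ein technischer Punkt als ein echtes Hindernis --- ist die saubere Rechtfertigung der Produktregel an den Intervallrändern und die Behandlung der Regularität: im Fall $D = X$ ist $t \mapsto A(t)x$ nur stetig, sodass man für die Differenzierbarkeit von $t \mapsto U_T(t,s)x$ auf die Eigenschaften der Zeitentwicklung (Satz~\ref{thm: Dyson}) zurückgreifen muss; im allgemeinen Fall liefert Satz~\ref{thm: Kato} die Existenz von $U_T$ und $t \mapsto U_T(t,s)x$ löst das Anfangswertproblem für $x \in D$. Ferner muss man sich vergewissern, dass $U_T(t,s)x \in D$ für $x \in D$ --- das ist aber gerade die Aussage, dass $t \mapsto U_T(t,s)x$ eine klassische Lösung des Anfangswertproblems $y' = TA(t)y$ ist, also Teil der Definition der Zeitentwicklung. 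Alles Übrige ist Routine.
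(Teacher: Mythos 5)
Dein Ansatz ist im Kern derselbe wie der des Papers: die intertwining relation $P(t)U_T(t,s) = U_T(t,s)P(s)$ nach $t$ ableiten, die Terme mit $TA(t)$ mittels $P(t)A(t) \subset A(t)P(t)$ wegheben und bei $t=s$ auswerten. Für $s \in [0,1)$ ist das korrekt und liefert $P'(s)x = 0$ für alle $x \in D$; zusammen mit der Beschränktheit von $P'(s)$ (Banach--Steinhaus) und der Dichtheit von $D$ folgt $P'(s)=0$ auf $[0,1)$.

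Es gibt aber eine Lücke am rechten Randpunkt: das Anfangswertproblem ist nur auf $[s,1]$ für $s \in [0,1)$ gestellt, das Einsetzen von $s=t$ funktioniert also nur für $t \in [0,1)$. Für $t=1$ liefert dein Argument lediglich $P'(1)U_T(1,s)x=0$ für $s<1$, und da $U_T(1,s)$ weder surjektiv sein noch dichtes Bild haben muss, folgt daraus nicht unmittelbar $P'(1)=0$. Deine Behauptung, die Identität gelte \emph{für jedes} $t \in I$, ist so nicht gedeckt -- zumal nur Differenzierbarkeit, nicht stetige Differenzierbarkeit von $t \mapsto P(t)x$ vorausgesetzt ist, sodass man $P'(1)=0$ auch nicht per Stetigkeit von $P'$ erschließen kann. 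Das Paper widmet dem Randpunkt deshalb einen eigenen Schritt: es leitet $P(1)U_T(1,s)=U_T(1,s)P(s)$ nach $s$ ab (unter Verwendung von Proposition~\ref{thm: char zeitentwicklung}, also $\dds{U_T(1,s)x} = -U_T(1,s)\,TA(s)x$) und wertet bei $s=1$ aus, was $P'(1)x = T\bigl(A(1)P(1)x - P(1)A(1)x\bigr) = 0$ ergibt. Alternativ ließe sich deine Lücke auch so schließen: aus $P'(1)U_T(1,s)x=0$ folgt mit $U_T(1,s)x \longrightarrow x \;(s \nearrow 1)$ und der Beschränktheit von $P'(1)$ sofort $P'(1)x=0$; oder man bemerkt, dass $P'=0$ auf $[0,1)$ zusammen mit der Stetigkeit von $t \mapsto P(t)x$ bereits die Konstanz von $P$ auf ganz $I$ erzwingt. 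Alles Übrige in deinem Vorschlag ist korrekt.
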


\begin{proof}
Sei die Zeitentwicklung $U_T$ zu $T A$ i. e. S. adiabatisch bzgl. $P$ und $1-P$. Dann gilt 
\begin{align*}
P(t) U_T(t,s) = U_T(t,s) P(s)
\end{align*}
für alle $(s,t) \in \Delta$. Sei nun $s \in [0,1)$, dann erhalten wir durch Ableiten nach $t$, dass
\begin{align*}
P'(t) \, U_T(t,s)x + P(t) \, T A(t) U_T(t,s)x = T A(t) U_T(t,s)\, P(s)x
\end{align*}
für alle $t \in [s,1]$ und damit insbesondere
\begin{align*}
P'(s) x = T \bigl( A(s) P(s)x - P(s) A(s)x \bigr) = 0 
\end{align*}
für alle $x \in D$. 
Wir müssen nun nur noch zeigen, dass auch $P'(1) = 0$. Wir wissen, dass
\begin{align*}
P(1) U_T(1,s) = U_T(1,s) P(s)
\end{align*}
für alle $s \in [0,1]$. Durch Ableiten nach $s$ (beachte Proposition~\ref{thm: char zeitentwicklung}) erhalten wir
\begin{align*}
P(1) \bigl( -U_T(1,s) \, T A(s)x \bigr) = -U_T(1,s) \, T A(s) \, P(s)x + U_T(1,s)\, P'(s)x
\end{align*} 
für alle $s \in [0,1]$ und damit insbesondere
\begin{align*}
P'(1)x = T \bigl( A(1) P(1)x - P(1) A(1)x \bigr) = 0,
\end{align*}
für alle $x \in D$, woraus die Behauptung folgt.
\end{proof}

Wie können wir nun zeigen, dass die Zeitentwicklung zu $T A$ (wenigstens) \emph{beinahe} adiabatisch ist? 
Der nächste Satz, den wir Kato~\cite{Kato 50} sowie Dalecki und Krein (s. die Anmerkungen zu Kapitel~IV in~\cite{Krein 71}) verdanken, ist ein erster wichtiger Schritt dorthin: dieser Satz gibt uns eine bzgl. $P$ und $1-P$ adiabatische Zeitentwicklung $V_T$, die -- wie wir in den Adiabatensätzen in den Abschnitten~\ref{sect: triviale adsätze} bis~\ref{sect: adsätze ohne sl} sehen werden -- die uns eigentlich interessierende Zeitentwicklung $U_T$ gut approximiert.

\begin{thm} \label{thm: intertwining relation}
Sei $A(t)$ für jedes $t \in I$ eine lineare Abbildung $D \subset X \to X$ und sei $t \mapsto A(t)x$ stetig für alle $x \in D$. Sei $P(t)$ für jedes $t \in I$ eine beschränkte Projektion in $X$ mit $P(t)A(t) \subset A(t)P(t)$ und sei $t \mapsto P(t)x$ stetig differenzierbar für alle $x \in X$. Schließlich existiere für jedes $T \in (0, \infty)$ die Zeitentwicklung zu $T A + [P',P]$ und diese sei mit $U_{a,T}$ bezeichnet. 
Dann gilt
\begin{align*}
P(t) U_{a,T}(t,s) = U_{a,T}(t,s) P(s)
\end{align*}
(die intertwining relation) für alle $(s,t) \in \Delta$ und alle $ T \in (0, \infty)$. 
\end{thm}

\begin{proof}
Zunächst gilt
\begin{align*}
P' = \left(P^2\right)' = P'P + PP',
\end{align*}
woraus durch Anwendung von $P$ von rechts und von links folgt, dass 
\begin{align*}
P P' P = 0.
\end{align*}
Sei $(s,t) \in \Delta$ mit $s \ne t$ (für $s = t$ ist die intertwining relation offensichtlich) und sei $x \in D$. Wir erhalten dann wegen 
\begin{align*}
P(t')D \subset D \text{ \; für alle } t' \in I
\end{align*}
mithilfe von Lemma~\ref{lm: strong db of products} und Proposition~\ref{thm: char zeitentwicklung}, dass $[s,t] \ni \tau \mapsto U_{a,T}(t, \tau) P(\tau) U_{a,T}(\tau,s)x$ differenzierbar ist und 
\begin{align*}
&\ddtau{ \Bigl( U_{a,T}(t, \tau) P(\tau) U_{a,T}(\tau,s)x \Bigr)  } \\
& \qquad \qquad = U_{a,T}(t, \tau) \Bigl(      -\bigl(T A(\tau) + [P'(\tau),P(\tau)] \bigr) P(\tau) + P'(\tau) \\
& \qquad \qquad \qquad \qquad \qquad \qquad \qquad \qquad + P(\tau) \bigl(T A(\tau) + [P'(\tau),P(\tau)] \bigr)    \Bigr) U_{a,T}(\tau,s)x \\
&\qquad \qquad = U_{a,T}(t, \tau) \Bigl(   -T A(\tau) P(\tau) + P(\tau) T A(\tau)     \Bigr) = 0
\end{align*}
für alle $\tau \in [s,t]$ und alle $T \in (0, \infty)$.
Also gilt 
\begin{align*}
P(t) U_{a,T}(t,s)x - U_{a,T}(t,s) P(s)x = U_{a,T}(t, \tau) P(\tau) U_{a,T}(\tau,s)x \Big|_{\tau=s}^{\tau=t} = 0,
\end{align*}
wie gewünscht.
\end{proof}

Wenn sie denn existieren, werden wir die Zeitentwicklungen zu $T A$ und $T A + [P',P]$ künftig immer mit $U_T$ bzw. $U_{a,T}$ bezeichnen, wobei wir natürlich die Abhängigkeit von $A$ und $P$ unterdrücken. 

Der obige Satz zeigt, dass die Zeitentwicklung $U_{a,T}$ zu $T A + [P',P]$ unter den -- abgesehen von der Invarianzbedingung $P(t)A(t) \subset A(t)P(t)$ (s. Proposition~\ref{prop: zerl. des spektrums}) -- nicht sehr einschränkenden Voraussetzungen dieses Satzes adiabatisch ist bzgl. $P$ und $1-P$. Wir werden sie daher auch als \emph{adiabatische Zeitentwicklung zu $TA$ und $P$} %bzw. $1-P$} 
bezeichnen. 
Wegen $[(1-P)', (1-P)] = [P',P]$ stimmen die adiabatische Zeitentwicklung zu $TA$ und $P$ und diejenige zu $TA$ und $1-P$ überein.
\\

Abschließend sei noch angemerkt, dass die adiabatische Zeitentwicklung durch die intertwining relation aus obigem Satz bei weitem nicht eindeutig bestimmt ist (Theorem~IV.3.1 in~\cite{Krein 71}): die Zeitentwicklung $V_T$ zu $T A + B$ ist (unter den Voraussetzungen des obigen Satzes und der Voraussetzung, dass die $B(t)$ beschränkte lineare Abbildungen sind, die stark stetig von $t \in I$ abhängen) nämlich genau dann adiabatisch bzgl. $P$ und $1-P$, wenn für jedes $t \in I$ eine beschränkte mit $P(t)$ vertauschende lineare Abbildung $C(t)$ existiert, sodass $B(t) = [P'(t),P(t)] + C(t)$. Dies sieht man wie im Beweis von Satz~\ref{thm: intertwining relation}, indem man für $(s,t) \in \Delta$ mit $s \ne t$ und $x \in D$ die Ableitung von $[s,t] \ni \tau \mapsto V_T(t, \tau) P(\tau) V_T(\tau,s)x$ bestimmt.

%%%%%%%%%%%%%%%%%%%%%%%%%%%%%%%%%%%%%%%%%%%%%%%%%%%%%%%%%%%%%%%%%%%%%%%%%%%%%%%%%%%%%%%%%%%%%%%%%%%%%%%%%%%%%%%%%%%%%%%%%%%%%%%%%%%%%%%%%%%%%%%%%%%%%%

\section{Zwei triviale Adiabatensätze und Standardbeispiele} \label{sect: triviale adsätze}

\subsection{Die beiden Sätze}

Der folgende Satz ist angesichts von Satz~\ref{thm: intertwining relation} ziemlich trivial.

\begin{thm} \label{thm: triv adsatz 1}
Sei $A(t)$ für jedes $t \in I$ eine lineare Abbildung $D \subset X \to X$, die eine stark stetige Halbgruppe auf $X$ erzeugt, sei $A$ $(M,\omega)$-stabil und $t \mapsto A(t)x$ stetig (falls $D = X$) bzw. stetig differenzierbar (falls $D$ ein beliebiger dichter Unterraum ist) für alle $x \in D$. Sei $P(t)$ für jedes $t \in I$ eine beschränkte Projektion in $X$ mit $P(t)A(t) \subset A(t)P(t)$ und sei $t \mapsto P(t)$ konstant. 
Dann gilt
\begin{align*}
(1-P(t)) U_T(t,s) P(s) = 0 \text{ \; und \; } P(t) U_T(t,s) (1-P(s)) = 0
\end{align*}
für alle $(s,t) \in \Delta$.
\end{thm}

\begin{proof}
Wegen $P' = 0$ stimmt die Zeitentwicklung $U_T$ zu $T A$ mit der Zeitentwicklung $U_{a,T}$ zu $T A + [P',P]$ überein (Proposition~\ref{prop: ex höchstens eine zeitentw}) und für diese gilt nach Satz~\ref{thm: intertwining relation} die intertwining relation, die äquivalent ist zur behaupteten Adiabatizität i.e.S. bzgl. $P$ und $1-P$.
\end{proof}

Zusammen mit Proposition~\ref{prop: U_T fast nie adiabatisch} ergibt dieser Satz, dass die eigentlich interessierende Zeitentwicklung $U_T$ zu $T A$ genau dann i.e.S. adiabatisch ist bzgl. $P$ und $1-P$, wenn $P$ konstant ist.
\\

Der nächste Satz ist etwas weniger offensichtlich. Dieser Satz steht im wesentlichen auch in~\cite{Krein 71} (Theorem~IV.1.7), jedoch haben wir ihn unabhängig davon bewiesen.

\begin{thm} \label{thm: triv adsatz 2}
Sei $A(t)$ für jedes $t \in I$ eine lineare Abbildung $D \subset X \to X$, die eine stark stetige Halbgruppe auf $X$ erzeugt, sei $A$ $(M,\omega)$-stabil für eine (negative) Zahl $\omega \in (-\infty, 0)$ und sei $t \mapsto A(t)x$ stetig (falls $D =X$) bzw. stetig differenzierbar (falls $D$ ein beliebiger dichter Unterraum ist) für alle $x \in D$. Sei $P(t)$ für jedes $t \in I$ eine beschränkte Projektion in $X$ mit $P(t)A(t) \subset A(t)P(t)$ und sei $t \mapsto P(t)x$ stetig differenzierbar für alle $x \in X$. Schließlich existiere die Zeitentwicklung $U_{a,T}$ zu $TA + [P',P]$ für jedes $T \in (0, \infty)$.
Dann gilt
\begin{align*}
\sup_{t \in I} \norm{ U_{a,T}(t)-U_T(t)} = O\Bigl( \frac{1}{| \omega |\, T} \Bigr) \quad (T \to \infty).
\end{align*}
Wenn zusätzlich $\supp P' \subset (r_0, 1]$ für eine positive Zahl $r_0$, dann gilt sogar
\begin{align*}
\sup_{t \in I} \norm{ U_{a,T}(t)-U_T(t)} = O\Bigl( e^{-r_0 \, |\omega| \, T} \Bigr) \quad (T \to \infty).
\end{align*}
\end{thm}

\begin{proof}
Sei $x \in D$. Dann gilt 
\begin{align} \label{eq: triv adsatz 2 1}
U_{a,T}(t)x - U_T(t)x &= U_T(t,s)U_{a,T}(s)x \big|_{s=0}^{s=t} \notag \\
&= \int_s^t U(t,s)\, [P'(s),P(s)] \, U_{a,T}(s)x \, ds
\end{align}
für alle $t \in I$ und alle $T \in (0, \infty)$, denn die Abbildung $[0,t] \ni s \mapsto U_T(t,s)U_{a,T}(s)x$ ist nach Proposition~\ref{thm: char zeitentwicklung} und Lemma~\ref{lm: strong db of products} stetig differenzierbar mit 
\begin{align*}
\dds{ U_T(t,s)U_{a,T}(s)x } = U_T(t,s)\, [P'(s),P(s)] \, U_{a,T}(s)x
\end{align*}
für alle $s \in [0,t]$.

Weiter gilt nach Satz~\ref{thm: Dyson} bzw. Satz~\ref{thm: Kato} und Lemma~\ref{lm: skalierung und (M,w)-stabilität}
\begin{align*}
\norm{U_T(t,s)} \le M e^{T \omega(t-s)}
\end{align*}
für alle $(s,t) \in \Delta$ und daher nach Proposition~\ref{prop: abschätzung für gestörte zeitentw} auch
\begin{align*}
\norm{U_{a,T}(t,s)} \le M e^{(T \omega + Mc)(t-s)}
\end{align*}
für alle $(s,t) \in \Delta$, wobei wir $c := \sup_{s \in I} \norm{ [P'(s),P(s)] }$ setzen.

Also haben wir
\begin{align*}
\norm{ U_{a,T}(t)x - U_T(t)x } &\le \int_0^t M e^{T \omega (t-s)} \, c \, M e^{ (T \omega + Mc)s } \, ds \, \norm{x} \\
&\le M^2 \,c \, e^{Mc} \, e^{- | \omega | \, T t} \, t \, \norm{x}
\end{align*}
für alle $t \in I$, $T \in (0, \infty)$ und alle $x \in D$, und damit 
\begin{align} \label{eq: triv adsatz 2 2}
\norm{ U_{a,T}(t) - U_T(t) } \le M^2 \,c \, e^{Mc} \, e^{- | \omega | \, T t} \, t
\end{align}
für alle $t \in I$ und alle $T \in (0, \infty)$. 

Weil nun $[0, \infty) \ni \xi \mapsto e^{-\xi} \, \xi$ beschränkt ist, ergibt sich daraus
\begin{align*}
\sup_{t \in I} \norm{ U_{a,T}(t)-U_T(t)} = O\Bigl( \frac{1}{| \omega |\, T} \Bigr) \quad (T \to \infty),
\end{align*}
wie gewünscht.
\\

Sei schließlich $\supp P' \subset (r_0, 1]$ für eine positive Zahl $r_0$. Dann ist $U_{a,T}(t) - U_T(t) = 0$ für alle $t \in [0, r_0]$ nach \eqref{eq: triv adsatz 2 1} und nach \eqref{eq: triv adsatz 2 2} ist
\begin{align*}
\norm{ U_{a,T}(t) - U_T(t) } \le M^2 \,c \, e^{Mc} \, e^{- | \omega | \, T t} \, t \le M^2 \,c \, e^{Mc} \, e^{- | \omega | \, T \, r_0} 
\end{align*}
für alle $t \in (r_0, 1]$. Also folgt auch
\begin{align*}
\sup_{t \in I} \norm{ U_{a,T}(t)-U_T(t)} = O\Bigl( e^{-r_0 \, |\omega| \, T} \Bigr) \quad (T \to \infty)
\end{align*}
und wir sind fertig.
\end{proof}

\subsection{Standardbeispiele}

Wir werden den Adiabatensätzen in den Abschnitten~\ref{sect: adsätze mit sl} bis~\ref{sect: höhere adsätze} stets Beispiele zur Seite stellen, um aufzuzeigen, was die Sätze können und was nicht, genauer: um erstens vor Augen zu führen, in welchen Situationen die Voraussetzugen (und damit auch die Aussagen) dieser Sätze erfüllt sind, und um zweitens vor Augen zu führen, was (mit der Aussage dieser Sätze) in Situationen geschieht, in denen die Voraussetzungen nicht erfüllt sind. Insbesondere sollen die Beispiele zeigen, dass die Adiabatensätze der folgenden Abschnitte nicht trivial sind.
\\

$A$ und $P$ werden in unseren Beispielen meist von der folgenden einfachen Struktur sein (weshalb wir auch von Standardbeispielen sprechen): $A(t)$ geht für jedes $t \in I$ durch eine isometrische Ähnlichkeitstransformation $R(t)$ aus der linearen Abbildung $A_0(t)$ hervor und $P(t)$ geht durch dieselbe Ähnlichkeitstransformation aus der beschränkten Projektion $P_0(t)$ hervor. 
Wir haben dann die folgenden einfachen Zusammenhänge.

\begin{prop} \label{prop: gemeinsamkeiten der bsp}
Sei $A_0(t)$ für jedes $t \in I$ eine beschränkte lineare Abbildung in $X$, %(insbesondere also Erzeuger einer stark stetigen Halbgruppe auf $X$),
$P_0(t)$ eine beschränkte Projektion in $X$, $R(t)$ eine surjektive isometrische lineare Abbildung in $X$ und $t \mapsto R(t)x$ zweimal stetig differenzierbar für alle $x \in X$. Sei weiter 
\begin{align*}
A(t):= R(t)^{-1} A_0(t) R(t) \text{ \; und \; } P(t):= R(t)^{-1} P_0(t) R(t)
\end{align*}
für alle $t \in I$. 
Dann stimmt die quasikontraktive Wachstumsschranke von $A(t)$ für jedes $t \in I$ mit der von $A_0(t)$ überein und mit $t \mapsto A_0(t)x$ ist auch $t \mapsto A(t)x$ stetig differenzierbar für alle $x \in X$. Weiter ist $P(t)$ für jedes $t \in I$ eine beschränkte Projektion in $X$, die mit $A(t)$ genau dann vertauscht, wenn $P_0(t)$ mit $A_0(t)$ vertauscht, und mit $t \mapsto P_0(t)x$ ist auch $t \mapsto P(t)x$ zweimal stetig differenzierbar für alle $x \in X$.  
\end{prop}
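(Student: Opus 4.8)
The plan is to verify each assertion by translating properties of $A_0(t)$, $P_0(t)$ via the similarity transformation $R(t)$, exploiting that $R(t)$ is a surjective isometry (hence $R(t)^{-1}$ is also a surjective isometry, $\norm{R(t)} = \norm{R(t)^{-1}} = 1$). First I would treat the semigroup/growth-bound statement: for fixed $t$, since $A_0(t)$ is bounded it generates the stark stetige Halbgruppe $s \mapsto e^{A_0(t)s}$ (Beispiel~\ref{ex: beschr�nkte A sind halbgrerzeuger}), and one checks directly that $R(t)^{-1} e^{A_0(t)s} R(t) = e^{(R(t)^{-1}A_0(t)R(t))s} = e^{A(t)s}$, so $A(t)$ generates the conjugated semigroup. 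Because $R(t)$ is a surjective isometry, $\norm{e^{A(t)s}} = \norm{R(t)^{-1} e^{A_0(t)s} R(t)} = \norm{e^{A_0(t)s}}$ for all $s \ge 0$, and the quasikontraktive Wachstumsschranke, being the infimum of all $\omega$ with $\norm{e^{A(t)s}} \le e^{\omega s}$ for all $s$, therefore coincides for $A(t)$ and $A_0(t)$.

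Next I would handle the projection and commutation claims. The identity $P(t)^2 = R(t)^{-1}P_0(t)R(t)R(t)^{-1}P_0(t)R(t) = R(t)^{-1}P_0(t)^2 R(t) = R(t)^{-1}P_0(t)R(t) = P(t)$ shows $P(t)$ is a bounded projection, and boundedness is clear since $R(t)^{\pm 1}$ are isometries and $P_0(t)$ is bounded. For the commutation: since $A_0(t)$ and $A(t)$ are bounded, the invariance conditions reduce to genuine operator identities, and $A(t)P(t) = R(t)^{-1}A_0(t)R(t)R(t)^{-1}P_0(t)R(t) = R(t)^{-1}A_0(t)P_0(t)R(t)$, similarly $P(t)A(t) = R(t)^{-1}P_0(t)A_0(t)R(t)$; conjugation by the invertible $R(t)$ is injective, so $A(t)P(t) = P(t)A(t)$ iff $A_0(t)P_0(t) = P_0(t)A_0(t)$.

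Finally I would address the regularity transfer. For $t \mapsto A(t)x$ stetig differenzierbar: write $A(t)x = R(t)^{-1}\bigl(A_0(t)(R(t)x)\bigr)$. Since $\norm{R(t)} = 1$ is bounded uniformly in $t$ and $t \mapsto R(t)x$ is stetig differenzierbar (hence $t \mapsto R(t)x$ is in particular a differentiable $D$-valued — here $X$-valued — map), Lemma~\ref{lm: strong db of products} gives that $t \mapsto A_0(t)(R(t)x)$ is stetig differenzierbar (using that $t \mapsto A_0(t)y$ is stetig differenzierbar for all $y$, $\sup_t \norm{A_0(t)} < \infty$ follows from continuity of $t \mapsto A_0(t)y$ on the compact interval $I$ by Banach--Steinhaus, and $t \mapsto R(t)x$ stetig differenzierbar). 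Then, since $t \mapsto R(t)$ is stetig differenzierbar bzgl. der starken Operatortopologie and the $R(t)$ are invertible with $\sup_t \norm{R(t)^{-1}} = 1 < \infty$, Lemma~\ref{lm: reg of inv} shows $t \mapsto R(t)^{-1}y$ is stetig differenzierbar for all $y \in X$; applying Lemma~\ref{lm: strong db of products} once more (with $A(t)$ there replaced by $R(t)^{-1}$ and $x(t)$ by $A_0(t)R(t)x$) yields that $t \mapsto A(t)x = R(t)^{-1}(A_0(t)R(t)x)$ is stetig differenzierbar. The statement for $P$ is the same argument carried out one differentiation order higher: $P(t)x = R(t)^{-1}(P_0(t)(R(t)x))$, with $t \mapsto R(t)x$ and $t \mapsto R(t)^{-1}x$ (again by Lemma~\ref{lm: reg of inv}, applicable because $R$ is even twice stark stetig differenzierbar) twice stetig differenzierbar, and $t \mapsto P_0(t)x$ twice stetig differenzierbar; iterating Lemma~\ref{lm: strong db of products} (which transfers through products of a uniformly bounded $k$-times strongly differentiable operator family and a $k$-times strongly differentiable vector family) gives the claim. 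I expect the only mildly delicate point to be the bookkeeping of exactly which differentiability order survives each application of Lemma~\ref{lm: strong db of products} and Lemma~\ref{lm: reg of inv}; everything else is a routine unwinding of the conjugation, so the proof in the paper may well just say "Das rechnet man leicht nach" — but the plan above is the honest verification.
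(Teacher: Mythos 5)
Your proposal is correct and follows essentially the same route as the paper, which simply notes that $e^{A(t)\,.\,}=R(t)^{-1}e^{A_0(t)\,.\,}R(t)$ together with the isometry of $R(t)^{\pm 1}$ gives the growth-bound statement, and that the remaining assertions follow from Lemma~\ref{lm: strong db of products} and Lemma~\ref{lm: reg of inv} using $\sup_{t\in I}\norm{R(t)^{-1}}<\infty$. Your expansion (the explicit conjugation identities for $P(t)^2$ and the commutators, and the iteration of the two lemmas for the second derivative) is exactly the verification the paper leaves implicit.
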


\begin{proof}
Die quasikontraktiven Wachstumsschranken von $A(t)$ und $A_0(t)$ stimmen für alle $t \in I$ überein, weil $e^{A(t) \, . \,} = R(t)^{-1} \, e^{A_0(t) \, . \,} \, R(t)$ und weil $R(t)$ und damit auch $R(t)^{-1}$ isometrisch ist für alle $t \in I$. 

Die übrigen Aussagen folgen wegen $\sup_{t \in I} \norm{ R(t)^{-1} } < \infty$ aus Lemma~\ref{lm: strong db of products} und Lemma~\ref{lm: reg of inv}.
\end{proof}

Weiter wird in unseren Beispielen meist $X = \ell^p(I_d)$ sein für ein $d \in \natu \cup \{ \infty \}$, wobei
\begin{align*}
I_d := \begin{cases} \{1, 2, \dots, d\}, & d \in \natu \\
										 \natu, & d = \infty.
			 \end{cases}						 
\end{align*}
Wir haben dann also die kanonische Schauderbasis $\{e_n: n \in I_d \}$ (s. beispielsweise Aufgabe~IV.7.18 in~\cite{Werner: FA}) (im Fall $p = 2$ eine Orthonormalbasis) und können jede beschränkte lineare Abbildung $A$ in $X$ darstellen durch das (endliche oder unendliche) Zahlenschema
\begin{align*}
\begin{pmatrix}
{e_1}^*(A e_1)   &   {e_1}^* (A e_2)  & \cdots   & \cdots    \\
{e_2}^*(A e_1)   &  {e_2}^* (A e_2)  & \cdots   & \cdots    \\       
{e_3}^* (A e_1)  &  {e_3}^* (A e_2)  &  \cdots   & \cdots    \\      
\vdots        &  \vdots         &       &                  \\
\vdots        &  \vdots         &       &                          
\end{pmatrix},
\end{align*} 
wobei ${e_n}^*$ die beschränkte lineare Abbildung $X \to \complex$ mit $x \mapsto x_n$ (der Koeffizient in der (eindeutigen!) Reihenentwicklung von $x$ nach der Schauderbasis $\{e_n: n \in I_d \}$) bezeichnet. Wir werden im folgenden beschränkte lineare Abbildungen in $X$ mit dem sie darstellenden Zahlenschema identifizieren. So schreiben wir etwa
\begin{align*}
\begin{pmatrix} 
0             & 0             &               &        &  \\
1             & 0             & \ddots        &        & \\
0             & 1             & 0             & \ddots             &  \\
             & \ddots             & 1             & 0             & \ddots \\
              &               & \ddots        & \ddots        & \ddots
\end{pmatrix}
\text{ \; bzw. \; }
\begin{pmatrix}
0             & 1             & 0             &               &  \\
0             & 0             & 1             & \ddots        & \\
              & \ddots        & 0             & 1             & \ddots  \\
              &               & \ddots        & 0             & \ddots \\
              &               &               & \ddots        & \ddots
\end{pmatrix}
\end{align*} 
für den Shift nach rechts bzw. links, die beschränkte lineare Abbildung in $\ell^p(I_{\infty})$ also mit 
\begin{align*}
(x_1, x_2, x_3, \dots) \mapsto (0, x_1, x_2, \dots ) \text{ bzw. } (x_2, x_3, x_4, \dots )
\end{align*}
für alle $x = (x_1, x_2, x_3, \dots ) \in \ell^p(I_{\infty})$. 
Zu beachten ist dabei, dass nicht jedes beliebige unendliche Zahlenschema eine beschränkte lineare Abbildung auf $X = \ell^p(I_{\infty})$ (bzgl. der Schauderbasis $\{e_n: n \in I_{\infty} \}$) darstellt. Die Zahlenschemas in unseren Beispielen werden aber immer (wie bei den Shiftoperatormatrizen eben) ziemlich offensichtlich eine beschränkte lineare Abbildung in $X$ wiedergeben.

\begin{ex}  \label{ex: spektrum der shifts}
Sei $A_1$ bzw. $A_2$ der Shift nach rechts bzw. links auf $X := \ell^2(I_{\infty})$. Dann gilt 
\begin{align*}
\sigma(A_1) = \overline{U}_1(0) = \sigma(A_2).
\end{align*}
Warum? Zunächst ist $\norm{A_1} = 1 = \norm{A_2}$, das heißt, $\sigma(A_i) \subset \overline{U}_1(0)$. Weiter gilt für $z \in U_1(0)$, dass
\begin{align*}
x:=(1,z,z^2,z^3, \dots) \in X \text{ \; und \; } (A_1 - z) x = 0,
\end{align*}
das heißt, $U_1(0) \subset \sigma_p(A_1) \subset \sigma(A_1)$. Daraus folgt wegen $A_2^* = A_1$ und der Invarianz von $U_1(0)$ unter komplexer Konjugation, dass $U_1(0) \subset \sigma_r(A_2) \subset \sigma(A_2)$. Insgesamt sehen wir nun, dass $\sigma(A_1)$ und $\sigma(A_2)$ gleich $\overline{U}_1(0)$ sind.  $\blacktriangleleft$
\end{ex}

Schließlich führen wir noch Zahlen $\lambda_d$ ein, die in unseren Standardbeispielen oft vorkommen werden:
\begin{align*}
\lambda_d := \sup \big\{ \lambda \in \real: \Re \scprd{ x , A_d(\lambda) x } \le 0 \text{ für alle } x \in \ell^2(I_d) \big\}, 
\end{align*}
wobei 
\begin{align*}
A_d(\lambda) := 
\begin{pmatrix}
\lambda       & 1             & 0             &               &  \\
0             & \lambda       & 1             & \ddots        & \\
              & \ddots        & \lambda       & 1             & \ddots  \\
              &               & \ddots        & \lambda       & \ddots \\
              &               &               & \ddots        & \ddots
\end{pmatrix}
\end{align*}
für alle $\lambda \in \real$ und alle $d \in \natu \cup \{ \infty \}$.

\begin{prop} \label{prop: eigenschaften der lambda(d)}
Die Zahlen $\lambda_d$ liegen in dem Intervall $[-1,0]$, $\lambda_1 = 0$, $\lambda_{\infty} = -1$, es gilt $\lambda_d \longrightarrow \lambda_{\infty} \;\; (d \to \infty)$ und $\lambda_{d+1} \le \lambda_d$ für alle $d \in \natu$.
\end{prop}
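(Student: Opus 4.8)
The plan is to rewrite the defining condition as a statement about a Rayleigh quotient and then evaluate it on a handful of explicit test vectors. Reading off the matrix $A_d(\lambda)$ one gets $(A_d(\lambda)x)_i=\lambda x_i+x_{i+1}$ for $x=(x_1,x_2,\dots)\in\ell^2(I_d)$ (with the convention $x_{d+1}:=0$ when $d$ is finite), hence
\[
\Re\langle x,A_d(\lambda)x\rangle=\lambda\,\|x\|^2+\Re\sum_i \bar x_i\,x_{i+1}.
\]
So $\Re\langle x,A_d(\lambda)x\rangle\le 0$ for all $x\in\ell^2(I_d)$ is equivalent to $\lambda\le -\Re\bigl(\sum_i\bar x_i x_{i+1}\bigr)/\|x\|^2$ for every $x\ne 0$. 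For fixed $x$ this inequality is stable under decreasing $\lambda$, and for fixed $\lambda$ it passes to limits in $\lambda$; hence the set $S_d$ of admissible $\lambda$ is downward closed and closed in $\real$. Once we know $S_d\ne\emptyset$ and $S_d$ is bounded above, it follows that $S_d=(-\infty,\lambda_d]$ and $\lambda_d=\max S_d=-\sup_{x\ne 0}\Re\bigl(\sum_i\bar x_i x_{i+1}\bigr)/\|x\|^2$, and all assertions reduce to estimating this supremum.

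\textbf{The bound $\lambda_d\in[-1,0]$ and well-definedness.} The map $x\mapsto(x_2,x_3,\dots)$ is a contraction on $\ell^2(I_d)$, so by Cauchy--Schwarz $\bigl|\sum_i\bar x_i x_{i+1}\bigr|\le\|x\|^2$; therefore $\Re\langle x,A_d(-1)x\rangle\le 0$ for all $x$, i.e. $-1\in S_d$, which also gives $S_d\ne\emptyset$ and (together with the reformulation) $\lambda_d\ge-1$. Testing with $x=e_1$ makes $\sum_i\bar x_i x_{i+1}=0$ while $\Re\langle e_1,A_d(\lambda)e_1\rangle=\lambda$, so every $\lambda\in S_d$ satisfies $\lambda\le 0$; hence $S_d$ is bounded above and $\lambda_d\in[-1,0]$. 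For $d=1$ only scalar multiples of $e_1$ occur, so the supremum above is exactly $0$ and $\lambda_1=0$.

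\textbf{The value $\lambda_\infty=-1$ and the limit.} Test with $x=e_1+\dots+e_n$: then $\|x\|^2=n$ and $\sum_i\bar x_i x_{i+1}=n-1$, so the Rayleigh ratio equals $(n-1)/n$. Letting $n\to\infty$ (admissible in $\ell^2(I_\infty)$) shows that the supremum there is $1$, hence $\lambda_\infty=-1$. For finite $d$ the choice $n=d$ yields $\lambda_d\le-(d-1)/d$; combined with $\lambda_d\ge-1$ this squeezes $\lambda_d\longrightarrow-1=\lambda_\infty$ as $d\to\infty$.

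\textbf{Monotonicity $\lambda_{d+1}\le\lambda_d$.} Embed $\ell^2(I_d)$ isometrically into $\ell^2(I_{d+1})$ via $x\mapsto(x_1,\dots,x_d,0)$, onto $\{y\in\ell^2(I_{d+1}):y_{d+1}=0\}$. A direct computation with the matrix entries (the bottom row of $A_{d+1}(\lambda)$ contributes nothing to $\langle(x,0),A_{d+1}(\lambda)(x,0)\rangle$ because $y_{d+1}=0$) gives $\langle(x,0),A_{d+1}(\lambda)(x,0)\rangle=\langle x,A_d(\lambda)x\rangle$ for all $x\in\ell^2(I_d)$. Hence any $\lambda$ admissible for $A_{d+1}$ restricts to one admissible for $A_d$, so $S_{d+1}\subseteq S_d$ and $\lambda_{d+1}=\max S_{d+1}\le\max S_d=\lambda_d$. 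I do not expect a genuine obstacle: the only points needing a little care are the routine bookkeeping that $S_d$ is closed, downward closed, non-empty and bounded above, and the elementary estimate that the backward shift is a contraction -- both immediate. (If one wants a closed form, diagonalising the symmetric tridiagonal matrix with zero diagonal and $\tfrac12$ off-diagonal, or iterating the above embedding argument, gives $\lambda_d=-\cos\frac{\pi}{d+1}$, but this is not needed for the statement.)
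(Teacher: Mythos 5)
Your proof is correct and follows essentially the same route as the paper's: Cauchy--Schwarz on the shifted sum for the lower bound $-1$, the test vector $e_1$ for the upper bound $0$, the test vectors $e_1+\dotsb+e_n$ for $\lambda_\infty=-1$ and the convergence $\lambda_d\to-1$, and the zero-padding embedding $x\mapsto(x,0)$ for monotonicity. The Rayleigh-quotient reformulation and the explicit squeeze $-1\le\lambda_d\le-(d-1)/d$ are only cosmetic refinements of the paper's $\varepsilon$-argument.
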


\begin{proof}
Zunächst gilt
\begin{align*}
\Re \scprd{ x , A_d(-1) x } &= \sum_{k=1}^d \Re ( \overline{x}_k \, (-x_k)) + \sum_{k=1}^{d-1} \Re (\overline{x}_k \, x_{k+1}) \\
& \le - \sum_{k=1}^d |x_k|^2 + \Bigl( \sum_{k=1}^{d-1} |x_k|^2 \Bigr)^{\frac{1}{2}} \, \Bigl( \sum_{k=1}^{d-1} |x_{k+1}|^2 \Bigr)^{\frac{1}{2}} 
\le 0
\end{align*}
für alle $x \in \ell^2(I_d)$ und alle $d \in \natu \cup \{ \infty \}$, woraus wir ersehen, dass $\lambda_d \ge -1$ für alle $d \in \natu \cup \{ \infty \}$. 
Weiter gilt
\begin{align*}
\Re \scprd{ e_1 , A_d(\lambda) e_1 } = \lambda,
\end{align*}
für alle $\lambda \in \real$ und alle $d \in \natu \cup \{ \infty \}$, woraus folgt, dass $\lambda_d \le 0$ für alle $d \in \natu \cup \{ \infty \}$.

Wir sehen darüberhinaus sofort, dass $\lambda_1 = 0$, und auch $\lambda_{\infty} = -1$ ist nicht schwer einzusehen. 
Sei nämlich $\varepsilon > 0$. Dann haben wir für alle $d \in \natu$
\begin{align*}
&\Re \scprd{ e_1 + \dotsb + e_d, A_{\infty}(-1 + \varepsilon) (e_1 + \dotsb + e_d) } \\
& \qquad \qquad = \Re \scprd{ e_1 + \dotsb + e_d, A_d(-1 + \varepsilon) (e_1 + \dotsb + e_d) } = d \, \varepsilon - 1, 
\end{align*} 
was für genügend große $d$ positiv ist. Also gilt $\lambda_{\infty} \le -1 + \varepsilon$ und $\lambda_d \le -1 + \varepsilon$ für genügend große $d$, das heißt, $\lambda_{\infty} = -1$ und $\lambda_d \longrightarrow -1 \;\; (d \to \infty)$.

Schließlich bleibt zu zeigen, dass die $\lambda_d$ monoton fallend sind in $d$. Sei $\varepsilon >0$ und $d \in \natu$.  Dann existiert nach Definition von $\lambda_d$ ein Vektor $x \in \ell^2(I_d)$ mit $\Re \scprd{ x , A_d(\lambda_d + \varepsilon) x } > 0$. Also gilt für $y := (x, 0) \in \ell^2(I_{d+1})$, dass
\begin{align*}
\Re \scprd{ y , A_{d+1}(\lambda_d + \varepsilon) y } = \Re \scprd{ x , A_d(\lambda_d + \varepsilon) x } > 0,
\end{align*}
woraus hervorgeht, dass $\lambda_{d+1} \le \lambda_d + \varepsilon$ und damit, dass $\lambda_{d+1} \le \lambda_d$.
\end{proof}

Wir merken an, dass beispielsweise $\lambda_2 = - \frac{1}{2}$ und $\lambda_3 = - \frac{1}{\sqrt{2}}$ (was man mithilfe von Lagrangemultiplikatoren einsehen kann). Insbesondere gilt nach der obigen Proposition $\lambda_d \le - \frac{1}{2} < 0$ für alle $d \in \{2,3, \dots \}$.
\\

Der Witz an den Zahlen $\lambda_d$ ist, dass $A_d(\lambda_d)$ gerade noch dissipativ ist und damit (nach dem Satz von Lumer, Phillips (Satz~\ref{thm: Lumer, Phillips})) gerade noch eine Kontraktionshalbgruppe auf $\ell^2(I_d)$ erzeugt, womit wir natürlich meinen, dass die quasikontraktive Wachstumsschranke von $A_d(\lambda_d)$ gleich $0$ ist. Aus den $A_d(\lambda_d)$ können wir daher mithilfe von Proposition~\ref{prop: gemeinsamkeiten der bsp} (nichtnormale) beschränkte lineare Abbildungen $A(t)$ auf $X := \ell^2(I_{d'})$ konstruieren, sodass $A$ zwar $(1,0)$-stabil aber nicht sogar $(1, \omega)$-stabil ist für negative Zahlen $\omega$.
Solche $A(t)$ fallen also wenigstens nicht unter die etwas schwächere Version von Satz~\ref{thm: triv adsatz 2} mit $M = 1$: die Aussage des Adiabatensates ist für solche $A(t)$ also wenigstens nicht ganz trivial erfüllt.

Aber natürlich ist es denkbar, dass solche $A$ trotzdem $(M, \omega)$-stabil sind für eine negative Zahl $\omega$ und ein $M \in (1, \infty)$. Schließlich stimmt ja die Wachstumsschranke von $A_d(\lambda_d)$ mit der Spektralschranke überein (Korollar~IV.3.12 in~\cite{EngelNagel}), die für $d \in \natu$ wiederum gleich $\lambda_d$ 
%beachte: für $d = \infty$ stimmt das nicht: dort ist die Spektralschranke wegen $\sigma(A_{\infty}(-1)) = \overline{U}_1(-1)$ gleich $0$, also $\ne -1 = \lambda_{\infty}$
und damit für $d \ne 1$ echt kleiner als $0$ ist, woraus wir ersehen, dass für jede negative Zahl $\omega \in (\lambda_d, 0)$ eine Zahl $M_{\omega}$ (notwendig echt größer als $1$) existiert, sodass 
\begin{align*}
\norm{ e^{A_d(\lambda_d) s} } \le M_{\omega} \, e^{\omega \, s}
\end{align*}
für alle $s \in [0, \infty)$.

Wir weisen indes darauf hin, dass wir in allen unseren Beispielen (zu den Adiabatensätzen) der folgenden Abschnitte -- mit Ausnahme von Beispiel~\ref{ex: adsatz mit nichtglm sl, endl viele überschneidungen} und Beispiel~\ref{ex: motivierendes bsp für erweiterten adsatz ohne sl} -- sicher sagen können, dass nicht nur keine $(1,\omega)$- sondern auch keine $(M, \omega)$-Stabilität vorliegt für negative Zahlen $\omega$. Und durch geringfügige Abwandlung der genannten Ausnahmebeispiele können wir erreichen, dass auch dort sicher keine $(M,\omega)$-Stabilität vorliegt für negative $\omega$.

%%%%%%%%%%%%%%%%%%%%%%%%%%%%%%%%%%%%%%%%%%%%%%%%%%%%%%%%%%%%%%%%%%%%%%%%%%%%%%%%%%%%%%%%%%%%%%%%%%%%%%%%%%%%%%%%%%%%%%%%%%%%%%%%%%%%%%%%%%%%%%%%%%%%%%%%%%%%%%%%%%%%%%%%%%%%%%%%%%%%%%%%%%%%%%%%%%%%%%%%%%%%%%%%%%%%%%%%%%%%%%%%%%%%%%%%%%%%%%%%%%%%%%%%%%%%%%%%%%%%%%%%%%%%%%%%%%%%%%%%%%%%%%%%%%%%%%%%%%%%%%%%%%%%%%%%%%%%%%%%%%%%%%%%%%%%%%%%%%%%%%%%%%%%%%%%%%%%%%%%%%%%%%%%%%%%%%%%%%%%%%%%%%%%%%%%%%%%%%%%%%%%%%%%%%%%%%%%%%%%%%%%%%%%%%%%%%%%%%%%%%%%%%%%%%%%

\section{Adiabatensätze mit Spektrallückenbedingung}  \label{sect: adsätze mit sl}

\subsection{Adiabatensätze mit gleichmäßiger Spektrallückenbedingung}

Wir beginnen mit der folgenden wichtigen Aussage.

\begin{lm}  \label{lm: A stetig im verallg sinn}
Sei  $J$ ein Intervall und $A(t): D \subset X \to X$ für jedes $t \in J$ Erzeuger einer stark stetigen Halbgruppe auf $X$, sei $A$ $(M, \omega)$-stabil und $t \mapsto A(t)x$ stetig differenzierbar für alle $x \in D$. Dann ist $t \mapsto A(t)$ stetig im verallgemeinerten Sinn.
\end{lm}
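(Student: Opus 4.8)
Der nat\"urliche Weg ist, die Charakterisierung der verallgemeinerten Stetigkeit aus Satz~\ref{thm: char verallg konv} zu benutzen. Da $A(t)$ f\"ur jedes $t \in J$ Erzeuger einer stark stetigen Halbgruppe ist, gilt nach Satz~\ref{thm: eigenschaften von erzeugern}, dass $\{z \in \complex: \Re z > \omega\} \subset \rho(A(t))$, insbesondere ist $\rho(A(t_0)) \ne \emptyset$ f\"ur jedes $t_0 \in J$. Wir w\"ahlen konkret $z_0 := \omega + 1 \in \rho(A(t))$ f\"ur \emph{alle} $t \in J$ gleichzeitig, mit $\norm{(z_0 - A(t))^{-1}} \le M/(\Re z_0 - \omega) = M$ (ebenfalls nach Satz~\ref{thm: eigenschaften von erzeugern}, Teil~(ii), mit $n=1$). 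Es gen\"ugt dann nach Satz~\ref{thm: char verallg konv}~(iii), f\"ur jedes $t_0 \in J$ zu zeigen, dass $(z_0 - A(t))^{-1} \longrightarrow (z_0 - A(t_0))^{-1}$ $(t \to t_0)$ bzgl. der Normoperatortopologie.

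Der Schl\"ussel dazu ist die Identit\"at
\begin{align*}
(z_0 - A(t))^{-1} - (z_0 - A(t_0))^{-1} = (z_0 - A(t))^{-1} \bigl( A(t) - A(t_0) \bigr) (z_0 - A(t_0))^{-1},
\end{align*}
wobei die rechte Seite zun\"achst nur auf $D$ definiert erscheint, aber wegen $(z_0 - A(t_0))^{-1} X \subset D$ eine wohldefinierte beschr\"ankte lineare Abbildung auf ganz $X$ ist. Nun setzen wir $\tilde A(t) := A(t) - z_0$ und betrachten $C(t, t_0) := \tilde A(t) \tilde A(t_0)^{-1} - 1 = (A(t) - A(t_0)) (z_0 - A(t_0))^{-1}$ (bis auf Vorzeichen; man beachte $\tilde A(t_0)^{-1} = -(z_0 - A(t_0))^{-1}$). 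Mit denselben Argumenten wie im Beweis von Satz~\ref{thm: Kato} -- n\"amlich: $t \mapsto \tilde A'(t) \tilde A(0)^{-1}$ ist als starker Limes beschr\"ankter linearer Abbildungen selbst beschr\"ankt und stark stetig (Satz von Banach, Steinhaus), und $t \mapsto \tilde A(0) \tilde A(t)^{-1} = (\tilde A(t) \tilde A(0)^{-1})^{-1}$ ist nach Lemma~\ref{lm: strong db and db} und Lemma~\ref{lm: continuity of inv} stetig bzgl. der Normoperatortopologie -- erhalten wir die Integraldarstellung
\begin{align*}
\bigl( A(t) - A(t_0) \bigr) (z_0 - A(t_0))^{-1} x = \int_{t_0}^{t} \tilde A'(\tau) \tilde A(t_0)^{-1} x \, d\tau = -\int_{t_0}^{t} \tilde A'(\tau) \tilde A(0)^{-1} \, \tilde A(0) (z_0 - A(t_0))^{-1} x \, d\tau,
\end{align*}
woraus $\norm{(A(t) - A(t_0))(z_0 - A(t_0))^{-1}} \le c \, |t - t_0|$ folgt mit $c := \sup_{\tau \in J} \norm{\tilde A'(\tau) \tilde A(0)^{-1}} \cdot \norm{\tilde A(0)(z_0 - A(t_0))^{-1}} < \infty$ (hier geht ein, dass $J$ ohne Einschr\"ankung als kompakt angenommen werden darf, bzw. man arbeitet lokal um $t_0$; auf kompakten Teilintervallen sind diese Suprema endlich nach Lemma~\ref{lm: strong db and db} bzw. der Bemerkung nach Lemma~\ref{lm: reg of inv}). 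Zusammen mit $\norm{(z_0 - A(t))^{-1}} \le M$ ergibt das
\begin{align*}
\norm{ (z_0 - A(t))^{-1} - (z_0 - A(t_0))^{-1} } \le M \cdot c \cdot |t - t_0| \longrightarrow 0 \quad (t \to t_0),
\end{align*}
also die gew\"unschte Normkonvergenz der Resolventen und damit nach Satz~\ref{thm: char verallg konv} die verallgemeinerte Stetigkeit von $t \mapsto A(t)$ in $t_0$.

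Die Hauptschwierigkeit liegt weniger im konzeptionellen Aufbau -- der folgt dem Muster des Beweises von Satz~\ref{thm: Kato} -- als vielmehr in der sorgf\"altigen Begr\"undung, dass alle auftretenden Suprema (insbesondere $\sup_\tau \norm{\tilde A'(\tau)\tilde A(0)^{-1}}$ und $\sup_\tau \norm{\tilde A(0)\tilde A(\tau)^{-1}}$) auf kompakten Teilintervallen von $J$ endlich sind; das folgt aber wie in der Bemerkung im Anschluss an Lemma~\ref{lm: reg of inv} aus der stetigen Differenzierbarkeit von $t \mapsto A(t)x$ via Lemma~\ref{lm: strong db and db} und Lemma~\ref{lm: continuity of inv}. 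Falls $J$ nicht kompakt ist, ist die Aussage ohnehin nur als lokale Aussage (verallgemeinerte Stetigkeit in jedem $t_0 \in J$) zu verstehen, sodass man sich auf ein kompaktes Teilintervall um $t_0$ zur\"uckziehen kann. Die $(M,\omega)$-Stabilit\"at wird dabei nur \"uber die gleichm\"a\ss ige Resolventenabsch\"atzung $\norm{(z_0 - A(t))^{-1}} \le M$ gebraucht -- tats\"achlich gen\"ugte schon, dass jedes $A(t)$ eine Halbgruppe mit gemeinsamer Schranke $\norm{e^{A(t)s}} \le M e^{\omega s}$ erzeugt.
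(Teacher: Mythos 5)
Your proof is correct and takes essentially the same route as the paper: both fix $z_0 = \omega+1$, use the $(M,\omega)$-stability to get the uniform bound $\norm{(z_0-A(t))^{-1}} \le M$, establish norm continuity of $t \mapsto (z_0-A(t))^{-1}$, and conclude with Satz~\ref{thm: char verallg konv}. The only (cosmetic) difference is that the paper obtains the norm continuity by chaining Lemma~\ref{lm: reg of inv} (strong $C^1$ of the inverse) with Lemma~\ref{lm: strong db and db}, whereas you derive an explicit Lipschitz estimate directly from the resolvent identity -- the underlying ingredients are identical.
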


\begin{proof}
$A(t) - (\omega+1)$ ist für jedes $t \in J$ eine bijektive abgeschlossene lineare Abbildung $D \subset X \to X$ und 
\begin{align*}
\sup_{t \in J} \norm{  \bigl( A(t) - (\omega+1) \bigr)^{-1}  } \le \frac{M}{(\omega+1) - \omega} < \infty,
\end{align*}
weil $A$ ja $(M, \omega)$-stabil ist (Satz~\ref{thm: eigenschaften von erzeugern}), und $t \mapsto \bigl( A(t) - (\omega+1) \bigr)x$ ist stetig differenzierbar für alle $x \in D$. Also ist $t \mapsto  \bigl( A(t) - (\omega+1) \bigr)^{-1} \, x$ nach Lemma~\ref{lm: reg of inv} stetig differenzierbar für alle $x \in X$ und damit $t \mapsto \bigl( A(t) - (\omega+1) \bigr)^{-1}$ nach  Lemma~\ref{lm: strong db and db} stetig, das heißt, $t \mapsto A(t)$ ist folgenstetig im verallgemeinerten Sinn nach Satz~\ref{thm: char verallg konv} und damit auch stetig im verallgemeinerten Sinn.
\end{proof}

Wir können nun einen ersten nichttrivialen Adiabatensatz beweisen. Diesen Satz (wie auch Satz~\ref{thm: handl adsatz mit sl}) haben wir aus Theorem~2.2 in Abou Salems Arbeit~\cite{Abou 07} gewonnen. Später haben wir festgestellt, dass die Sätze~\ref{thm: unhandl adsatz mit sl} und~\ref{thm: handl adsatz mit sl} und insbesondere Abou Salems Theorem~2.2 auch aus einer verallgemeinerten Version (Satz~\ref{thm: höherer adsatz}) eines Adiabatensatzes von Nenciu folgen. Wir unterstreichen aber, dass Abou Salems Satz (und damit erst recht Satz~\ref{thm: unhandl adsatz mit sl} und Satz~\ref{thm: handl adsatz mit sl}) aus Nencius ursprünglichem Satz noch \emph{nicht} folgt (Beispiel~\ref{ex: unser höherer adsatz echt allgemeiner als der von nenciu}).

Abou Salems Satz ist nur für einpunktige Untermengen $\sigma(t) = \{ \lambda(t) \}$ des Spektrums $\sigma(A(t))$ mit einem einfachen Eigenwert $\lambda(t)$ formuliert, und $t \mapsto \lambda(t)$ ist dort als einmal stetig differenzierbar vorausgesetzt, wohingegen die Sätze~\ref{thm: unhandl adsatz mit sl} und~\ref{thm: handl adsatz mit sl} für allgemeine kompakte Untermengen $\sigma(t)$ von $\sigma(A(t))$ formuliert sind und auch mit der Stetigkeit von $t \mapsto \sigma(t)$ auskommen.

\begin{thm} \label{thm: unhandl adsatz mit sl}
Sei $A(t)$ für jedes $t \in I$ eine lineare Abbildung $D \subset X \to X$, die eine stark stetige Halbgruppe auf $X$ erzeugt, sei $A$ $(M,0)$-stabil und $t \mapsto A(t)x$ stetig differenzierbar für alle $x \in D$. Sei $\sigma(t)$ für jedes $t \in I$ eine kompakte in $\sigma(A(t))$ isolierte Untermenge von $\sigma(A(t))$ und zu jedem $t_0 \in I$ gebe es einen Zykel $\gamma_{t_0}$ und eine in $I$ offene Umgebung $U_{t_0}$ von $t_0$, sodass $\im \gamma_{t_0} \subset \rho(A(t))$ und $n(\gamma_{t_0}, \sigma(t)) = 1$ und $n(\gamma_{t_0}, \sigma(A(t)) \setminus \sigma(t)) = 0$ für alle $t \in U_{t_0}$. Sei $P(t)$ für jedes $t \in I$ die Rieszprojektion von $A(t)$ auf $\sigma(t)$ und sei $t \mapsto P(t)x$ zweimal stetig differenzierbar für alle $x \in X$. Dann gilt
\begin{align*}
\sup_{t \in I} \norm{ U_{a,T}(t) - U_T(t) } = O\Bigl( \frac{1}{T} \Bigr) \quad (T \to \infty)
\end{align*}
und insbesondere
\begin{align*}
\sup_{t \in I} \norm{ (1-P(t)) U_T(t) P(0) } = O\Bigl( \frac{1}{T} \Bigr) \quad (T \to \infty).
\end{align*}
\end{thm}

\begin{proof}
Sei für jedes $t \in I$ 
\begin{align*}
B(t) := \frac{1}{2 \pi i} \, \int_{\gamma_t} (A(t)-z)^{-1} P'(t) (A(t)-z)^{-1} \, dz,
\end{align*}
wobei $\gamma_t$ ein Zykel in $\rho(A(t))$ sei mit $n(\gamma_t, \sigma(t)) = 1$ und $n(\gamma_t, \sigma(A(t)) \setminus \sigma(t)) = 0$. So ein Zykel existiert nach Voraussetzung und der Wert des obigen Wegintegrals ist (nach Satz~\ref{thm: Cauchy global}) für alle solchen Zykel gleich, weil 
\begin{align*}
\rho(A(t)) \ni z \mapsto (A(t)-z)^{-1}\, P'(t)\, (A(t)-z)^{-1}
\end{align*}
holomorph ist und je zwei solche Zykel homolog in $\rho(A(t))$ sind.

Sei nun $t_0 \in I$ und $U_{t_0}$ eine in $I$ offene Umgebung von $t_0$, sodass $\im \gamma_{t_0} \subset \rho(A(t))$ und $n(\gamma_{t_0}, \sigma(t)) = 1$ und $n(\gamma_{t_0}, \sigma(A(t)) \setminus \sigma(t)) = 0$ für alle $t \in U_{t_0}$. Wir zeigen, dass die Abbildung $U_{t_0} \ni t \mapsto B(t)x$ stetig differenzierbar ist für alle $x \in X$. 

Sei $x \in X$ und sei $V_{t_0}$ eine in $I$ offene Umgebung von $t_0$ 
mit $\overline{V}_{t_0} \subset U_{t_0}$. 
Zunächst beobachten wir, dass
\begin{align*}
B(t) &= \frac{1}{2 \pi i} \, \int_{\gamma_t} (A(t)-z)^{-1}\, P'(t)\, (A(t)-z)^{-1} \, dz \\
&= \frac{1}{2 \pi i} \, \int_{\gamma_{t_0}} (A(t)-z)^{-1}\, P'(t)\, (A(t)-z)^{-1} \, dz
\end{align*}
für alle $t \in U_{t_0}$, denn $\gamma_t$ \emph{und} $\gamma_{t_0}$ sind für jedes $t \in U_{t_0}$ Zykel in $\rho(A(t))$, die wegen
\begin{align*}
n(\gamma_{t}, \sigma(t)) = 1 = n(\gamma_{t_0}, \sigma(t)) 
\end{align*}
und
\begin{align*}
n(\gamma_t, \sigma(A(t)) \setminus \sigma(t)) = 0 = n(\gamma_{t_0}, \sigma(A(t)) \setminus \sigma(t))
\end{align*}
zudem homolog in $\rho(A(t))$ sind.
Sei $U := \{ (t,z) \in I \times \complex : z \in \rho(A(t)) \}$. Dann ist $U$ nach Satz~\ref{thm: (A(t)-z)^{-1} stetig in (t,z)} und Lemma~\ref{lm: A stetig im verallg sinn} offen in $I \times \complex$ und $U \ni (t,z) \mapsto (A(t)-z)^{-1}$ stetig, insbesondere lokal beschränkt, und daher gilt
\begin{align*}
\sup_{(t,z) \in V_{t_0} \times \im \gamma_{t_0}} \norm{ (A(t)-z)^{-1} } < \infty.
\end{align*}

Wir sehen daher mit Lemma~\ref{lm: reg of inv}, dass $V_{t_0} \ni t \mapsto (A(t)-z)^{-1}$ stark stetig differenzierbar ist für alle $z \in \im \gamma_{t_0}$, das heißt, 
\begin{align*}
V_{t_0} \ni t \mapsto f(t,z) := (A(t)-z)^{-1} \, P'(t) \, (A(t)-z)^{-1} \,x
\end{align*}
ist stetig differenzierbar für alle $z \in \im \gamma_{t_0}$ (nach Lemma~\ref{lm: strong db of products}). Weiter ist $\im \gamma_{t_0} \ni z \mapsto \partial_1 f(t,z)$ stetig %-- sogar holomorph fortsetzbar auf $\rho(A(t))$ -- 
für alle $t \in V_{t_0}$ und 
\begin{align*}
\sup_{(t,z) \in V_{t_0} \times \im \gamma_{t_0}} \norm{ \partial_1 f(t,z) } < \infty.
\end{align*}
Die Abbildung $U \ni (t,z) \mapsto \partial_1 f(t,z)$ ist nämlich stetig. Warum? Weil 
\begin{align*}
\ddt{  (A(t)-z)^{-1} \,y  } &= - (A(t)-z)^{-1} \, A'(t) \, (A(t)-z)^{-1} \,y \\
&= - (A(t)-z)^{-1} \, A'(t)A(0)^{-1} \, A(0)(A(t)-z)^{-1} \,y
\end{align*}
für alle $(t,z) \in U$ und alle $y \in X$, und weil $U \ni (t,z) \mapsto (A(t)-z)^{-1}$ stetig ist, $I \ni t \mapsto A'(t)A(0)^{-1}$ stark stetig ist und auch $U \ni (t,z) \mapsto A(0)(A(t)-z)^{-1} = \bigl( (A(t)-z) A(0)^{-1} \bigr)^{-1}$ wegen der Stetigkeit von $U \ni (t,z) \mapsto (A(t)-z) A(0)^{-1}$ (Lemma~\ref{lm: strong db and db}!) nach Lemma~\ref{lm: continuity of inv} stetig ist.

Aus Lemma~\ref{lm: vertauschung von abl und wegintegral} folgt nun, dass $V_{t_0} \ni t \mapsto \frac{1}{2 \pi i} \, \int_{\gamma_{t_0}} f(t,z) \, dz = B(t)x$ stetig differenzierbar ist mit
\begin{align*}
B'(t)x = \frac{1}{2 \pi i} \, \int_{\gamma_{t_0}} \partial_1 f(t,z) \, dz
\end{align*}
für alle $t \in V_{t_0}$. Also ist $I \ni t \mapsto B(t)x$ tatsächlich stetig differenzierbar für alle $x \in X$.  %denn $U_{t_0}$ ist eine (in $I$) offene Umgebung von $t_0$ und $t_0$ war beliebig in $I$.
\\

Jetzt kommen wir zum wesentlichen Schritt. Sei $x \in D$, dann haben wir für alle $t \in I$ die folgende (sogenannte) Kommutatorgleichung:
\begin{align*} \label{eq: commutator eq}
B(t)A(&t)x - A(t)B(t)x \\
=& \: \frac{1}{2\pi i} \int_{\gamma_{t}} (A(t) - z)^{-1}\, P'(t)\, (A(t) - z)^{-1} A(t) x \,dz  \\
&- \frac{1}{2\pi i} \int_{\gamma_{t}} A(t) (A(t) - z)^{-1}\, P'(t)\, (A(t) - z)^{-1} x \,dz  \\
=& \: \frac{1}{2\pi i} \int_{\gamma_{t}} (A(t)-z)^{-1} P'(t) x \,dz + \frac{1}{2\pi i} \int_{\gamma_{t}} (A(t)-z)^{-1} P'(t)\; z(A(t)-z)^{-1}x \,dz  \\
&- \frac{1}{2\pi i} \int_{\gamma_{t}} P'(t) (A(t) - z)^{-1}x \,dz - \frac{1}{2\pi i} \int_{\gamma_{t}} z(A(t) - z)^{-1} P'(t) (A(t)-z)^{-1}x \,dz  \\
= &\:  - \biggl( \frac{1}{2\pi i} \int_{\gamma_{t}} (z - A(t))^{-1} \,dz \biggr) P'(t) x + P'(t)\left( \frac{1}{2\pi i} \int_{\gamma_{t}}  (z - A(t))^{-1} \,dz \right) x  \\
= & \: P'(t)P(t)x - P(t)P'(t)x, 
\end{align*} 
wobei wir benutzt haben, dass $A(t)$ abgeschlossen ist. Insbesondere zeigt diese Gleichung, dass $B(t) D \subset D$.
Wir erhalten damit, da $t \mapsto B(t)$ -- wie eben gezeigt -- stark stetig differenzierbar ist, dass
\begin{align*}
\bigl( U_{a,T}&(t) - U_T(t) \bigr)x =  U_T (t,s) U_{a,T} (s)x \big|_{s = 0}^{s = t} = \int_0^t U_T (t,s) [P'(s), P(s)] U_{a,T} (s)x \,ds \\
= \: & \frac{1}{T} \int_0^t U_T (t,s) \Bigl( B(s)TA(s) - TA(s)B(s) \Bigr) U_{a,T} (s)x \,ds  \\
= \: & \frac{1}{T} \int_0^t U_T (t,s) \Bigl( -TA(s)B(s) + B'(s) + B(s)\bigl( TA(s) + [P'(s), P(s)] \bigr) \Bigr) U_{a,T} (s)x \,ds \\
& - \frac{1}{T} \int_0^t U_T (t,s) \Bigl( B'(s) + B(s)[P'(s), P(s)] \Bigr) U_{a,T} (s)x \,ds \\
= \: & \frac{1}{T} \int_0^t \dds{\Bigl( U_T (t,s) B(s) U_{a,T} (s)x \Bigr)} \,ds \\
&- \frac{1}{T} \int_0^t U_T (t,s) \Bigl( B'(s) + B(s)[P'(s), P(s)] \Bigr) U_{a,T} (s)x \,ds 
%= & \frac{1}{T} \bigl( U_T (t,s) B(s) U_{aT} (s)x \bigr) \big|_{s = 0}^{s = t}  - \frac{1}{T} \int_0^t U_T (t,s) \bigl( B'(s) + B(s)[P'(s), P(s)] \bigr) U_{a,T} (s)x \,ds
\end{align*}
für alle $t \in I$ und alle $T \in (0, \infty)$.
%beachte: s \mapsto (U_T (t,s) B(s) U_{aT} (s)x) ist wegen B(s)D \subset D wirklich differenzierbar und die Ableitung s \mapsto \dds{(U_T (t,s) B(s) U_{aT} (s)x)} = U_T (t,s) (-TA(s)B(s) + B'(s) + B(s)(TA(s) + [P'(s), P(s)])) U_{aT} (s)x = U_T (t,s) (T [P'(s), P(s)] + B'(s) + B(s)[P'(s), P(s)]) U_{aT} (s)x ist wirklich stetig, weshalb das vorletzte und letzte Gleichheitszeichen wirklich stimmen
Weiter sehen wir, dass
\begin{align*}
\norm{ U_T (t,s) } \le M \text{ \; und \; } \norm{ U_{a,T} (t,s) } \le M e^{Mc (t-s) }
\end{align*}
für alle $(s,t) \in \Delta$  \emph{und alle} $T \in (0,\infty)$, weil $A$ ja nach Voraussetzung $(M,0)$-stabil ist und damit $T A$ und $T A + [P',P]$ nach Lemma~\ref{lm: skalierung und (M,w)-stabilität} und Proposition~\ref{prop: störung (M,w)-stabilität} $(M, 0)$-stabil bzw. $(M, 0 + Mc)$-stabil ist -- dabei bezeichnet $c$ eine Zahl mit
\begin{align*}
\norm{ [P'(s), P(s)] }, \norm{B(s)} \text{\,und\,} \norm{B'(s)} \le c
\end{align*}
für alle $s \in I$. Also gilt, da $x$ beliebig war in $D$, 
\begin{align*}
\sup_{t \in I}\norm{U_{a,T}(t) - U_T(t)} \le \frac{1}{T} \bigl( 2 \, Mc \,M e^{Mc} + Mc\, M e^{Mc} + Mc^2 \,Me^{Mc} \bigr) 
\end{align*}
für alle $T \in (0, \infty)$, woraus mit Satz~\ref{thm: intertwining relation} insbesondere auch
\begin{align*}
\sup_{t \in I} \norm{(1-P(t))U_T(t) P(0)} = O\Bigl( \frac{1}{T} \Bigr) \quad (T \to \infty)
\end{align*}
folgt, die $P(t)$ vertauschen ja als Rieszprojektionen von $A(t)$ tatsächlich mit $A(t)$ (nach Satz~\ref{thm: Rieszprojektion}).
\end{proof}

Im obigen Satz haben wir keine gleichmäßige Spektrallücke vorausgesetzt, sondern nur, dass $\sigma(t)$ für jedes einzelne $t \in I$ isoliert ist in $\sigma(A(t))$ -- das hat ausgereicht. Wie wir aber in Korollar~\ref{cor: sl glm unter den unhandl vor} sehen werden, ist die Spektrallücke unter den Voraussetzungen des obigen Satzes \emph{ganz von selbst} gleichmäßig.
\\

Die folgende Proposition stellt einen Zusammenhang her zwischen Isoliertheit und gleichmäßiger Isoliertheit. Sie besagt, dass eine kompakte in $\sigma(A(t))$ isolierte Untermenge $\sigma(t)$, die oberhalbstetig von $t$ abhängt, sogar gleichmäßig isoliert ist in $\sigma(A(t))$, wenn die Rieszprojektion von $A(t)$ auf $\sigma(t)$ stetig von $t$ abhängt (oder auch nur der Rang der Rieszprojektionen konstant ist in $t$) und $t \mapsto A(t)$ stetig im verallgemeinerten Sinn ist.

%Beachte: für den Beweis der Prop selbst ist die Kompaktheit von $J$ nicht nötig (weil in ihr vorausgesetzt wird, dass $\sigma(t)$ in $\sigma(A(t)) \setminus \sigma(t)$ hineinfällt, und nicht, dass $\sigma(t)$ nicht gleichmäßig isoliert ist in $\sigma(A(t))$ (die Äquivalenz dieser beiden Bedingungen ist in der Tat nur für kompakte Intervalle $J$ gegeben)). Aber für die Richtigkeit der obigen Bemerkung (in der von gleichmäßiger Isoliertheit gesprochen wird) sehr wohl.
%Wichtiger: es hätte auch genügt nur vorauszusetzen, dass $t \mapsto A(t)$ nur an der Stelle $t_0$ stetig im verallgemeinerten Sinn ist 

\begin{prop} \label{prop: zshg isoliert und glm isoliert}
Sei $J$ ein kompaktes Intervall. Sei $A(t)$ für jedes $t \in J$ eine abgeschlossene lineare Abbildung $D \subset X \to X$ und $t \mapsto A(t)$ stetig im verallgemeinerten Sinn. Sei $\sigma(t)$ für jedes $t \in J$ eine kompakte in $\sigma(A(t))$ isolierte Untermenge von $\sigma(A(t))$, $\sigma(t)$ falle an der Stelle $t_0$ in $\sigma(A(t)) \setminus \sigma(t)$ hinein und $t \mapsto \sigma(t)$ sei oberhalbstetig in $t_0$. Sei schließlich $P(t)$ für jedes $t \in J$ die Rieszprojektion von $A(t)$ auf $\sigma(t)$. Dann ist $t \mapsto P(t)$ nicht stetig in $t_0$ und 
\begin{align*}
\limsup_{n \to \infty} \bigl( \rk P(t_n) \bigr) \le \rk P(t_0) - 1
\end{align*}
für alle $(t_n)$ in $J$ mit $t_n \longrightarrow t_0$ und $\dist ( \sigma(t_n), \sigma(A(t_n)) \setminus \sigma(t_n) ) \longrightarrow 0 \quad (n \to \infty)$.
\end{prop}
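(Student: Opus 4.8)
The plan is to compare the Riesz projections $P(t)$ onto the moving sets $\sigma(t)$ with the Riesz projections $\tilde{P}(t)$ onto a \emph{fixed} spectral region, namely the one enclosed by one and the same cycle drawn around $\sigma(t_0)$. Since $\sigma(t_0)$ is isolated in $\sigma(A(t_0))$, first fix $r_0>0$ with $U_{r_0}(\sigma(t_0))\cap\sigma(A(t_0))=\sigma(t_0)$, and by Proposition~\ref{prop: Cauchy f�r kompakta} choose a positiv einfach geschlossener Zykel $\gamma$ in $U_{r_0}(\sigma(t_0))\setminus\sigma(t_0)$ with $n(\gamma,\sigma(t_0))=1$ and $n(\gamma,\complex\setminus U_{r_0}(\sigma(t_0)))=0$; put $G:=\{z\in\complex\setminus\im\gamma:n(\gamma,z)=1\}$, which is open and bounded, contains $\sigma(t_0)$ and satisfies $G\subset U_{r_0}(\sigma(t_0))$. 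Since $t\mapsto A(t)$ is continuous in the generalized sense, Satz~\ref{thm: (A(t)-z)^{-1} stetig in (t,z)} (whose proof is purely local, hence applies on a compact interval) yields a neighbourhood $U_{t_0}\subset J$ of $t_0$ with $\im\gamma\subset\rho(A(t))$ for all $t\in U_{t_0}$ and $(t,z)\mapsto(A(t)-z)^{-1}$ norm-continuous on $U_{t_0}\times\im\gamma$. Then set $\tilde{P}(t):=\frac{1}{2\pi i}\int_{\gamma}(z-A(t))^{-1}\,dz$ for $t\in U_{t_0}$. One checks that $\tau(t):=\sigma(A(t))\cap G$ is compact and isolated in $\sigma(A(t))$ (because $G$ is open and bounded and $\sigma(A(t))\cap\im\gamma=\emptyset$) and that, since $n(\gamma,\tau(t))=1$ and $n(\gamma,\sigma(A(t))\setminus\tau(t))=0$, $\tilde{P}(t)$ is precisely the Riesz projection of $A(t)$ onto $\tau(t)$; moreover $t\mapsto\tilde{P}(t)$ is norm-continuous on $U_{t_0}$ by uniform continuity of the integrand on compact sets. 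Because $G\cap\sigma(A(t_0))\subset U_{r_0}(\sigma(t_0))\cap\sigma(A(t_0))=\sigma(t_0)\subset G$, we get $\tau(t_0)=\sigma(t_0)$, hence $\tilde{P}(t_0)=P(t_0)$; by Lemma~\ref{lm: rk konst}, $\rk\tilde{P}(t)$ is constant near $t_0$, so $\rk\tilde{P}(t)=\rk P(t_0)$ for all $t$ sufficiently close to $t_0$.

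Next I would use upper semicontinuity to squeeze $\sigma(t)$ inside $\tau(t)$. Fix $\varepsilon>0$ with $\overline{U}_{2\varepsilon}(\sigma(t_0))\subset G$ (possible, $G$ open $\supset$ compact $\sigma(t_0)$), and shrink $U_{t_0}$ so that $\sigma(t)\subset U_{\varepsilon}(\sigma(t_0))$ for $t\in U_{t_0}$. Then $\sigma(t)\subset\sigma(A(t))\cap G=\tau(t)$, and both $\sigma(t)$ and $\tau(t)$ are compact and isolated in $\sigma(A(t))$, so Proposition~\ref{prop: rechenregeln rieszprojektion} gives $P(t)\tilde{P}(t)=\tilde{P}(t)P(t)=P(t)$, hence $P(t)X\subset\tilde{P}(t)X$; furthermore $\tau(t)\setminus\sigma(t)$ is again compact and isolated in $\sigma(A(t))$, and by the same proposition $\tilde{P}(t)-P(t)$ is its Riesz projection, which by Satz~\ref{thm: Rieszprojektion} is a bounded projection that vanishes exactly when $\tau(t)=\sigma(t)$.

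Now let $(t_n)$ be any sequence in $J$ with $t_n\to t_0$ and $\dist(\sigma(t_n),\sigma(A(t_n))\setminus\sigma(t_n))\to 0$ (such a sequence exists by hypothesis). For $n$ large, $t_n\in U_{t_0}$ and there are $\lambda_n\in\sigma(t_n)$ and $\mu_n\in\sigma(A(t_n))\setminus\sigma(t_n)$ with $|\lambda_n-\mu_n|<\varepsilon$; then $\dist(\mu_n,\sigma(t_0))\le\dist(\lambda_n,\sigma(t_0))+|\lambda_n-\mu_n|<2\varepsilon$, so $\mu_n\in\overline{U}_{2\varepsilon}(\sigma(t_0))\subset G$ and hence $\mu_n\in\tau(t_n)\setminus\sigma(t_n)$. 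Thus $\sigma(t_n)\subsetneq\tau(t_n)$, so the projection $Q_n:=\tilde{P}(t_n)-P(t_n)$ is nonzero and therefore $\norm{\tilde{P}(t_n)-P(t_n)}\ge 1$. If $t\mapsto P(t)$ were continuous at $t_0$, then $P(t_n)\to P(t_0)=\tilde{P}(t_0)$ while $\tilde{P}(t_n)\to\tilde{P}(t_0)$, forcing $\tilde{P}(t_n)-P(t_n)\to 0$ in norm, a contradiction; hence $P$ is discontinuous at $t_0$. Finally, since $P(t_n)$ and $Q_n$ are projections with $P(t_n)Q_n=Q_nP(t_n)=0$ and $P(t_n)+Q_n=\tilde{P}(t_n)$, the space $\tilde{P}(t_n)X$ decomposes as the direct sum $P(t_n)X\oplus Q_nX$ with $Q_nX\ne 0$, so $\rk P(t_n)\le\rk\tilde{P}(t_n)-1=\rk P(t_0)-1$ for all large $n$, which yields $\limsup_n\rk P(t_n)\le\rk P(t_0)-1$.

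The hard part is the spectral bookkeeping of the first two paragraphs — verifying that $\tilde{P}(t)$ is honestly the Riesz projection onto $\tau(t)$, that $\tau(t_0)=\sigma(t_0)$, and that $\sigma(t)\subset\tau(t)$ stays true near $t_0$ — together with the one genuinely quantitative step, namely turning the hypothesis $\dist(\sigma(t_n),\sigma(A(t_n))\setminus\sigma(t_n))\to 0$ into the geometric fact that the intruding spectral point $\mu_n$ actually falls inside the cycle $\gamma$; once these are in place, the discontinuity of $P$ and the rank drop are formal consequences of the calculus of Riesz projections (Proposition~\ref{prop: rechenregeln rieszprojektion}, Satz~\ref{thm: Rieszprojektion}) and Lemma~\ref{lm: rk konst}.
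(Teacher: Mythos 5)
Your proof is correct and follows essentially the same route as the paper's: you fix a positively simply closed cycle around $\sigma(t_0)$, show the Riesz projection $\tilde{P}(t)$ onto the enclosed part $\tau(t)$ of $\sigma(A(t))$ is norm-continuous with $\tilde{P}(t_0)=P(t_0)$, split $\tilde{P}(t)=P(t)+Q(t)$ via the calculus of Riesz projections, and use the distance hypothesis to force $Q(t_n)\ne 0$, which yields both the discontinuity and the rank drop. The only cosmetic deviations are notational (your $\tau$ is the paper's $\sigma_0$) and that you derive the discontinuity directly from $\norm{Q_n}\ge 1$ rather than via Lemma~\ref{lm: rk konst}; both are fine.
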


\begin{proof}
Sei $\gamma_{t_0}$ ein positiv einfach geschlossener Zykel in $\rho(A(t_0))$ mit $n(\gamma_{t_0}, \sigma(t_0)) = 1$ und $n(\gamma_{t_0}, \sigma(A(t_0)) \setminus \sigma(t_0)) = 0$. So ein Zykel existiert nach Proposition~\ref{prop: Cauchy für kompakta}, weil $\sigma(t_0)$ isoliert ist in $\sigma(A(t_0))$. Sei $V := \{  z \in \complex \setminus \im \gamma_{t_0} : n(\gamma_{t_0}, z) = 1 \}$ (das Innere von $\gamma_{t_0}$) und sei 
\begin{align*}
\sigma_0(t) := \sigma(A(t)) \cap V \text{ \; sowie \; } \tau (t) := \sigma_0(t) \setminus \sigma(t)
\end{align*}
für alle $t \in J$. Dann ist $V$ eine offene %\textbf{(Begr: $\complex \setminus \im \gamma_{t_0}$ offen und $n(\gamma_{t_0}, . )$ ist lokal konstant)} 
Umgebung von $\sigma(t_0)$, das heißt, es gibt eine positive Zahl $r_0$, sodass $U_{2 r_0}(\sigma(t_0)) \subset V$. Weiter existiert (beachte Satz~\ref{thm: sigma(A(t)) oberhstet für unbeschr A(t)}) eine in $J$ offene Umgebung $U_{t_0}$ von $t_0$, sodass $\im \gamma_{t_0} \subset \rho(A(t))$ und $\sigma(t) \subset U_{r_0}(\sigma(t_0)) \subset V$ für alle $t \in U_{t_0}$, denn $t \mapsto A(t)$ ist ja stetig im verallgemeinerten Sinn (in $t_0$) und $t \mapsto \sigma(t)$ ist oberhalbstetig in $t_0$. 

Wir erhalten damit, dass $\sigma_0(t) = \sigma(t) \cup \tau(t)$ für alle $t \in U_{t_0}$, und außerdem können wir uns leicht davon überzeugen, dass $\sigma_0(t)$ und $\tau(t)$ für jedes $t \in U_{t_0}$ kompakte in $\sigma(A(t))$ isolierte Untermengen von $\sigma(A(t))$ sind. 

Sei für jedes $t \in U_{t_0}$ $P_0(t)$ die Rieszprojektion von $A(t)$ auf $\sigma_0(t)$ und $Q(t)$ die Rieszprojektion von $A(t)$ auf $\tau(t)$. Dann gilt $P_0(t) = P(t) + Q(t)$ für alle $t \in U_{t_0}$ nach Proposition~\ref{prop: rechenregeln rieszprojektion} und 
\begin{align*}
P_0(t) = \frac{1}{2 \pi i} \, \int_{\gamma_{t_0}} (z-A(t))^{-1} \, dz,
\end{align*} 
weil $\gamma_{t_0}$ für jedes $t \in U_{t_0}$ ein Zykel in $\rho(A(t))$ ist mit
\begin{align*}
n(\gamma_{t_0}, \sigma_0(t)) = n(\gamma_{t_0}, V) = 1 %\text{ \; und \; } n(\gamma_{t_0}, \sigma(A(t)) \setminus \sigma_0(t)) = n\bigl( \gamma_{t_0}, \bigl( \complex \setminus \im \gamma_{t_0} \bigr) \setminus V \bigr) = 0 
\end{align*}
und 
\begin{align*}
n(\gamma_{t_0}, \sigma(A(t)) \setminus \sigma_0(t)) = n\bigl( \gamma_{t_0}, \bigl( \complex \setminus \im \gamma_{t_0} \bigr) \setminus V \bigr) = 0. 
\end{align*}
Im letzten Gleichheitszeichen haben wir benutzt, dass $\gamma_{t_0}$ ein \emph{positiv einfach geschlossener} Zykel ist. Die Abbildung $U_{t_0} \ni t \mapsto P_0(t)$ ist stetig, denn wegen der Stetigkeit von $t \mapsto A(t)$ im verallgemeinerten Sinn gilt
\begin{align*}
\sup_{(t,z) \in V_{t_0} \times \im \gamma_{t_0}} \norm{ (A(t)-z)^{-1} } < \infty
\end{align*}
für jede in $J$ offene Umgebung $V_{t_0}$ von $t_0$ mit $\overline{V}_{t_0} \subset U_{t_0}$ (beachte Satz~\ref{thm: (A(t)-z)^{-1} stetig in (t,z)}), und wir können den lebesgueschen Satz anwenden.

Sei nun $(t_n)$ eine (nach Voraussetzung wirklich existierende) Folge in $J$ mit $t_n \longrightarrow t_0$ und $\dist ( \sigma(t_n), \sigma(A(t_n)) \setminus \sigma(t_n) ) \longrightarrow 0 \quad (n \to \infty)$. Dann gilt
\begin{align*}
\sigma(A(t_n))\setminus\sigma(t_n) \cap U_{r_0}(\sigma(t_n)) \ne \emptyset
\end{align*}
für genügend große $n$ und, da $U_{r_0}(\sigma(t_n)) \subset U_{2 r_0}(\sigma(t_0)) \subset V$ für genügend große $n$ ($t \mapsto \sigma(t)$ ist oberhalbstetig in $t_0$!), gilt auch
\begin{align*}
\tau(t_n) = \sigma(A(t_n)) \setminus \sigma(t_n) \cap V \supset \sigma(A(t_n)) \setminus \sigma(t_n) \cap U_{r_0}(\sigma(t_n)) \ne \emptyset
\end{align*}
für $n$ groß genug. Also folgt $Q(t_n) \ne 0$ für $n$ groß genug, und wir sehen mithilfe von Lemma~\ref{lm: rk konst}, dass $t \mapsto Q(t)$ nicht stetig ist in $t_0$, denn $\tau(t_0) = \sigma_0(t_0)\setminus\sigma(t_0) = \emptyset$ und damit $Q(t_0) = 0$. Da $U_{t_0} \ni t \mapsto P_0(t)$ stetig ist, ist  $U_{t_0} \ni t \mapsto P(t) =  P_0(t) - Q(t)$ also nicht stetig in $t_0$. Weiter ist $\rk P_0(t_n) = \rk P(t_0)$ für $n$ groß genug (Lemma~\ref{lm: rk konst}) und $\rk P(t_n) + \rk Q(t_n) = \rk (P(t_n) + Q(t_n))$ für $n$ groß genug, weil $P(t_n)Q(t_n) = 0$ nach Proposition~\ref{prop: rechenregeln rieszprojektion}. Also haben wir
\begin{align*}
\rk P(t_n) + 1 \le \rk P(t_n) + \rk Q(t_n)  %= \rk(P(t_n) + Q(t_n)) = \rk P_0(t_n) 
= \rk P_0(t_0) = \rk P(t_0)
\end{align*}
für $n$ genügend groß, was die behauptete Aussage ergibt.
\end{proof}

In der Situation der obigen Proposition kann die Abbildung 
\begin{align*}
J \setminus \{t_0\} \ni t \mapsto P(t)
\end{align*}
durchaus stetig fortsetzbar sein in $t_0$ (Beispiel~\ref{ex: adsatz mit nichtglm sl, endl viele überschneidungen}). 
Sie muss es aber nicht (Beispiel~\ref{ex: reg von P wesentlich im adsatz mit nichtglm sl} und~\ref{ex: rieszproj muss nicht stetig fb sein (Rellich)}).
\\

Wie angekündigt zeigen wir nun, dass die Spektrallücke unter den Voraussetzungen des Satzes~\ref{thm: unhandl adsatz mit sl} automatisch gleichmäßig ist.

\begin{cor} \label{cor: sl glm unter den unhandl vor}
Seien $A(t)$, $\sigma(t)$ und $P(t)$ wie in Satz~\ref{thm: unhandl adsatz mit sl}. Dann ist $t \mapsto \sigma(t)$ oberhalbstetig, insbesondere ist $\sigma(t)$ gleichmäßig isoliert in $\sigma(A(t))$. 
\end{cor}

\begin{proof}
Wir zeigen, dass $t \mapsto \sigma(t)$ oberhalbstetig ist. Dann folgt mithilfe von Proposition~\ref{prop: zshg isoliert und glm isoliert}, dass $\sigma(t)$ an keiner Stelle $t_0 \in I$ in $\sigma(A(t)) \setminus \sigma(t)$ hineinfällt und daher gleichmäßig isoliert ist in  $\sigma(A(t))$. 

Wir können annehmen, dass $P(t) \ne 1$ für alle $t \in I$. Andernfalls ist nämlich $1-P(t) = 0$ für \emph{alle} $t \in I$ nach Lemma~\ref{lm: rk konst} und daher $\sigma(t) = \sigma( A(t) \big|_{P(t)D} ) \cup \sigma( A(t) \big|_{(1-P(t))D} ) = \sigma(A(t))$ trivialerweise gleichmäßig isoliert in $\sigma(A(t))$.

Sei $t_0 \in I$ und seien $\gamma_{t_0}$, $U_{t_0}$ wie in der Voraussetzung von Satz~\ref{thm: unhandl adsatz mit sl}. Dann ist 
\begin{align*}
U_{t_0} \ni t \mapsto A(t)P(t) = \frac{1}{2 \pi i} \, \int_{\gamma_t} z \, (z-A(t))^{-1} \, dz = \frac{1}{2 \pi i} \, \int_{\gamma_{t_0}} z \, (z-A(t))^{-1} \, dz
\end{align*}
stetig nach Satz~\ref{thm: (A(t)-z)^{-1} stetig in (t,z)}, $U_{t_0} \ni t \mapsto \sigma(A(t)P(t))$ also oberhalbstetig nach Proposition~\ref{prop: sigma(A(t)) oberhstet}. Weiter bemerken wir, dass
\begin{align*}
\sigma(A(t)P(t)) = \sigma \bigl( A(t)P(t) \big|_{P(t)D} \bigr) \cup \sigma \bigl( A(t)P(t) \big|_{(1-P(t))D} \bigr) = \sigma(t) \cup \{ 0 \}
\end{align*}
für alle $t \in I$ (nach Proposition~\ref{prop: zerl. des spektrums}). Also existiert zu jedem $\varepsilon > 0$ ein $\delta > 0$, sodass 
\begin{align*}
\sigma(t) \subset \sigma(t) \cup \{ 0 \} \subset U_{\varepsilon} \bigl(  \sigma(t_0) \cup \{ 0 \}  \bigr) 
\end{align*}
für alle $t \in U_{\delta} (t_0) \cap I$.
 
Wenn nun $0 \in \sigma(t_0)$, dann folgt die Oberhalbstetigkeit von $t \mapsto \sigma(t)$ in $t_0$ sofort aus der obigen Beziehung.

Wenn $0 \notin \sigma(t_0)$, erhalten wir die Oberhalbstetigkeit mit einem Widerspruchsargument. Angenommen, $t \mapsto \sigma(t)$ ist nicht oberhalbstetig in $t_0$. Dann existiert ein $\varepsilon_0 > 0$, eine Folge $(t_n)$ in $I$ mit $t_n \longrightarrow t_0 \; (n \to \infty)$ und eine Folge $(\lambda_n)$ mit $\lambda_n \in \sigma(t_n)$ und $\lambda_n \notin U_{\varepsilon_0} (  \sigma(t_0)  )$ für alle $n \in \natu$. Sei $\varepsilon \in (0 , \varepsilon_0]$, dann gilt wegen der Oberhalbstetigkeit von $t \mapsto \sigma(t) \cup \{ 0 \}$
\begin{align*}
\lambda_n \in \sigma(t_n) \subset U_{\varepsilon} \bigl(  \sigma(t_0) \cup \{ 0 \}  \bigr)   \subset   U_{\varepsilon} ( \sigma(t_0) ) \cup U_{\varepsilon}(0)
\end{align*}
und daher $\lambda_n \in U_{\varepsilon}(0)$ für genügend große $n$. Wir sehen also, dass $\lambda_n \longrightarrow 0 \quad (n \to \infty)$, woraus  wegen $\lambda_n \in \sigma(t_n) \subset \sigma(A(t_n))$ und der Stetigkeit von $t \mapsto A(t)$ im verallgemeinerten Sinn folgt, dass $0 \in \sigma(A(t_0))$ (benutze Satz~\ref{thm: (A(t)-z)^{-1} stetig in (t,z)}). Also gilt %$n(\gamma_{t_0}, 0 ) = n( \gamma_{t_0}, \sigma(A(t_0)) \setminus \sigma(t_0) ) = 0$, 
\begin{align*}
n(\gamma_{t_0}, 0 ) = n( \gamma_{t_0}, \sigma(A(t_0)) \setminus \sigma(t_0) ) = 0,
\end{align*}
aber andererseits gilt auch 
\begin{align*}
n(\gamma_{t_0}, 0) = \lim_{n \to \infty} n( \gamma_{t_0}, \lambda_n ) = 1,
\end{align*}
da $n(\gamma_{t_0}, \sigma(t_n)) = 1$ für alle $n \in \natu$ mit $t_n \in U_{t_0}$. Das ist ein Widerspruch und wir sind fertig. 
\end{proof}

Die Voraussetzung in Satz~\ref{thm: unhandl adsatz mit sl}, dass zu jedem $t_0 \in I$ eine Umgebung $U_{t_0}$ und ein Zykel $\gamma_{t_0}$ existiert, der für jedes $t \in U_{t_0}$ in $\rho(A(t))$ verläuft und $\sigma(t)$ einmal und $\sigma(A(t)) \setminus \sigma(t)$ keinmal umläuft, sieht nicht besonders handlich aus. Wie der nächste Satz zeigt, können wir sie durch die handlichere Voraussetzung, dass $t \mapsto \sigma(t)$ stetig ist, ersetzen -- und zwar vermutlich ohne großen Verlust an Allgemeinheit (s. Proposition~\ref{prop: zshg handl und unhandl adsatz}).

\begin{thm} \label{thm: handl adsatz mit sl}
Sei $A(t)$ für jedes $t \in I$ eine lineare Abbildung $D \subset X \to X$, die eine stark stetige Halbgruppe auf $X$ erzeugt, sei $A$ $(M,0)$-stabil und $t \mapsto A(t)x$ stetig differenzierbar für alle $x \in D$. Sei $\sigma(t)$ für jedes $t \in I$ eine kompakte in $\sigma(A(t))$ isolierte Untermenge von $\sigma(A(t))$ und $t \mapsto \sigma(t)$ sei stetig. Sei $P(t)$ für jedes $t \in I$ die Rieszprojektion von $A(t)$ auf $\sigma(t)$ und sei $t \mapsto P(t)x$ zweimal stetig differenzierbar für alle $x \in X$. Dann gilt
\begin{align*}
\sup_{t \in I} \norm{ U_{a,T}(t) - U_T(t) } = O\Bigl( \frac{1}{T} \Bigr) \quad (T \to \infty)
\end{align*}
und insbesondere
\begin{align*}
\sup_{t \in I} \norm{ (1-P(t)) U_T(t) P(0) } = O\Bigl( \frac{1}{T} \Bigr) \quad (T \to \infty).
\end{align*}
\end{thm}

\begin{proof}
Wir bemerken zunächst, dass $\sigma(t)$ nach Proposition~\ref{prop: zshg isoliert und glm isoliert} sogar gleichmäßig isoliert ist, denn $t \mapsto \sigma(t)$ ist ja (insbesondere) oberhalbstetig, $t \mapsto A(t)$ ist stetig im verallgemeinerten Sinne (nach Lemma~\ref{lm: A stetig im verallg sinn}) und $t \mapsto P(t)$ ist stetig nach Lemma~\ref{lm: strong db and db}. Also existiert eine positive Zahl $r_0$, sodass 
\begin{align*}
U_{r_0} (\sigma(t)) \setminus \sigma(t) \subset \rho(A(t))
\end{align*} 
für alle $t \in I$.

Wir zeigen nun, dass die oben angesprochene etwas unhandliche Voraussetzung aus Satz~\ref{thm: unhandl adsatz mit sl} erfüllt ist. Sei also $t_0 \in I$. Da $t \mapsto \sigma(t)$ stetig ist, existiert eine in $I$ offene Umgebung $U_{t_0}$, sodass
\begin{align*}
\sigma(t) \subset U_{\frac{r_0}{3}}(\sigma(t_0)) \text{ \; und \; } \sigma(t_0) \subset U_{\frac{r_0}{3}}(\sigma(t))
\end{align*}
für alle $t \in U_{t_0}$. 
Sei $\gamma_{t_0}$ ein Zykel in $U_{\frac{2}{3}r_0}(\sigma(t_0))   \setminus    \overline{U}_{\frac{r_0}{3}}(\sigma(t_0))$ mit $n\bigl(  \gamma_{t_0}, \overline{U}_{\frac{r_0}{3}}(\sigma(t_0))  \bigr) = 1$ und $n \bigl(  \gamma_{t_0}, \complex \setminus U_{\frac{2}{3}r_0}(\sigma(t_0))  \bigr) = 0$ (Proposition~\ref{prop: Cauchy für kompakta}!). Dann gilt
\begin{align*}
\sigma(t) \subset \overline{U}_{\frac{r_0}{3}}(\sigma(t_0))
\end{align*}
und
\begin{align*}
\sigma(A(t)) \setminus \sigma(t)  \subset  \complex \setminus U_{r_0}(\sigma(t))  \subset \complex \setminus U_{\frac{2}{3}r_0}(\sigma(t_0))
\end{align*}
für alle $t \in U_{t_0}$, 
weshalb $\gamma_{t_0}$ für $t \in U_{t_0}$ in $\rho(A(t))$ verläuft und
\begin{align*}
n(\gamma_{t_0}, \sigma(t)) = n\bigl(   \gamma_{t_0}, \overline{U}_{\frac{r_0}{3}}(\sigma(t_0))   \bigr) = 1 %\text{ \; und \; } n(\gamma_{t_0}, \sigma(A(t)) \setminus \sigma(t)) = n(\gamma_{t_0}, \complex \setminus U_{\frac{2 r_0}{3}}(\sigma(t_0))) = 0.
\end{align*}
und
\begin{align*}
n(\gamma_{t_0}, \sigma(A(t)) \setminus \sigma(t)) = n\bigl(   \gamma_{t_0}, \complex \setminus U_{\frac{2}{3}r_0}(\sigma(t_0))  \bigr) = 0.
\end{align*}
Das zeigt, dass die Voraussetzungen von Satz~\ref{thm: unhandl adsatz mit sl} tatsächlich erfüllt sind, und dieser Satz liefert die Behauptung.
\end{proof}

Wir haben gerade im Beweis gesehen, dass die Voraussetzungen des obigen Satzes diejenigen von Satz~\ref{thm: unhandl adsatz mit sl} nach sich ziehen. Dieser Satz ist also (eher) allgemeiner als der obige etwas handlichere Adiabatensatz. Allerdings ist er wohl nicht \emph{viel} allgemeiner, wie folgende Proposition vermuten lässt.%: zeigt diese doch, dass wenigstens im Fall normaler $A(t)$ die beiden 

\begin{prop} \label{prop: zshg handl und unhandl adsatz}
Seien $A(t)$, $\sigma(t)$ und $P(t)$ wie in Satz~\ref{thm: unhandl adsatz mit sl}. Sei ferner $0 \in \sigma(t)$ für alle $t \in I$ oder $0 \notin \sigma(t)$ für alle $t \in I$ und sei $t \mapsto \sigma(A(t)P(t))$ unterhalbstetig, was beispielsweise dann erfüllt ist, wenn alle $A(t)$ normal oder die $\sigma(t)$ alle endlich sind (s. auch die Charakterisierung in Proposition~\ref{prop: char sigma(A(t)) unterhstet}). Dann ist $t \mapsto \sigma(t)$ stetig, das heißt, die Voraussetzungen von Satz~\ref{thm: handl adsatz mit sl} sind erfüllt.
\end{prop}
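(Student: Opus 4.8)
The plan is to prove that $t \mapsto \sigma(t)$ is both oberhalbstetig and unterhalbstetig, which by definition is exactly the continuity asserted. Since $A(t)$, $\sigma(t)$, $P(t)$ are as in Satz~\ref{thm: unhandl adsatz mit sl}, upper semicontinuity of $t \mapsto \sigma(t)$ is already granted by Korollar~\ref{cor: sl glm unter den unhandl vor}; so the whole work lies in establishing lower semicontinuity, and for this I would transfer the question to the bounded operators $A(t)P(t)$.

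First I would recollect two facts from the proof of Korollar~\ref{cor: sl glm unter den unhandl vor}. On the one hand, locally around each $t_0 \in I$ one has $A(t)P(t) = \frac{1}{2\pi i}\int_{\gamma_{t_0}} z\,(z-A(t))^{-1}\,dz$, whence $t \mapsto A(t)P(t)$ is continuous in operator norm by Satz~\ref{thm: (A(t)-z)^{-1} stetig in (t,z)} and Lemma~\ref{lm: A stetig im verallg sinn} (plus the Lebesgue theorem); in particular, by Proposition~\ref{prop: sigma(A(t)) oberhstet}, $t \mapsto \sigma(A(t)P(t))$ is oberhalbstetig. On the other hand, Proposition~\ref{prop: zerl. des spektrums}, applied to the bounded operator $A(t)P(t)$ and the commuting projection $P(t)$, gives $\sigma(A(t)P(t)) = \sigma\bigl(A(t)P(t)\big|_{P(t)D}\bigr) \cup \sigma\bigl(A(t)P(t)\big|_{(1-P(t))D}\bigr) = \sigma(t) \cup \{0\}$ for every $t \in I$, using $\sigma(A(t)\big|_{P(t)D}) = \sigma(t)$ from Satz~\ref{thm: Rieszprojektion}. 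Combining the upper semicontinuity just obtained with the hypothesis that $t \mapsto \sigma(A(t)P(t))$ is unterhalbstetig, I conclude that $t \mapsto \sigma(t) \cup \{0\}$ is continuous.

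Then I would distinguish the two cases. If $0 \in \sigma(t)$ for all $t \in I$, then $\sigma(t) \cup \{0\} = \sigma(t)$ and $t \mapsto \sigma(t)$ is continuous without more ado. If $0 \notin \sigma(t)$ for all $t \in I$, fix $t_0 \in I$ and put $d_0 := \dist(0, \sigma(t_0))$, which is $>0$ because $\sigma(t_0)$ is compact. For $\varepsilon \in (0, d_0)$, lower semicontinuity of $t \mapsto \sigma(t) \cup \{0\}$ furnishes an in $I$ open neighbourhood $U_{t_0}$ of $t_0$ with $\sigma(t_0) \cup \{0\} \subset U_\varepsilon(\sigma(t) \cup \{0\}) \subset U_\varepsilon(\sigma(t)) \cup U_\varepsilon(0)$ for all $t \in U_{t_0}$. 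For $\lambda \in \sigma(t_0)$ we have $|\lambda| \ge d_0 > \varepsilon$, hence $\lambda \notin U_\varepsilon(0)$ and so $\lambda \in U_\varepsilon(\sigma(t))$; thus $\sigma(t_0) \subset U_\varepsilon(\sigma(t))$ for all $t \in U_{t_0}$. Since this holds for every $\varepsilon \in (0,d_0)$ (and for $\varepsilon \ge d_0$ one simply reuses the neighbourhood obtained for some $\varepsilon' < d_0$), the map $t \mapsto \sigma(t)$ is unterhalbstetig at $t_0$, hence everywhere. Together with Korollar~\ref{cor: sl glm unter den unhandl vor} this yields the continuity of $t \mapsto \sigma(t)$, i.e.\ the hypotheses of Satz~\ref{thm: handl adsatz mit sl}.

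Finally, for the two concrete sufficient conditions: if every $A(t)$ is normal, then $P(t) = P^{A(t)}_{\sigma(t)}$ is an orthogonal projection (Proposition~\ref{prop: rieszproj f�r normale A}) and $A(t)P(t)$ is again a spectral integral, hence normal; if every $\sigma(t)$ is finite, then every $\sigma(A(t)P(t)) = \sigma(t) \cup \{0\}$ is finite. In either case the closing assertion of Proposition~\ref{prop: char sigma(A(t)) unterhstet}, applied to the bounded, norm-continuous family $t \mapsto A(t)P(t)$, delivers the lower semicontinuity of $t \mapsto \sigma(A(t)P(t))$ used above. The only genuinely delicate point in the argument is that continuity of $t \mapsto \sigma(t) \cup \{0\}$ does \emph{not} by itself imply continuity of $t \mapsto \sigma(t)$: upper semicontinuity of $\sigma(t)$ has to be imported separately from Korollar~\ref{cor: sl glm unter den unhandl vor}, and to recover lower semicontinuity one must peel off the spurious point $0$ from $\sigma(t)$, which is precisely what the hypothesis ``$0 \in \sigma(t)$ for all $t$ or $0 \notin \sigma(t)$ for all $t$'' makes possible.
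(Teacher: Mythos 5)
Your argument follows essentially the same route as the paper: identify $\sigma(A(t)P(t))$ with $\sigma(t)\cup\{0\}$, combine the assumed lower semicontinuity with the upper semicontinuity already available from Korollar~\ref{cor: sl glm unter den unhandl vor}, and then peel off the point $0$ by the case distinction; your direct $d_0$-argument in the case $0\notin\sigma(t)$ is a clean variant of the paper's contradiction argument. The one point you gloss over is the degenerate case $P(t)=1$: there $(1-P(t))X=0$, so $\sigma\bigl(A(t)P(t)\big|_{(1-P(t))D}\bigr)=\emptyset$ and the identity $\sigma(A(t)P(t))=\sigma(t)\cup\{0\}$ fails (indeed $0$ need not lie in $\sigma(A(t)P(t))=\sigma(t)$ at all). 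The paper disposes of this first: by Lemma~\ref{lm: rk konst} applied to $1-P(t)$, either $P(t)\ne 1$ for all $t$ or $P(t)=1$ for all $t$, and in the latter case $\sigma(t)=\sigma(A(t)P(t))$ is lower semicontinuous by hypothesis and upper semicontinuous by the Korollar, so one is done immediately. Adding that preliminary reduction makes your proof complete.
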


\begin{proof}
Wir können wie im Beweis von Korollar~\ref{cor: sl glm unter den unhandl vor} zur Vereinfachung annehmen, dass $P(t) \ne 1$ für alle $t \in I$, denn andernfalls ist $P(t) = 1$ für \emph{alle} $t \in I$ und die behauptete Stetigkeit von $t \mapsto \sigma(t) = \sigma(A(t)P(t))$ folgt schon aus der vorausgestzten Unterhalbstetigkeit und Korollar~\ref{cor: sl glm unter den unhandl vor}. Aus unserer vereinfachenden Annahme folgt, dass
\begin{align*}
\sigma(t) \cup \{0\} &= \sigma(A(t)\big|_{P(t)D}) \cup \{ 0 \} = \sigma(A(t)P(t)\big|_{P(t)D}) \cup \sigma(A(t)P(t)\big|_{(1-P(t))D}) \\ 
&= \sigma(A(t)P(t)) 
\end{align*}
für alle $t \in I$.

Sei zunächst $0 \in \sigma(t)$ für alle $t \in I$.
Dann ist $t \mapsto \sigma(t) = \sigma(A(t)P(t))$ unterhalbstetig, also nach Korollar~\ref{cor: sl glm unter den unhandl vor} sogar stetig.

Sei nun $0 \notin \sigma(t)$ für alle $t \in I$. Dann ist $t \mapsto \sigma(t)$ ebenfalls unterhalbstetig und damit nach Korollar~\ref{cor: sl glm unter den unhandl vor} auch stetig, wie ein Widerspruchsargument zeigt. Angenommen nämlich, $t \mapsto \sigma(t)$ ist nicht unterhalbstetig in der Stelle $t_0 \in I$. Dann existiert ein $\varepsilon_0 > 0$ und eine Folge $(t_n)$ in $I$, sodass $t_n \longrightarrow t_0 \;\; (n \to \infty)$ und $\sigma(t_0) \not\subset U_{\varepsilon_0}(\sigma(t_n))$ für alle $n \in \natu$, und damit existiert eine Folge $(\lambda_n)$ in $\sigma(t_0)$, sodass $\lambda_n \notin U_{\varepsilon_0}(\sigma(t_n))$ für alle $n \in \natu$. Aus der Unterhalbstetigkeit von $t \mapsto \sigma(t) \cup \{0\}$ (in $t_0$) folgt nun, dass $\lambda_n \longrightarrow 0 \;\; (n \to \infty)$, und daher (wegen der Abgeschlossenheit von $\sigma(t_0)$), dass $0 \in \sigma(t_0)$. Das ist ein Widerspruch zu unserer Voraussetzung.
\\

Zuletzt: dass $t \mapsto \sigma(A(t)P(t))$ wirklich unterhalbstetig ist, wenn die $A(t)$ alle normal oder die $\sigma(t)$ alle endlich sind, folgt aus Proposition~\ref{prop: char sigma(A(t)) unterhstet}, weil $t \mapsto A(t)P(t)$ (nach dem Beweis von Korollar~\ref{cor: sl glm unter den unhandl vor}) stetig ist und mit $A(t)$ auch $A(t)P(t)$ normal ist ($P(t)$ ist dann ja nach Proposition~\ref{prop: rieszproj für normale A} gleich $P_{\sigma(t)}^{A(t)}$).
\end{proof}

\subsection{Adiabatensätze mit nichtgleichmäßiger Spektrallückenbedingung}

Wir erweitern nun die Adiabatensätze des vorangehenden Abschnitts, in denen ja $\sigma(t)$ (wegen der gleichmäßigen Spektrallücke) %bzw. der Stetigkeit der Rieszprojektionen)
an keiner Stelle in $\sigma(A(t)) \setminus \sigma(t)$ hineinfallen konnte, auf bestimmte Situationen mit nichtgleichmäßiger Spektrallücke: nämlich auf genau solche, in denen $\sigma(t)$ an endlich vielen Stellen in $\sigma(A(t)) \setminus \sigma(t)$ hineinfällt und die Rieszprojektionen in diesen endlich vielen Stellen zweimal stark stetig differenzierbar fortgesetzt werden können. 

Wir können dabei aber keine Aussage mehr über die Konvergenzrate machen, sondern erhalten nur noch, dass überhaupt Konvergenz vorliegt. 
\\

Zunächst ein  einfaches Lemma, das besagt, dass der erste Adiabatensatz aus dem vorangehenden Abschnitt auch für beliebige Intervalle $[a,b]$ gilt (vgl. die Bemerkung vor Lemma~\ref{lm: trf. der zeitentw.}), was wenig erstaunlich ist.

\begin{lm} \label{lm: unhandl adsatz für trf intervall}
Sei $[a,b]$ ein nichttriviales kompaktes Intervall in $\real$ und seien $A(t)$, $\sigma(t)$ und $P(t)$ wie in Satz~\ref{thm: unhandl adsatz mit sl} mit $I$ überall ersetzt durch $[a,b]$. Dann gilt
\begin{align*}
\sup_{t \in [a,b]} \norm{ U_{a,T}(t,a) - U_T(t,a) } = O\Bigl( \frac{1}{T} \Bigr) \quad (T \to \infty),
\end{align*}
wobei $U_T$, $U_{a,T}$ die (nach Satz~\ref{thm: Kato} und Lemma~\ref{lm: trf. der zeitentw.} wirklich existierende) Zeitentwicklung zu $T A$ bzw. $T A + [P', P]$ bezeichnet.
\end{lm}

\begin{proof}
Sei $\tilde{A}(t) := (b-a) \, A(a+t(b-a))$, $\tilde{\sigma}(t) := (b-a) \, \sigma(a+t(b-a))$ und $\tilde{P}(t) := P(a+t(b-a))$ für alle $t \in I$. Sei weiter 
\begin{align*}
\tilde{U}_T(t,s) := U_T(a+t(b-a), a+s(b-a))
\end{align*} 
und 
\begin{align*}
\tilde{U}_{a,T}(t,s) := U_{a,T}(a+t(b-a), a+s(b-a))
\end{align*}
für alle $(s,t) \in \Delta$. Dann ist nach Lemma~\ref{lm: trf. der zeitentw.} $\tilde{U}_T$ die Zeitentwicklung zu $\bigl( (b-a) \cdot T A(a+t(b-a))  \bigr)_{t \in I} = T \tilde{A}$ und $\tilde{U}_{a,T}$ die Zeitentwicklung zu $\Bigl( (b-a) \cdot \Bigl(  T A(a+t(b-a)) + \bigl[ P'(a+ t(b-a)), P(a+t(b-a)) \bigr]  \Bigr)  \Bigr)_{t \in I} = T \tilde{A} + [\tilde{P}', \tilde{P} ]$, also die adiabatische Zeitentwicklung zu $T \tilde{A}$ und $\tilde{P}$. 

Aus den getroffenen Voraussetzungen an $A(t)$, $\sigma(t)$ und $P(t)$ folgen sehr leicht die entsprechenden Voraussetzungen für $\tilde{A}(t)$, $\tilde{\sigma}(t)$ und $\tilde{P}(t)$. Wir führen das kurz aus.  
$\tilde{A}(t)$ erzeugt für jedes $t \in I$ eine stark stetige Halbgruppe auf $X$, $\tilde{A}$ ist $(M,0)$-stabil (Lemma~\ref{lm: skalierung und (M,w)-stabilität}) und $t \mapsto \tilde{A}(t)x$ ist stetig differenzierbar für alle $x \in D$. Weiter ist $\tilde{\sigma}(t)$ %= (b-a) \, \sigma(a+t(b-a))$
für jedes $t \in I$ eine kompakte in $(b-a) \, \sigma(A(a+t(b-a)) = \sigma(\tilde{A}(t))$ isolierte Untermenge von $\sigma(\tilde{A}(t))$ und zu jedem $t_0 \in I$ existiert ein Zykel $\tilde{\gamma}_{t_0}$, nämlich $\tilde{\gamma}_{t_0} := (b-a) \, \gamma_{a+t_0(b-a)}$ (wobei $\gamma_{a+t_0(b-a)}$ wie in der Voraussetzung von Satz~\ref{thm: unhandl adsatz mit sl} sei), und eine in $I$ offene Umgebung $\tilde{U}_{t_0}$ von $t_0$, nämlich $\tilde{U}_{t_0} := \frac{1}{b-a} \, \bigl( U_{a+t_0(b-a)} - a \bigr)$ (wobei auch $U_{a+t_0(b-a)}$ wie in der Voraussetzung von Satz~\ref{thm: unhandl adsatz mit sl} sei), sodass
\begin{align*}
\im \tilde{\gamma}_{t_0} = (b-a) \, \im \gamma_{a+t_0(b-a)} \subset (b-a)\, \rho \bigl( A(a+t(b-a)) \bigr) = \rho( \tilde{A}(t) ),
\end{align*}
\begin{align*}
n( \tilde{\gamma}_{t_0}, \tilde{\sigma}(t) ) &=  n\bigl(  (b-a) \, \gamma_{a+t_0(b-a)}, (b-a) \, \sigma(a+t(b-a))  \bigr) \\ 
&=  n\bigl( \gamma_{a+t_0(b-a)}, \sigma(a+t(b-a)) \bigr) = 1
\end{align*}
und 
\begin{align*}
n\bigl( \tilde{\gamma}_{t_0}, \sigma(\tilde{A}(t)) \setminus \tilde{\sigma}(t) \bigr) &=  n\Bigl(  (b-a)\, \gamma_{a+t_0(b-a)}, \,(b-a)\, \bigl( \sigma(\tilde{A}(t)) \setminus \sigma(a+t(b-a)) \bigr) \Bigr) \\
&=  n\Bigl( \gamma_{a+t_0(b-a)}, \, \sigma\bigl(A(a+t(b-a))\bigr) \setminus \sigma(a+t(b-a)) \Bigr) = 0
\end{align*}
für alle $t \in \tilde{U}_{t_0}$.
Schließlich ist deswegen 
\begin{align*}
\tilde{P}(t) &= P(a+t(b-a)) = \frac{1}{2 \pi i} \, \int_{ \gamma_{a+t(b-a)} } (z-A(a+t(b-a))^{-1} \, dz \\ 
&= \frac{1}{2 \pi i} \, \int_{ \tilde{\gamma}_{t} } (w-\tilde{A}(t))^{-1} \, dw
\end{align*}
für jedes $t \in I$ die Rieszprojektion von $\tilde{A}(t)$ auf $\tilde{\sigma}(t)$ und $t \mapsto \tilde{P}(t)x$ ist zweimal stetig differenzierbar für alle $x \in X$.

Also erfüllen $\tilde{A}(t)$, $\tilde{\sigma}(t)$ und $\tilde{P}(t)$ tatsächlich die Voraussetzungen von Satz~\ref{thm: unhandl adsatz mit sl}, und dieser liefert
\begin{align*}
&\sup_{t \in I} \norm{ U_{a,T}(a+t(b-a),a) - U_T(a+t(b-a),a) } \\
&\qquad \qquad \qquad \qquad  = \sup_{t \in I} \norm{ \tilde{U}_{a,T}(t) - \tilde{U}_T(t) } = O\Bigl( \frac{1}{T} \Bigr) \quad (T \to \infty),
\end{align*}
wie behauptet.
\end{proof}

Die dem folgenden Satz zugrundeliegende Idee ist sehr einfach und entwickelt eine Idee aus Katos Artikel~\cite{Kato 50} weiter.
Wir wollen sie kurz erklären.  

Sei zur Vereinfachung $t_1$ die \emph{einzige} Stelle in $I$, in der $\sigma(t)$ in $\sigma(A(t)) \setminus \sigma(t)$ hineinfällt, und diese liege im \emph{Innern} von $I$. Sei $\varepsilon > 0$ vorgegeben. Wir zerlegen dann $I$ in drei kleinere Intervalle: ein sehr kleines, mittleres Intervall $(t_{1 \delta}^-, t_{1 \delta}^+)$, das die kritische Stelle $t_1$ enthält, und die beiden Intervalle $[0, t_{1 \delta}^-]$ und $[t_{1 \delta}^+, 1]$ links und rechts davon. 

Die Idee ist nun: wir können die Differenz $U_{a,T}(t,t_{1 \delta}^-)x - U_T(t,t_{1 \delta}^-)x$ (für $x \in D$) wie üblich durch ein Integral ausdrücken mit den Integrationsgrenzen $t_{1 \delta}^-$ und $t$ und einem Integranden, der gleichmäßig in $T \in (0, \infty)$ abgeschätzt werden kann, und diese Differenz daher beliebig klein machen, indem wir das Intervall $[t_{1 \delta}^-, t_{1 \delta}^+]$ nur genügend klein machen und $t$ nur dieses Intervall durchlaufen lassen. Dann können wir den Adiabatensatz mit \emph{gleichmäßiger} Spektrallücke nach obigem Lemma auf die beiden äußeren Intervalle anwenden (schließlich enthalten diese die einzige kritische Stelle $t_1$ nicht), und erhalten, dass die Differenzen $U_{a,T}(t,0) - U_T(t,0)$ und $U_{a,T}(t,t_{1 \delta}^+) - U_T(t,t_{1 \delta}^+)$ für $t$ im linken bzw. rechten Teilintervall beliebig klein werden, wenn wir nur $T$ genügend groß machen. Schließlich müssen wir nur noch erkennen, dass wir aus den auf diese Weise abschätzbaren Ausdrücken die uns interessierende Differenz $U_{a,T}(t) - U_T(t)$ aufbauen können.

\begin{thm} \label{thm: unhandl adsatz mit nichtglm sl}
Sei $A(t)$ für jedes $t \in I$ eine lineare Abbildung $D \subset X \to X$, die eine stark stetige Halbgruppe auf $X$ erzeugt, sei $A$ $(M,0)$-stabil und $t \mapsto A(t)x$ stetig differenzierbar für alle $x \in D$. Sei $\sigma(t)$ für jedes $t \in I$ eine kompakte in $\sigma(A(t))$ isolierte Untermenge von $\sigma(A(t))$, es gebe nur endlich viele Stellen $t_1, \dots, t_m$, in denen $\sigma(t)$ in $\sigma(A(t)) \setminus \sigma(t)$ hineinfällt, und zu jedem $t_0 \in I \setminus \{t_1, \dots, t_m \}$ gebe es einen Zykel $\gamma_{t_0}$ und eine in $I$ offene Umgebung $U_{t_0}$, sodass $\im \gamma_{t_0} \subset \rho(A(t))$ und $n( \gamma_{t_0}, \sigma(t)) = 1$ und $n(\gamma_{t_0}, \sigma(A(t)) \setminus \sigma(t)) = 0$ für alle $t \in U_{t_0}$. Sei $P(t)$ für jedes $t \in I$ eine beschränkte Projektion in $X$, für jedes $t \in I \setminus \{t_1, \dots, t_m \}$ die Rieszprojektion von $A(t)$ auf $\sigma(t)$, und sei $t \mapsto P(t)x$ zweimal stetig differenzierbar für alle $x \in X$.
Dann gilt
\begin{align*}
\sup_{t \in I} \norm{ U_{a,T}(t) - U_T(t) } \longrightarrow 0 \quad (T \to \infty)
\end{align*}
und insbesondere
\begin{align*}
(1-P) U_T P(0) \longrightarrow 0 \quad (T \to \infty) \text{ \; gleichmäßig auf } I.
\end{align*}
\end{thm}

\begin{proof}
Wir zeigen die gleichmäßige Konvergenz $U_{a,T} - U_T \longrightarrow 0 \;\; (T \to \infty)$ mit Induktion über die Anzahl $m$ der Stellen, in denen $\sigma(t)$ in $\sigma(A(t)) \setminus \sigma(t)$ hineinfällt. 

Sei $m = 1$. Sei $\varepsilon > 0$ und 
\begin{align*}
\delta := \frac{\varepsilon}{2 M^2 c \, (M e^{Mc})^2},
\end{align*}
wobei $c$ eine Zahl sei mit $\norm{ [P'(t), P(t)] } \le c$ für alle $t \in I$. Sei weiter $t_{1 \delta}^- := \max \{ 0, t_1 - \delta \}$ und $t_{1 \delta}^+ := \min \{ t_1 + \delta, 1 \}$. 

Wir können %annehmen, dass $t_{1 \delta}^- \ne 0$, und 
Lemma~\ref{lm: unhandl adsatz für trf intervall} auf das Intervall $[0,t_{1 \delta}^-]$ anwenden (vorausgesetzt es ist nicht trivial), und dieses liefert ein  $T_{\delta}^- \in (0, \infty)$, sodass
\begin{align*}
\sup_{t \in [0,t_{1 \delta}^-]} \norm{U_{a,T}(t) - U_T(t) } \le \frac{\varepsilon}{(M e^{Mc})^2} \le \varepsilon
\end{align*}  
für alle $T \in (T_{\delta}^-, \infty)$. (Wenn das Intervall $[0,t_{1 \delta}^-]$ trivial ist, können wir das Lemma zwar nicht anwenden, aber die Abschätzung gilt wegen $U_T(0), U_{a,T}(0) = 1$ natürlich trotzdem.)

Sei nun $t \in [t_{1 \delta}^-, t_{1 \delta}^+]$. Dann gilt
\begin{align*}
U_{a,T}(t) - U_T(t) = U_{a,T}(t,t_{1 \delta}^-) \bigl( U_{a,T}(t_{1 \delta}^-) - U_T(t_{1 \delta}^-) \bigr) + \bigl( U_{a,T}(t,t_{1 \delta}^-) - U_T(t, t_{1 \delta}^-) \bigr) U_T(t_{1 \delta}^-)
\end{align*}
(Proposition~\ref{thm: char zeitentwicklung}) und für die beiden Ausdrücke in Klammern haben wir 
\begin{align*}
\norm{ U_{a,T}(t_{1 \delta}^-) - U_T(t_{1 \delta}^-) } \le \frac{\varepsilon}{(M e^{Mc})^2}
\end{align*}
für alle $T \in (T_{\delta}^-, \infty)$ sowie
\begin{align*}
\norm{ U_{a,T}(t,t_{1 \delta}^-) - U_T(t, t_{1 \delta}^-) } \le M \, c \, M e^{Mc} \, \cdot 2 \delta,
\end{align*}
da für $x \in D$
\begin{align*}
U_{a,T}(t,t_{1 \delta}^-)x - U_T(t, t_{1 \delta}^-)x &= U_T(t, \tau) U_{a,T}(\tau, t_{1 \delta}^-) x \big|_{\tau=t_{1 \delta}^-}^{\tau=t} \\
&= \int_{t_{1 \delta}^-}^t U_T(t, \tau) [P'(\tau), P(\tau)] U_{a,T}(\tau, t_{1 \delta}^-)x \, d\tau
\end{align*}  
gilt und $A$ $(M,0)$-stabil ist. Also haben wir
\begin{align*}
\norm{ U_{a,T}(t) - U_T(t) } \le M e^{Mc}\, \frac{\varepsilon}{(M e^{Mc})^2} + 2 M^2 c \, M e^{Mc} \, \delta   \le 2 \varepsilon
\end{align*}
für alle $T \in (T_{\delta}^-, \infty)$.

Schließlich können wir Lemma~\ref{lm: unhandl adsatz für trf intervall} auf das Intervall $[t_{1 \delta}^+, 1]$ anwenden (wieder sofern es nicht trivial ist), und erhalten ein $T_{\delta}^+ \in (0, \infty)$, sodass
\begin{align*}
\sup_{t \in [t_{1 \delta}^+, 1]} \norm{ U_{a,T}(t,t_{1 \delta}^+) - U_T(t,t_{1 \delta}^+) } \le \frac{\varepsilon}{M}
\end{align*} 
für alle $T \in (T_{\delta}^+, \infty)$.
Da
\begin{align*}
U_{a,T}(t) - U_T(t) = U_{a,T}(t,t_{1 \delta}^+) \bigl( U_{a,T}(t_{1 \delta}^+) - U_T(t_{1 \delta}^+) \bigr) + \bigl( U_{a,T}(t,t_{1 \delta}^+) - U_T(t, t_{1 \delta}^+) \bigr) U_T(t_{1 \delta}^+)
\end{align*}
für alle $t \in [t_{1 \delta}^+, 1]$, ergibt sich damit und mit der Abschätzung für das mittlere Intervall $[t_{1 \delta}^-, t_{1 \delta}^+]$, dass
\begin{align*}
\sup_{t \in [t_{1 \delta}^+, 1]} \norm{ U_{a,T}(t) - U_T(t) } \le M e^{Mc} \Bigl( \frac{\varepsilon}{M e^{Mc}} + 2 M^2 c \, M e^{Mc} \, \delta \Bigr) + \frac{\varepsilon}{M} \, M     \le 3 \varepsilon
\end{align*}
für alle $T \in (T_{\delta}, \infty)$, wobei $T_{\delta} := \max \{ T_{\delta}^-, T_{\delta}^+ \}$.
Also gilt insgesamt
\begin{align*} 
\sup_{t \in I} \norm{U_{a,T}(t) - U_T(t) } \le 3 \varepsilon
\end{align*}
für alle $T \in (T_{\delta}, \infty)$. Das beweist die Behauptung im Fall $m =1$.
\\

Sei nun $m \in \natu$ und die behauptete Aussage sei wahr für dieses $m$. Wir müssen zeigen, dass sie dann auch für $m+1$ zutrifft.
Sei $\varepsilon > 0$ und wiederum
\begin{align*}
\delta := \frac{\varepsilon}{2 M^2 c \, (M e^{Mc})^2}
\end{align*}
($c$ wie oben). Sei weiter $t_{m+1 \delta}^- := \max \{ t_m, t_{m+1} - \delta \}$ und $t_{m+1 \delta}^+ := \min \{ t_{m+1} + \delta, 1 \}$. Da im Intervall $[0, t_{m+1 \delta}^-]$ nur noch $m$ Stellen liegen, in denen $\sigma(t)$ in $\sigma(A(t))  \setminus \sigma(t)$ hineinfällt, können wir die Induktionsvoraussetzung anwenden, und zwar auf $\tilde{A}(t) := t_{m+1 \delta}^- \, A(t \cdot t_{m+1 \delta}^-)$, $\tilde{\sigma}(t) := t_{m+1 \delta}^- \, \sigma( t \cdot t_{m+1 \delta}^-)$ und $\tilde{P}(t) := P(t \cdot t_{m+1 \delta}^-)$, $t \in I$ (vgl. den Beweis von von Lemma~\ref{lm: unhandl adsatz für trf intervall}). Wir erhalten auf diese Weise ein $T_{\delta}^- \in (0, \infty)$, sodass
\begin{align*}
\sup_{t \in [0,t_{m+1 \delta}^-]} \norm{ U_{a,T}(t) - U_T(t) } = \sup_{t \in I} \norm{  \tilde{U}_{a,T}(t) - \tilde{U}_T(t) } \le \frac{\varepsilon}{(M e^{Mc})^2} \le \varepsilon
\end{align*} 
für alle $T \in (T_{\delta}^-, \infty)$, wobei $\tilde{U}_T$, $\tilde{U}_{a,T}$ natürlich die Zeitentwicklung zu $T \tilde{A}$ bzw. die adiabatische Zeitentwicklung zu $T \tilde{A}$ und $\tilde{P}$ bezeichnet.

Weiter ergibt sich (und zwar genau wie im Induktionsanfang), dass
\begin{align*}
\sup_{t \in [t_{m+1 \delta}^-, t_{m+1 \delta}^+]} \norm{ U_{a,T}(t) - U_T(t) } \le M e^{Mc}\, \frac{\varepsilon}{(M e^{Mc})^2} + 2 M^2 c \, M e^{Mc} \, \delta   \le 2 \varepsilon
\end{align*}
für alle $T \in (T_{\delta}^-, \infty)$.

Wenden wir schließlich noch einmal Lemma~\ref{lm: unhandl adsatz für trf intervall} auf das Intervall $[t_{m+1 \delta}^+,1]$ an, so bekommen wir ein $T_{\delta}^+ \in (0, \infty)$, sodass
\begin{align*}
\sup_{t \in [t_{m+1 \delta}^-, t_{m+1 \delta}^+]} \norm{ U_{a,T}(t) - U_T(t) } \le  M e^{Mc} \Bigl( \frac{\varepsilon}{M e^{Mc}} + 2 M^2 c \, M e^{Mc} \, \delta \Bigr) + \frac{\varepsilon}{M} \, M     \le 3 \varepsilon
\end{align*}
für alle $T \in (T_{\delta}, \infty)$, wobei $T_{\delta} := \max \{ T_{\delta}^-, T_{\delta}^+ \}$.

Also gilt insgesamt wieder
\begin{align*}
\sup_{t \in I} \norm{U_{a,T}(t) - U_T(t) } \le 3 \varepsilon,
\end{align*}
für alle $T \in (T_{\delta}, \infty)$, was belegt, dass die behauptete Aussage auch für $m+1$ stimmt.
\\

Zum Schluss müssen wir uns noch davon überzeugen, dass $P(t) A(t) \subset A(t) P(t)$ für \emph{alle} $t \in I$, %(und nicht nur für $t \in I \setminus \{t_1, \dots, t_m \}$)
um Satz~\ref{thm: intertwining relation} anwenden zu können und 
\begin{align*}
(1-P) U_T P(0) \longrightarrow 0 \quad (T \to \infty) \text{ \; gleichmäßig auf } I
\end{align*}
aus dem eben Bewiesenen folgern zu können.

Für $t \in I \setminus \{t_1, \dots, t_m \}$ ist die Vertauschbarkeit von $A(t)$ und $P(t)$ klar, weil $P(t)$ für diese $t$ die Rieszprojektion von $A(t)$ auf $\sigma(t)$ ist.
Sei also $t = t_i$ für ein $i \in \{1, \dots, m \}$ und sei $x \in D$. Dann existiert eine Folge $(t_{i n})$ in $I \setminus \{t_1, \dots, t_m \}$ mit $t_{i n} \longrightarrow t_i \; \; (n \to \infty)$. Wegen der (starken) Stetigkeit von $P$ erhalten wir $P(t_{i n})x \longrightarrow P(t_i)x \; \; (n \to \infty)$. Außerdem gilt $P(t_{i n})x \in D$ für alle $n \in \natu$, da ja $P(t_{i n}) A(t_{i n}) \subset A(t_{i n}) P(t_{i n})$, und es gilt
\begin{align*}
A(t_i)P(t_{i n})x &= A(t_{i n}) P(t_{i n})x + \bigl( A(t_i) - A(t_{i n}) \bigr) P(t_{i n})x \\
&= P(t_{i n}) A(t_{i n})x + \Bigl( \bigl(A(t_i)-1\bigr)\bigl(A(t_{i n})-1\bigr)^{-1} - 1 \Bigr) P(t_{i n}) \bigl(A(t_{i n})-1\bigr)x \\
&\longrightarrow P(t_i) A(t_i)x \quad (n \to \infty),
\end{align*}
denn 
\begin{align*}
\norm{ \bigl(A(t_i)-1\bigr)\bigl(A(t_{i n})-1\bigr)^{-1} - 1  } \le c \, |t_i - t_{i n}|
\end{align*}
für alle $n \in \natu$ (nach dem Beweis von Satz~\ref{thm: Kato}) und $P(t_{i n}) A(t_{i n})x \longrightarrow P(t_i) A(t_i)x \; \; (n \to \infty)$ nach Lemma~\ref{lm: strong db of products}. Also folgt wegen der Abgeschlossenheit von $A(t_i)$, dass $P(t_i)x \in D$ und $A(t_i)P(t_i)x = P(t_i) A(t_i)x$, wie gewünscht.
\end{proof}

Wie im vorigen Abschnitt (Beweis von Satz~\ref{thm: handl adsatz mit sl}) erhalten wir eine schwächere, dafür aber auch etwas handlichere Version des obigen Satzes, indem wir die etwas sperrige Voraussetzung über die Zykel $\gamma_{t_0}$ und die Umgebungen $U_{t_0}$ ersetzen durch die kurze Bedingung
\begin{align*}
t \mapsto \sigma(t) \text{\, ist stetig in allen Stellen } t_0 \in I \setminus \{t_1, \dots, t_m \}.
\end{align*}

\subsection{Wie weit reichen die Sätze? Beispiele}

Wir beginnen mit einem sehr einfachen Beispiel, in dem $X$ endlichdimensional ist.

\begin{ex}  \label{ex: adsatz mit sl, X endlichdim}
Sei $X := \ell^2(I_3)$. Sei 
\begin{align*}
A_0 := \begin{pmatrix}
0         & 0     & 0         \\
0    & \lambda_2  & 1     \\
0     & 0					& \lambda_2      
\end{pmatrix}, \;
R(t):= \begin{pmatrix}
\cos t  & \sin t & 0    \\
-\sin t & \cos t & 0     \\
0       & 0      & 1   
\end{pmatrix},
\end{align*}
$\sigma(t) := \{ 0 \}$ für alle $t \in I$ und $P_0$ die orthogonale Projektion auf $\spn \{e_1\}$. Sei 
\begin{align*}
A(t) := R(t)^* A_0 R(t) \text{ \; und \; } P(t):= R(t)^* P_0 R(t)
\end{align*}
für alle $t \in I$. Dann erfüllen $A$, $\sigma$, $P$ nach Proposition~\ref{prop: gemeinsamkeiten der bsp} die Voraussetzungen von Satz~\ref{thm: handl adsatz mit sl}, denn $A_0$ erzeugt als direkte Summe der Kontraktionshalbgruppenerzeuger 
\begin{align*}
0 \text{ (auf } \spn\{e_1\}) \text{ \; und \; } 
\begin{pmatrix} 
\lambda_2  & 1     \\
0					& \lambda_2
\end{pmatrix}
\text{ (auf } \spn\{e_2,e_3\})
\end{align*}
selbst eine Kontraktionshalbgruppe auf $X$ und $P_0$ ist (wie man sofort nachrechnet oder mithilfe von Proposition~\ref{prop: rieszproj eind} erschließt) die Rieszprojektion von $A_0$ auf $\{0\}$ und daher ist
\begin{align*}
P(t) &= R(t)^* \Bigl( \frac{1}{2 \pi i} \, \int_{\partial U_{\frac{1}{2}}(0)} (z-A_0)^{-1} \, dz \Bigr) R(t) \\
&= \frac{1}{2 \pi i} \, \int_{\partial U_{\frac{1}{2}}(0)} \bigl( z- R(t)^* A_0 R(t) \bigr)^{-1} \, dz
\end{align*}   
die Rieszprojektion von $A(t)$ auf $\sigma(t)$ für alle $t \in I$. 

Also ist die Aussage von Satz~\ref{thm: handl adsatz mit sl} erfüllt, aber nicht schon trivialerweise: erstens ist $t \mapsto P(t)$ nicht konstant, weil
\begin{align*}
t \mapsto P(t)X = \spn\{ R(t)^* e_1 \}
\end{align*}
nicht konstant ist, und zweitens ist $A$ für keine negative Zahl $\omega$ $(M, \omega)$-stabil, weil dazu $\sigma(A(t)) = \{0, \lambda_2\}$ in der offenen linken Halbebene $\{ z \in \complex: \Re z < 0 \}$ enthalten sein müsste. Wir erhalten die Aussage des Adiabatensatzes also nicht schon aus den trivialen Adiabatensätzen (Satz~\ref{thm: triv adsatz 1} und Satz~\ref{thm: triv adsatz 2}) des vorigen Abschnitts. $\blacktriangleleft$
\end{ex}

Jetzt ein Beispiel, in dem $\sigma(t) = \{\lambda(t)\}$ mit Spektralwerten $\lambda(t)$, die keine Eigenwerte sind. Insbesondere zeigt dieses Beispiel, dass Satz~\ref{thm: handl adsatz mit sl} und erst recht Satz~\ref{thm: unhandl adsatz mit sl} echt allgemeiner ist als der Adiabatensatz (Theorem~2.2) aus Abou Salems Artikel~\cite{Abou 07}.

\begin{ex}  \label{ex: adsatz mit sl, lambda spektralwert aber kein eigenwert}
Sei $X := \ell^2(I_{\infty})$. Sei 
\begin{align*}
A_0 := \begin{pmatrix} 
0       & 0             &         &         &         &  \\
0       &-1             & 0             &         &         &  \\
0       & \frac{1}{2!}  & -1            & 0             &         &   \\
        & 0             & \frac{1}{3!}  & -1            & 0             & \\
        &               & 0             & \frac{1}{4!}  & -1            & \ddots \\
        &               &               & \ddots        & \ddots        & \ddots
\end{pmatrix}, \; 
R(t) := \begin{pmatrix}
\cos t  & \sin t & 0        & \cdots\\
-\sin t & \cos t & 0        &      \\
0       & 0      & 1        & \ddots\\
\vdots  &         & \ddots   & \ddots
\end{pmatrix},
\end{align*}
$\sigma(t) := \{-1\}$ für alle $t \in I$ und $P_0$ die orthogonale Projektion auf $\overline{\spn}\{e_2, e_3, \dots \}$. Sei 
\begin{align*}
A(t) := R(t)^* A_0 R(t) \text{ \; und \; } P(t):= R(t)^* P_0 R(t)
\end{align*}
für alle $t \in I$. 

Wir sehen sofort, dass $-1$ kein Eigenwert von $A_0$ ist, wohl aber ein Spektralwert, da $A_0 - (-1)$ nicht surjekiv ist. Wir zeigen nun, dass $\sigma(A_0) = \{0,-1\}$, woraus hervorgeht, dass $-1$ ein isolierter Spektralwert von $A_0$ und damit auch von jedem $A(t)$ ist.

Zunächst gilt $A_0 = A_1 \oplus A_2$, wobei $A_1 := 0$ auf $\spn\{e_1\}$ und $A_2 := -1 + B$ auf $\overline{\spn}\{e_2, e_3, \dots \}$ und
\begin{align*}
B := \begin{pmatrix}
0               & 0          &                  &      & \\
\frac{1}{2!}  & 0            & 0             &         &   \\
0               & \frac{1}{3!}  & 0           & 0             & \\
                & 0             & \frac{1}{4!}  & 0            & \ddots \\
                &               & \ddots        & \ddots        & \ddots
\end{pmatrix}.
\end{align*}
Wegen
\begin{align*}
B^n \, x &= \Bigl( 0, \dots, 0, \, \frac{x_1}{(n+1)! \dotsm 2!} \, , \frac{x_2}{(n+2)! \dotsm 3!} \, , \frac{x_3}{(n+3)! \dotsm 4!} \,, \dots \Bigr) 
%&= \sum_{k=1}^{\infty} \frac{x_k}{(n+k)! \dotsm (1+k)!} \, e_{n+k}
\end{align*}
haben wir
\begin{align*}
\norm{B^n \, x}^2 = \sum_{k=1}^{\infty} \biggl( \frac{1}{(n+k)! \dotsm (1+k)!} \biggr)^2 \, |x_k|^2 \le \Bigl( \frac{1}{n!} \Bigr)^2 \, \norm{x}^2
\end{align*}
für alle $x \in X$ und alle $n \in \natu$. Also gilt
\begin{align*}
\norm{B^n} \le \frac{1}{n!}
\end{align*}
für alle $n \in \natu$ und damit gilt 
\begin{align*}
r_B = \lim_{n \to \infty} \norm{B^n}^{\frac{1}{n}} = 0
\end{align*}
für den Spektralradius $r_B$ von $B$. Daraus folgt, dass $\sigma(B) = \{0\}$ und daher 
\begin{align*}
\sigma(A_0) = \sigma(A_1) \cup \sigma(A_2) = \{0\} \cup (-1 + \sigma(B)) = \{0,-1\}
\end{align*}
nach Proposition~\ref{prop: zerl. des spektrums}, 
wie gewünscht.

Schließlich erzeugt $A_0$ eine Kontraktionshalbgruppe (was man wie im Beweis von Proposition~\ref{prop: eigenschaften der lambda(d)} mithilfe des Satzes von Lumer, Phillips einsehen kann) und $P_0$ ist nach Proposition~\ref{prop: rieszproj eind} die Rieszprojektion von $A_0$ auf $\{-1\}$. 

Anhand von Proposition~\ref{prop: gemeinsamkeiten der bsp} sieht man nun, dass alle Voraussetzungen von Satz~\ref{thm: handl adsatz mit sl} erfüllt sind, und zwar wieder nicht schon trivialerweise. $\blacktriangleleft$
\end{ex}

Im nächsten Beispiel ist $\sigma(t)$ nicht mehr einpunktig.

\begin{ex}  \label{ex: adsatz mit sl, sigma nicht einpunktig}
Sei $X := \ell^2(I_{\infty})$. Sei
\begin{align*}
A_0 := \begin{pmatrix} 
i       & 0         & 0              &                &                  &             &       \\
0       & -i        & 0             & 0               &                 &               & \\
        & 0         &-1             & 1             & 0               &               &     \\
        &           & 0             & -1            & 1             & 0               &  \\
			  & 			    &               & 0             & -1            & 1             & \ddots \\
			  & 			    & 			        &               & 0             & -1            & \ddots \\
			  & 			    & 			        & 				       &              & \ddots        & \ddots
\end{pmatrix}, \;
R(t):= \begin{pmatrix}
1          & 0       & 0         & 0        & \cdots \\
0          &\cos t   & \sin t    & 0        & \cdots\\
0          & -\sin t & \cos t    & 0        &        \\
\vdots     & 0       & 0         & 1        & \ddots\\
\vdots     & \vdots &            & \ddots   & \ddots
\end{pmatrix},
\end{align*}
$\sigma(t):= \overline{U}_1(-1)$ für alle $t \in I$ und $P_0$ die orthogonale Projektion auf $\overline{\spn}\{e_3, e_4, \dots \}$. Sei 
\begin{align*}
A(t) := R(t)^* A_0 R(t) \text{ \; und \; } P(t):= R(t)^* P_0 R(t)
\end{align*}
für alle $t \in I$. 

Zunächst haben wir die Zerlegung $A_0 = A_1 \oplus A_2$ in 
\begin{align*}
A_1 := \begin{pmatrix}
i        & 0          \\
0        & i
\end{pmatrix}
\text{ \; und \; }
A_2 := -1 +  B := \begin{pmatrix}
-1             & 1             & 0               &               &     \\
0             & -1            & 1             & 0               &  \\
      & 0             & -1            & 1             & \ddots \\
			&               & 0             & -1            & \ddots \\
      & 				       &              & \ddots        & \ddots
\end{pmatrix}
\end{align*}
auf $\spn\{e_1, e_2\}$ bzw. $\overline{\spn} \{ e_3, e_4, \dots \}$, woraus ersichtlich wird, dass $A_0$ eine Kontraktionshalbgruppe auf $X$ erzeugt (beachte, dass $A_2 = A_{\infty}(-1)$, und s. den Beweis von Proposition~\ref{prop: eigenschaften der lambda(d)}).

Außerdem sehen wir, dass 
\begin{align*}
\sigma(A_0) &= \sigma(A_1) \cup \sigma(A_2) = \{-i, i \} \cup (-1 + \sigma(B)) = \{-i,i\} \cup \bigl( -1 + \overline{U}_1(0) \bigr) \\
&= \{-i,i\} \cup \overline{U}_1(-1)
\end{align*}
nach Proposition~\ref{prop: zerl. des spektrums} und Beispiel~\ref{ex: spektrum der shifts}, $\sigma(t)$ ist also gleichmäßig isoliert in $\sigma(A(t))$.

Schließlich folgt mithilfe von Proposition~\ref{prop: rieszproj eind}, dass $P_0$ die Rieszprojektion von $A_0$ auf $\overline{U}_1(-1)$ ist, und damit
$P(t)$ die Rieszprojektion von $A(t)$ auf $\sigma(t)$. 

Aus Proposition~\ref{prop: gemeinsamkeiten der bsp} entnehmen wir nun, dass alle Voraussetzungen von Satz~\ref{thm: handl adsatz mit sl} erfüllt sind, dessen Aussage also auch -- aber wiederum nicht schon trivialerweise -- erfüllt ist. $\blacktriangleleft$
\end{ex}

Schließlich ein Beispiel, in dem die Spektrallücke nicht gleichmäßig ist: dort fällt ein (nichthalbeinfacher) Eigenwert an endlich vielen Stellen in das übrige Spektrum hinein. Dieses Beispiel zeigt zugleich, dass Satz~\ref{thm: unhandl adsatz mit nichtglm sl} echt allgemeiner ist als der Satz in Katos Arbeit~\cite{Kato 50}, wo die $A(t)$ als schiefselbstadjungiert vorausgesetzt sind und damit jeder Eigenwert halbeinfach ist.

\begin{ex}  \label{ex: adsatz mit nichtglm sl, endl viele überschneidungen}
Sei $X := \ell^2(I_3)$. Sei $\lambda$ eine stetig differenzierbare Abbildung $I \to (-\infty, \lambda_2]$, sodass $\lambda(t) = \lambda_2$ für endlich viele $t \in I$, sei
\begin{align*}
A_0(t) := \begin{pmatrix}
\lambda_2         & 0     & 0         \\
0    & \lambda(t)  & 1     \\
0     & 0					& \lambda(t)      
\end{pmatrix}, \;
R(t):= \begin{pmatrix}
\cos t  & \sin t & 0    \\
-\sin t & \cos t & 0     \\
0       & 0      & 1   
\end{pmatrix},
\end{align*}
$\sigma(t) := \{ \lambda(t) \}$ für alle $t \in I$ und $P_0$ die orthogonale Projektion auf $\spn \{ e_2, e_3 \}$. Sei 
\begin{align*}
A(t) := R(t)^* A_0(t) R(t) \text{ \; und \; } P(t):= R(t)^* P_0 R(t)
\end{align*}
für alle $t \in I$.

Dann ist $\sigma(t)$ für jedes $t \in I$ isoliert in $\{ \lambda_2, \lambda(t) \} = \sigma(A(t))$, die Abbildung $t \mapsto \sigma(t)$ ist stetig und $\sigma(t)$ fällt nur an endlich vielen Stellen, nämlich genau an den Schnittstellen von $\lambda$ und $\lambda_2$, in $\sigma(A(t)) \setminus \sigma(t)$ hinein. Weiter ist $P(t)$ für alle anderen Stellen die Rieszprojektion von $A(t)$ auf $\sigma(t)$ (nach Proposition~\ref{prop: rieszproj eind}). 

Wir sehen damit (und mit Proposition~\ref{prop: gemeinsamkeiten der bsp}), dass alle Voraussetzungen von Satz~\ref{thm: unhandl adsatz mit nichtglm sl} erfüllt sind. 
Die Aussage dieses Satzes gilt allerdings nicht schon ganz trivialerweise, denn $t \mapsto P(t)$ ist nicht konstant und $A(t)$ erzeugt für jedes $t \in I$ nur gerade noch eine Kontraktionshalbgruppe. $\blacktriangleleft$
\end{ex}

Wir zeigen mit dem folgenden Beispiel, dass die Voraussetzung der $(M,0)$-Stabilität für Satz~\ref{thm: unhandl adsatz mit sl}, Satz~\ref{thm: handl adsatz mit sl} und Satz~\ref{thm: unhandl adsatz mit nichtglm sl} entscheidend ist: wenn $A$ nicht $(M,0)$-stabil (sogar wenn $A$ nicht $(1,0)$-stabil) ist, braucht die Aussage dieser Sätze nicht zu gelten. Zu vergleichen ist dieser Sachverhalt mit Satz~\ref{thm: Kato}: für die Existenz der Zeitentwicklung zu $T A$ genügt nach diesem Satz auch $(M, \omega)$-Stabilität für irgendein $\omega \in \real$.

\begin{ex}  \label{ex: (M,0)-stabilität wesentlich in den adsätzen mit sl}
Sei $X := \ell^2(I_2)$. Sei $\lambda$ eine stetig differenzierbare Abbildung $I \to [0, \infty)$, die an keiner Stelle bzw. nur an endlich vielen Stellen $0$ wird, sei  
\begin{align*}
A_0(t) := \begin{pmatrix}
\lambda(t)   & 0      \\
0         & 0 
\end{pmatrix}, \;
R(t) := \begin{pmatrix}
\cos (2\pi t) & \sin (2\pi t) \\
-\sin (2\pi t) & \cos (2\pi t)
\end{pmatrix},
\end{align*}
$\sigma(t) := \{\lambda(t)\}$ für alle $t \in I$ und $P_0$ die orthogonale Projektion auf $\spn\{e_1\}$. Sei weiter wie üblich
\begin{align*}
A(t) := R(t)^* A_0(t) R(t) \text{ \; und \; } P(t):= R(t)^* P_0 R(t)
\end{align*}
für alle $t \in I$.

Dann sind alle Voraussetzungen von Satz~\ref{thm: handl adsatz mit sl} (erst recht die von Satz~\ref{thm: unhandl adsatz mit sl}) bzw. von Satz~\ref{thm: unhandl adsatz mit nichtglm sl} erfüllt mit der einzigen Ausnahme, dass $A$ hier nicht $(M,0)$-stabil ist, weil sonst $\sigma(A(t))$ in $\{ z \in \complex: \Re z \le 0 \}$ enthalten wäre für alle $t \in I$. $A$ ist hier nur $(1, \omega)$-stabil für $\omega := \sup_{t \in I} \lambda(t)$.

Und tatsächlich geht hier die Aussage des Adiabatensatzes schief, wie wir nun zeigen werden. 
Zunächst gilt
\begin{align*}
A(t) =  \lambda(t) \begin{pmatrix} 
\cos^2 (2\pi t) & \cos (2\pi t) \sin (2\pi t) \\
\cos (2\pi t) \sin (2\pi t) & \sin^2 (2\pi t)
\end{pmatrix}
\end{align*}
für alle $t \in I$ und weiter gilt für die Zeitentwicklung $U_T$ zu $T A$ nach Satz~\ref{thm: Dyson} 
\begin{align*}
U_T(t)x = x + T \, \int_0^t A(&t_1)x \,dt_1 +  T^2 \, \int_0^t \int_s^{t_1} A(t_1)A(t_2)x \, dt_2 \, dt_1 + \dotsb
\end{align*}
für alle $x \in X$.

Wir sehen daraus, dass die linearen Abbildungen $A(t)$ und $U_T(t)$ positiv sind für alle $t \in [0,\frac{1}{4}]$ und alle $T \in (0, \infty)$, das heißt, sie überführen $\{x \in \ell^2(I_2, \real): x \ge 0 \}$ (den positiven Kegel des Banachverbands $\ell^2(I_2,\real)$ (s. etwa Bemerkung~X.4.9 in~\cite{AmannEscher})) in sich. 
%evtl. einordnen in Banachverbandstheorie.

Außerdem ist $P(t)$ für jedes $t \in I$ die orthogonale Projektion auf $\spn\{R(t)^* e_1 \}$ und damit $1-P(t)$ die orthogonale Projektion auf 
\begin{align*}
\spn\{R(t)^* e_2 \} = \left\{   \begin{pmatrix} -\sin(2\pi t) \\ \cos(2\pi t) \end{pmatrix}   \right\}.
\end{align*}

Sei nun $t_0 := \frac{1}{4}$. Dann gilt $1-P(t_0) = P_0$ und daher
\begin{align*}
(1-P(t_0))U_T (t_0)P(0)e_1 = P_0 \, U_T(t_0) e_1 = \scprd{e_1, U_T(t_0) e_1} e_1
\end{align*} 
für alle $T \in (0, \infty)$. Wegen der Positivität von $U_T(t_0)$ und $A(t)$ für $t \in [0, \frac{1}{4}]$ folgt, dass
\begin{align*}
&\norm{ (1-P(t_0))U_T (t_0)P(0)e_1 } \\
& \qquad \quad = |\scprd{e_1, U_T(t_0) e_1}| = \scprd{e_1, U_T(t_0) e_1} \\
& \qquad \quad = 1 + T \int_0^{t_0} \scprd{e_1, A(t_1)e_1} dt_1 + T^2 \int_0^{t_0} \int_0^{t_1} \scprd{e_1, A(t_1)A(t_2)e_1} dt_2 dt_1 + \dotsb \\
& \qquad \quad \ge 1 + T \int_0^{t_0} \scprd{e_1, A(t_1)e_1} dt_1 = 1 + T \int_0^{t_0} \lambda(t) \cos^2(2\pi t_1) dt_1,
\end{align*} 
was nicht gegen $0$ konvergiert für $T$ gegen $\infty$. Also ist die Aussage des Adiabatensatzes hier nicht erfüllt. $\blacktriangleleft$
\end{ex}

Im folgenden Beispiel sind die $A(t)$ Multiplikationsoperatoren $M_{f_t}$ auf $L^2(\real, \complex)$. Die beiden Voraussetzungen von Satz~\ref{thm: handl adsatz mit sl}, dass $t \mapsto A(t)g$ stetig differenzierbar ist und dass $t \mapsto P(t)g$ zweimal stetig differenzierbar ist für alle $g \in L^2(\real, \complex)$, sind hier nicht erfüllt und auch die Aussage dieses Satzes fällt durch.

\begin{ex}  \label{ex: A(t)= multop, mit sl}
Sei $X := L^2(\real, \complex)$. Sei 
\begin{align*}
f_t := f_0(\, . \, + t) \text{ \; und \; } f_0 := i \, \chi_{[-1,1]},
\end{align*}
sei $A(t) := M_{f_t}$ auf $X$, $\sigma(t) := \{0\}$ und $P(t)$ die Rieszprojektion von $A(t)$ auf $\sigma(t)$ für alle $t \in I$ (beachte, dass $\sigma(t)$ wirklich isoliert ist in $\{ 0, i \} = \sigma(A(t))$. 

Dann ist $A(t)$ für jedes $t \in I$ beschränkt und schiefselbstadjungiert und auch die übrigen Voraussetzungen von Satz~\ref{thm: handl adsatz mit sl} (insbesondere die von Satz~\ref{thm: unhandl adsatz mit sl}) sind erfüllt bis auf zwei: zum einen ist $t \mapsto A(t)g$ nur stetig für alle $g \in X$, aber nicht sogar stetig differenzierbar, und zum andern $t \mapsto P(t)g$ ist nur stetig für alle $g \in X$, aber nicht sogar zweimal stetig differenzierbar (beispielsweise nach Lemma~\ref{lm: multiplikationsop mit char fkt nur dann stark db nach t wenn schon konst}), denn
\begin{align*}
\bigl( P(t)g \bigr)(x) &= \Bigl(   \frac{1}{2 \pi i} \, \int_{ \partial U_{\frac{1}{2}}(0) } (z-A(t))^{-1} \, dz   \Bigr)(x)
= \frac{1}{2 \pi i} \, \int_{ \partial U_{\frac{1}{2}}(0) }  \frac{1}{z-f_t(x)} \, dz \, g(x) \\
&= \chi_{ \{f_t = 0 \} }(x) \, g(x) 
\end{align*}
für alle $g \in X$ und alle $x \in \real$, und $t \mapsto \{f_t = 0 \}$ ist nicht konstant.

Wir zeigen nun, dass dies die Aussage von Satz~\ref{thm: handl adsatz mit sl} (genauer die speziellere Aussage, dass $(1-P) U_T P(0)$ und $P U_T (1-P(0))$ beide für $T$ gegen $\infty$ gleichmäßig auf $I$ gegen $0$ konvergieren) schon zerstört.
Die $A(t)$ vertauschen offensichtlich paarweise, weshalb für die Zeitentwicklung $U_T$ zu $T A$ nach Korollar~\ref{cor: Dyson} gilt:
\begin{align*}
U_T(t) = e^{T \int_0^t A(\tau) \, d\tau}
\end{align*}
für alle $t \in I$ und alle $T \in (0, \infty)$.
Wir haben daher wegen $f_t(\real) \subset i \, \real$, dass
\begin{align*}
&\norm{  (1-P(t))U_T(t)P(0)g - P(t)U_T(t)(1-P(0))g }^2 \\
& \qquad = \int \Big| \bigl( (1-\chi_{E_t}) \, e^{ T \int_0^t f_{\tau}\,d\tau } \, \chi_{E_0} \, g \bigr)(x) - \bigl( \chi_{E_t} \, e^{T \int_0^t f_{\tau}\,d\tau } \, (1-\chi_{E_0}) \, g \bigr)(x) \Big|^2 \, dx \\
& \qquad = \int \Big| \bigl( (1-\chi_{E_t}) \,  \chi_{E_0} \, g \bigr)(x) - \bigl( \chi_{E_t} \,  (1-\chi_{E_0}) \, g \bigr)(x) \Big|^2 \, dx 
= \norm{ P(t)g - P(0)g }^2
\end{align*}
für alle $T \in (0, \infty)$, alle $t \in I$ und alle $g \in X$. Weil nun die rechte Seite nicht von $T$ abhängt, kann die linke Seite nur dann für alle $g \in X$ und alle $t \in I$ gegen $0$ konvergieren für $T$ gegen $\infty$, wenn 
\begin{align*}
P(t)g = P(0)g 
\end{align*} 
für alle $t \in I$ und alle $g \in X$, mit anderen Worten: wenn $t \mapsto P(t)$ konstant ist. Da nun $P$ nach Wahl aber eben nicht konstant ist, können $(1-P) U_T P(0)$ und $P U_T (1-P(0))$ nicht beide gleichmäßig auf $I$ gegen $0$ konvergieren für $T$ gegen $\infty$, wie gewünscht. $\blacktriangleleft$
\end{ex}

Wir weisen darauf hin, dass allgemein die Voraussetzungen von Satz~\ref{thm: handl adsatz mit sl} und Satz~\ref{thm: unhandl adsatz mit sl} für $A(t) = M_{f_t}$ nur erfüllt sein können, wenn $t \mapsto P(t)$ konstant ist (wenn wir also auch schon den trivialen Adiabatensatz, Satz~\ref{thm: triv adsatz 1}, anwenden können). Sei nämlich $X := L^2(X_0,\complex)$ für einen Maßraum $(X_0, \mathcal{A}, \mu)$, sei $A(t) = M_{f_t}$ auf $X$ für messbare Abbildungen $f_t$, für die $\Re f_t(x) \le 0$ für fast alle $x \in X$, sei $\sigma(t)$ eine kompakte in $\sigma(A(t))$ isolierte Untermenge von $\sigma(A(t))$, sodass die Zykelbedingung aus Satz~\ref{thm: unhandl adsatz mit sl} erfüllt ist, und sei $P(t)$ die Rieszprojektion von $A(t)$ auf $\sigma(t)$. Dann gilt
\begin{align*}
\bigl( P(t) g \bigr)(x) &= \Bigl(   \frac{1}{2 \pi i} \, \int_{\gamma_{t_0}} (z-A(t))^{-1}\, g \, dz  \Bigr)(x) = \frac{1}{2 \pi i} \, \int_{\gamma_{t_0}} \frac{1}{z-f_t(x)} \, dz \, g(x) \\
&= \chi_{E_t}(x) \, g(x)
\end{align*}
für alle $t \in U_{t_0}$, alle $g \in X$ und alle $x \in X_0$, wobei $E_t := \{ f_t \in \sigma(t) \}$. Wenn also die Rieszprojektionen $P(t)$ nicht konstant sind, können sie nach Lemma~\ref{lm: multiplikationsop mit char fkt nur dann stark db nach t wenn schon konst} nicht stark differenzierbar, erst recht nicht zweimal stark stetig differenzierbar von $t$ abhängen, wie das in Satz~\ref{thm: unhandl adsatz mit sl} verlangt wird.

Aber %nicht nur die Voraussetzungen dieses Satzes sind höchstens trivialerweise erfüllt: 
auch die Aussage dieses Satzes scheint -- wenn überhaupt -- nur sehr selten nichttrivialerweise %(soll hier heißen: $P$ nicht konstant und existiert kein $\omega \in (-\infty, 0)$, sodass $f_t(X_0) \subset \{ \Re z \le \omega \}$) 
erfüllt zu sein. Wenigstens wenn $A(t) = M_{f_t}$ beschränkt ist, $f_t(X_0) \subset i \, \real$ für alle $t \in I$ und $t \mapsto A(t)g$ stetig ist für alle $g \in X$, können $(1-P) U_T P(0)$ und $P U_T (1-P(0))$ nur dann beide gleichmäßig auf $I$ gegen $0$ konvergieren, wenn $t \mapsto P(t)$ schon konstant ist (wie man auf dieselbe Weise wie im obigen Beispiel einsieht).
\\

Das nächste Beispiel zeigt, dass die Aussage von Satz~\ref{thm: unhandl adsatz mit nichtglm sl} nicht zu gelten braucht, wenn die in diesem Satz getroffene Voraussetzung, dass $t \mapsto P(t)$ zweimal stark stetig differenzierbar ist, als einzige verletzt ist. Auch dann nicht, wenn die $A(t)$ schiefselbstadjungiert sind und $t \mapsto A(t)$ sogar analytisch ist.%, genauer: wenn es keine beschränkten Projektionen $P(t)$ gibt, die bis auf die endlich vielen Stellen, in denen $\sigma(t)$ in $\sigma(A(t))$ hineinfällt, mit der Rieszprojektion von $A(t)$ auf $\sigma(t)$ übereinstimmen und zweimal stark stetig differenzierbar von $t$ abhängen.

\begin{ex}  \label{ex: reg von P wesentlich im adsatz mit nichtglm sl}
Sei $X := \ell^2(I_2)$. Sei $\lambda_1(t), \lambda_2(t) := \mp \bigl(t- \frac{1}{2} \bigr)^2$,
\begin{align*}
A(t):= \begin{pmatrix}
i \lambda_1(t) & 0 \\
0 & i \lambda_2(t)
\end{pmatrix}
\end{align*} 
und sei $\sigma(t) := \{ \lambda(t) \}$ für alle $t \in I$, wobei
\begin{align*}
\lambda(t) := \begin{cases} i \lambda_1(t), & t \in [0, \frac{1}{2}) \\
													  i \lambda_2(t), & t \in [\frac{1}{2}, 1]
							\end{cases}.						  
\end{align*}
Dann ist $A(t)$ für jedes $t \in I$ schiefselbstadjungiert und $t \mapsto A(t)$ ist analytisch (insbesondere einmal stark stetig differenzierbar). Ferner ist $\sigma(t)$ für jedes $t \in I$ isoliert in $\sigma(A(t))$, $\sigma(t)$ fällt an \emph{einer} Stelle, nämlich bei $t = \frac{1}{2}$, in $\sigma(A(t)) \setminus \sigma(t)$ hinein und $t \mapsto \sigma(t)$ ist stetig, weil $t \mapsto \lambda(t)$ stetig ist.
Aber für die Rieszprojektionen $P_0(t)$ von $A(t)$ auf $\sigma(t)$ gilt
\begin{align*}
P_0(t) = \begin{cases} P_1, & t \in [0, \frac{1}{2}) \\
										 1, & t = \frac{1}{2} \\
										 P_2, & t \in (\frac{1}{2}, 1]
			 \end{cases},
\end{align*}
wobei $P_1$, $P_2$ die orthogonale Projektion auf $\spn\{e_1\}$ bzw. $\spn\{e_2\}$ bezeichnet. Und daraus folgt, dass keine beschränkten Projektionen $P(t)$ existieren, die stetig von $t$ abhängen und außer in $t = \frac{1}{2}$ mit $P_0(t)$ übereinstimmen, kurz: $P_0 \big|_{I \setminus \{ \frac{1}{2} \} }$ ist nicht stetig (insbesondere nicht zweimal stark stetig differenzierbar) fortsetzbar.

Also sind zwar die Voraussetzungen an $A$ und $\sigma$ von Satz~\ref{thm: unhandl adsatz mit nichtglm sl} erfüllt aber nich die an $P$. 
Wie wir nun zeigen werden, ist hier auch die Aussage von Satz~\ref{thm: unhandl adsatz mit nichtglm sl} nicht erfüllt.
Sei nämlich $P(t)$ für jedes $t \in I$ eine beschränkte Projektion in $X$ und für jedes $t \in I \setminus \{ \frac{1}{2} \}$ gleich der Rieszprojektion $P_0(t)$ von $A(t)$ auf $\sigma(t)$. 
Dann gilt 
\begin{align*}
U_T(t) = e^{T  \int_0^t A(\tau) \, d\tau } = \begin{pmatrix} e^{i T \int_0^t \lambda_1(\tau)\, d\tau}  & 0 \\ 0  &  e^{i T \int_0^t \lambda_2(\tau)\, d\tau} \end{pmatrix}
\end{align*}
für alle $T \in (0, \infty)$ und alle $t \in I$ (Korollar~\ref{cor: Dyson}), und $1-P(t) = P_1 = P(0)$ für alle $t \in (\frac{1}{2}, 1]$. Wir haben also, dass
\begin{align*}
\norm{(1-P(t)) U_T(t) P(0)} = \norm{ e^{i T \int_0^t \lambda_1(\tau)\, d\tau} \, P_1} = 1 
\end{align*}
für alle $t \in (\frac{1}{2}, 1]$, das heißt, die Aussage des Adiabatensatzes ist hier tatsächlich nicht erfüllt. $\blacktriangleleft$
\end{ex}

Auch im folgenden klassischen Beispiel Rellichs (Beispiel~II.3.5 in~\cite{Kato: Perturbation 80}) sind die Voraussetzungen von Satz~\ref{thm: unhandl adsatz mit nichtglm sl} nicht erfüllt, weil die Rieszprojektionen nicht stetig in eine Überschneidungsstelle hinein fortgesetzt werden können. 
Allerdings wissen wir hier (anders als im obigen Beispiel) nicht,  
ob auch die Aussage dieses Satzes nicht erfüllt ist oder ob sie trotzdem erfüllt ist. Die Schwierigkeit besteht darin, dass es wohl keinen einfachen Ausdruck für die (als Dysonreihe gegebene) Zeitentwicklung gibt -- jedenfalls ist Korollar~\ref{cor: Dyson} nicht anwendbar.

\begin{ex} \label{ex: rieszproj muss nicht stetig fb sein (Rellich)}
Sei $X := \ell^2(I_2)$. Sei
\begin{align*}
A(0):= 0 \text{ \; und \; } 
A(t) := i e^{-\frac{1}{t^2}} \, \begin{pmatrix} \cos \frac{2}{t}  &  \sin \frac{2}{t} \\
																								\sin \frac{2}{t}  & -\cos \frac{2}{t}
																\end{pmatrix}
\end{align*}
für alle $t \in (0,1]$, 
\begin{align*}
\lambda(t) := \begin{cases} 0, & t = 0 \\
														i e^{-\frac{1}{t^2}},& t \in (0,1]
							\end{cases}
\end{align*}
$\sigma(t) := \{ \lambda(t) \}$ und $P_0(t)$ die Rieszprojektion von $A(t)$ auf $\sigma(t)$. 

Dann ist $A(t)$ für jedes $t \in I$ schiefselbstadjungiert, $t \mapsto A(t)$ ist stetig differenzierbar (sogar beliebig oft differenzierbar), $\sigma(A(t)) = \{ \pm \lambda(t) \}$, $\lambda(t)$ ist also ein Eigenwert von $A(t)$ und $t \mapsto \lambda(t)$ ist stetig differenzierbar (sogar beliebig oft differenzierbar). Darüberhinaus ist 
\begin{align*}
P_0(0) =  1 \text{ \; und \; } 
P_0(t) = 	\begin{pmatrix} \cos^2 \frac{1}{t}    &  \cos \frac{1}{t} \sin \frac{1}{t} \\
													\cos \frac{1}{t} \sin \frac{1}{t}   &  \sin^2 \frac{1}{t}
					\end{pmatrix}
\end{align*}
für alle $t \in (0,1]$, woraus hervorgeht, dass keine stetige Abbildung $t \mapsto P(t)$ existiert, die mit Ausnahme von endlich vielen Stellen mit $P_0$ übereinstimmt. Die Voraussetzungen von Satz~\ref{thm: unhandl adsatz mit nichtglm sl} sind hier also tatsächlich nicht erfüllbar. $\blacktriangleleft$
\end{ex}

Wir schließen mit einem Beispiel, das zeigt, dass die Aussage des Adiabatensatzes auch dann gelten kann, wenn die Projektion $P(t)$ für kein $t \in I$ mit $A(t)$ vertauscht. Insbesondere müssen die Projektionen $P(t)$ nicht die Rieszprojektionen von $A(t)$ auf $\sigma(t)$ sein, damit die Aussage von Satz~\ref{thm: unhandl adsatz mit sl} gelten kann. 

\begin{ex}  \label{ex: adsatz auch für P(t)A(t) ne A(t)P(t)}
Sei $X := \ell^2(I_3)$. Sei
\begin{align*}
A_0 := \begin{pmatrix}
\lambda_2  & 1           & 0 \\
0           & \lambda_2  & 0  \\
0           & 0           & \lambda_2
\end{pmatrix}, \;
R(t):= \begin{pmatrix}
1  & 0		 	& 0 \\
0  & \cos t & \sin t \\
0  & -\sin t  & \cos t
\end{pmatrix},
\end{align*}
$\sigma(t) := \{ \lambda_2 \}$ und $P_0$ die orhtogonale Projektion auf $\spn\{e_1, e_3\}$. Sei 
\begin{align*}
A(t) := R(t)^* A_0(t) R(t) \text{ \; und \; } P(t):= R(t)^* P_0 R(t)
\end{align*}
für alle $t \in I$.

Dann gilt
\begin{align*}
A(t) = \begin{pmatrix}
\lambda_2  & \cos t           & \sin t \\
0           & \lambda_2  & 0  \\
0           & 0           & \lambda_2
\end{pmatrix}
\end{align*}
für alle $t \in I$, woraus hervorgeht, dass die $A(t)$ paarweise vertauschen. Wir bekommen also mithilfe von Korollar~\ref{cor: Dyson}
\begin{align*}
U_T(t) &= e^{T \int_0^t A(\tau) \, d\tau } %= e^{T \lambda_2 t} \, e^{    T \begin{pmatrix} 0 & \sin t & 1-\cos t \\ & 0 & 0 \\ &  & 0 \end{pmatrix}    } \\
= e^{\lambda_2 T t} \, \begin{pmatrix} 1 & T \sin t & T (1-\cos t) \\ & 1 & 0 \\ &  & 1 \end{pmatrix}
\end{align*}
für alle $T \in (0, \infty)$ und alle $t \in I$.
Weiter ist $P(0)$ die orthogonale Projektion auf $\spn\{e_1, e_3 \}$ und $1-P(t)$ ist die orthogonale Projektion auf 
\begin{align*}
\spn \{ R(t)^* e_2 \} = \spn \left\{ \begin{pmatrix} 0 \\ \cos t \\ \sin t \end{pmatrix}   \right\}
\end{align*}
für alle $t \in I$.
Damit erhalten wir erstens
\begin{align*}
(1-P(t)) U_T(t) P(0) e_1 = \scprd{  \begin{pmatrix} 0 \\ \cos t \\ \sin t \end{pmatrix}, U_T(t) e_1  } \, \begin{pmatrix} 0 \\ \cos t \\ \sin t \end{pmatrix} = 0
\end{align*}
für alle $t \in I$ und alle $T \in (0, \infty)$, und zweitens
\begin{align*}
(1-P(t)) U_T(t) P(0) e_3 = \scprd{  \begin{pmatrix} 0 \\ \cos t \\ \sin t \end{pmatrix}, U_T(t) e_3  } \, \begin{pmatrix} 0 \\ \cos t \\ \sin t \end{pmatrix}
= \sin t \, \, e^{\lambda_2 T t} \, \begin{pmatrix} 0 \\ \cos t \\ \sin t \end{pmatrix}
\end{align*}
für alle $t \in I$ und alle $T \in (0, \infty)$, woraus sich wegen der Beschränktheit von $t \mapsto \frac{\sin t}{t}$ und der Negativität von $\lambda_2$ ergibt, dass
\begin{align*}
\sup_{t \in I} \norm{  (1-P(t)) U_T(t) P(0)  } = O\Bigl( \frac{1}{T} \Bigr) \quad (T \to \infty),
\end{align*}
also die Aussage des Adiabatensatzes. $\blacktriangleleft$
\end{ex}

%%%%%%%%%%%%%%%%%%%%%%%%%%%%%%%%%%%%%%%%%%%%%%%%%%%%%%%%%%%%%%%%%%%%%%%%%%%%%%%%%%%%%%%%%%%%%%%%%%%%%%%%%%%%%%%%%%%%%%%%%%%%%%%%%%%%%%%%%%%%%%%%%%%%%%

\section{Adiabatensätze ohne Spektrallückenbedingung} \label{sect: adsätze ohne sl}

In diesem Abschnitt beweisen wir Adiabatensätze, die ohne Spektrallückenbedingung auskommen, und zwar zunächst solche, in denen nur ausgesagt wird, dass $U_T$ überhaupt beinahe adiabatisch ist bzgl. $P$ (qualitative Adiabatensätze ohne Spektrallückenbedingung), und anschließend solche, in denen auch gesagt wird, wie adiabatisch bzw. diabatisch $U_T$ bzgl. $P$ ist (quantitative Adiabatensätze ohne Spektrallückenbedingung). 

Anders als in den Adiabatensätzen mit Spektrallückenbedingung des vorigen Abschnitts müssen die $\sigma(t)$ hier einpunktig sein: $\sigma(t) = \{ \lambda(t) \}$ für alle $t \in I$, und die Spektralwerte $\lambda(t)$ müssen Eigenwerte sein, die immerhin noch nach einer Seite hin isoliert sind in $\sigma(A(t))$: $\lambda(t) + \varepsilon e^{i \vartheta_0} \in \rho(A(t))$ für alle $\varepsilon \in (0, \varepsilon_0]$ und alle $t \in I$. Schließlich brauchen wir noch Abschätzungen an $\norm{ \bigl( \lambda(t) + \varepsilon e^{i \vartheta_0} - A(t) \bigr)^{-1}  }$.

\subsection{Qualitative Adiabatensätze ohne Spektrallückenbedingung}

Wir beginnen mit einigen vorbereitenden Aussagen. Das nachfolgende Lemma ist der entscheidende Schlüssel für Satz~\ref{thm: allg adsatz ohne sl} und Satz~\ref{thm: adsatz ohne sl für normale A(t)}. 

\begin{lm} \label{lm: zentrales lm für adsatz ohne sl}
Sei $A$ eine abgeschlossene lineare Abbildung $D \subset X \to X$ und sei $\lambda \in \complex$ mit 
\begin{equation*}
\lambda + \varepsilon e^{i \vartheta_0} \in \rho(A) \quad \text{und} \quad \norm{(\lambda + \varepsilon e^{i \vartheta_0} - A)^{-1}} \le \frac{M_0}{\varepsilon} \quad \text{für alle} \; \varepsilon \in (0, \varepsilon_0],
\end{equation*}
wobei $\varepsilon_0 \in (0, \infty)$ und $\vartheta_0 \in \real$.
%A automatisch abgeschlossen, weil \rho(A) nach Voraussetzung nicht leer
Dann gilt
\begin{equation*}
\varepsilon (\lambda + \varepsilon e^{i \vartheta_0} - A)^{-1}y \longrightarrow 0 \quad (\varepsilon \searrow 0)
\end{equation*}
für alle $y \in \overline{\im(A-\lambda)}$.
\end{lm}

\begin{proof}
Sei $y \in \im(A-\lambda)$, dann $y = (A-\lambda)x$ für ein $x \in D$ und 
\begin{equation*} \begin{split}
\varepsilon (\lambda + \varepsilon e^{i \vartheta_0} - A)^{-1}y &= \varepsilon (\lambda + \varepsilon e^{i \vartheta_0} - A)^{-1}(A-\lambda)x \\
&= -\varepsilon x + \varepsilon \left( \varepsilon e^{i \vartheta_0}  (\lambda + \varepsilon e^{i \vartheta_0} - A)^{-1} \right) x \longrightarrow 0 \quad (\varepsilon \searrow 0).
\end{split} \end{equation*}
Weil nun $\norm{\varepsilon (\lambda + \varepsilon e^{i \vartheta_0} - A)^{-1}} \le M_0$ für alle $\varepsilon \in (0, \varepsilon_0]$, gilt 
\begin{equation*}
\varepsilon (\lambda + \varepsilon e^{i \vartheta_0} - A)^{-1}y \longrightarrow 0 \quad (\varepsilon \searrow 0)
\end{equation*}
auch für alle $y \in \overline{\im(A-\lambda)}$.
\end{proof}

Das nächste ziemlich offensichtliche Lemma erlaubt es unter gewissen Voraussetzungen von der Konvergenz in der starken Operatortopologie auf Konvergenz bzgl. der Normoperatortopologie zu schließen. 

\begin{lm} \label{lm: aus starker konv mach normkonv}
Seien $A_n$, $A$ und $B$ beschränkte lineare Abbildungen in $X$ für alle $n \in \natu$, sodass $A_n \longrightarrow A \;\;(n \to \infty)$ bzgl. der starken Operatortopologie von $X$, und sei $\rk B < \infty$. Dann gilt $A_n B \longrightarrow A B \;\;(n \to \infty)$ bzgl. der Normoperatortopologie. 
\end{lm}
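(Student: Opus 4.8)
Die Aussage ist: aus $A_n \to A$ in der starken Operatortopologie und $\rk B < \infty$ folgt $A_n B \to A B$ in der Normoperatortopologie. Der Plan ist, auszunutzen, dass das Bild $\im B$ endlichdimensional ist, und dort starke Konvergenz automatisch gleichm\"a\ss ig (auf der Einheitskugel von $\im B$) wird.

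Zun\"achst w\"urde ich bemerken, dass es gen\"ugt, den Fall $A = 0$ zu behandeln, indem man $A_n$ durch $A_n - A$ ersetzt; diese Differenz ist wieder beschr\"ankt und konvergiert stark gegen $0$. Sei nun $V := \im B$, ein endlichdimensionaler Unterraum von $X$, etwa mit Basis $v_1, \dots, v_k$. F\"ur jedes $x \in X$ l\"asst sich $Bx$ als Linearkombination $\sum_{j=1}^k c_j(x) v_j$ schreiben, und da $B$ beschr\"ankt ist und die Koordinatenfunktionale auf dem endlichdimensionalen Raum $V$ stetig sind, gibt es eine Konstante $C$ mit $|c_j(x)| \le C \norm{x}$ f\"ur alle $x$ und alle $j$. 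Daraus ergibt sich die Absch\"atzung
\begin{align*}
\norm{(A_n - A) B x} = \norm{\sum_{j=1}^k c_j(x) (A_n - A) v_j} \le C \norm{x} \sum_{j=1}^k \norm{(A_n - A) v_j}.
\end{align*}

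Da $A_n \to A$ stark gilt, konvergiert $\norm{(A_n - A) v_j} \to 0$ f\"ur jedes feste $j \in \{1, \dots, k\}$, also konvergiert auch die endliche Summe $\sum_{j=1}^k \norm{(A_n-A)v_j}$ gegen $0$ f\"ur $n \to \infty$. Bildet man in obiger Ungleichung das Supremum \"uber alle $x$ mit $\norm{x} \le 1$, so folgt
\begin{align*}
\norm{(A_n - A) B} = \sup_{\norm{x} \le 1} \norm{(A_n - A) B x} \le C \sum_{j=1}^k \norm{(A_n - A) v_j} \longrightarrow 0 \quad (n \to \infty),
\end{align*}
also $A_n B \to A B$ in der Normoperatortopologie, wie behauptet. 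Eine echte H\"urde sehe ich hier nicht; der einzige Punkt, auf den man achten muss, ist die gleichm\"a\ss ige Schranke $|c_j(x)| \le C\norm{x}$ f\"ur die Koordinaten -- sie folgt aber sofort daraus, dass $x \mapsto Bx \in V$ beschr\"ankt ist und auf dem endlichdimensionalen Raum $V$ alle Normen \"aquivalent sind, insbesondere die Koordinatennorm bez\"uglich $v_1, \dots, v_k$ zur gegebenen Norm. Alternativ kann man die Konstante $C$ direkt als $\norm{B}$ mal der Norm der (stetigen) Koordinatenprojektion auf $V$ ansetzen. Der Satz von Banach--Steinhaus wird hier nicht ben\"otigt, da keine gleichm\"a\ss ige Schranke f\"ur $\norm{A_n}$ verlangt wird -- die endliche Summe erledigt alles.
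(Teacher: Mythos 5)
Dein Beweis ist korrekt und verwendet im Wesentlichen dieselbe Idee wie der Beweis der Arbeit: eine Basis des endlichdimensionalen Bildes $BX$ w\"ahlen, $Bx$ in Koordinaten mit gleichm\"a\ss ig beschr\"ankten Koordinatenfunktionalen entwickeln und die endliche Summe $\sum_j \norm{(A_n-A)v_j} \to 0$ ausnutzen. Die Reduktion auf $A=0$ und die Verwendung der Dreiecksungleichung statt der quadratischen Absch\"atzung der Arbeit sind nur kosmetische Unterschiede.
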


\begin{proof}
Sei $\{ y_1, \dots, y_m \}$ eine Basis des (endlichdimensionalen!) Unterraums $BX$ von $X$ und sei $\kappa$ die lineare Abbildung $BX \to \complex^m$, die jedem Vektor $y \in BX$ seinen Koordinatenvektor bzgl. der Basis $\{ y_1, \dots, y_m \}$ zuordnet. Dann ist $\kappa$ beschränkt (als eine auf einem endlichdimensionalen normierten Raum definierte lineare Abbildung) und wir erhalten, da
\begin{align*}
A_n Bx - A Bx = \sum_{i=1}^m \kappa(Bx)^{i} \bigl(A_n - A \bigr)y_i 
\end{align*}
für alle $x \in X$, dass  
\begin{align*}
\norm{A_n B - A B}^2 &= \sup_{x  \in S_X} \norm{ A_n Bx - A Bx }^2 \\
&\le \norm{\kappa}^2 \norm{B}^2 \, \sum_{i=1}^m \norm{ \bigl(A_n - A \bigr)y_i }^2 \longrightarrow 0 \quad (n \to \infty),
\end{align*} 
wie gewünscht.
\end{proof}

Wir weisen darauf hin, dass wir unter denselben Voraussetzungen \emph{nicht} auf 
\begin{align*}
B A_n  \longrightarrow B A \quad (n \to \infty) \text{ \; bzgl. der Normoperatortopologie }
\end{align*} 
schließen können. 
Sei nämlich $X := \ell^2(\natu)$, $A_n := {A_0}^n$ für alle $n \in \natu$, $A := 0$ und $B$ die orthogonale Projektion auf $\spn \{e_1 \}$. $A_0$ bezeichnet den Shift nach links, das heißt,
\begin{align*}
A_0 (x_1, x_2, x_3, \dots ) = (x_2, x_3, \dots )
\end{align*}
für alle $x \in X$. Dann gilt zwar $A_n x \longrightarrow 0 = Ax \;\;(n \to \infty)$ für alle $x \in X$ und $\rk B < \infty$, aber $({A_n}^*B^*e_1) = (e_{n+1})$ konvergiert nicht, erst recht konvergiert $(B A_n)$ nicht bzgl. der Normoperatortopologie.

\begin{lm} \label{lm: abschätzung für abl der resolvente}
Sei $A(t)$ für jedes $t \in I$ eine abgeschlossene lineare Abbildung $D \subset X \to X$ und sei $t \mapsto A(t)x$ stetig differenzierbar für alle $x \in D$. Sei $\lambda(t) \in \complex$ mit $\lambda(t) + \varepsilon e^{i \vartheta_0} \in \rho(A(t))$ und 
\begin{align*}
\norm{(\lambda(t) + \varepsilon e^{i \vartheta_0} - A(t))^{-1}} \le \frac{M_0}{\varepsilon} 
\end{align*}
für alle $\varepsilon \in (0, \varepsilon_0]$ und alle $t \in I$ (wobei $\varepsilon_0$, $\vartheta_0$ sowie $M_0$ positive von $t$ unabhängige Zahlen bezeichnen), und sei $t \mapsto \lambda(t)$ stetig differenzierbar.
Dann ist $t \mapsto (\lambda(t) + \varepsilon e^{i \vartheta_0} - A(t))^{-1}x$ stetig differenzierbar und es gibt eine Zahl $M_0'$, sodass
\begin{align*}
\norm{ \ddt{\left( (\lambda(t) + \varepsilon e^{i \vartheta_0} - A(t))^{-1}x \right) } } \le \frac{M_0'}{\varepsilon^2} \norm{x} 
\end{align*}
für alle $t \in I$, $\varepsilon \in (0, \varepsilon_0]$ und alle $x \in X$.
\end{lm}

\begin{proof}
Aus Lemma~\ref{lm: reg of inv} folgt sofort, dass $t \mapsto \bigl( \lambda(t) + \varepsilon e^{i \vartheta_0} - A(t) \bigr)^{-1}x$ stetig differenzierbar ist und 
\begin{align*}
&\ddt{\left( (\lambda(t) + \varepsilon e^{i \vartheta_0} - A(t))^{-1}x \right) } \\
& \qquad = \bigl( \lambda(t) + \varepsilon e^{i \vartheta_0} - A(t) \bigr)^{-1}   \bigl( A'(t)(A(t)-1)^{-1} \bigr)  \, (A(t)-1) \bigl( \lambda(t) + \varepsilon e^{i \vartheta_0} - A(t) \bigr)^{-1} x \\
& \qquad \quad \; - \bigl( \lambda(t) + \varepsilon e^{i \vartheta_0} - A(t) \bigr)^{-1}   \, \lambda'(t) \,   \bigl( \lambda(t) + \varepsilon e^{i \vartheta_0} - A(t) \bigr)^{-1} x 
\end{align*}
für alle $x \in X$ und alle $t \in I$. Wegen $\sup_{t \in I} \norm{ A'(t)(A(t)-1)^{-1} } < \infty$ (Beweis von Lemma~\ref{lm: reg of inv}) folgt nun die gewünschte Abschätzung.
\end{proof}
 
Wir können nun unseren ersten Adiabatensatz ohne Spektrallückenbedingung beweisen, der einen entsprechenden Satz von Avron und Elgart (Theorem~1 bzw. Theorem~3 in~\cite{AvronElgart 99}) und einen daran anknüpfenden Satz von Teufel aus~\cite{Teufel 01} verallgemeinert: dort sind die $A(t)$ als schiefselbstadjungiert vorausgesetzt, unser Satz 
geht auch für lineare Abbildungen $A(t)$ (wie immer mit $t$-unabhängigem Definitionsbereich), die jeweils nur Erzeuger einer stark stetiger Halbgruppe auf einem Banachraum $X$ sind und für die $A$ $(M,0)$-stabil ist. 
Wir orientieren uns am Beweis in~\cite{Griesemer 07}.

\begin{thm} \label{thm: allg adsatz ohne sl}
Sei $A(t)$ für jedes $t \in I$ eine lineare Abbildung $D \subset X \to X$, die eine stark stetige Halbgruppe auf $X$ erzeugt, sei $A$ $(M,0)$-stabil und sei $t \mapsto A(t)x$ stetig differenzierbar für alle $x \in D$. 
Sei $\lambda(t)$ für jedes $t \in I$ ein Eigenwert von $A(t)$, seien $\varepsilon_0$, $\vartheta_0$ sowie $M_0$ positive Zahlen, sodass $\lambda(t) + \varepsilon e^{i \vartheta_0} \in \rho(A(t))$ und 
\begin{align*}
\norm{\bigl( \lambda(t) + \varepsilon e^{i \vartheta_0} - A(t) \bigr)^{-1} } \le \frac{M_0}{\varepsilon}
\end{align*}
für alle $\varepsilon \in (0, \varepsilon_0]$ und alle $t \in I$, und sei $t \mapsto \lambda(t)$ stetig differenzierbar.
Sei $P(t)$ für jedes $t \in I$ eine beschränkte Projektion in $X$ mit $P(t)A(t) \subset A(t)P(t)$, sodass 
\begin{align*}
P(t)X \subset \ker (A(t)-\lambda(t))
\end{align*} für alle $t \in I$ und \begin{align*}(1-P(t))X \subset \overline{\im(A(t)-\lambda(t))}
\end{align*}
für fast alle $t \in I$, sei $\rk P(0) < \infty$ oder $\rk (1-P(0)) < \infty$ und sei $t \mapsto P(t)$ stetig differenzierbar.
Dann gilt
\begin{align*}
(1-P) U_T P(0) \longrightarrow 0 \quad (T \to \infty) \text{ \; gleichmäßig auf } I.
\end{align*}
\end{thm}

\begin{proof}
Sei $V_T$ die nach Satz~\ref{thm: Dyson} existierende Zeitentwicklung zu $T \lambda + [P',P]$. (Wir können, wann immer diese existiert, auch mit der Zeitentwicklung zu $T A + [P',P]$ arbeiten, denn diese stimmt \emph{auf $P(0)X$} mit $V_T$ überein). Dann gilt nach demselben Satz und Lemma~\ref{lm: skalierung und (M,w)-stabilität} 
\begin{align*}
\norm{V_T(t,s)} \le e^{c (t-s)}
\end{align*}
für alle $(s,t) \in \Delta$ und alle $T \in (0, \infty)$ (wobei $c$ eine Zahl ist, für die $\norm{P(s)}$, $\norm{P'(s)}$ und $\norm{[P'(s),P(s)]} \le c$ für alle $s \in I$) und nach Satz~\ref{thm: intertwining relation} gilt
\begin{align*}
P(t)V_T(t) = V_T(t)P(0)
\end{align*}
für alle $t \in I$ und alle $T \in (0,\infty)$. 
Also haben wir
\begin{align*}
(1-P(t))U_T(t) P(0) = (1-P(t)) \bigl( U_T(t)-V_T(t) \bigr) P(0).
\end{align*}

Sei nun $x \in X$. Dann ist $[0,t] \ni s \mapsto U_T(t,s)V_T(s)P(0)x = U_T(t,s) \, P(s) \, V_T(s)x$ differenzierbar nach Proposition~\ref{thm: char zeitentwicklung} und Lemma~\ref{lm: strong db of products} (beachte, dass $P(s)X \subset D$ für alle $s \in I$ nach Voraussetzung) mit
\begin{align*}
&\dds{ U_T(t,s)V_T(s)P(0)x } \\
& \qquad \qquad = U_T(t,s) \bigl( T (\lambda(s)-A(s)) \bigr) V_T(s)P(0)x + U_T(t,s) \, [P'(s),P(s)] \, V_T(s) P(0)x \\
&\qquad \qquad = U_T(t,s) \, P'(s) \, V_T(s) P(0)x
\end{align*}
für alle $s \in [0,t]$ (benutze im zweiten Gleichheitszeichen die obige intertwining relation für $V_T$), woraus hervorgeht, dass $[0,t] \ni s \mapsto U_T(t,s)V_T(s)P(0)x$ sogar \emph{stetig} differenzierbar ist. Also gilt
\begin{align} \label{eq: allg adsatz ohne sl 1}
V_T(t)&P(0)x - U_T(t)P(0)x = U_T(t,s)V_T(s)P(0)x \big|_{s=0}^{s=t} = \int_0^t U_T(t,s) \, P'(s) \, V_T(s) P(0)x \,ds \notag \\
&= \int_0^t U_T(t,s) \, (\lambda(s)-A(s)) \, \bigl( \lambda(s) + \varepsilon e^{i \vartheta_0} - A(s) \bigr)^{-1} \, P'(s) \, V_T(s) P(0)x \, ds \notag \\
&\quad + \int_0^t U_T(t,s) \, \varepsilon e^{i \vartheta_0} \, \bigl( \lambda(s) + \varepsilon e^{i \vartheta_0} - A(s) \bigr)^{-1} \, P'(s) \, V_T(s) P(0)x \, ds
\end{align} 
für alle $\varepsilon \in (0, \varepsilon_0]$, alle $t \in I$ und alle $T \in (0, \infty)$.

Sei nun 
\begin{align*}
\tilde{U}_T(t,s) := U_T(t,s) \, e^{-T \int_s^t \lambda(\tau)\,d\tau} \text{ \; und \; } \tilde{V}_T(t,s) := V_T(t,s) \, e^{-T \int_s^t \lambda(\tau)\,d\tau}
\end{align*}
für alle $(s,t) \in \Delta$ und alle $T \in (0, \infty)$. 
Dann ist, wie man sofort sieht, $\tilde{U}_T$ die Zeitentwicklung zu $T (A-\lambda)$ und $\tilde{V}_T$ die Zeitentwicklung zu $[P',P]$.

Sei weiter 
\begin{align*}
Q_n(t) := \int_0^1 \gamma_{\frac{1}{n}} (t-r) \, P'(r) \, dr = \bigl( \gamma_{\frac{1}{n}} \ast (P' \chi_{[0,1]}) \bigr) (t) 
\end{align*} 
für alle $t \in I$ und alle $n \in \natu$, wobei $\gamma_{\frac{1}{n}} := n \, \gamma(n \, . \,)$ für eine Abbildung $\gamma \in C_c^1(\real, [0, \infty))$ mit $\supp \gamma \subset [-1,1]$ und $\int \gamma(r) \,dr = 1$.
Dann gilt $Q_n(t) \longrightarrow P'(t) \;\;(n \to \infty)$ für alle $t \in (0,1)$, 
\begin{align*}
\norm{Q_n(t)} \le \sup_{r \in I} \norm{P'(r)} \le c
\end{align*}
für alle $t \in I$ und alle $n \in \natu$ und die Abbildung $t \mapsto Q_n(t)$ ist (stark) stetig differenzierbar für alle $n \in \natu$.

Wir sehen nun, dass $[0,t] \ni s \mapsto \tilde{U}_T(t,s) \, \bigl( \lambda(s) + \varepsilon e^{i \vartheta_0} - A(s) \bigr)^{-1} \, Q_n(s) \, \tilde{V}_T(s) P(0)x$ stetig differenzierbar ist (Lemma~\ref{lm: reg of inv}) mit der Ableitung
\begin{align*}
s \mapsto & \; \tilde{U}_T(t,s) \, T \bigl(\lambda(s)-A(s)\bigr) \, \bigl( \lambda(s) + \varepsilon e^{i \vartheta_0} - A(s) \bigr)^{-1} \,Q_n(s) \, \tilde{V}_T(s) P(0)x \\
& + \tilde{U}_T(t,s) \Bigl( \dds{ \bigl( \lambda(s) + \varepsilon e^{i \vartheta_0} - A(s) \bigr)^{-1} } \Bigr) Q_n(s) \, \tilde{V}_T(s) P(0)x \\
& + \tilde{U}_T(t,s) \bigl( \lambda(s) + \varepsilon e^{i \vartheta_0} - A(s) \bigr)^{-1} \, \Bigl( Q_n'(s) + Q_n(s) \, [P'(s),P(s)] \Bigr) \tilde{V}_T(s) P(0)x
\end{align*}
und daher gilt (partielle Integration)
\begin{align*}
&\int_0^t U_T(t,s) \, (\lambda(s)-A(s)) \, \bigl( \lambda(s) + \varepsilon e^{i \vartheta_0} - A(s) \bigr)^{-1} \, Q_n(s) \, V_T(s) P(0)x \, ds \\
& \quad  = \frac{1}{T} \, e^{T \int_0^t \lambda(\tau)\,d\tau} \; \int_0^t  \tilde{U}_T(t,s) \, T \bigl(\lambda(s)-A(s)\bigr) \\
& \qquad \qquad \qquad \qquad \qquad \qquad \qquad \qquad \bigl( \lambda(s) + \varepsilon e^{i \vartheta_0} - A(s) \bigr)^{-1} \, Q_n(s) \, \tilde{V}_T(s) P(0)x \, ds \\
& \quad = \frac{1}{T} \, U_T(t,s) \bigl( \lambda(s) + \varepsilon e^{i \vartheta_0} - A(s) \bigr)^{-1} \, Q_n(s) \, V_T(s) P(0)x \big|_{s=0}^{s=t} \\
& \quad \quad - \frac{1}{T} \, \int_0^t  U_T(t,s) \Bigl( \dds{ \bigl( \lambda(s) + \varepsilon e^{i \vartheta_0} - A(s) \bigr)^{-1} } \Bigr) Q_n(s) \, V_T(s) P(0)x \,ds \\
& \quad \quad - \frac{1}{T} \, \int_0^t U_T(t,s) \bigl( \lambda(s) + \varepsilon e^{i \vartheta_0} - A(s) \bigr)^{-1} \\
& \qquad \qquad \qquad \qquad \qquad \qquad \qquad \Bigl( Q_n'(s) + Q_n(s) \, [P'(s),P(s)] \Bigr) V_T(s) P(0)x \,ds
\end{align*}
für alle $\varepsilon \in (0,\varepsilon_0]$, $n \in \natu$ und alle $t \in I$ sowie $T \in (0,\infty)$.

Schätzen wir nun die rechte Seite von \eqref{eq: allg adsatz ohne sl 1} ab!
Zunächst gilt
\begin{align} \label{eq: allg adsatz ohne sl 2}
&\norm{ \int_0^t U_T(t,s) \, (\lambda(s)-A(s)) \, \bigl( \lambda(s) + \varepsilon e^{i \vartheta_0} - A(s) \bigr)^{-1} \, \bigl(P'(s)-Q_n(s)\bigr) \, V_T(s) P(0)x \, ds } \notag \\
& \qquad \le \int_0^t M \, (1+M_0) \, \norm{ \bigl( P'(s)-Q_n(s) \bigr) P(s) } \, e^c  \, ds \, \norm{x}.
\end{align}
Weiter gilt nach der eben ausgeführten partiellen Integration
\begin{align} \label{eq: allg adsatz ohne sl 3}
&\norm{    \int_0^t U_T(t,s) \, (\lambda(s)-A(s)) \, \bigl( \lambda(s) + \varepsilon e^{i \vartheta_0} - A(s) \bigr)^{-1} \, Q_n(s) \, V_T(s) P(0)x \, ds   } \notag \\
& \qquad \qquad  \le \frac{2}{T} \, \Bigl( M \, \frac{M_0}{\varepsilon} \, c \, e^c \, c \Bigr) \norm{x} \notag \\
&\qquad \qquad \quad + \frac{1}{T} \, \int_0^t M \, \frac{M_0'}{\varepsilon^2} \, c \, e^c\, c \, ds \, \norm{x}  \notag \\
&\qquad \qquad \quad + \frac{1}{T} \, \int_0^t M \, \frac{M_0}{\varepsilon} \, \bigl(c_n + c^2 \bigr) \, e^c\, c \, ds \, \norm{x},
\end{align}
wobei wir Lemma~\ref{lm: abschätzung für abl der resolvente} benutzt haben und $c_n := \sup_{s \in I} \norm{Q_n'(s)}$ gesetzt haben.
Schließlich gilt wegen $P' P = (1-P) P' P$
\begin{align}  \label{eq: allg adsatz ohne sl 4}
& \norm{   \int_0^t U_T(t,s) \, \varepsilon e^{i \vartheta_0} \, \bigl( \lambda(s) + \varepsilon e^{i \vartheta_0} - A(s) \bigr)^{-1} \, P'(s) \, V_T(s) P(0)x \, ds   } \notag \\
& \qquad \qquad  \le \int_0^t M \, \norm{  \varepsilon  \, \bigl( \lambda(s) + \varepsilon e^{i \vartheta_0} - A(s) \bigr)^{-1} \, (1-P(s)) \, P'(s) P(s)   } \, e^c  \, ds \, \norm{x}.
\end{align}

Sei $\varepsilon_T := T^{-\frac{1}{3}}$ für alle $T \in [\varepsilon_0^3,\infty)$. Dann folgt mit den Abschätzungen \eqref{eq: allg adsatz ohne sl 2}, \eqref{eq: allg adsatz ohne sl 3}, \eqref{eq: allg adsatz ohne sl 4} aus \eqref{eq: allg adsatz ohne sl 1}, dass 
\begin{align}  \label{eq: allg adsatz ohne sl 5}
&\norm{V_T(t)P(0) - U_T(t)P(0)} \notag \\
&\qquad \qquad \le \int_0^1 M \, (1+M_0) \, \norm{ \bigl( P'(s)-Q_n(s) \bigr) P(s) } \, e^c  \, ds \notag \\
& \qquad \qquad \quad +  2 M  c^2\,e^c \, \frac{M_0}{ \varepsilon_T \, T}    +    M  c^2 \, e^c \, \frac{M_0'}{\varepsilon_T^2 \, T}    +    M  c \, e^c \, \bigl( c_n + c^2 \bigr) \, \frac{M_0}{\varepsilon_T \, T}  \notag \\      
& \qquad \qquad \quad + \int_0^1 M \, \norm{  \varepsilon_T  \, \bigl( \lambda(s) + \varepsilon_T e^{i \vartheta_0} - A(s) \bigr)^{-1} \, (1-P(s)) \, P'(s) P(s)   } \, e^c  \, ds
\end{align}
für alle $n \in \natu$, alle $t \in I$ und alle $T \in [\varepsilon_0^3,\infty)$.

Wir bemerken nun erstens, dass
\begin{align*}
\int_0^1 M \, (1+M_0) \, \norm{ \bigl( P'(s)-Q_n(s) \bigr) P(s) } \, e^c  \, ds \longrightarrow 0 \quad (n \to \infty)
\end{align*}
nach dem lebesgueschen Satz. Zweitens gilt ebenfalls nach dem lebesgueschen Satz
\begin{align*}
\int_0^1 M \, \norm{  \varepsilon  \, \bigl( \lambda(s) + \varepsilon e^{i \vartheta_0} - A(s) \bigr)^{-1} \, (1-P(s)) \, P'(s) P(s)   } \, e^c  \, ds \longrightarrow 0 \quad (\varepsilon \searrow 0),
\end{align*}
denn nach Lemma~\ref{lm: zentrales lm für adsatz ohne sl} gilt
\begin{align*}
\varepsilon  \, \bigl( \lambda(s) + \varepsilon e^{i \vartheta_0} - A(s) \bigr)^{-1} \, (1-P(s)) \, P'(s) P(s)   \longrightarrow 0 \quad (\varepsilon \searrow 0)
\end{align*}
bzgl. der \emph{starken} Operatortopologie für fast alle $s \in I$ und nach Lemma~\ref{lm: rk konst} gilt $\rk P(s) < \infty$ für alle $s \in I$ oder $\rk P(s) < \infty$ für alle $s \in I$, woraus dann wegen Lemma~\ref{lm: aus starker konv mach normkonv} folgt, dass sogar
\begin{align*}
\norm{  \varepsilon  \, \bigl( \lambda(s) + \varepsilon e^{i \vartheta_0} - A(s) \bigr)^{-1} \, (1-P(s)) \, P'(s) P(s)   }   \longrightarrow 0 \quad (\varepsilon \searrow 0)
\end{align*} 
für fast alle $s \in I$.

Aus \eqref{eq: allg adsatz ohne sl 5} ersehen wir nun, dass wir 
\begin{align*}
\sup_{t \in I} \norm{V_T(t)P(0) - U_T(t)P(0) } 
\end{align*}
beliebig klein machen können, indem wir zunächst eine so große natürliche Zahl $n_0$ wählen, dass
\begin{align*}
\int_0^1 M \, (1+M_0) \, \norm{ \bigl( P'(s)-Q_{n_0}(s) \bigr) P(s) } \, e^c  \, ds 
\end{align*}
klein genug ist, und dann $T_0 \in [\varepsilon_0^3,\infty)$ so groß wählen, dass 
\begin{align*}
2 M  c^2\,e^c \, \frac{M_0}{ \varepsilon_T \, T}    +    M  c^2 \, e^c \, \frac{M_0'}{{\varepsilon_T}^2 \, T}    +    M  c \, e^c \, \bigl( c_{n_0} + c^2 \bigr) \, \frac{M_0}{\varepsilon_T \, T} 
\end{align*}
sowie
\begin{align*}
\int_0^1 M \, \norm{  \varepsilon_T  \, \bigl( \lambda(s) + \varepsilon_T e^{i \vartheta_0} - A(s) \bigr)^{-1} \, (1-P(s)) \, P'(s) P(s)   } \, e^c  \, ds
\end{align*}
klein genug ist für alle $T \in [T_0, \infty)$. 
Wir haben also
\begin{align*}
\sup_{t \in I} \norm{V_T(t)P(0) - U_T(t)P(0) } \longrightarrow 0 \quad (T \to \infty)
\end{align*}
und damit auch $\sup_{t\in I} \norm{\bigl(1-P(t)\bigr)  U_T(t) P(0) }  \longrightarrow 0 \; \; (T \to \infty)$, wie behauptet.
\end{proof}

Wenn wir die Voraussetzung 
\begin{align*}
\rk P(0) < \infty \text{ \; oder \; } \rk (1-P(0)) < \infty
\end{align*}
des obigen Adiabatensatzes durch die Bedingung $\rk P(0) < \infty$ ersetzen, dann gilt die Aussage des Satzes auch dann noch, wenn wir $P$ nur als einmal \emph{stark} stetig differenzierbar voraussetzen. 

Dies können wir durch eine leichte Abwandlung der oben gegebenen Argumentation einsehen. Die Zeitentwicklung $V_T$ zu $T \lambda + [P',P]$ existiert nach wie vor und die oben benützten Eigenschaften bleiben erhalten, und auch die Definition der beschränkten linearen Abbildungen $Q_n(t)$ können wir übernehmen, wenn wir das definierende Integral als starkes Integral auffassen. 
Dann ist $t \mapsto Q_n(t)$ wieder stark stetig differenzierbar für alle $n \in \natu$. 

Allerdings haben wir jetzt nur noch, dass $Q_n(s) \longrightarrow P'(s) \;\;(n \to \infty)$ bzgl. der \emph{starken} Operatortopologie für alle $s \in (0,1)$, und auch die Existenz der Integrale auf den rechten Seiten von \eqref{eq: allg adsatz ohne sl 2} und \eqref{eq: allg adsatz ohne sl 4} ist zunächst fraglich -- wir wissen jetzt ja nur noch, dass $s \mapsto P'(s)$ stark stetig ist. Aber tatsächlich existieren die Integrale \emph{doch}: weil $\rk P(s) < \infty$ für alle $s \in I$ (nach unserer neuen Voraussetzung), folgt mit Lemma~\ref{lm: aus starker konv mach normkonv} und Lemma~\ref{lm: strong db and db}, dass die Integranden $s \mapsto \bigl( P'(s) - Q_n(s) \bigr) P(s)$ und $s \mapsto  \varepsilon  \, \bigl( \lambda(s) + \varepsilon e^{i \vartheta_0} - A(s) \bigr)^{-1} \, (1-P(s)) \, P'(s) P(s)$ sogar stetig bzgl. der Normoperatortopologie sind.

Außerdem folgt aus Lemma~\ref{lm: aus starker konv mach normkonv} wegen unserer neuen Voraussetzung auch $\bigl( Q_n(s) - P'(s) \bigr) P(s) \longrightarrow 0 \;\;(n \to \infty)$ für alle $s \in (0,1)$, weswegen wir immer noch
\begin{align*}
\int_0^1 M \, (1+M_0) \, \norm{ \bigl( P'(s)-Q_n(s) \bigr) P(s) } \, e^c  \, ds \longrightarrow 0 \quad (n \to \infty)
\end{align*}
haben.
 
Alle übrigen Argumente können wir wörtlich aus dem obigen Beweis übernehmen.
\\

Wir wollen jetzt untersuchen, wann gewisse Voraussetzungen des Satzes~\ref{thm: allg adsatz ohne sl} (automatisch) erfüllt sind. Zunächst zur stetigen Differenzierbarkeit der Eigenwertkurve $t \mapsto \lambda(t)$.

\begin{prop} \label{prop: lambda automatisch stetig db}
Sei $X$ ein Hilbertraum. Dann ergibt sich die Voraussetzung von Satz~\ref{thm: allg adsatz ohne sl}, dass $t \mapsto \lambda(t)$ stetig differenzierbar ist, aus den übrigen Voraussetzungen dieses Satzes, sofern $P(0) \ne 0$. 
\end{prop}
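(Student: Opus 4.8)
The plan is to convert the eigenvalue relation $A(t)P(t)=\lambda(t)P(t)$ into a formula that exhibits $\lambda(t)$ locally as a quotient of two scalar-valued $C^1$ functions of $t$ with nonvanishing denominator.

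First I would check that $P(t)\neq 0$ for every $t\in I$: since $P$ is strongly $C^1$, Lemma~\ref{lm: strong db and db} makes $t\mapsto P(t)$ continuous in the norm topology, so Lemma~\ref{lm: rk konst} shows $t\mapsto\rk P(t)$ is locally constant, hence constant on the connected interval $I$, and this constant is $\rk P(0)\ge 1$. As continuous differentiability of $\lambda$ is a local matter, it then suffices to prove that $\lambda$ is $C^1$ near an arbitrary point $t_\ast\in I$; I would carry this out for $t_\ast=0$, the argument at any other point being verbatim the same.

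Fix $x_0$ with $u:=P(0)x_0\neq 0$. Since $A$ is $(M,0)$-stable, Theorem~\ref{thm: eigenschaften von erzeugern} gives $1\in\rho(A(t))$ and $\norm{(A(t)-1)^{-1}}\le M$ for all $t\in I$. From $P(t)X\subset\ker(A(t)-\lambda(t))\subset D$ one gets $A(t)P(t)x_0=\lambda(t)P(t)x_0$, and applying $(A(t)-1)^{-1}$ produces the key identity $P(t)x_0=(\lambda(t)-1)\,(A(t)-1)^{-1}P(t)x_0$ on $I$. Setting $\phi:=(A(0)-1)^{-1}u\neq 0$ and pairing with $\phi$ in the Hilbert space $X=H$ yields
\begin{align*}
\scprd{ P(t)x_0,\phi } = (\lambda(t)-1)\,\scprd{ (A(t)-1)^{-1}P(t)x_0,\phi }.
\end{align*}
The right-hand coefficient functions are $C^1$: $t\mapsto(A(t)-1)^{-1}y$ is $C^1$ for each fixed $y$ by Lemma~\ref{lm: reg of inv} (applied to the uniformly bijective, uniformly boundedly invertible, strongly $C^1$ family $A(t)-1$), so by Lemma~\ref{lm: strong db of products} together with $\sup_t\norm{(A(t)-1)^{-1}}<\infty$ and the $C^1$ curve $t\mapsto P(t)x_0$, the curve $t\mapsto(A(t)-1)^{-1}P(t)x_0$ is $C^1$ in $H$; pairing with the fixed vector $\phi$ preserves $C^1$, and $t\mapsto\scprd{ P(t)x_0,\phi }$ is $C^1$ as well. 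At $t=0$ the denominator equals $\scprd{ \phi,\phi }=\norm{\phi}^2>0$, hence it stays nonzero on a neighbourhood $U$ of $0$, and there
\begin{align*}
\lambda(t)=1+\frac{\scprd{ P(t)x_0,\phi }}{\scprd{ (A(t)-1)^{-1}P(t)x_0,\phi }}
\end{align*}
is a quotient of $C^1$ functions with nonvanishing denominator, so $\lambda$ is $C^1$ on $U$. Localizing at every point of $I$ gives that $\lambda$ is continuously differentiable on $I$.

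The one real obstacle is the unboundedness of $A(t)$, which forbids differentiating $t\mapsto A(t)P(t)x_0$ by a naive product rule; the remedy is precisely the passage to the uniformly bounded and uniformly invertible resolvent $(A(t)-1)^{-1}$, whose good behaviour is furnished by $(M,0)$-stability via Theorem~\ref{thm: eigenschaften von erzeugern} and Lemma~\ref{lm: reg of inv}. The Hilbert space hypothesis is used only to pick a concrete test vector $\phi$; a Hahn--Banach functional would serve the same purpose in a general Banach space.
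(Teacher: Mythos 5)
Your proposal is correct and follows essentially the same route as the paper: both rest on the identity $(A(t)-1)^{-1}P(t)x_0=(\lambda(t)-1)^{-1}P(t)x_0$, the local nonvanishing of $P(t)x_0$ via Lemma~\ref{lm: rk konst}, and the $C^1$-regularity of $t\mapsto(A(t)-1)^{-1}$ from Lemma~\ref{lm: reg of inv}. The only cosmetic difference is that the paper pairs with the moving vector $P(t)x_0$ (writing $1/(\lambda(t)-1)$ as a quotient with denominator $\norm{P(t)x_0}^2$), while you pair with a fixed test vector $\phi$; both work.
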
 

\begin{proof}
Seien alle von der infrage stehenden Voraussetzung verschiedenen Voraussetzungen von Satz~\ref{thm: allg adsatz ohne sl} erfüllt und sei $P(0) \ne 0$ (wenn $P(0)=0$ ist, dann ist dieser Satz natürlich nicht mehr interessant). 
Sei $t_0 \in I$. Dann ist $P(t_0) \ne 0$ nach Lemma~\ref{lm: rk konst} und Lemma~\ref{lm: strong db and db}, das heißt, es existiert ein $x_0 \in X$ mit $P(t_0) x_0 \ne 0$. Wegen der Stetigkeit von $t \mapsto P(t)x_0$ gilt also $P(t)x_0 \ne 0$ auch für alle $t \in U_{t_0}$ für eine in $I$ offene Umgebung $U_{t_0}$ von $t_0$. 
Weiter gilt
\begin{align*}
\frac{1}{\lambda(t)-1} = \frac{       \scprd{ P(t)x_0 \,, \, (A(t)-1)^{-1} \, P(t)x_0   }     }{      \scprd{  P(t)x_0, P(t)x_0  }    }
\end{align*}
für alle $t \in U_{t_0}$ und daraus ergibt sich, da die rechte Seite stetig differenzierbar von $t$ abhängt (Lemma~\ref{lm: reg of inv}), dass $U_{t_0} \ni t \mapsto \lambda(t)$ stetig differenzierbar ist, wie gewünscht.
\end{proof}

Jetzt zu der wesentlich interessanteren Voraussetzung
\begin{align*}
\lambda(t) + \varepsilon e^{i \vartheta_0} \in \rho(A(t))    \text{\; und \;}    \norm{(\lambda(t) + \varepsilon e^{i \vartheta_0} - A(t))^{-1}} \le \frac{M_0}{\varepsilon} \quad \text{für alle} \; \varepsilon \in (0, \varepsilon_0].
\end{align*}
Diese folgt (mit $\vartheta_0 := 0$ und $\varepsilon_0 \in (0, \infty)$ beliebig) beispielsweise dann, wenn $\lambda(t)$ für alle $t \in I$ auf der imaginären Achse liegt (insbesondere also dann, wenn die $A(t)$ alle schiefselbstadjungiert sind), schon aus den übrigen Voraussetzungen von Satz~\ref{thm: allg adsatz ohne sl}, genauer: aus der $(M,0)$-Stabilität von $A$ (nach Satz~\ref{thm: eigenschaften von erzeugern}).  
\\

Im Sonderfall normaler $A(t)$ können wir mehr sagen: dort ergibt sich die obige Resolventenabschätzung schon dann aus den übrigen Voraussetzungen des  Satzes~\ref{thm: allg adsatz ohne sl}, wenn nur ein (kleiner) in $\lambda(t)$ angehefteter Sektor
\begin{align*}
\lambda(t) + \varepsilon_0 \, S_{(\vartheta_0^-, \vartheta_0^+)} := \{      \lambda(t) + \varepsilon e^{i \vartheta}: \varepsilon \in (0, \varepsilon_0), \, \vartheta \in (\vartheta_0^-, \vartheta_0^+)         \}
\end{align*}
in $\rho(A(t))$ liegt für alle $t \in I$. Diese Sektorbedingung ist nicht sehr einschränkend, auch wenn sie natürlich -- selbst bei von $t$ unabhängigem Spektrum $\sigma(A(t))$ -- nicht immer erfüllt ist (etwa dann nicht, wenn ein cusp (s. etwa Abschnitt~6.1 in~\cite{Dobrowolski}) bei $\lambda(t)$ vorliegt).

\begin{prop}  \label{prop: resolvabsch automatisch erf}
Sei $A(t)$ für jedes $t \in I$ eine normale lineare Abbildung $D \subset H \to H$, $\lambda(t) \in \sigma(A(t))$ und seien $\varepsilon_0 \in (0,\infty)$ und $\vartheta_0^-$, $\vartheta_0^+ \in [0, 2 \pi)$ mit $\vartheta_0^- < \vartheta_0^+$, sodass $\lambda(t) + \varepsilon_0 \, S_{(\vartheta_0^-, \vartheta_0^+)} \subset \rho(A(t))$ für alle $t \in I$. %und $\vartheta_0^- < \vartheta_0^+$.
Dann existieren positive Zahlen $\varepsilon_0'$ und $M_0$, sodass $\lambda(t) + \varepsilon e^{i \vartheta_0} \in \rho(A(t))$ und
\begin{align*}
\norm{(\lambda(t) + \varepsilon e^{i \vartheta_0} - A(t))^{-1}} \le \frac{M_0}{\varepsilon}
\end{align*}
für alle $\varepsilon \in (0, \varepsilon_0']$ und alle $t \in I$, wobei $\vartheta_0 := \frac{1}{2} \,\bigl( \vartheta_0^+ + \vartheta_0^- \bigr)$.
\end{prop}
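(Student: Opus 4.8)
The plan is to use the spectral theorem for normal operators (Satz~\ref{thm: spektralsatz}) together with the resolvent norm formula from Proposition~\ref{prop: rieszproj f�r normale A}~(ii): for a normal $A(t)$ and $z \in \rho(A(t))$ one has $\norm{(z - A(t))^{-1}} = \dist(z, \sigma(A(t)))^{-1}$. So the whole problem reduces to a purely geometric estimate: I must show that there are constants $\varepsilon_0' > 0$ and $\delta_0 > 0$, independent of $t$, so that for all $\varepsilon \in (0, \varepsilon_0']$ and all $t \in I$,
\begin{align*}
\dist\bigl( \lambda(t) + \varepsilon e^{i \vartheta_0},\, \sigma(A(t)) \bigr) \ge \delta_0 \, \varepsilon ;
\end{align*}
then one takes $M_0 := 1/\delta_0$. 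The point $\lambda(t) + \varepsilon e^{i\vartheta_0}$ lies on the bisecting ray of the sector $\lambda(t) + S_{(\vartheta_0^-, \vartheta_0^+)}$, and by hypothesis the truncated sector $\lambda(t) + \varepsilon_0 S_{(\vartheta_0^-,\vartheta_0^+)}$ avoids $\sigma(A(t))$ entirely, for every $t$ \emph{with the same} $\varepsilon_0, \vartheta_0^\pm$. This uniformity in $t$ is the crucial input.

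First I would set $\varepsilon_0' := \varepsilon_0/2$ and $\alpha := \tfrac12(\vartheta_0^+ - \vartheta_0^-) > 0$ (the half-opening angle of the sector), and I claim $\delta_0 := \min\{\sin\alpha,\, 1/2\}$ works. Fix $t \in I$, fix $\varepsilon \in (0,\varepsilon_0']$, write $w := \lambda(t) + \varepsilon e^{i\vartheta_0}$, and let $\mu \in \sigma(A(t))$ be arbitrary. I distinguish two cases according to $|\mu - \lambda(t)|$. If $|\mu - \lambda(t)| \le \varepsilon_0$, then $\mu \notin \lambda(t) + S_{(\vartheta_0^-,\vartheta_0^+)}$ (it lies outside the full sector, since inside-and-within-radius-$\varepsilon_0$ is exactly the forbidden region $\lambda(t) + \varepsilon_0 S_{(\vartheta_0^-,\vartheta_0^+)}$). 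An elementary planar computation then shows that the distance from a point on the bisecting ray at radius $\varepsilon$ to any point not in the open sector is at least $\varepsilon \sin\alpha$: indeed, writing $\mu - \lambda(t) = \rho e^{i\psi}$ with $|\psi - \vartheta_0| \ge \alpha$ (angles taken modulo $2\pi$, choosing the representative closest to $\vartheta_0$), one has $|w - \mu|^2 = \varepsilon^2 + \rho^2 - 2\varepsilon\rho\cos(\psi - \vartheta_0) \ge \varepsilon^2 + \rho^2 - 2\varepsilon\rho\cos\alpha$, and minimizing the right-hand side over $\rho \ge 0$ gives $\varepsilon^2 \sin^2\alpha$. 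Hence $|w - \mu| \ge \varepsilon\sin\alpha \ge \delta_0 \varepsilon$ in this case. If instead $|\mu - \lambda(t)| > \varepsilon_0$, then since $|w - \lambda(t)| = \varepsilon \le \varepsilon_0/2$, the triangle inequality gives $|w - \mu| \ge |\mu - \lambda(t)| - |w - \lambda(t)| > \varepsilon_0 - \varepsilon_0/2 = \varepsilon_0/2 \ge \varepsilon/2 \ge \delta_0\varepsilon$. Taking the infimum over $\mu \in \sigma(A(t))$ yields $\dist(w, \sigma(A(t))) \ge \delta_0 \varepsilon$; in particular $w \in \rho(A(t))$ since the distance is positive. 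Because $\delta_0$ and $\varepsilon_0'$ do not depend on $t$, applying Proposition~\ref{prop: rieszproj f�r normale A}~(ii) gives $\norm{(w - A(t))^{-1}} = \dist(w,\sigma(A(t)))^{-1} \le (\delta_0\varepsilon)^{-1} = M_0/\varepsilon$ with $M_0 := 1/\delta_0$, uniformly in $t$ and in $\varepsilon \in (0,\varepsilon_0']$.

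The only genuinely delicate point is the care needed in the planar/trigonometric estimate — in particular choosing the right representative of the angular difference $\psi - \vartheta_0$ so that ``outside the open sector'' really translates into $|\psi - \vartheta_0| \ge \alpha$, and handling the degenerate endpoint $\rho = 0$ (i.e.\ $\mu = \lambda(t)$), which cannot actually occur here since $\lambda(t) \in \sigma(A(t))$ would then force $w$ arbitrarily close to $\sigma(A(t))$ — but that case is subsumed under $|\mu - \lambda(t)| = 0 \le \varepsilon_0$ and the sector argument still applies because $\lambda(t)$ itself is trivially not in the open punctured sector. Everything else is bookkeeping. I would also remark, as the paper does in the surrounding text, that the hypothesis $\lambda(t) \in \sigma(A(t))$ is only used to make the statement non-vacuous; the estimate itself would hold verbatim for any $\lambda(t)$ with the stated sector property.
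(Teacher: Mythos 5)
Your proof is correct and follows essentially the same route as the paper: both reduce the claim via Proposition~\ref{prop: rieszproj f�r normale A}~(ii) to the purely geometric estimate $\dist\bigl(\lambda(t)+\varepsilon e^{i\vartheta_0},\sigma(A(t))\bigr)\ge\delta_0\,\varepsilon$, which the paper obtains by quoting $\dist\bigl(\lambda(t)+\varepsilon e^{i\vartheta_0},\,\complex\setminus(\lambda(t)+\varepsilon_0\,S_{(\vartheta_0^-,\vartheta_0^+)})\bigr)=\min\{\varepsilon\sin\beta_0,\varepsilon_0-\varepsilon\}$ from a picture and choosing $\varepsilon_0'=\varepsilon_0/(1+\sin\beta_0)$, where you spell out the law-of-cosines computation and take $\varepsilon_0'=\varepsilon_0/2$. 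One tiny imprecision: since the truncated sector is open in the radial variable, a spectral point at radius exactly $\varepsilon_0$ need not lie outside the open angular cone, so your first case should be restricted to $|\mu-\lambda(t)|<\varepsilon_0$ --- but the boundary points are then covered by your triangle-inequality case, so nothing breaks.
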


\begin{proof}
Sei $\varepsilon \in (0, \varepsilon_0]$ und $\beta_0 := \frac{1}{2} \,\bigl( \vartheta_0^+ - \vartheta_0^- \bigr) \in (0,\pi)$ (halber Öffnungswinkel des Sektors). Dann gilt
\begin{align*}
\dist \bigl( \lambda(t) + \varepsilon e^{i \vartheta_0} , \, \sigma(A(t)) \bigr) 
&\ge \dist \Bigl( \lambda(t) + \varepsilon e^{i \vartheta_0}  , \,  \complex \setminus \bigl( \lambda(t) + \varepsilon_0 \, S_{(\vartheta_0^-, \vartheta_0^+)} \bigr)  \Bigr) \\
&= \min \{ \varepsilon \sin \beta_0, \varepsilon_0 - \varepsilon \},
\end{align*}
wie man anhand einer Zeichnung leicht (elementargeometrisch) einsieht.

Seien nun
\begin{align*}
\varepsilon_0' := \frac{\varepsilon_0}{1 + \sin \beta_0} \text{ \; und \; } M_0 := \frac{1}{\sin \beta_0}
\end{align*}
und sei $\varepsilon \in (0, \varepsilon_0']$. Dann gilt
\begin{align*}
\lambda(t) + \varepsilon e^{i \vartheta_0} \in \lambda(t) + \varepsilon_0 \, S_{(\vartheta_0^-, \vartheta_0^+)} \subset \rho(A(t))
\end{align*}
und, da die $A(t)$ normal sind,
\begin{align*}
\norm{(\lambda(t) + \varepsilon e^{i \vartheta_0} - A(t))^{-1}} = \frac{1}{  \dist \bigl( \lambda(t) + \varepsilon e^{i \vartheta_0} , \, \sigma(A(t)) \bigr)    }\le \frac{M_0}{\varepsilon}
\end{align*}
für alle $t \in I$ (Proposition~\ref{prop: rieszproj für normale A}), wie gewünscht.
\end{proof}

Der folgende Satz verschärft Satz~\ref{thm: allg adsatz ohne sl} im Sonderfall normaler $A(t)$, denn in diesem Fall reduzieren sich die Voraussetzungen von Satz~\ref{thm: allg adsatz ohne sl} auf die Voraussetzungen des folgenden Satzes: %genauer meine ich hier eigentlich nur: die Voraussetzungen des allg Adsatzes ohne Sl implizieren im Sonderfall normaler $A(t)$ die Voraussetzungen des folgenden Satzes. Dass umgekehrt die Voraussetzungen des folgenden Satzes diejenigen des allgemeinen Adsatzes ohne Sl implizieren (dass die Voraussetzungen der beiden Sätze im Sonderfall normaler $A(t)$ also äquivalent sind -- und so ist ja "`reduzieren sich auf"' zu verstehen -- folgt erst aus dem ersten Teil des folgenden Beweises und wird noch nicht durch die jetzt angefügte Bemerkung erklärt)
die $P(t)$ aus Satz~\ref{thm: allg adsatz ohne sl} sind im Sonderfall normaler $A(t)$ automatisch orthogonal, da $P(t)H$ und $(1-P(t))H$ wegen $\ker (A(t)-\lambda(t))^* = \ker (A(t) - \lambda(t))$ (Satz~\ref{thm: eigenschaften normaler A}) orthogonale Unterräume sind (zunächst nur für \emph{fast} alle $t \in I$, ein Stetigkeitsargument zeigt aber, dass dann $P(t)^* = P(t)$ sogar für alle $t \in I$). 

Der Beweis ist angelehnt an Teufels Argumentation in~\cite{Teufel 01}.

\begin{thm} \label{thm: adsatz ohne sl für normale A(t)}
Sei $A(t)$ für jedes $t \in I$ eine normale lineare Abbildung $D \subset H \to H$ mit $\sigma(A(t)) \subset \{ z \in \complex: \operatorname{Re}(z) \le 0 \}$ und sei $t \mapsto A(t)x$ stetig differenzierbar für alle $x \in D$. 
Sei $\lambda(t)$ für jedes $t \in I$ ein Eigenwert von $A(t)$ und seien $\varepsilon_0$, $\vartheta_0$ sowie $M_0$ positive Zahlen, sodass $\lambda(t) + \varepsilon e^{i \vartheta_0} \in \rho(A(t))$ und 
\begin{align*}
\norm{\bigl( \lambda(t) + \varepsilon e^{i \vartheta_0} - A(t) \bigr)^{-1} } \le \frac{M_0}{\varepsilon}
\end{align*}
für alle $\varepsilon \in (0, \varepsilon_0]$ und alle $t \in I$ (was beispielsweise unter der Sektorbedingung von Proposition~\ref{prop: resolvabsch automatisch erf} erfüllt ist). 
Sei $P(t)$ für jedes $t \in I$ eine orthogonale Projektion in $H$, sodass $P(t)H \subset \ker (A(t)-\lambda(t))$ für alle $t \in I$ und $P(t)H = \ker (A(t)-\lambda(t))$ für fast alle $t \in I$, sei $\rk P(0) < \infty$ oder $\rk (1-P(0)) < \infty$ und sei $t \mapsto P(t)$ stetig differenzierbar.
Dann gilt 
\begin{align*}
(1-P) U_T P(0) \longrightarrow 0 \quad (T \to \infty) \text{ \; gleichmäßig auf } I
\end{align*}
und sogar
\begin{align*}
U_{a,T} - U_T \longrightarrow 0 \quad (T \to \infty) \text{ \; gleichmäßig auf } I,
\end{align*}
wenn nur die Zeitentwicklung $U_{a,T}$ existiert für alle $T \in (0, \infty)$.
\end{thm}

\begin{proof}
Wir überzeugen uns zunächst davon, dass die Voraussetzungen von Satz~\ref{thm: allg adsatz ohne sl} erfüllt sind. 

Weil die $A(t)$ normal sind und $\sigma(A(t)) \subset \{ z \in \complex: \operatorname{Re}(z) \le 0 \}$, ist $A$ $(1,0)$-stabil (wie man mithilfe der Spektralintegraldarstellung der Halbgruppen $e^{A(t)\, . \,}$ einsieht). Außerdem ist $t \mapsto \lambda(t)$ nach Proposition~\ref{prop: lambda automatisch stetig db} stetig differenzierbar. 

Zeigen wir nun, dass $P(t)A(t) \subset A(t)P(t)$ für alle $t \in I$. Wir wissen nach Voraussetzung, dass 
\begin{align*}
P(t)H \subset \ker (A(t)-\lambda(t)) = P_{ \{\lambda(t)\} }^{A(t)}H
\end{align*} 
für alle $t \in I$, woraus leicht folgt, dass $P(t) = P(t)P_{ \{\lambda(t)\} }^{A(t)}$ und damit
\begin{align*}
P(t)A(t) & = P(t) P_{ \{\lambda(t)\} }^{A(t)} A(t) \subset P(t) A(t) P_{ \{\lambda(t)\} }^{A(t)} = P(t) \lambda(t) P_{ \{\lambda(t)\} }^{A(t)} = \lambda(t) P(t) = A(t)P(t)
\end{align*}
für alle $t \in I$. 

Schließlich zeigen wir noch, dass $(1-P(t))H \subset \overline{\im(A(t)-\lambda(t))}$ für fast alle $t \in I$.
Wir haben nach Satz~\ref{thm: eigenschaften normaler A}
\begin{gather*}
\bigl( \im(A(t) - \lambda(t)) \bigr) ^\perp = \ker(A(t) - \lambda(t))^* = \ker(A(t) - \lambda(t)) = P(t)H 
\end{gather*}
für fast alle $t \in I$ und daher 
\begin{align*}
\overline{\im(A(t)-\lambda(t))} & =  \left(   \bigl( \im(A(t) - \lambda(t)) \bigr) ^\perp \right)^\perp = (P(t)H)^\perp %= \ker P(t)^* \\ 
%&= \ker P(t) 
= (1-P(t))H 
\end{align*}
für fast alle $t \in I$.

Also sind tatsächlich alle Voraussetzungen von Satz~\ref{thm: allg adsatz ohne sl} erfüllt und dieser liefert, dass
\begin{align*}
(1-P) U_T P(0) \longrightarrow 0 \quad (T \to \infty) \text{ \; gleichmäßig auf } I,
\end{align*}
wie behauptet.
\\

Wir nehmen nun (zusätzlich) an, dass die Zeitentwicklung $U_{a,T}$ zu $TA + [P',P]$ für jedes $T \in (0, \infty)$ existiert und zeigen den zweiten Teil der behaupteten Aussage. 

Sei $Q_n(t)$ für jedes $n \in \natu$ und jedes $t \in I$ die im Beweis von Satz~\ref{thm: allg adsatz ohne sl} definierte beschränkte lineare Abbildung in $H$ und sei 
\begin{align*}
B_{n \varepsilon} (t) := \bigl( \lambda(t) + \varepsilon e^{i \vartheta_0} - A(t) \bigr)^{-1} \, Q_n(t)P(t)  \; + \; P(t)Q_n(t) \, \bigl( \lambda(t) + \varepsilon e^{i \vartheta_0} - A(t) \bigr)^{-1},
\end{align*}
\begin{align*}
C_{n \varepsilon}(t) := P(t) Q_n(t) \, \varepsilon e^{i \vartheta_0} & \bigl( \lambda(t) + \varepsilon e^{i \vartheta_0} - A(t) \bigr)^{-1} \\
&- \varepsilon e^{i \vartheta_0}  \bigl( \lambda(t) + \varepsilon e^{i \vartheta_0} - A(t) \bigr)^{-1} \, Q_n(t) P(t)
\end{align*}
sowie
\begin{align*}
C_{\varepsilon}(t) := P(t) P'(t) \, \varepsilon e^{i \vartheta_0} & \bigl( \lambda(t) + \varepsilon e^{i \vartheta_0} - A(t) \bigr)^{-1} \\
& - \varepsilon e^{i \vartheta_0}  \bigl( \lambda(t) + \varepsilon e^{i \vartheta_0} - A(t) \bigr)^{-1} \, P'(t) P(t)
\end{align*}
für alle $n \in \natu$, $\varepsilon \in (0, \varepsilon_0]$ und alle $t \in I$.
Dann gilt $Q_n(t) \longrightarrow P'(t) \;\;(n \to \infty)$ für alle $t \in (0,1)$, die Abbildung $t \mapsto Q_n(t)$ ist (stark) stetig differenzierbar für alle $n \in \natu$ und daher ist auch $t \mapsto B_{n \varepsilon}(t)$ stark stetig differenzierbar für alle $n \in \natu$ und alle $\varepsilon \in (0, \varepsilon_0]$ (Lemma~\ref{lm: abschätzung für abl der resolvente}). 
Weiter gilt, wie man leicht nachrechnet, folgende \emph{approximative} Kommutatorgleichung:
\begin{align*}
B_{n \varepsilon}(t) A(t)x - A(t) B_{n \varepsilon}(t)x + C_{n \varepsilon}(t)x = Q_n(t) P(t)x - P(t) Q_n(t)x
\end{align*}
für alle $x \in D$ und alle $t \in I$.

Wir erhalten damit
\begin{align} \label{eq: adsatz ohne sl für normale A(t) 1}
\notag U_{a,T}&(t)x - U_T(t)x \\
=& \int_0^t U_T(t,s) \Bigl( [P'(s),P(s)] - [Q_n(s),P(s)] \Bigr) U_{a,T}(s)x \,ds \notag \\
&+ \int_0^t U_T(t,s) \Bigl(  B_{n \varepsilon}(s) A(s) - A(s) B_{n \varepsilon}(s) \Bigr) U_{a,T}(s)x \,ds  \notag \\
&+ \int_0^t U_T(t,s) \, C_{n \varepsilon}(s) \, U_{a,T}(s)x \,ds \notag \\
=& \int_0^t U_T(t,s) \Bigl( [P'(s),P(s)] - [Q_n(s),P(s)] \Bigr) U_{a,T}(s)x \,ds \notag \\
&+ \frac{1}{T}\, U_T(t,s) \, B_{n \varepsilon}(t) \, U_{a,T}(s)x \Big|_{s=0}^{s=t} \notag \\
&+ \frac{1}{T} \, \int_0^t U_T(t,s) \Bigl(  B_{n \varepsilon}'(s) +  B_{n \varepsilon}(s)[P'(s),P(s)] \Bigr) U_{a,T}(s)x \,ds \notag \\
&+ \int_0^t U_T(t,s) \Bigl( C_{n \varepsilon}(s) - C_{\varepsilon}(s)  \Bigr) U_{a,T}(s)x \,ds   +  \int_0^t U_T(t,s) \,  C_{\varepsilon}(s) \,  U_{a,T}(s)x \,ds
\end{align}
für alle $T \in (0, \infty)$, $t \in I$, alle $n \in \natu$, $\varepsilon \in (0, \varepsilon_0]$ und alle $x \in D$. 
Jetzt schätzen wir nacheinander die Summanden auf der rechten Seite ab. Zunächst bemerken wir, dass wegen der $(1,0)$-Stabilität von $A$, Proposition~\ref{prop: abschätzung für gestörte zeitentw} und Lemma~\ref{lm: skalierung und (M,w)-stabilität} 
\begin{align*}
\norm{U_{a,T}(t,s)} \le e^{c (t-s)} \le  e^c
\end{align*}
gilt für alle $(s,t) \in \Delta$, wobei $c$ eine reelle Zahl sei mit
\begin{align*}
\norm{P(s)}, \norm{P'(s)} \text{\,und\,} \norm{ [P'(s), P(s)] }  \le c
\end{align*}
für alle $s \in I$. Weiter bemerken wir, dass 
\begin{align*}
\norm{ Q_n(t) } \le c
\end{align*}
für alle $n \in \natu$ und alle $t \in I$ (Definition der $Q_n(t)$), und dass nach Lemma~\ref{lm: abschätzung für abl der resolvente}
\begin{align*}
\norm{B_{n \varepsilon}'(t)} \le 2 \, \frac{M_0'}{\varepsilon^2} \, c^2 + 2\, \frac{M_0}{\varepsilon} \, (c^2 + c \, c_n)
\end{align*}
für alle $n \in \natu$, $\varepsilon \in (0, \varepsilon_0]$ und alle $t \in I$, wobei wir $c_n := \sup_{s \in I} \norm{Q_n'(s)}$ gesetzt haben. 

Aus \eqref{eq: adsatz ohne sl für normale A(t) 1} erhalten wir daher
\begin{align} \label{eq: adsatz ohne sl für normale A(t) 2}
\notag &\norm{U_{a,T}(t)x-U_T(t)x} \\
\notag & \qquad  \le   \, \int_0^1 2c \norm{Q_n(s)-P'(s)} \,e^c \,ds \, \norm{x} \\
\notag & \qquad \quad + \frac{4}{T} \, \frac{M_0}{\varepsilon} \, c^2 \, e^c \, \norm{x} 
+ \frac{1}{T} \, \Bigl( 2 \, \frac{M_0'}{\varepsilon^2} \, c^2 + 2\, \frac{M_0}{\varepsilon} \, (c^2 + c \, c_n) +  2\,\frac{M_0}{\varepsilon} \, c^2 \Bigr) \norm{x} \\
& \qquad \quad + \int_0^1 2 M_0 c \norm{Q_n(s) - P'(s) } e^c \,ds \norm{x}
+ \int_0^1 2 M_0 c \norm{C_{\varepsilon}(s)} e^c \,ds \norm{x}
\end{align}
für alle $T \in (0, \infty)$, $t \in I$, alle $n \in \natu$, $\varepsilon \in (0, \varepsilon_0]$ und alle $x \in D$.

Weil nun 
\begin{align*}
\norm{\bigl( \overline{\lambda(t)} + \varepsilon e^{-i \vartheta_0} - A(t)^* \bigr)^{-1}} = \norm{\bigl(\lambda(t) + \varepsilon e^{i \vartheta_0} - A(t)\bigl)^{-1}} \le \frac{M_0}{\varepsilon} 
\end{align*}
für alle $\varepsilon \in (0, \varepsilon_0]$ und alle $t \in I$ (nach Proposition~X.1.18 in~\cite{Conway: fana}) und 
\begin{align*}
\overline{\im \bigl( A(t)^*-\overline{\lambda(t)} \bigr)} = \bigl( \ker(A(t)-\lambda(t))^{**} \bigr)^\perp  = \bigl( \ker(A(t)-\lambda(t)) \bigr)^\perp = (1-P(t))H 
\end{align*} 
für fast alle $t \in I$ (nach Proposition~X.1.6 in~\cite{Conway: fana}), 
folgt mithilfe von Lemma~\ref{lm: zentrales lm für adsatz ohne sl}, angewandt auf $A(t)$ und $\lambda(t)$ bzw. $A(t)^*$ und $\overline{\lambda(t)}$, und mithilfe von Lemma~\ref{lm: aus starker konv mach normkonv} 
%(beachte Lemma~\ref{lm: rk konst} und $PP'P = 0$)
\begin{align*}
& \norm{ \varepsilon \bigl( \lambda(t) + \varepsilon e^{i \vartheta_0} - A(t) \bigr)^{-1} P'(t)P(t)} \\
& \qquad \qquad \qquad \qquad = \norm{ \varepsilon \bigl( \lambda(t) + \varepsilon e^{i \vartheta_0} - A(t) \bigr)^{-1} (1-P(t))P'(t)P(t)} \longrightarrow 0 \quad (\varepsilon \searrow 0) %\text{ \; bzw. }
\end{align*}
bzw.
\begin{align*}
& \norm{P(t)P'(t) \, \varepsilon \bigl( \lambda(t) + \varepsilon e^{i \vartheta_0} - A(t) \bigr)^{-1} } \\%= \norm{ \left( P(t)P'(t) \varepsilon (\lambda(t) + \varepsilon - A(t))^{-1} \right)^* } \\
& \qquad \qquad \qquad \qquad =  \norm{\varepsilon \bigl( \overline{\lambda(t)} + \varepsilon e^{-i \vartheta_0} - A(t)^* \bigr)^{-1} (1-P(t))P'(t)P(t)} \longrightarrow 0 \quad (\varepsilon \searrow 0)
\end{align*}
und damit $\norm{ C_{\varepsilon}(t)} \longrightarrow 0 \;\;(\varepsilon \searrow 0)$ für fast alle $t \in I$.

Sei nun $\varepsilon_T :=  T^{-\frac{1}{3}}$ für alle $T \in [\varepsilon_0^{-3}, \infty )$. Anhand von \eqref{eq: adsatz ohne sl für normale A(t) 2} sehen wir dann, dass wir 
\begin{align*}
\sup_{t \in I} \norm{U_{a,T}(t) - U_T(t) } 
\end{align*}
beliebig klein machen können, indem wir zunächst eine (feste) natürliche Zahl $n_0$ wählen, die so groß ist, dass 
\begin{align*}
\int_0^1 2c \norm{Q_{n_0}(s)-P'(s)} \,e^c \,ds \text{ \;  und \; } \int_0^1 2 M_0 c \norm{Q_{n_0}(s) - P'(s) } e^c \,ds
\end{align*}
genügend klein sind, und dann $T_0 \in [\varepsilon_0^{-3}, \infty )$ so groß wählen, dass 
\begin{align*}
\frac{4}{T} \, \frac{M_0}{\varepsilon_T} \, c^2 \, e^c  
+ \frac{1}{T} \, \Bigl( 2 \, \frac{M_0'}{{\varepsilon_T}^2} \, c^2 + 2\, \frac{M_0}{\varepsilon_T} \, (c^2 + c \, c_{n_0}) +  2\,\frac{M_0}{\varepsilon_T} \, c^2 \Bigr) + \int_0^1 2 M_0 c \norm{C_{\varepsilon_T}(s)} e^c \,ds
\end{align*}
genügend klein ist für alle $T \in [T_0, \infty)$.
\end{proof}

Wir weisen darauf hin, dass wir im eben gegebenen Beweis von 
\begin{align} \label{eq: adiabat zeitentw asympotisch gleich der zeitentw, ohne sl}
\sup_{t \in I} \norm{U_{a,T}(t) - U_T(t) } \longrightarrow 0 \quad (T \to \infty)
\end{align}
nur an einer einzigen Stelle \emph{mehr} gebraucht haben, als schon in Satz~\ref{thm: allg adsatz ohne sl} vorausgesetzt ist: nämlich um zu zeigen, dass
\begin{align*}
P(t)P'(t) \varepsilon \, \bigl( \lambda(t) + \varepsilon e^{i \vartheta_0} - A(t) \bigr)^{-1} \longrightarrow 0 \quad (\varepsilon \searrow 0)
\end{align*} 
bzgl. der Normoperatortopologie für fast alle $t \in I$ (für die Konvergenz bzgl. der starken Operatortopologie genügen auch die Voraussetzungen von Satz~\ref{thm: allg adsatz ohne sl}). Im Beweis des obigen Satzes haben wir ausgenutzt, dass die Projektionen sogar orthogonal sind (was aus der Normalität der $A(t)$ folgte). Aber auch nicht mehr als das: um auf~\eqref{eq: adiabat zeitentw asympotisch gleich der zeitentw, ohne sl} schließen zu können genügt es, wenn die Projektionen $P(t)$ aus Satz~\ref{thm: allg adsatz ohne sl} orthogonal sind -- dass die $A(t)$ zusätzlich normal sind (wie im obigen Satz), ist nicht nötig.

Auch $\rk (1-P(0)) < \infty$ ist so eine Zusatzvoraussetzung (zu den Voraussetzungen von Satz~\ref{thm: allg adsatz ohne sl}), über die hinaus nichts gebraucht wird, um sogar $\sup_{t \in I} \norm{U_{a,T}(t) - U_T(t) } \longrightarrow 0 \;\; (T \to \infty)$ zu bekommen.
\\

Satz~\ref{thm: allg adsatz ohne sl} greift nur in Situationen, in denen 
\begin{align*}
\ker (A(t)-\lambda(t)) + \overline{\im (A(t)-\lambda(t))} = X
\end{align*}
für fast alle $t \in I$. Wir haben versucht, ihn so zu erweitern, dass er auch in gewissen Situationen anwendbar ist, in denen nur 
\begin{align*}
\ker (A(t)-\lambda(t))^{m_0} + \overline{\im (A(t)-\lambda(t))^{m_0}} = X
\end{align*}
für fast alle $t \in I$ und eine natürliche Zahl $m_0$ gilt. Solche Situationen sind schließlich nicht untypisch:  denken wir etwa an isolierte (aber nicht gleichmäßig isolierte) Eigenwerte $\lambda(t)$, deren algebraische Vielfachheit nicht mit der geometrischen Vielfachheit übereinstimmt (Satz~\ref{thm: isol spektralwerte}!). Wenn die $\lambda(t)$ nur endlich oft in $\sigma(A(t)) \setminus \{ \lambda(t) \}$ hineinfallen, können wir möglicherweise unseren Adiabatensatz mit nichtgleichmäßiger Spektrallückenbedingung (Satz~\ref{thm: unhandl adsatz mit nichtglm sl}) anwenden. Wenn sie aber unendlich oft in $\sigma(A(t)) \setminus \{ \lambda(t) \}$ hineinfallen (wie etwa in Beispiel~\ref{ex: motivierendes bsp für erweiterten adsatz ohne sl}), hilft dieser Satz~\ref{thm: unhandl adsatz mit nichtglm sl} nicht weiter. 
\\

Ein neuer Satz wäre daher wünschenswert. Wir sind in dieser Sache aber leider (noch) nicht sehr weit gekommen: Satz~\ref{thm: erweiterter adsatz ohne sl} sagt zwar unter zufriedenstellend allgemeinen Voraussetzungen, wann genau die Aussage von Satz~\ref{thm: allg adsatz ohne sl} erfüllt ist, aber die angegebene Bedingung scheint (im Fall $m_0 \ne 1$) nicht sehr leicht nachprüfbar. So dient dieser Satz~\ref{thm: erweiterter adsatz ohne sl} vor allem dazu vor Augen zu führen, wo Schwierigkeiten auftreten, wenn wir die Vorgehensweise von Satz~\ref{thm: allg adsatz ohne sl} auf die neue Situation zu übertragen versuchen. Vielleicht kann dieser Satz (und sein Beweis) ja aber auch den Weg weisen zu einem neuen Satz, der auch Beispiel~\ref{ex: motivierendes bsp für erweiterten adsatz ohne sl} abdeckt -- obwohl wir eher nicht daran glauben: vermutlich muss man einen ganz neuen Weg einschlagen (wenn es denn überhaupt gehen sollte).

\begin{lm}  \label{lm: lm zu erweitertem adsatz ohne sl}
Sei $A(t): D \subset X \to X$ für jedes $t \in I$ Erzeuger einer stark stetigen Halbgruppe auf $X$, sei $A$ $(M,0)$-stabil und $t \mapsto A(t)x$ stetig für alle $x \in D$.
Sei $\lambda(t)$ für jedes $t \in I$ ein Eigenwert von $A(t)$, sodass $\lambda(t) + \varepsilon e^{i \vartheta_0} \in \rho (A(t))$ für alle $\varepsilon \in (0, \varepsilon_0]$ und alle $t \in I$, und sei $t \mapsto \lambda(t)$ stetig.
Sei $P(t)$ für jedes $t \in I$ eine beschränkte Projektion in $X$ mit $P(t) A(t) \subset A(t) P(t)$ und $t \mapsto P(t)x$ stetig für alle $x \in X$.

(i) Sei weiter
\begin{align*}
P(t)X \subset \ker (A(t)-\lambda(t))^{m_0}
\end{align*}
für alle $t \in I$. Dann existiert eine Zahl $M_1$, sodass
\begin{align*}
\norm{ \bigl( \lambda(t) + \varepsilon e^{i \vartheta_0} - A(t) \bigr)^{-1} P(t) } \le \frac{M_1}{\varepsilon^{m_0}}
\end{align*}
für alle $\varepsilon \in (0, \varepsilon_0]$ und alle $t \in I$. 

(ii) Sei weiter 
\begin{align*}
(1-P(t))X \subset \overline{ \im (A(t)-\lambda(t))^{m_0} }
\end{align*}
für fast alle $t \in I$ und es existiere eine Zahl $M_2$, sodass
\begin{align*}
\norm{ \bigl( \lambda(t) + \varepsilon e^{i \vartheta_0} - A(t) \bigr)^{-1} \, (1-P(t)) }  \le \frac{M_2}{\varepsilon}
\end{align*}
für alle $\varepsilon \in (0, \varepsilon_0]$ und alle $t \in I$. Dann gilt
\begin{align*}
\norm{ \varepsilon \, \bigl( \lambda(t) + \varepsilon e^{i \vartheta_0} - A(t) \bigr)^{-1} \, (1-P(t)) } \longrightarrow 0 \quad (\varepsilon \searrow 0)
\end{align*}
bzgl. der starken Operatortopologie für fast alle $t \in I$.
\end{lm}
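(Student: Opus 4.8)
Zu Teil~(i): Die Grundidee ist, die Resolvente $(\lambda(t) + \varepsilon e^{i\vartheta_0} - A(t))^{-1}$ eingeschränkt auf $P(t)X$ rein algebraisch auszudrücken, wobei wir die Tatsache ausnutzen, dass $A(t)\big|_{P(t)D}$ nilpotent bezüglich $\lambda(t)$ ist. Zunächst beobachte ich: wegen $P(t)A(t) \subset A(t)P(t)$ ist nach Proposition~\ref{prop: zerl. des spektrums} $A(t)\big|_{P(t)D}$ abgeschlossen in $P(t)X$, und aus $P(t)X \subset \ker(A(t)-\lambda(t))^{m_0}$ folgt insbesondere $P(t)X \subset D$ sowie $(A(t)-\lambda(t))^{m_0}\big|_{P(t)X} = 0$. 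Setzt man $N(t) := (A(t)-\lambda(t))\big|_{P(t)X}$, so ist $N(t)$ beschränkt (abgeschlossen und auf ganz $P(t)X$ definiert) und $N(t)^{m_0} = 0$. Für $z := \lambda(t) + \varepsilon e^{i\vartheta_0} \in \rho(A(t))$ gilt dann auf $P(t)X$ die endliche Neumann-artige Identität
\begin{align*}
(z - A(t))^{-1}P(t) = \bigl( \varepsilon e^{i\vartheta_0} - N(t) \bigr)^{-1} P(t) = \sum_{k=0}^{m_0-1} \frac{N(t)^k}{(\varepsilon e^{i\vartheta_0})^{k+1}} \, P(t),
\end{align*}
die man durch direktes Nachrechnen (Teleskopsumme, da $N(t)^{m_0}=0$) verifiziert. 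Daraus liest man die Abschätzung ab mit $M_1 := \sum_{k=0}^{m_0-1} \bigl( \sup_{t\in I}\norm{N(t)} \bigr)^k \cdot \varepsilon_0^{\,m_0-1-k} \cdot \sup_{t\in I}\norm{P(t)}$ — hier geht ein, dass $\sup_{t\in I}\norm{N(t)} < \infty$ und $\sup_{t\in I}\norm{P(t)} < \infty$, was aus der starken Stetigkeit von $t \mapsto A(t)P(t)$ bzw. $t\mapsto P(t)$ (vgl. Satz~\ref{thm: Rieszprojektion}-artige Argumente, bzw. Lemma~\ref{lm: strong db and db} und dem Satz von Banach-Steinhaus) auf dem Kompaktum $I$ folgt; genauer: $t\mapsto A(t)P(t)x$ ist stetig für $x\in X$, also $\sup_t\norm{A(t)P(t)}<\infty$, und $N(t) = A(t)P(t)|_{P(t)X} - \lambda(t)P(t)|_{P(t)X}$ mit $t\mapsto\lambda(t)$ stetig auf $I$.

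Zu Teil~(ii): Dies ist im wesentlichen eine Anwendung von Lemma~\ref{lm: zentrales lm für adsatz ohne sl}, angepasst an die höhere Potenz $m_0$. Für festes $t$, für das $(1-P(t))X \subset \overline{\im(A(t)-\lambda(t))^{m_0}}$ gilt, argumentiere ich wie folgt: Sei $y \in \im(A(t)-\lambda(t))^{m_0}$, also $y = (A(t)-\lambda(t))^{m_0} x$ für ein $x \in D((A(t)-\lambda(t))^{m_0})$. Mit $z := \lambda(t)+\varepsilon e^{i\vartheta_0}$ und der algebraischen Identität $(z-A(t))^{-1}(A(t)-\lambda(t)) = -1 + \varepsilon e^{i\vartheta_0}(z-A(t))^{-1}$ (auf dem passenden Definitionsbereich) erhält man durch $m_0$-fache Iteration
\begin{align*}
\varepsilon \, (z-A(t))^{-1} y = \varepsilon \, \bigl( -1 + \varepsilon e^{i\vartheta_0}(z-A(t))^{-1} \bigr)^{m_0} x,
\end{align*}
und auf der rechten Seite entsteht beim Ausmultiplizieren ein Summand $\pm\varepsilon x$ (der für $\varepsilon\searrow 0$ verschwindet) plus Terme mit mindestens einem Faktor $\varepsilon e^{i\vartheta_0}(z-A(t))^{-1}$; da die \emph{unbeschränkte} Potenz $(z-A(t))^{-1}$ hier aber auf beschränkte Vektoren angewandt wird und $\norm{\varepsilon(z-A(t))^{-1}(1-P(t))} \le M_2$ gleichmäßig beschränkt ist, bekommt man $\varepsilon(z-A(t))^{-1}y \to 0$ zunächst für $y$ in der dichten Teilmenge $(1-P(t))\bigl(\im(A(t)-\lambda(t))^{m_0}\bigr)$ und dann wegen der gleichmäßigen Schranke $M_2$ auch für alle $y \in \overline{\im(A(t)-\lambda(t))^{m_0}} \supset (1-P(t))X$. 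Das liefert die behauptete starke Konvergenz.

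\textbf{Hauptschwierigkeit.} Der delikate Punkt in Teil~(ii) ist die korrekte Buchhaltung der Definitionsbereiche: die Identität $(z-A(t))^{-1}(A(t)-\lambda(t)) = -1 + \varepsilon e^{i\vartheta_0}(z-A(t))^{-1}$ gilt zunächst nur auf $D = D(A(t))$, und beim $m_0$-fachen Iterieren muss man auf $D((A(t)-\lambda(t))^{m_0})$ einschränken und sorgfältig prüfen, dass alle Zwischenterme wohldefiniert bleiben — insbesondere dass die Resolvente $D$ nach $D$ abbildet und mit $(A(t)-\lambda(t))$ auf $D$ kommutiert. Das ist Routine, aber fehleranfällig. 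In Teil~(i) besteht die einzige kleine Sorgfaltsstelle darin, sicherzustellen, dass $N(t)$ wirklich als beschränkter Operator auf dem Banachraum $P(t)X$ aufgefasst werden darf und dass $\sup_{t\in I}\norm{N(t)}<\infty$; dafür verwende ich die Kompaktheit von $I$ zusammen mit der starken Stetigkeit (bzw. stetigen Differenzierbarkeit) der beteiligten Abbildungen und dem Prinzip der gleichmäßigen Beschränktheit.
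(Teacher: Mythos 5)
Dein Ansatz stimmt in beiden Teilen im Kern mit dem der Arbeit überein (endliche Neumann-Reihe auf $P(t)X$ mittels der Nilpotenz von $A(t)-\lambda(t)$ in Teil~(i), binomische Entwicklung von $(A(t)-\lambda(t))^{m_0}$ plus Dichtheitsargument in Teil~(ii)), aber an zwei Stellen gibt es echte Lücken. In Teil~(i) behauptest du ohne Begründung, dass $t \mapsto A(t)P(t)x$ stetig ist und daraus $\sup_{t\in I}\norm{N(t)} < \infty$ folgt. Genau das ist der nichttriviale Punkt: Lemma~\ref{lm: strong db of products} ist nicht anwendbar, weil $\sup_{t\in I}\norm{A(t)} = \infty$, und aus der Stetigkeit von $t\mapsto A(t)y$ ($y\in D$) und $t\mapsto P(t)x$ folgt die Stetigkeit der Verkettung nicht unmittelbar. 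Die Arbeit beweist die starke Stetigkeit (und damit die gleichmäßige Beschränktheit) von $t\mapsto (A(t)-\lambda(t))^kP(t)$ für $k\le m_0-1$ über die Darstellung
\begin{align*}
(A(t)-\lambda(t))^k P(t)x = \sum_{i=0}^{m_0-k-1}\binom{m_0-1}{i}(-z)^{m_0-1-i}\,(A(t)-\lambda(t))^{k+i}\bigl(A(t)-\lambda(t)-z\bigr)^{-(m_0-1)}P(t)x
\end{align*}
mit festem $z$, $\Re(\lambda(t)+z)\ge 1$: die Terme mit $k+i\ge m_0$ verschwinden wegen $P(t)X\subset\ker(A(t)-\lambda(t))^{m_0}$, und die verbleibenden Faktoren sind beschränkt und stark stetig. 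Diesen Schritt musst du ergänzen.

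In Teil~(ii) ist die zentrale Identität falsch: aus $y=(A(t)-\lambda(t))^{m_0}x$ folgt \emph{nicht} $\varepsilon(z-A(t))^{-1}y = \varepsilon\bigl(-1+\varepsilon e^{i\vartheta_0}(z-A(t))^{-1}\bigr)^{m_0}x$, denn die rechte Seite ist $\bigl[(z-A(t))^{-1}(A(t)-\lambda(t))\bigr]^{m_0}x = \varepsilon(z-A(t))^{-m_0}y$ -- dort tritt die Resolvente $m_0$-mal auf, links aber nur einmal. Die korrekte $m_0$-fache Iteration von $(z-A)^{-1}(A-\lambda)\psi = -\psi + \varepsilon e^{i\vartheta_0}(z-A)^{-1}\psi$ liefert stattdessen
\begin{align*}
(z-A(t))^{-1}(A(t)-\lambda(t))^{m_0}x = -\sum_{j=0}^{m_0-1}(\varepsilon e^{i\vartheta_0})^j(A(t)-\lambda(t))^{m_0-1-j}x + (\varepsilon e^{i\vartheta_0})^{m_0}(z-A(t))^{-1}x,
\end{align*}
und nach Multiplikation mit $\varepsilon$ und Anwendung auf $(1-P(t))x$ gehen alle Summanden gegen $0$: die der endlichen Summe, weil die Vektoren fest sind, der letzte wegen der Schranke $M_2/\varepsilon$. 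Das ist genau die binomische Entwicklung der Arbeit; mit dieser Korrektur und dem von dir skizzierten Dichtheitsargument (gleichmäßige Schranke $M_2$ plus $(1-P(t))X\subset\overline{\im(A(t)-\lambda(t))^{m_0}}$) ist Teil~(ii) in Ordnung.
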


\begin{proof}
(i) Wir zeigen zunächst, dass $t \mapsto (A(t)-\lambda(t))^k P(t)$ stark stetig ist für alle $k \in \{0, \dots, m_0-1 \}$. 
Sei also $k \in \{0, \dots, m_0-1 \}$ und sei $z$ eine komplexe Zahl, sodass $\Re(\lambda(t) + z) \ge 1$ für alle $t \in I$. Dann gilt aufgrund der $(M,0)$-Stabilität von $A$, dass $\lambda(t) + z \in \rho(A(t))$ und 
\begin{align*}
\norm{ \bigl( A(t)- \lambda(t) - z \bigr)^{-1}  } \le M
\end{align*} 
für alle $t \in I$, und daher ist $t \mapsto \bigl( A(t)- \lambda(t) - z \bigr)^{-1}$ stark stetig nach Lemma~\ref{lm: reg of inv}.

Wegen $P(t)X \subset \ker (A(t)-\lambda(t))^{m_0}$ gilt $(A(t)-\lambda(t))^{k+i} P(t) = 0$ für alle $i \in \{m_0 - k, \, m_0-k+1, \dots \}$ und wir erhalten mit der binomischen Formel
\begin{align*}
&(A(t)-\lambda(t))^k P(t)x \\
& \qquad \quad = (A(t)-\lambda(t))^k \, \bigl( A(t)-\lambda(t)-z \bigr)^{m_0-1} \, \bigl( A(t)-\lambda(t)-z \bigr)^{-(m_0-1)} \,P(t)x  \\
& \qquad \quad = \sum_{i=0}^{m_0-k-1} {m_0-1 \choose i} (A(t)-\lambda(t))^{k+i} \, (-z)^{m_0-1-i} \, \bigl( A(t)-\lambda(t)-z \bigr)^{-(m_0-1)} \, P(t)x
\end{align*}
für alle $x \in X$ und alle $t \in I$.
Da nun $t \mapsto (A(t)-\lambda(t))^{k+i} \,  \bigl( A(t)-\lambda(t)-z \bigr)^{-(m_0-1)}$ für $i \in \{ 0, \dots, m_0-k-1 \}$ stark stetig ist, folgt die starke Stetigkeit von $t \mapsto (A(t)-\lambda(t))^k P(t)$.

Jetzt können wir die behauptete Abschätzung sehr leicht einsehen. Wegen $(A(t)-\lambda(t))^{m_0} P(t) = 0$ haben wir nämlich
\begin{align*}
\bigl( \lambda(t) + \varepsilon e^{i \vartheta_0} - A(t) \bigr)^{-1} P(t) &= \frac{1}{ \varepsilon e^{i \vartheta_0} } \, \biggl( 1 - \frac{ A(t) - \lambda(t) }{  \varepsilon e^{i \vartheta_0} }   \biggr)^{-1} \, P(t) \\
&= \frac{1}{ \varepsilon e^{i \vartheta_0} } \, \sum_{k = 0}^{m_0-1} \biggl( \frac{ A(t) - \lambda(t) }{  \varepsilon e^{i \vartheta_0}  }   \biggr)^k \, P(t)
\end{align*}
und daher existiert wegen $\sup_{t \in I} \norm{ (A(t)-\lambda(t))^k P(t) } < \infty$ wie behauptet eine Zahl $M_1$, sodass
\begin{align*}
\norm{ \bigl( \lambda(t) + \varepsilon e^{i \vartheta_0} - A(t) \bigr)^{-1} P(t) } \le \frac{M_1}{\varepsilon^{m_0}}
\end{align*}
für alle $\varepsilon \in (0, \varepsilon_0]$ und alle $t \in I$.
\\

(ii) Sei $N := \big \{ t \in I: (1-P(t))X \not \subset \overline{ \im (A(t)-\lambda(t))^{m_0} }  \big \}$. Wir zeigen, dass 
\begin{align*}
\norm{ \varepsilon \, \bigl( \lambda(t) + \varepsilon e^{i \vartheta_0} - A(t) \bigr)^{-1} \, (1-P(t)) } \longrightarrow 0 \quad (\varepsilon \searrow 0)
\end{align*}
bzgl. der starken Operatortopologie für alle $t \in I \setminus N$.

Sei $t \in I \setminus N$ und zunächst $y_0 \in \im(A(t)-\lambda(t))^{m_0}$. Dann $y_0 = (\lambda(t)-A(t))^{m_0} x_0$ für ein $x_0 \in D\bigl( (A(t)-\lambda(t))^{m_0} \bigr)$ und daher
\begin{align*}
&\varepsilon \, \bigl( \lambda(t) + \varepsilon e^{i \vartheta_0} - A(t) \bigr)^{-1} \, (1-P(t))y_0 \\
& \qquad \quad = \varepsilon \, \bigl( \lambda(t) + \varepsilon e^{i \vartheta_0} - A(t) \bigr)^{-1} \, (\lambda(t)-A(t))^{m_0} \, (1-P(t))x_0 \\
& \qquad \quad = \varepsilon \, \bigl( \lambda(t) + \varepsilon e^{i \vartheta_0} - A(t) \bigr)^{-1} \, \bigl( - \varepsilon e^{i \vartheta_0} \bigr)^{m_0} (1-P(t))x_0 \\
& \qquad \quad \quad + \varepsilon \, \sum_{k=1}^{m_0} {m_0 \choose k} \bigl( \lambda(t) + \varepsilon e^{i \vartheta_0} - A(t) \bigr)^{k-1} \, \bigl( - \varepsilon e^{i \vartheta_0} \bigr)^{m_0 -k} \, (1-P(t))x_0  \\
& \qquad \quad \longrightarrow 0 \quad (\varepsilon \searrow 0).
\end{align*}
Sei nun $y \in (1-P(t))X$. Dann $y \in \overline{  \im(A(t)-\lambda(t))^{m_0}  }$ und, da 
\begin{align*}
\norm{ \varepsilon \, \bigl( \lambda(t) + \varepsilon e^{i \vartheta_0} - A(t) \bigr)^{-1} \, (1-P(t)) }  \le M_2
\end{align*}
für alle $\varepsilon \in (0, \varepsilon_0]$, folgt aus obigem, dass auch 
\begin{align*}
\varepsilon \, \bigl( \lambda(t) + \varepsilon e^{i \vartheta_0} - A(t) \bigr)^{-1} \, (1-P(t))y \longrightarrow 0 \quad (\varepsilon \searrow 0),
\end{align*}
wie gewünscht.
\end{proof}

Aus diesem Lemma bekommen wir nun leicht den angesprochenen nicht sehr befriedigenden Satz.

\begin{thm}  \label{thm: erweiterter adsatz ohne sl}
Sei $A(t)$ für jedes $t \in I$ eine lineare Abbildung $D \subset X \to X$, die eine stark stetige Halbgruppe auf $X$ erzeugt, sei $A$ $(M,0)$-stabil und sei $t \mapsto A(t)x$ stetig differenzierbar für alle $x \in D$. 
Sei $\lambda(t)$ für jedes $t \in I$ ein Eigenwert von $A(t)$, sodass $\lambda(t) + \varepsilon e^{i \vartheta_0} \in \rho(A(t))$ für alle $\varepsilon \in (0, \varepsilon_0]$ und alle $t \in I$, und sei $t \mapsto \lambda(t)$ stetig differenzierbar.
Sei $P(t)$ für jedes $t \in I$ eine beschränkte Projektion in $X$ mit $P(t)A(t) \subset A(t)P(t)$, sodass 
\begin{align*}
P(t)X \subset \ker (A(t)-\lambda(t))^{m_0}
\end{align*} 
für alle $t \in I$ und 
\begin{align*}
(1-P(t))X \subset \overline{\im(A(t)-\lambda(t))^{m_0}}
\end{align*}
für fast alle $t \in I$, sei $\rk P(0) < \infty$ oder $\rk (1-P(0)) < \infty$ und sei $t \mapsto P(t)x$ zweimal stetig differenzierbar für alle $x \in X$. Schließlich existiere eine Zahl $M_0$, sodass
\begin{align*}
\norm{ \bigl( \lambda(t) + \varepsilon e^{i \vartheta_0} - A(t) \bigr)^{-1} \, (1-P(t)) }  \le \frac{M_0}{\varepsilon}
\end{align*}
für alle $\varepsilon \in (0, \varepsilon_0]$ und alle $t \in I$. 
Dann gilt die Aussage von Satz~\ref{thm: allg adsatz ohne sl} genau dann, wenn
\begin{align*}
\sup_{t \in I} \norm{  (1-P(t)) \int_0^t U_T(t,s) \, T \, \bigl( \lambda(s)-A(s) \bigr) \, V_T(s)P(0) \, ds  } \longrightarrow 0 \quad (T \to \infty),
\end{align*}
wobei $V_T$ die Zeitentwicklung zu $T \lambda + [P',P]$ bezeichnet.
\end{thm}

\begin{proof}
Sei $V_T$ die Zeitentwicklung zu $T \lambda + [P',P]$. Dann gilt $\norm{V_T(t,s)} \le e^{c(t-s)}$ für alle $(s,t) \in \Delta$ und alle $T \in (0, \infty)$ ($c$ eine generische Konstante) und es gilt $V_T(t)P(0) = P(t)V_T(t)$ für alle $t \in I$ (s. den Beweis von Satz~\ref{thm: allg adsatz ohne sl}). 

Die Abbildung $[0,t] \ni s \mapsto U_T(t,s) V_T(s) P(0)x$ ist differenzierbar für alle $x \in X$ mit Ableitung
\begin{align*}
s \mapsto \: &\dds{ U_T(t,s)V_T(s)P(0)x } \\
& = U_T(t,s) \bigl( T (\lambda(s)-A(s)) \bigr) V_T(s)P(0)x + U_T(t,s) \, P'(s) \, V_T(s) P(0)x,
\end{align*}
die wegen der starken Stetigkeit von $s \mapsto (A(s)-\lambda(s))P(s)$ (Beweis von Lemma~\ref{lm: lm zu erweitertem adsatz ohne sl}) stetig ist.
Also 
\begin{align}  \label{eq: erweiterter adsatz ohne sl 1}
&V_T(t)P(0)x - U_T(t)P(0)x \notag \\
& \qquad \quad = \int_0^t U_T(t,s) \bigl( T (\lambda(s)-A(s)) \bigr) V_T(s)P(0)x \,ds \notag \\
& \qquad \quad \quad + \int_0^t U_T(t,s) \, P'(s) \, V_T(s) P(0)x \,ds
\end{align}
für alle $x \in X$ und alle $t \in I$ sowie $T \in (0,\infty)$. 

Wir zeigen nun, dass 
\begin{align*}
\sup_{t \in I} \norm{ \int_0^t U_T(t,s) \, P'(s) \, V_T(s) P(0) \,ds } \longrightarrow 0 \quad (T \to \infty),
\end{align*}
woraus dann aufgrund von~\eqref{eq: erweiterter adsatz ohne sl 1} die behauptete Äquivalenz folgt.

Zunächst gilt
\begin{align}   \label{eq: erweiterter adsatz ohne sl 2}
&\int_0^t U_T(t,s) \, P'(s) \, V_T(s) P(0)x \,ds \notag \\
& \qquad \quad = \int_0^t U_T(t,s) \, (\lambda(s)-A(s)) \, \bigl( \lambda(s) + \varepsilon e^{i \vartheta_0} - A(s) \bigr)^{-1} \, P'(s) \, V_T(s) P(0)x \,ds \notag \\
& \qquad \quad \quad + \int_0^t U_T(t,s) \, \varepsilon e ^{i \vartheta_0}  \, \bigl( \lambda(s) - \varepsilon e^{i \vartheta_0} - A(s) \bigr)^{-1} \, P'(s) \, V_T(s) P(0)x \,ds
\end{align} 
für alle $\varepsilon \in (0, \varepsilon_0]$.
Aus Lemma~\ref{lm: lm zu erweitertem adsatz ohne sl} folgt 
\begin{align*}
\norm{ \bigl( \lambda(s) + \varepsilon e^{i \vartheta_0} - A(s) \bigr)^{-1} } \le  \frac{c}{\varepsilon^{m_0}}
\end{align*}
für alle $s \in I$ und alle $\varepsilon \in (0, \varepsilon_0]$ und daraus nach Lemma~\ref{lm: reg of inv}, dass $s \mapsto \bigl( \lambda(s) + \varepsilon e ^{i \vartheta_0} - A(s) \bigr)^{-1}$ stark stetig differenzierbar ist. Wir können also wieder (wie im Beweis von Satz~\ref{thm: allg adsatz ohne sl}) partiell integrieren und erhalten
\begin{align*}
& \int_0^t U_T(t,s) \, (\lambda(s)-A(s)) \, \bigl( \lambda(s) + \varepsilon e^{i \vartheta_0} - A(s) \bigr)^{-1} \, P'(s) \, V_T(s) P(0)x \,ds \\
& \quad = \frac{1}{T} \, U_T(t,s) \bigl( \lambda(s) + \varepsilon e^{i \vartheta_0} - A(s) \bigr)^{-1} \, P'(s) \, V_T(s) P(0)x \big|_{s=0}^{s=t} \\
& \quad \quad - \frac{1}{T} \, \int_0^t  U_T(t,s) \Bigl( \dds{ \bigl( \lambda(s) + \varepsilon e^{i \vartheta_0} - A(s) \bigr)^{-1} } \Bigr) (1-P(s)) P'(s) P(s) \, V_T(s) P(0)x \,ds \\
& \quad \quad - \frac{1}{T} \, \int_0^t U_T(t,s) \bigl( \lambda(s) + \varepsilon e^{i \vartheta_0} - A(s) \bigr)^{-1} \Bigl( P''(s) + P'(s)^2 \Bigr) V_T(s) P(0)x \,ds.
\end{align*}

Jetzt können wir die Summanden auf der rechten Seite von~\eqref{eq: erweiterter adsatz ohne sl 2} abschätzen. Erstens gilt nach der eben durchgeführten partiellen Integration
\begin{align} \label{eq: erweiterter adsatz ohne sl 3} 
& \norm{  \int_0^t U_T(t,s) \, (\lambda(s)-A(s)) \, \bigl( \lambda(s) + \varepsilon e^{i \vartheta_0} - A(s) \bigr)^{-1} \, P'(s) \, V_T(s) P(0)x \,ds  } \notag \\
& \qquad \le c \, \Bigl( \frac{1}{T \, \varepsilon} + \frac{1}{T \, \varepsilon^{m_0+1}} + \frac{1}{T \, \varepsilon^{m_0}} \Bigr) \norm{x} \le \frac{c}{T \, \varepsilon^{m_0+1}}  \, \norm{x},
\end{align}
denn mithilfe von Lemma~\ref{lm: reg of inv} bekommen wir
\begin{align*}
\norm{   \Bigl( \dds{ \bigl( \lambda(s) + \varepsilon e^{i \vartheta_0} - A(s) \bigr)^{-1} } \Bigr) (1-P(s))   } \le \frac{c}{ \varepsilon^{m_0+1}   }
\end{align*}
für alle $s \in I$.
Zweitens gilt
\begin{align}  \label{eq: erweiterter adsatz ohne sl 4} 
\norm{   \int_0^t U_T(t,s) \, \varepsilon e ^{i \vartheta_0}  \, \bigl( \lambda(s) - \varepsilon e^{i \vartheta_0} - A(s) \bigr)^{-1} \, P'(s) \, V_T(s) P(0)x \,ds   } \notag \\
\le c \, \int_0^t \norm{  \varepsilon e ^{i \vartheta_0}  \, \bigl( \lambda(s) - \varepsilon e^{i \vartheta_0} - A(s) \bigr)^{-1} \, (1-P(s))P'(s)P(s) } \,ds \, \norm{x}
\end{align}

Weiter haben wir nach dem lebesgueschen Satz
\begin{align*}
\int_0^t \norm{  \varepsilon  \, \bigl( \lambda(s) - \varepsilon e^{i \vartheta_0} - A(s) \bigr)^{-1} \, (1-P(s))P'(s)P(s) } \,ds \longrightarrow 0 \quad (\varepsilon \searrow 0),
\end{align*}
da ja wegen Lemma~\ref{lm: lm zu erweitertem adsatz ohne sl} und $\rk P(0) < \infty$ oder $\rk (1-P(0)) < \infty$ 
\begin{align*}
\varepsilon  \, \bigl( \lambda(s) - \varepsilon e^{i \vartheta_0} - A(s) \bigr)^{-1} \, (1-P(s))P'(s)P(s) \longrightarrow 0 \quad (\varepsilon \searrow 0)
\end{align*}
bzgl. der Normoperatortopologie für fast alle $s \in I$.

Sei nun 
\begin{align*}
\varepsilon_T :=  T^{-\frac{1}{m_0 +2}} \text{ \; für alle } T \in \bigl[ \varepsilon_0^{-(m_0 +2)}, \infty \bigr).
\end{align*}
Dann folgt mithilfe von \eqref{eq: erweiterter adsatz ohne sl 3} und \eqref{eq: erweiterter adsatz ohne sl 4} aus \eqref{eq: erweiterter adsatz ohne sl 2}, dass
\begin{align*}
&\sup_{t \in I} \norm{ \int_0^t U_T(t,s) \, P'(s) \, V_T(s) P(0) \,ds } \\
& \qquad \le \frac{c}{T \, {\varepsilon_T}^{m_0+1}} + \int_0^1 \norm{  \varepsilon_T  \, \bigl( \lambda(s) - \varepsilon_T e^{i \vartheta_0} - A(s) \bigr)^{-1} \, (1-P(s))P'(s)P(s) } \,ds \\
& \qquad \longrightarrow 0 \quad (\varepsilon \searrow 0),
\end{align*} 
wie gewünscht.
\end{proof}

\subsection{Quantitative Adiabatensätze ohne Spektrallückenbedingung}

In unserem ersten (qualitativen) Adiabatensatz ohne Spektrallückenbedingung (Satz~\ref{thm: allg adsatz ohne sl}) war es ganz entscheidend, dass
\begin{align*}
\norm{ \varepsilon \, \bigl( \lambda(t) + \varepsilon e^{i \vartheta_0} - A(t) \bigr)^{-1} \, P'(t)P(t) } \longrightarrow 0 \quad (\varepsilon \searrow 0)
\end{align*}
für fast alle $t \in I$. Wenn wir dies nun verschärfen und annehmen, dass
\begin{align*}
\sup_{t \in I} \norm{ \varepsilon \, \bigl( \lambda(t) + \varepsilon e^{i \vartheta_0} - A(t) \bigr)^{-1} \, P'(t)P(t) } \le \eta(\varepsilon)
\end{align*}
für eine (von $t$ unabhängige!) Abbildung $\eta$ mit $\eta(\varepsilon) \longrightarrow 0 \;\;(\varepsilon \searrow 0)$, so erhalten wir eine Konvergenzrate für den uns interessierenden Ausdruck $\sup_{t \in I} \norm{(1-P(t)) U_T(t) P(0) }$, das heißt einen quantitativen Adiabatensatz. 

Der folgende Satz ist direkt aus~\cite{Griesemer 07} übertragen und kommt ohne die Voraussetzung aus, dass $\rk P(0) < \infty$ oder $\rk (1-P(0)) < \infty$, und auch auf die Voraussetzung, dass $(1-P(t))X \subset \overline{ \im (A(t)- \lambda(t)) }$ für fast alle $t \in I$, können wir verzichten. 

\begin{thm} \label{thm: quant adsatz ohne sl}
Seien $A(t)$ und $\lambda(t)$ wie in Satz~\ref{thm: allg adsatz ohne sl}. Sei $P(t)$ für jedes $t \in I$ eine beschränkte Projektion in $X$ mit $P(t)A(t) \subset A(t)P(t)$, sodass $P(t)X \subset \ker (A(t)-\lambda(t))$ für alle $t \in I$, und sei $t \mapsto P(t)x$ zweimal stetig differenzierbar für alle $x \in X$. Sei schließlich $\eta$ eine Abbildung $(0,\varepsilon_0] \to (0,\infty)$, sodass $\eta(\varepsilon) \ge \varepsilon$ und 
\begin{align*}
\sup_{t \in I} \norm{ \bigl( \lambda(t) + \varepsilon e^{i \vartheta_0} - A(t) \bigr)^{-1} \, P'(t)P(t) } \le \frac{\eta(\varepsilon)}{\varepsilon}
\end{align*}
für alle $\varepsilon \in (0, \varepsilon_0]$. Dann existiert eine Zahl $c$, sodass
\begin{align*}
\sup_{t \in I} \norm{ (1-P(t)) U_T(t) P(0)  } \le c \, \eta \bigl( T^{-\frac{1}{2}} \bigr)
\end{align*}
für alle $T \in [\varepsilon_0^{-2}, \infty)$.
\end{thm}

\begin{proof}
Sei $V_T$ die Zeitentwicklung zu $T \lambda + [P',P]$. Dann ergibt sich auf die übliche Weise, dass
\begin{align*}
& V_T(t)P(0)x - U_T(t)P(0)x \\
& \qquad = \frac{1}{T} \, U_T(t,s) \bigl( \lambda(s) + \varepsilon e^{i \vartheta_0} - A(s) \bigr)^{-1} \, P'(s) \, V_T(s) P(0)x \big|_{s=0}^{s=t} \\
& \quad \quad \quad - \frac{1}{T} \, \int_0^t  U_T(t,s) \Bigl( \dds{ \bigl( \lambda(s) + \varepsilon e^{i \vartheta_0} - A(s) \bigr)^{-1} } \Bigr)  P'(s) P(s) \, V_T(s) P(0)x \,ds \\
& \quad \quad \quad - \frac{1}{T} \, \int_0^t U_T(t,s) \bigl( \lambda(s) + \varepsilon e^{i \vartheta_0} - A(s) \bigr)^{-1} \Bigl( P''(s) + P'(s)^2 \Bigr) V_T(s) P(0)x \,ds \\
& \qquad \quad +  \int_0^t U_T(t,s) \, \varepsilon e ^{i \vartheta_0}  \, \bigl( \lambda(s) - \varepsilon e^{i \vartheta_0} - A(s) \bigr)^{-1} \, P'(s) \, V_T(s) P(0)x \,ds 
\end{align*}
für alle $x \in X$ und alle $t \in I$ sowie $T \in (0, \infty)$.

Sei $\varepsilon_T := T^{-\frac{1}{2}}$ für alle $T \in [\varepsilon_0^{-2}, \infty)$. Angesichts von  
\begin{align*}
\norm{  \Bigl( \dds{ \bigl( \lambda(s) + \varepsilon e^{i \vartheta_0} - A(s) \bigr)^{-1} } \Bigr)  P'(s) P(s)  } \le \frac{M_0}{\varepsilon} \, c' \, \frac{\eta(\varepsilon)}{\varepsilon} 
\end{align*}
für alle $\varepsilon \in (0, \varepsilon_0]$ und alle $s \in I$ (vgl. den Beweis von Lemma~\ref{lm: abschätzung für abl der resolvente}) und der getroffenen Voraussetzungen ist die Behauptung nunmehr klar.
\end{proof}

Wenn wir neben 
\begin{align*}
\sup_{t \in I} \norm{ \bigl( \lambda(t) + \varepsilon e^{i \vartheta_0} - A(t) \bigr)^{-1} \, P'(t)P(t) } \le \frac{\eta(\varepsilon)}{\varepsilon}
\end{align*}
auch noch
\begin{align*}
\sup_{t \in I} \norm{ P(t)P'(t) \, \bigl( \lambda(t) + \varepsilon e^{i \vartheta_0} - A(t) \bigr)^{-1} } \le \frac{\eta(\varepsilon)}{\varepsilon}
\end{align*}
für alle $\varepsilon \in (0, \varepsilon_0]$ voraussetzen und die übrigen Voraussetzungen des obigen Satzes unverändert übernehmen, bekommen wir sogar
\begin{align*}
\sup_{t \in I} \norm{ U_{a,T}(t) - U_T(t) } \le c \, \eta \bigl( T^{-\frac{1}{2}} \bigr)
\end{align*}
für alle $T \in [\varepsilon_0^{-2}, \infty)$. Dies sieht man wie im Beweis von Satz~\ref{thm: adsatz ohne sl für normale A(t)} (anhand von~\eqref{eq: adsatz ohne sl für normale A(t) 1}, wo wir $Q_n$ durch $P'$ ersetzen können, da wir $P$ hier sogar als \emph{zweimal} stark stetig differenzierbar annehmen). 
\\

Wir geben noch einen auf den Sonderfall normaler $A(t)$ zugeschnittenen quantitativen Adiabatensatz an, der einen entsprechenden Satz von Avron und Elgart (Korollar~1 in~\cite{AvronElgart 99}) und auch Bemerkung~1 in~\cite{Teufel 01} ein wenig verallgemeinert, insbesondere die dort angegebenen Konvergenzraten geringfügig verbessert.

\begin{thm} \label{thm: quant adsatz für hoelderstet spektrmass}
Sei $A(t)$ wie in Satz~\ref{thm: adsatz ohne sl für normale A(t)}.
Sei $\lambda(t)$ für jedes $t \in I$ ein Eigenwert von $A(t)$ mit 
\begin{align*}
S(t):= \lambda(t) + \varepsilon_0 \, S_{(\vartheta_0^-, \vartheta_0^+)} \subset \rho(A(t)),
\end{align*}
wobei $\varepsilon_0 \in (0,\infty)$ und $\vartheta_0^-$, $\vartheta_0^+ \in [0, 2 \pi)$ von $t$ unabhängige Zahlen seien mit $\vartheta_0^-$ < $\vartheta_0^+$. 
Sei $P(t)$ für jedes $t \in I$ eine orthogonale Projektion mit $P(t)A(t) \subset A(t)P(t)$ und $P(t)H \subset \ker (A(t) - \lambda(t))$ und sei $t \mapsto P(t)x$ zweimal stetig differenzierbar für alle $x \in H$. Seien schließlich die Spektralmaße $P^{A(t)}$ der $A(t)$ $\alpha$-hölderstetig gleichmäßig in $t \in I$ für ein $\alpha \in (0,1]$, das heißt: es gibt ein $c_0 \in (0, \infty)$, sodass 
\begin{align*}
P_{x,x}^{A(t)}(E) \le c_0 \, \lambda(E)^{\alpha} \, \norm{x}^2
\end{align*}
für alle $E \in \mathcal{B}_{\complex}$ mit $\lambda(E) \le 1$, alle $x \in H$ und alle $t \in I$.
Dann existiert eine Zahl $c$, sodass
\begin{align*}
\sup_{t \in I} \norm{ U_{a,T}(t) - U_T(t) } \le c \; T^{ -\frac{\alpha}{2(1+\alpha)} }
\end{align*}
für alle $T \in (0, \infty)$.
\end{thm}

\begin{proof}
Wir zeigen, dass positive Zahlen $\varepsilon_0'$ und $M'$ existieren, sodass
\begin{align*}
\norm{  \bigl( \lambda(t) + \varepsilon e^{i \vartheta_0} - A(t) \bigr)^{-1}  } \le M' \, \frac{ \varepsilon^{\frac{\alpha}{1+\alpha}} }{\varepsilon}
\end{align*}
für alle $\varepsilon \in (0, \varepsilon_0']$ und alle $t \in I$, wobei $\vartheta_0 := \frac{1}{2} \bigl( \vartheta_0^- + \vartheta_0^+ \bigr)$.

Sei $r := \frac{1}{1+\alpha}$ und sei
\begin{align*}
\sigma_{1 \varepsilon}(t) := \sigma(A(t)) \cap U_{\varepsilon^r}(\lambda(t)) \text{ \; sowie \; } \sigma_{2 \varepsilon}(t) := \sigma(A(t)) \cap \complex \setminus U_{\varepsilon^r}(\lambda(t))
\end{align*}
für alle $\varepsilon \in (0, \infty)$ und alle $t \in I$. 
Dann gilt nach Satz~\ref{thm: rechenregeln spektralintegral}
\begin{align}  \label{eq: quant adsatz für hoelderstet spektrmass 1}
&\norm{ \bigl( \lambda(t) + \varepsilon e^{i \vartheta_0} - A(t) \bigr)^{-1}  x}^2 \notag \\
& \qquad = \int_{\sigma_{1 \varepsilon}(t)} \frac{1}{ | \lambda(t) + \varepsilon e^{i \vartheta_0} - z |^2 } \, dP_{x,x}^{A(t)} (z)  
+ \int_{\sigma_{2 \varepsilon}(t)} \frac{1}{ | \lambda(t) + \varepsilon e^{i \vartheta_0} - z |^2 } \, dP_{x,x}^{A(t)} (z) \notag \\
& \qquad \le \sup_{z \in \sigma_{1 \varepsilon}(t) } \frac{1}{ | \lambda(t) + \varepsilon e^{i \vartheta_0} - z |^2 } \; P_{x,x}^{A(t)} (\sigma_{1 \varepsilon}(t))  \notag \\
&\qquad \quad + \sup_{z \in \sigma_{2 \varepsilon}(t) } \frac{1}{ | \lambda(t) + \varepsilon e^{i \vartheta_0} - z |^2 } \; P_{x,x}^{A(t)} (\sigma_{2 \varepsilon}(t)) 
\end{align} 
für alle $x \in X$.
Wir schätzen nun die beiden Summanden auf der rechten Seite ab. 
Sei $t \in I$.

Wenn $z \in \sigma_{1 \varepsilon}(t)$, dann $z \in U_{\varepsilon^r}(\lambda(t)) \setminus S(t)$ und wir sehen wie im Beweis von Proposition~\ref{prop: resolvabsch automatisch erf}, dass 
\begin{align*}
| \lambda(t) + \varepsilon e^{i \vartheta_0} - z | \ge \varepsilon \, \sin \beta_0
\end{align*}
für genügend kleine $\varepsilon \in (0, \varepsilon_0]$, wobei $\beta_0 := \frac{1}{2}\bigl( \vartheta_0^+ - \vartheta_0^- \bigr)$. 
Also gilt wegen der $\alpha$-Hölderstetigkeit der Spektralmaße
\begin{align}                 \label{eq: quant adsatz für hoelderstet spektrmass 2}
&\sup_{z \in \sigma_{1 \varepsilon}(t) } \frac{1}{ | \lambda(t) + \varepsilon e^{i \vartheta_0} - z |^2 } \; P_{x,x}^{A(t)} (\sigma_{1 \varepsilon}(t)) \notag \\
& \qquad \le \frac{1}{(\sin \beta_0)^2 \, \varepsilon^2} \, c_0 \, \lambda\bigl( U_{\varepsilon^r}(\lambda(t)) \bigr)^{\alpha} \, \norm{x}^2 
= c_0 \, \frac{ \pi^{\alpha} }{(\sin \beta_0)^2} \, \biggl( \frac{ \varepsilon^{\alpha r} }{ \varepsilon } \biggr)^2 \, \norm{x}^2 \notag \\
& \qquad = {c_1}^2 \, \biggl( \frac{ \varepsilon^{ \frac{\alpha}{1+\alpha} } }{ \varepsilon } \biggr)^2 \, \norm{x}^2
\end{align} 
für genügend kleine $\varepsilon \in (0, \varepsilon_0]$.

Wenn $z \in \sigma_{2 \varepsilon}(t)$, dann $z \notin U_{\varepsilon^r}(\lambda(t))$ und daher
\begin{align*}
| \lambda(t) + \varepsilon e^{i \vartheta_0} - z | \ge | \lambda(t) - z | - \varepsilon \ge \bigl( 1-\varepsilon^{1-r} \bigr) \, \varepsilon^r \ge \frac{1}{c_2} \, \varepsilon^r
\end{align*}
für genügend kleine $\varepsilon \in (0, \varepsilon_0]$. Also gilt
\begin{align}                     \label{eq: quant adsatz für hoelderstet spektrmass 3}
&\sup_{z \in \sigma_{2 \varepsilon}(t) } \frac{1}{ | \lambda(t) + \varepsilon e^{i \vartheta_0} - z |^2 } \; P_{x,x}^{A(t)} (\sigma_{2 \varepsilon}(t)) \notag \\ 
& \qquad \le {c_2}^2 \, \biggl( \frac{\varepsilon^{1-r}}{\varepsilon} \biggr)^2 \, \norm{x}^2 
= {c_2}^2 \, \biggl( \frac{ \varepsilon^{ \frac{\alpha}{1+\alpha} } }{ \varepsilon } \biggr)^2 \, \norm{x}^2
\end{align}
für genügend kleine $\varepsilon \in (0, \varepsilon_0]$.

Aus \eqref{eq: quant adsatz für hoelderstet spektrmass 1}, \eqref{eq: quant adsatz für hoelderstet spektrmass 2} und \eqref{eq: quant adsatz für hoelderstet spektrmass 3} folgt nun, dass ein $\varepsilon_0' \in (0, \varepsilon_0]$ und eine Zahl $M'$ existiert, sodass
\begin{align*}
\norm{  \bigl( \lambda(t) + \varepsilon e^{i \vartheta_0} - A(t) \bigr)^{-1}  } \le M' \, \frac{ \varepsilon^{\frac{\alpha}{1+\alpha}} }{\varepsilon}
\end{align*}
für alle $\varepsilon \in (0, \varepsilon_0']$  und alle $t \in I$.

Sei nun $\eta(\varepsilon) := \varepsilon^{\frac{\alpha}{1+\alpha}}$ für alle $\varepsilon \in (0, \varepsilon_0']$. Dann folgt mit den Ausführungen nach Satz~\ref{thm: quant adsatz ohne sl}, dass eine Zahl $c$ existiert, sodass
\begin{align*}
\sup_{t \in I} \norm{U_{a,T}(t) - U_T(t) } \le c \, \eta( T^{-\frac{1}{2}} ) = c \, T^{ -\frac{\alpha}{2(1+\alpha)} }
\end{align*}
für alle $T \in (0, \infty)$.
\end{proof}

\subsection{Wie weit reichen die Sätze? Beispiele}

Wir beginnen mit einem Beispiel, in dem $\lambda(t)$ für kein einziges $t \in I$ isoliert ist in $\sigma(A(t))$.

\begin{ex}  \label{ex: adsatz ohne sl, sigma nicht einpunktig}
Sei $X := \ell^2(I_{\infty})$. Sei 
\begin{align*}
A_0 := \begin{pmatrix}
0		& 0 						& 0 				&												&							&  \\
0		&	-1             & 1             & 0               &               &     \\
		&0             & -1            & 1             & 0               &  \\
		&      & 0             & -1            & 1             & \ddots \\
		&			&               & 0             & -1            & \ddots \\
     &      & 				       &              & \ddots        & \ddots
\end{pmatrix}, \;
R(t):= \begin{pmatrix}
\cos t  & \sin t & 0        & \cdots\\
-\sin t & \cos t & 0        &      \\
0       & 0      & 1        & \ddots\\
\vdots  &         & \ddots   & \ddots
\end{pmatrix},
\end{align*}
$\lambda(t) := 0$ für alle $t \in I$ und $P_0$ die orthogonale Projektion auf $\spn\{e_1\}$. Sei 
\begin{align*}
A(t) := R(t)^* A_0 R(t) \text{ \; und \; } P(t):= R(t)^* P_0 R(t)
\end{align*}
für alle $t \in I$. 

Dann ist $A_0 = A_1 \oplus A_2$ mit $A_1:= 0$ auf $\spn\{e_1\}$ und $A_2 := -1 + B$, $B$ der Shift nach rechts. Dann erzeugt $A_0$ nach Proposition~\ref{prop: eigenschaften der lambda(d)} und damit auch $A(t)$ für jedes $t \in I$ eine Kontraktionshalbgruppe auf $X$, das heißt, $\sigma(A(t)) \subset \{ z \in \complex: \Re z \le 0 \}$ und 
\begin{align*}
\norm{ \bigl(\lambda(t) + \varepsilon - A(t) \bigr)^{-1} } \le \frac{1}{\varepsilon}
\end{align*}
für alle $\varepsilon \in (0, \infty)$ und alle $t \in I$.
Weiter vertauscht $P_0$ mit $A_0$, $P_0 X \subset \ker(A_0 - \lambda(t))$ und $(1-P_0)X \subset \overline{\im (A_0 - \lambda(t))}$, und daher gelten die entsprechenden Aussagen für $A(t)$ und $P(t)$ statt $A_0$ und $P_0$. 
Schließlich ist offensichtlich $\rk P(t) = 1$ für alle $t \in I$. 

Also sind alle Voraussetzungen von Satz~\ref{thm: allg adsatz ohne sl} erfüllt. Die Aussage dieses Satzes folgt aber nicht aus den Adiabatensätzen mit Spektrallückenbedingung der vorigen Abschnitts, denn $\sigma(A(t))$ ist hier nach Beispiel~\ref{ex: spektrum der shifts} gleich $\overline{U}_1(-1)$, weshalb $\lambda(t)=0$ für kein $t \in I$ isoliert ist in $\sigma(A(t))$. Und auch die trivialen Adiabatensätze aus Abschnitt~\ref{sect: triviale adsätze} können wir nicht heranziehen.  $\blacktriangleleft$
\end{ex}

Im folgenden Beispiel ist $\lambda(t)$ zwar für jedes $t \in I$ isoliert in $\sigma(A(t))$, fällt aber an unendlich vielen Stellen in $\sigma(A(t)) \setminus \{ \lambda(t) \}$ hinein. Satz~\ref{thm: unhandl adsatz mit nichtglm sl} ist hier also sicher nicht anwendbar, weil es dazu nur endlich viele Überschneidungsstellen geben dürfte, Satz~\ref{thm: adsatz ohne sl für normale A(t)} aber schon. 

\begin{ex} \label{ex: adsatz ohne sl, unendlich viele überschneidungen}
Sei $X := \ell^2(I_2)$. Sei 
\begin{align*}
A_0(t):= \begin{pmatrix} \lambda(t)   & 0 \\
													0 					& 0
				 \end{pmatrix}, \;
R(t) := \begin{pmatrix}
\cos t  & \sin t    \\
-\sin t & \cos t      \\
\end{pmatrix},
\end{align*}
\begin{align*}
\lambda(t) := \begin{cases} 0, & t = 0 \\
														t^2 \bigl( \sin \frac{1}{t} - 1 \bigr), & t \in (0,1]
							\end{cases}
\end{align*}
und $P_0$ die orthogonale Projektion auf $\spn\{e_1\}$. Sei 
\begin{align*}
A(t) := R(t)^* A_0(t) R(t) \text{ \; und \; } P(t):= R(t)^* P_0 R(t)
\end{align*}
für alle $t \in I$.

Dann ist $A(t)$ für jedes $t \in I$ normal, $\lambda$ ist eine stetig differenzierbare Abbildung $I \to (-\infty, 0]$, weshalb 
\begin{align*}
\sigma(A(t)) = \{ \lambda(t), 0 \} \subset \{ z \in \complex: \Re z \le 0 \}
\end{align*}
und $t \mapsto A(t)$ stetig differenzierbar ist. 
Weiter gilt $\lambda(t) + \varepsilon e^{i \vartheta_0} \in \rho(A(t))$ und
\begin{align*}
\norm{ \bigl( \lambda(t) + \varepsilon e^{i \vartheta_0} - A(t) \bigr)^{-1} } &= \norm{ \bigl( \lambda(t) + \varepsilon e^{i \vartheta_0} - A_0(t) \bigr)^{-1} } 
= \norm{ \begin{pmatrix} \frac{1}{i \varepsilon }  & 0 \\ 0 & \frac{1}{\lambda(t) + i \varepsilon}  \end{pmatrix}   } \le \frac{1}{\varepsilon}			
\end{align*}
für alle $\varepsilon \in (0, \infty)$ und alle $t \in I$, wobei $\vartheta_0 := \frac{\pi}{2}$.
Schließlich ist $P(t)$ für jedes $t \in I$ eine orthogonale Projektion, $P(t)$ projiziert für alle $t \in I$ in $\ker(A(t)-\lambda(t))$ hinein und $P(t)$ projiziert für fast alle $t \in I$ auf ganz $\ker(A(t)-\lambda(t))$, weil $P_0$ entsprechend für alle $t \in I$ in $\ker (A_0(t) - \lambda(t))$ hineinprojiziert und für fast alle $t \in I$, nämlich für alle (abzählbar unendlich vielen!) $t \in I$ mit $\lambda(t) \ne 0$, auf ganz $\ker(A_0(t) - \lambda(t))$ projiziert. 

Wir sehen nun mithilfe von Proposition~\ref{prop: gemeinsamkeiten der bsp}, dass auch die übrigen Voraussetzungen von Satz~\ref{thm: adsatz ohne sl für normale A(t)} erfüllt sind. Und die Aussage dieses Satzes folgt nicht schon aus den trivialen Adiabatensätzen in Abschnitt~\ref{sect: triviale adsätze}. $\blacktriangleleft$
\end{ex}

Auch in den Adiabatensätzen ohne Spektrallückenbedingung (Satz~\ref{thm: allg adsatz ohne sl} und Satz~\ref{thm: adsatz ohne sl für normale A(t)}) ist die Voraussetzung der $(M,0)$-Stabilität wesentlich: wenn sie verletzt ist, braucht die Aussage dieser Sätze nicht zu gelten. Wir können, um dies einzusehen, Beispiel~\ref{ex: (M,0)-stabilität wesentlich in den adsätzen mit sl} übernehmen.

\begin{ex} \label{ex: (M,0)-stabilität wesentlich in adsatz ohne sl}
Sei $X := \ell^2(I_2)$. Sei $\lambda$ eine stetig differenzierbare Abbildung $I \to [0, \infty)$, die fast überall von $0$ verschieden ist, und seien $A(t)$, $P(t)$ wie in Beispiel~\ref{ex: (M,0)-stabilität wesentlich in den adsätzen mit sl} definiert.  

Wie man leicht sieht, sind dann alle Voraussetzungen von Satz~\ref{thm: adsatz ohne sl für normale A(t)} (und damit auch die von Satz~\ref{thm: allg adsatz ohne sl})  erfüllt mit der einzigen Ausnahme, dass $A$ hier nicht $(M,0)$-stabil ist, weil sonst $\sigma(A(t))$ in $\{ z \in \complex: \Re z \le 0 \}$ enthalten wäre für alle $t \in I$. $A$ ist hier nur $(1, \omega)$-stabil für $\omega := \sup_{t \in I} \lambda(t)$.

Dasselbe Argument wie in Beispiel~\ref{ex: (M,0)-stabilität wesentlich in den adsätzen mit sl} zeigt, dass die Aussage des Adiabatensatzes denn auch nicht erfüllt ist. $\blacktriangleleft$
\end{ex}

Das folgende Beispiel ist eine Übertragung von Beispiel~\ref{ex: A(t)= multop, mit sl} auf die Situation ohne Spektrallückenbedingung. 

\begin{ex} \label{ex: A(t)= multop, ohne sl}
Sei $X := L^2(\real, \complex)$. Sei
\begin{align*}
f_t := f_0(\,.\,+ t) \text{ \; und \; } f_0 \in C_c^1(\real, i \, \real) \text{ \; mit } f_0 \ne 0,
\end{align*}
$A(t) := M_{f_t}$ auf $X$, $\lambda(t) := 0$ für alle $t \in I$ und $P(t)$ eine orthogonale Projektion in $X$, die in $\ker(A(t) - \lambda(t))$ hinein projiziert für alle $t \in I$ und auf $\ker(A(t)-\lambda(t))$ projiziert für fast alle $t \in I$. Dann sind fast alle Voraussetzungen von Satz~\ref{thm: adsatz ohne sl für normale A(t)} erfüllt, nur ist hier sowohl $\rk P(0) = \infty$ als auch $\rk (1-P(0)) = \infty$ und $t \mapsto P(t)$ ist gemäß den Ausführungen nach Lemma~\ref{lm: multiplikationsop mit char fkt nur dann stark db nach t wenn schon konst} nicht stetig differenzierbar, da $t \mapsto P(t)$ nicht konstant ist: es gilt ja
\begin{align*}
\ker(A(t)-\lambda(t)) = \{ g \in X: (f_t - \lambda(t))g = 0 \} = \{g \in X: g = \chi_{ \{f_t = \lambda(t) \} } \, g \}
\end{align*} 
für alle $t \in I$ und damit 
\begin{align*}
P(t) g = \chi_{ \{f_t = \lambda(t) \} } \, g = \chi_{ \{f_t = 0\} } \, g
\end{align*}
für fast alle $t \in I$ und alle $g \in X$.

Und tatsächlich geht auch die Aussage des Satzes~\ref{thm: adsatz ohne sl für normale A(t)} schief, was man wie in Beispiel~\ref{ex: A(t)= multop, mit sl} sieht. $\blacktriangleleft$
\end{ex}

Was wir im Anschluss an Beispiel~\ref{ex: A(t)= multop, mit sl} bemerkt haben, gilt auch hier (in der entsprechenden Situation ohne Spektrallücke): auf der einen Seite können die Voraussetzungen von Satz~\ref{thm: adsatz ohne sl für normale A(t)} für $A(t) = M_{f_t}$ nur dann erfüllt sein, wenn $t \mapsto P(t)$ konstant ist. Sei nämlich $X := L^2(X_0,\complex)$ für einen Maßraum $(X_0, \mathcal{A}, \mu)$, sei $A(t) = M_{f_t}$ für messbare Abbildungen $f_t$ mit $\Re f_t \le 0$ fast überall, sei $\lambda(t)$ ein Eigenwert von $A(t)$ und sei $P(t)$ eine orthogonale Projektion in $X$, sodass $P(t)X = \ker(A(t)-\lambda(t))$ für fast alle $t \in I$.
%Wegen
%\begin{align*}
%\ker(A(t)-\lambda(t)) = \{ g \in X: (f_t - \lambda(t))g = 0 \} = \{g \in X: g = \chi_{ \{f_t = \lambda(t) \} } \, g \}
%\end{align*}
%erhalten wir, dass $P(t) g = \chi_{ \{f_t = \lambda(t) \} } \, g$ für fast alle $t \in I$ und für alle $g \in X$. Die Bemerkung nach Lemma~\ref{lm: multiplikationsop mit char fkt nur dann stark db nach t wenn schon konst} zeigt nun, dass die Abbildung $t \mapsto P(t)$ nur dann stark stetig differenzierbar sein kann, wenn sie schon konstant ist.
Wie im Beispiel oben erhalten wir, dass dann 
\begin{align*}
P(t) g = \chi_{ \{f_t = \lambda(t) \} } \, g 
\end{align*}
für fast alle $t \in I$ und für alle $g \in X$, und damit aufgrund der Ausführungen nach Lemma~\ref{lm: multiplikationsop mit char fkt nur dann stark db nach t wenn schon konst}, dass die Abbildung $t \mapsto P(t)$ nur dann stark stetig differenzierbar sein kann, wenn sie schon konstant ist.

Auf der anderen Seite scheint aber auch die Aussage von Satz~\ref{thm: adsatz ohne sl für normale A(t)} -- wenn überhaupt -- nur sehr selten nichttrivialerweise erfüllt zu sein. Zumindest für beschränkte schiefselbstadjungierte $A(t)$, die stark stetig von $t$ abhängen, ist die Aussage dieses Satzes sicher nicht erfüllt. Dies folgt mit demselben Argument wie in Beispiel~\ref{ex: A(t)= multop, mit sl}.
\\

Wie in Beispiel~\ref{ex: reg von P wesentlich im adsatz mit nichtglm sl} zeigen wir nun, dass die Aussage von Satz~\ref{thm: adsatz ohne sl für normale A(t)} nicht zu gelten braucht, wenn die in diesem Satz getroffene Voraussetzung, dass $t \mapsto P(t)$ stetig differenzierbar ist, als einzige verletzt ist. %Auch dann nicht, wenn die $A(t)$ schiefselbstadjungiert sind und $t \mapsto A(t)$ sogar analytisch ist.

\begin{ex} \label{ex: reg von P wesentlich in adsatz ohne sl}
Seien $A(t)$ und $\lambda(t)$ wie in Beispiel~\ref{ex: reg von P wesentlich im adsatz mit nichtglm sl}.
Dann ist $A(t)$ für jedes $t \in I$ schiefselbstadjungiert, $t \mapsto A(t)$ ist analytisch (insbesondere einmal stark stetig differenzierbar), $\lambda(t)$ ist für jedes $t \in I$ ein Eigenwert von $A(t)$, sodass $\lambda(t) + \varepsilon \in \rho(A(t))$ und
\begin{align*}
\norm{ \bigl(\lambda(t) + \varepsilon - A(t) \bigr)^{-1} } \le \frac{1}{\varepsilon}
\end{align*}
für alle $\varepsilon \in (0, \infty)$ und alle $t \in I$ ($A(t)$ erzeugt ja eine Kontraktionshalbgruppe).
Aber für die orthogonalen Projektionen $P_0(t)$ von $A(t)$ auf $\ker (A(t)-\lambda(t))$ gilt
\begin{align*}
P_0(t) = \begin{cases} P_1, & t \in [0, \frac{1}{2}) \\
										 1, & t = \frac{1}{2} \\
										 P_2, & t \in (\frac{1}{2}, 1]
			 \end{cases},
\end{align*}
wobei $P_1$, $P_2$ die orthogonale Projektion auf $\spn\{e_1\}$ bzw. $\spn\{e_2\}$ bezeichnet. Und daraus folgt, dass keine orthogonalen Projektionen $P(t)$ existieren, die stetig von $t$ abhängen und bis auf eine Nullmenge mit $P_0(t)$ übereinstimmen.

Also sind zwar die Voraussetzungen von Satz~\ref{thm: adsatz ohne sl für normale A(t)} an $A$ und $\sigma$ erfüllt aber nich die an $P$. 
Wie in Beispiel~\ref{ex: reg von P wesentlich im adsatz mit nichtglm sl} folgt nun, dass auch die Aussage des Adiabatensatzes hier nicht erfüllt ist. $\blacktriangleleft$
\end{ex}

Wir schließen mit einem Beispiel, das einen über Satz~\ref{thm: allg adsatz ohne sl} hinausgehenden Adiabatensatz wünschenswert erscheinen lässt. Dieses Beispiel ist eine überaus natürliche und einfache Verbindung %/Synthese
unserer bisherigen Beispiele zu Spektralwertüberschneidungen (Beispiel~\ref{ex: adsatz mit nichtglm sl, endl viele überschneidungen} und Beispiel~\ref{ex: adsatz ohne sl, unendlich viele überschneidungen}) und leider wissen wir nicht, ob in diesem Beispiel die Aussage des Adiabatensatzes gilt oder nicht. Angesichts der beiden eben genannten Beispiele, in denen die Aussage des Adiabatensatzes jeweils erfüllt ist, vermuten wir aber, dass sie auch hier erfüllt ist.

\begin{ex} \label{ex: motivierendes bsp für erweiterten adsatz ohne sl}
Sei $X := \ell^2(I_3)$. Sei $\lambda$ eine stetig differenzierbare Abbildung $I \to (-\infty, \lambda_2]$, sodass $\lambda(t) = \lambda_2$ für unendlich viele $t \in I$ und $\lambda(t) \ne \lambda_2$ für fast alle $t \in I$, sei
\begin{align*}
A_0(t) := \begin{pmatrix}
\lambda(t)  & 1           & 0 \\
0           & \lambda(t)  & 0  \\
0           & 0           & \lambda_2
\end{pmatrix}, \;
R(t):= \begin{pmatrix}
1  & 0		 	& 0 \\
0  & \cos t & \sin t \\
0  & -\sin t  & \cos t
\end{pmatrix}
\end{align*}
und sei $P_0$ die orhtogonale Projektion auf $\spn\{e_1, e_2\}$. Sei 
\begin{align*}
A(t) := R(t)^* A_0(t) R(t) \text{ \; und \; } P(t):= R(t)^* P_0 R(t)
\end{align*}
für alle $t \in I$.

Dann ist Satz~\ref{thm: allg adsatz ohne sl} nicht anwendbar, weil für alle $t \in I$ gilt, dass $R(t)^* e_2 \notin \ker(A(t)-\lambda(t))$ und $R(t)^* e_2 \notin \im (A(t)-\lambda(t))$ und damit $\ker(A(t)-\lambda(t)) +  \im (A(t)-\lambda(t)) \ne X$. 
Wir zeigen nun, dass die Voraussetzungen von Satz~\ref{thm: erweiterter adsatz ohne sl} hingegen erfüllt sind.
Zunächst ist $A$ $(1,0)$-stabil, weil $\lambda(t) \le \lambda_2$ für alle $t \in I$, $t \mapsto A(t)$ ist stetig differenzierbar, $\lambda(t) + \varepsilon e^{i \vartheta_0} \in \rho(A(t))$ für alle $\varepsilon \in (0, \infty)$ und alle $t \in I$ (wobei $\vartheta_0 := \frac{\pi}{2}$) und $t \mapsto \lambda(t)$ ist stetig differenzierbar. 
Weiter vertauscht $P_0$ mit $A_0(t)$ für alle $t \in I$ und 
\begin{align*}
P_0 X = \spn\{e_1, e_2\} \subset \ker (A_0(t)-\lambda(t))^2
\end{align*}
für alle $t \in I$ und
\begin{align*}
(1-P_0)X = \spn\{ e_3 \} \subset \im (A_0(t)-\lambda(t))^2
\end{align*}
für fast alle $t \in I$ (nämlich für alle $t \in I$ mit $\lambda(t) \ne \lambda_2$), woraus wir erhalten, dass $P(t)$ mit $A(t)$ vertauscht für alle $t \in I$, dass $P(t)$ für alle $t \in I$ in $\ker (A_0(t)-\lambda(t))^2$ hineinprojiziert und $1-P(t)$ für fast alle $t \in I$ in $\im (A_0(t)-\lambda(t))^2$ hineinprojiziert.
Schließlich ist $\rk P(0) = 2$, $t \mapsto P(t)$ ist zweimal stetig differenzierbar und wegen $A_0(t)\big|_{(1-P_0)X} = \lambda_2$ haben wir die Abschätzung
\begin{align*}
&\norm{  \bigl( \lambda(t) + \varepsilon e^{i \vartheta_0} - A(t) \bigr)^{-1} \, (1-P(t)) }
= \norm{  \bigl( \lambda(t) + \varepsilon e^{i \vartheta_0} - A_0(t) \bigr)^{-1} \, (1-P_0) } \\
&\qquad \qquad \le \Big| \frac{1}{ \lambda(t) + i \varepsilon - \lambda_2 } \Big| \, \norm{1-P_0} \le \frac{1}{\varepsilon} 
\end{align*}
für alle $\varepsilon \in (0, \infty)$ und alle $t \in I$. 

Also sind tatsächlich alle Voraussetzungen von Satz~\ref{thm: erweiterter adsatz ohne sl} erfüllt. Um allerdings mithilfe dieses Satzes auf die Aussage des Adiabatensatzes schließen zu können, bräuchten wir einen handlichen Ausdruck für die Zeitentwicklung zu $T (A-\lambda)$, den es nicht zu geben scheint. Jedenfalls vertauschen die $A(t)$ nicht paarweise, weil $A(t)-\lambda(t)$ nur für diejenigen $t \in (0,1]$ mit $A(0) - \lambda(0)$ vertauscht, für die $\lambda(t) = \lambda_2$. Wir können uns also wenigstens nicht auf Korollar~\ref{cor: Dyson} berufen. 
Und auch auf die trivialen Adiabatensätze, Satz~\ref{thm: triv adsatz 1} und Satz~\ref{thm: triv adsatz 2} in der Version mit $M = 1$, können wir nicht zurückgreifen. $\blacktriangleleft$
\end{ex}

%%%%%%%%%%%%%%%%%%%%%%%%%%%%%%%%%%%%%%%%%%%%%%%%%%%%%%%%%%%%%%%%%%%%%%%%%%%%%%%%%%%%%%%%%%%%%%%%%%%%%%%%%%%%%%%%%%%%%%%%%%%%%%%%%%%%%%%%%%%%%%%%%%%%%%

\section{Adiabatensätze höherer Ordnung} \label{sect: höhere adsätze}

In diesem Abschnitt entwickeln wir Satz~\ref{thm: unhandl adsatz mit sl} weiter zu Satz~\ref{thm: höherer adsatz} und bekommen auf diese Weise auch einen Adiabatensatz höherer Ordnung (Korollar~\ref{cor: höherer adsatz für suppP' in (0,1)}), der unter geeigneten Voraussetzungen höhere Konvergenzordnungen als die bisher besprochenen Adiabatensätze liefert. Wir folgen dabei der Arbeit~\cite{Nenciu 93} Nencius, die wir ein wenig verallgemeinern. Das Vorgehen Nencius unterscheidet sich deutlich von unserer bisherigen Vorgehensweise: bisher haben wir die Zeitentwicklung $U_{ \frac{1}{\varepsilon} }$ an die Projektionen $P(t)$ angepasst, indem wir übergegangen sind zu einer Zeitentwicklung $V_{ \frac{1}{\varepsilon} }$ (der adiabatischen Zeitentwicklung zu $A$ und $P$), die adiabatisch i. e. S. ist bzgl. $P$ und $1-P$ und die (unter geeigneten Voraussetzungen) die eigentlich interessierende Zeitentwicklung $U_{ \frac{1}{\varepsilon} }$ für $\varepsilon \searrow 0$ gut approximiert. Jetzt passen wir nach Nenciu die Projektionen $P(t)$ an die Zeitentwicklung $U_{ \frac{1}{\varepsilon} }$ an, und zwar gehen wir über zu Projektionen $P_{ \varepsilon }(t)$, sodass
\begin{align*}
\sup_{t \in I} \norm{ P_{ \varepsilon }(t) - P(t) } = O(\varepsilon) \quad (\varepsilon \searrow 0)
\end{align*}
und 
\begin{align*}
\sup_{(s,t) \in \Delta} \norm{ P_{\varepsilon}(t) U_{ \frac{1}{\varepsilon} }(t,s) - U_{ \frac{1}{\varepsilon} }(t,s) P_{\varepsilon}(s) } = O(\varepsilon^{m-1}) \text{ bzw. } O\bigl( e^{-\frac{c}{\varepsilon}} \bigr) \quad (\varepsilon \searrow 0).
\end{align*}
\\

Wir führen zunächst eine abkürzende Sprechweise ein, in der es um verschiedene Stufen der Regularität geht. Diese verschiedenen Stufen kennzeichnen wir mit dem Symbol $m$ (das für eine natürliche Zahl oder $\infty$ steht) bzw. $\omega$ (das an Analytizität erinnern soll). 
\\

Sei $A(t)$ für jedes $t \in I$ eine abgeschlossene lineare Abbildung $D \subset X \to X$, $\sigma(t)$ eine kompakte in $\sigma(A(t))$ isolierte Untermenge von $\sigma(A(t))$ und $P(t)$ die Rieszprojektion von $A(t)$ auf $\sigma(t)$. 
Wir nennen dann $A$, $\sigma$, $P$ (zusammengenommen) \emph{$m$-regulär} für ein $m \in \natu \cup \{ \infty \}$ genau dann, wenn gilt:
\begin{itemize} 
\item [(i)] $A(t)$ erzeugt für jedes $t \in I$ eine stark stetige Halbgruppe auf $X$ und $A$ ist $(M,0)$-stabil, $t \mapsto A(t)x$ ist stetig differenzierbar für alle $x \in D$ und $t \mapsto (A(t)-1)^{-1}x$ ist $(m-1)$-mal stetig differenzierbar für alle $x \in X$
\item [(ii)] zu jedem $t_0 \in I$ existiert ein Zykel $\gamma_{t_0}$ und eine in $I$ offene Umgebung $U_{t_0}$ von $t_0$, sodass $\im \gamma_{t_0} \subset \rho(A(t))$ und $n( \gamma_{t_0}, \sigma(t)) = 1$ und $n( \gamma_{t_0}, \sigma(A(t)) \setminus \sigma(t)) = 0$ für alle $t \in U_{t_0}$
\item [(iii)] $t \mapsto P(t)x$ ist $m$-mal stetig differenzierbar für alle $x \in X$.
\end{itemize}

Wir nennen $A$, $\sigma$, $P$ (zusammengenommen) \emph{$\omega$-regulär} genau dann, wenn $A$, $\sigma$, $P$ $\infty$-regulär sind und zusätzlich eine Zahl $c$ existiert und zu jedem $t_0 \in I$ eine in $I$ offene Umgebung $V_{t_0}$ von $t_0$ mit $V_{t_0} \subset U_{t_0}$, sodass
\begin{align*}
\sup_{(t,z) \in V_{t_0} \times \, \im \gamma_{t_0} } \norm{ \ddtk{ (A(t)-z)^{-1} } }   \le c^k \, k!
\end{align*}
für alle $k \in \natu$.
\\

Zu dieser Vereinbarung ist einiges anzumerken. Zunächst haben wir, wenn $A$, $\sigma$, $P$ $m$-regulär sind für ein $m \in \natu$, dass zu jedem $t_0 \in I$ eine in $I$ offene Umgebung $V_{t_0}$ von $t_0$ existiert mit $V_{t_0} \subset U_{t_0}$ und
\begin{align*}
\sup_{(t,z) \in V_{t_0} \times \, \im \gamma_{t_0} } \norm{  (A(t)-z)^{-1}  }   < \infty.
\end{align*}
Das folgt mithilfe von Lemma~\ref{lm: A stetig im verallg sinn} und Satz~\ref{thm: (A(t)-z)^{-1} stetig in (t,z)} (genauso wie im Beweis von Satz~\ref{thm: unhandl adsatz mit sl}).

Aber es gilt noch mehr (s. den Beweis des folgenden Lemmas): die Abbildung $V_{t_0} \ni t \mapsto (A(t)-z)^{-1}x$ ist $(m-1)$-mal stetig differenzierbar für alle $z \in \im \gamma_{t_0}$ und alle $x \in X$, die Abbildung $\im \gamma_{t_0} \ni z \mapsto  \ddtk{ (A(t)-z)^{-1} }$ ist stetig für alle $t \in V_{t_0}$ und alle $k \in \{ 1, \dots, m-1 \}$ bzw. alle $k \in \natu$ und 
\begin{align*}
\sup_{(t,z) \in V_{t_0} \times \, \im \gamma_{t_0} } \norm{ \ddtk{ (A(t)-z)^{-1} } }  < \infty
\end{align*}
für alle $k \in \{ 1, \dots, m-1 \}$ bzw. alle $k \in \natu$. Insbesondere existieren die in der Definition von $\omega$-Regularität vorkommenden Ableitungen überhaupt.  

Schließlich ist $\sigma(t)$ -- dies entnimmt man dem Beweis von Korollar~\ref{cor: sl glm unter den unhandl vor} -- automatisch gleichmäßig isoliert in $\sigma(A(t))$, wenn $m$-Regularität vorliegt.
\\

Die Zykelbedingung ist (nach dem Beweis von Satz~\ref{thm: handl adsatz mit sl}) beispielsweise dann erfüllt, wenn $t \mapsto \sigma(t)$ stetig ist. Die Bedingung, dass die Abbildung $t \mapsto (A(t)-1)^{-1}x$ $(m-1)$-mal stetig differenzierbar ist für alle $x \in X$, ist etwa dann erfüllt, wenn $t \mapsto A(t)x$  $(m-1)$-mal stetig differenzierbar ist für alle $x \in D$ (Lemma~\ref{lm: reg of inv}).

\begin{lm} \label{lm: reg der En}
Sei $A(t)$ für jedes $t \in I$ eine abgeschlossene lineare Abbildung $D \subset X \to X$, $\sigma(t)$ eine kompakte in $\sigma(A(t))$ isolierte Untermenge von $\sigma(A(t))$ und $P(t)$ die Rieszprojektion von $A(t)$ auf $\sigma(t)$, sodass $A$, $\sigma$, $P$ $m$-regulär sind für ein $m \in \natu$. Sei weiter 
\begin{align*}
E_0(t) := P(t)
\end{align*}
und 
\begin{align*}
E_k(t) &:= \frac{1}{2 \pi i} \, \int_{\gamma_t} (A(t)-z)^{-1} \, \Bigl( P(t) E_{k-1}'(t) \overline{P}(t) - \overline{P}(t) E_{k-1}'(t) P(t) \Bigr) (A(t)-z)^{-1} \,dz \\
& \quad \qquad \qquad \qquad + S_k(t) - 2 P(t) S_k(t) P(t)
\end{align*}
für alle $k \in \{1, \dots, m \}$ und alle $t \in I$, wobei $\overline{P}(t) := 1-P(t)$ und
\begin{align*}
S_k(t) := \sum_{l=1}^{k-1} E_{k-l}(t) E_l(t).
\end{align*}
Dann ist $t \mapsto E_k(t)x$ $(m-k)$-mal stetig differenzierbar für alle $x \in X$ und alle $k \in \{ 0, 1, \dots, m \}$ und es gilt $E_k(t)X \subset D$ für alle $t \in I$ und alle $k \in \{ 0, 1, \dots, m \}$.
\end{lm}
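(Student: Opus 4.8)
The plan is to prove both assertions simultaneously by induction on $k \in \{0,1,\dots,m\}$: that $t \mapsto E_k(t)x$ is $(m-k)$-times continuously differentiable for every $x \in X$, and that $E_k(t)X \subset D$ for all $t \in I$. The base case $k=0$ is free: $E_0 = P$ is $m$-times strongly continuously differentiable by the $m$-regularity condition~(iii), and $P(t)X \subset D$ by Satz~\ref{thm: Rieszprojektion}.

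Before the induction I would isolate the one genuinely substantial ingredient, a local regularity statement for the resolvent along the contours. Fix $t_0 \in I$, take $\gamma_{t_0}$, $U_{t_0}$ as in condition~(ii), and choose a relatively open subinterval $V_{t_0}$ of $I$ with $t_0 \in V_{t_0}$ and $\overline{V}_{t_0} \subset U_{t_0}$. By Lemma~\ref{lm: A stetig im verallg sinn} the map $t \mapsto A(t)$ is continuous in the generalized sense, so Satz~\ref{thm: (A(t)-z)^{-1} stetig in (t,z)} gives joint continuity of $(t,z) \mapsto (A(t)-z)^{-1}$ on an open set containing $\overline{V}_{t_0} \times \im \gamma_{t_0}$, whence $\sup_{(t,z) \in \overline{V}_{t_0} \times \im \gamma_{t_0}} \norm{(A(t)-z)^{-1}} < \infty$. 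Now I would set $R(t) := (A(t)-1)^{-1}$ — which exists with $\norm{R(t)} \le M$ by $(M,0)$-stability and Satz~\ref{thm: eigenschaften von erzeugern}, and for which $t \mapsto R(t)x$ is $(m-1)$-times continuously differentiable for all $x \in X$ by~(i) — and use the elementary identity $A(t)-z = (A(t)-1)\bigl(1-(z-1)R(t)\bigr)$, which shows that for $t \in U_{t_0}$ and $z \in \im\gamma_{t_0}$ the bounded operator $1-(z-1)R(t)$ is invertible with $\bigl(1-(z-1)R(t)\bigr)^{-1} = 1+(z-1)(A(t)-z)^{-1}$ and
\[
(A(t)-z)^{-1} = R(t)\bigl(1-(z-1)R(t)\bigr)^{-1}.
\]
Applying Lemma~\ref{lm: reg of inv} to the bijective bounded family $t \mapsto 1-(z-1)R(t)$ on $V_{t_0}$ (whose inverses are norm-bounded uniformly in $(t,z)$ by the supremum above) I get that $t \mapsto \bigl(1-(z-1)R(t)\bigr)^{-1}x$, and hence $t \mapsto (A(t)-z)^{-1}x$ (by Lemma~\ref{lm: strong db of products}), is $(m-1)$-times continuously differentiable for every $x \in X$; iterating the derivative formula of Lemma~\ref{lm: reg of inv} also shows that the $t$-derivatives of $(A(t)-z)^{-1}$ up to order $m-1$ depend continuously on $z \in \im\gamma_{t_0}$ and are bounded uniformly over $\overline{V}_{t_0} \times \im\gamma_{t_0}$.

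For the inductive step, assuming the claim for $0,\dots,k-1$ with $1 \le k \le m$, I would first note that $E_{k-1}$ is $(m-k+1)$-times continuously differentiable, so $E_{k-1}'$ is $(m-k)$-times continuously differentiable, whence $B_{k-1}(t) := P(t)E_{k-1}'(t)\overline{P}(t) - \overline{P}(t)E_{k-1}'(t)P(t)$ is $(m-k)$-times strongly continuously differentiable (iterated product rule, Lemma~\ref{lm: strong db of products}, all occurring families being norm-bounded on the compact $I$ by Banach--Steinhaus). In the contour integral $I_k(t) := \frac{1}{2\pi i}\int_{\gamma_t}(A(t)-z)^{-1}B_{k-1}(t)(A(t)-z)^{-1}\,dz$ I would replace, for $t \in U_{t_0}$, the varying contour $\gamma_t$ by the fixed $\gamma_{t_0}$ — legitimate by Satz~\ref{thm: Cauchy global}, since both cycles lie in $\rho(A(t))$, wind once around $\sigma(t)$ and not around $\sigma(A(t))\setminus\sigma(t)$, hence are homologous in $\rho(A(t))$. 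Then for fixed $x \in X$ and fixed $z \in \im\gamma_{t_0}$ the integrand $t \mapsto (A(t)-z)^{-1}B_{k-1}(t)(A(t)-z)^{-1}x$ is $(m-k)$-times continuously differentiable (a product of the $(m-1)$-times differentiable family $(A(t)-z)^{-1}$, the $(m-k)$-times differentiable $B_{k-1}(t)$, and the $(m-1)$-times differentiable vector $(A(t)-z)^{-1}x$, and $m-1 \ge m-k$), while for fixed $t$ it and its $t$-derivatives up to order $m-k$ are continuous in $z$ and bounded uniformly over $\overline{V}_{t_0}\times\im\gamma_{t_0}$; Lemma~\ref{lm: vertauschung von abl und wegintegral} then gives that $t \mapsto I_k(t)x$ is $(m-k)$-times continuously differentiable on $V_{t_0}$, and, $t_0$ being arbitrary, on all of $I$. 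The same product-rule argument shows $t \mapsto S_k(t)x = \sum_{l=1}^{k-1}E_{k-l}(t)E_l(t)x$ and $t \mapsto S_k(t) - 2P(t)S_k(t)P(t)$ are $(m-k)$-times continuously differentiable, since each index $k-l$, $l$ lies in $\{1,\dots,k-1\}$ so by the inductive hypothesis the factors are at least $(m-k)$-times differentiable; adding up yields the regularity of $E_k$. For the range claim: $(A(t)-z)^{-1}B_{k-1}(t)(A(t)-z)^{-1}x \in D$ for each $z$, and $z \mapsto A(t)(A(t)-z)^{-1}B_{k-1}(t)(A(t)-z)^{-1}x = \bigl(1+z(A(t)-z)^{-1}\bigr)B_{k-1}(t)(A(t)-z)^{-1}x$ is continuous, so closedness of $A(t)$ lets $A(t)$ be pulled into the integral, giving $I_k(t)x \in D$; by the inductive hypothesis $E_l(t)X \subset D$, hence $E_{k-l}(t)E_l(t)X \subset E_{k-l}(t)D \subset D$ and $S_k(t)X \subset D$, and $P(t)X \subset D$ handles the remaining term, so $E_k(t)X \subset D$.

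The hard part is the resolvent-regularity step of the second paragraph: the $m$-regularity hypothesis supplies differentiability of $(A(\cdot)-z)^{-1}$ only at the single parameter $z=1$, and one must propagate it, with uniform control along the whole contour, to all $z \in \im\gamma_{t_0}$; the factorization $(A(t)-z)^{-1} = R(t)\bigl(1-(z-1)R(t)\bigr)^{-1}$ together with Lemma~\ref{lm: reg of inv} does exactly this, after which everything else is routine bookkeeping with the product rule, the Cauchy homology argument, and differentiation under the contour integral.
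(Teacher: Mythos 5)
Your proposal is correct and follows essentially the same route as the paper: the same factorization $(A(t)-z)^{-1}=(A(t)-1)^{-1}\bigl(1-(z-1)(A(t)-1)^{-1}\bigr)^{-1}$ combined with Lemma~\ref{lm: reg of inv} and the uniform bound from Satz~\ref{thm: (A(t)-z)^{-1} stetig in (t,z)} to get $(m-1)$-fold strong differentiability of the resolvent with uniform control along the contour, then finite induction with Lemma~\ref{lm: vertauschung von abl und wegintegral} and the product rule, and the closedness argument for $E_k(t)X\subset D$. No gaps.
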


\begin{proof}
Wir beginnen mit ein paar kleinen (oben schon angedeuteten) Vorbereitungen, die es uns erlauben werden mithilfe von Lemma~\ref{lm: vertauschung von abl und wegintegral} die behaupteten Aussagen (mitsamt der Wohldefiniertheit der $E_k(t)$) sehr leicht einzusehen.

Zunächst existiert, wie oben bemerkt, zu jedem $t_0 \in I$ eine in $I$ offene Umgebung $V_{t_0}$ von $t_0$ mit $V_{t_0} \subset U_{t_0}$ und
\begin{align*}
\sup_{(t,z) \in V_{t_0} \times \, \im \gamma_{t_0} } \norm{  (A(t)-z)^{-1}  }   < \infty.
\end{align*}
Insbesondere gilt 
\begin{align*}
&\sup_{(t,z) \in V_{t_0} \times \, \im \gamma_{t_0} } \norm{  \Bigl( 1- (z-1)(A(t)-1)^{-1} \Bigr)^{-1}  } \\
& \qquad \qquad =  \sup_{(t,z) \in V_{t_0} \times \, \im \gamma_{t_0} } \norm{ (A(t)-1) (A(t)-z)^{-1}  }  < \infty
\end{align*}
und daher ist 
\begin{align*}
V_{t_0} \ni t \mapsto (A(t)-z)^{-1} = (A(t)-1)^{-1} \, \Bigl( 1- (z-1)(A(t)-1)^{-1} \Bigr)^{-1}x
\end{align*}
nach Lemma~\ref{lm: reg of inv} $(m-1)$-mal stetig differenzierbar für alle $x \in X$ und alle $z \in \im \gamma_{t_0}$.

Weiter ist 
\begin{align*}
\rho(A(t)) \ni z \mapsto \Bigl( 1- (z-1)(A(t)-1)^{-1} \Bigr)^{-1} = (A(t)-1) (A(t)-z)^{-1}
\end{align*}
holomorph für alle $t \in I$ und daher ist auch $\rho(A(t)) \ni z \mapsto \ddtk{  \bigl( 1- (z-1)(A(t)-1)^{-1} \bigr)^{-1}  }$ holomorph für alle $k \in \{1, \dots, m-1 \}$, denn diese Ableitung besteht aus Summanden, die sich nach dem Beweis von Lemma~\ref{lm: reg of inv} (abgesehen von skalaren Vorfaktoren) zusammensetzen aus $\ddtl{ (A(t)-1)^{-1} }$ und $\bigl( 1- (z-1)(A(t)-1)^{-1} \bigr)^{-1}$ selbst.
Wir sehen nun, dass auch 
\begin{align*}
\rho(A(t)) \ni z \mapsto \ddtk{  (A(t)-z)^{-1}  }
\end{align*}
holomorph und insbesondere stetig ist für alle $t \in I$.

Schließlich sehen wir anhand der eben beschriebenen Zusammensetzung von $\ddtk{  \bigl( 1- (z-1)(A(t)-1)^{-1} \bigr)^{-1}  }$, dass
\begin{align*}
\sup_{(t,z) \in V_{t_0} \times \, \im \gamma_{t_0} } \norm{  \ddtk{ (A(t)-z)^{-1} }  }   < \infty
\end{align*}
für alle $k \in \{ 1, \dots, m-1 \}$.
\\

Aufgrund dieser Vorbereitungen ist es nun leicht mit (endlicher) Induktion über $k \in \{0, 1, \dots, m \}$ zu zeigen, dass die rekursive Definition der $E_k(t)$ überhaupt sinnvoll ist und dass $t \mapsto E_k(t)x$ $(m-k)$-mal stetig differenzierbar ist für alle $x \in X$ und alle $k \in \{ 0, 1, \dots, m \}$, wie behauptet. Wir müssen uns nur an Lemma~\ref{lm: vertauschung von abl und wegintegral} erinnern.
\\

Zuletzt: da  
\begin{align*}
\rho(A(t)) \ni z \mapsto A(t)\, \Bigl( (A(t)-z)^{-1} \, \Bigl( P(t) E_{k-1}'(t) \overline{P}(t) - \overline{P}(t) E_{k-1}'(t) P(t) \Bigr) (A(t)-z)^{-1}  \Bigr)
\end{align*}
stetig (sogar holomorph) ist und damit insbesondere das zugehörige Wegintegral existiert, folgt wegen der Abgeschlossenheit von $A(t)$, dass 
\begin{align*}
\biggl(    \frac{1}{2 \pi i} \, \int_{\gamma_t} (A(t)-z)^{-1} \, \Bigl( P(t) E_{k-1}'(t) \overline{P}(t) - \overline{P}(t) E_{k-1}'(t) P(t) \Bigr) (A(t)-z)^{-1} \,dz \biggr) \, X \subset D
\end{align*}
für alle $t \in I$. Induktiv folgt nun, dass $E_k(t)X \subset D$ für alle $t \in I$ und alle $k \in \{0,1, \dots, m\}$, und wir sind fertig.
\end{proof}

Das folgende Lemma (Lemma~1 in~\cite{Nenciu 93}) ist der entscheidende Schritt hin zu Satz~\ref{thm: höherer adsatz}. Wie man darauf kommt ausgerechnet Operatoren $E_k(t)$ mit den Eigenschaften in diesem Lemma zu suchen, verrät Abschnitt~1 in~\cite{Nenciu 93}.

\begin{lm} \label{lm: konstr der En}
Seien $A(t)$, $\sigma(t)$ und $P(t)$ wie in Lemma~\ref{lm: reg der En} und auch die $E_k(t)$ seien wie in diesem Lemma definiert. Dann gilt
\begin{align*}
E_k(t) = \sum_{l=0}^k E_{k-l}(t)E_l(t)
\end{align*} 
für alle $k \in \{0,1, \dots, m \}$ und
\begin{align*}
E_k'(t) \supset [A(t),E_{k+1}(t)]
\end{align*}
für alle $k \in \{0,1, \dots, m-1 \}$. Weiter sind die $E_k(t)$ dadurch und durch $E_0(t)=P(t)$ eindeutig bestimmt.
\end{lm}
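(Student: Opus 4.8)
The plan is to regard the $E_k(t)$ as the coefficients of the (formally truncated) power series $P_\varepsilon(t) := \sum_{k=0}^m \varepsilon^k E_k(t)$ and to read the two asserted identities as the statements that $P_\varepsilon$ is ``approximately a projection'' and ``approximately intertwined with $A$'' to order $\varepsilon^m$; the recursion of Lemma~\ref{lm: reg der En} is exactly what enforces this order by order. Throughout one decomposes a bounded operator $B$ into its diagonal part $B^D := P B P + \overline{P} B \overline{P}$ and off-diagonal part $B^{\mathrm{od}} := P B \overline{P} + \overline{P} B P$ with respect to $P = P(t)$, and uses the block-multiplication rules together with the elementary facts that $P(t)$ commutes with $A(t)$ and with each resolvent $(A(t)-z)^{-1}$ and that $P' = P'P + PP'$, hence $PP'P = 0$ and $P' = (P')^{\mathrm{od}}$. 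The domain bookkeeping (every expression containing $A(t)$ lives only on $D$) is disposed of once and for all by $P(t)X \subset D$ (Theorem~\ref{thm: Rieszprojektion}) and $E_k(t)X \subset D$ (Lemma~\ref{lm: reg der En}), which also legitimate the Leibniz rule $[A,BC] = [A,B]C + B[A,C]$ on $D$ for the operators occurring below.

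First I would record the \emph{commutator equation}: for any bounded $C$ and the integral operator $\widetilde B := \tfrac{1}{2\pi i}\int_{\gamma_t}(A(t)-z)^{-1} C (A(t)-z)^{-1}\,dz$ one has $\widetilde B X \subset D$ and $A(t)\widetilde B x - \widetilde B A(t)x = P(t)Cx - CP(t)x$ for $x\in D$; this is precisely the computation of the proof of Theorem~\ref{thm: unhandl adsatz mit sl}, with $P'(t)$ replaced by $C$. Applied with $C = C_k := P E_{k-1}' \overline{P} - \overline{P} E_{k-1}' P$ (which satisfies $PC_kP = \overline{P}C_k\overline{P} = 0$, so the associated $\widetilde B_k$ is off-diagonal, since $P$ commutes with the resolvent) it gives $[A,\widetilde B_k] = (E_{k-1}')^{\mathrm{od}}$ on $D$. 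One also notes that $S_k - 2PS_kP$ only affects the diagonal blocks, so that by construction $(E_k)^D = -PS_kP + \overline{P}S_k\overline{P}$ and $(E_k)^{\mathrm{od}} = \widetilde B_k$.

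Part~1 (the projection identity) is then an induction on $k$. A short block computation gives that the coefficient $R_k := \sum_{l=0}^k E_{k-l}E_l - E_k$ equals $(S_k)^{\mathrm{od}}$ — in particular it is off-diagonal — precisely because the diagonal blocks of $E_k$ were chosen to cancel. Assuming $R_l = 0$ for $l<k$ one has $P_\varepsilon^2 - P_\varepsilon \equiv \varepsilon^{k} R_{k} \pmod{\varepsilon^{k+1}}$, and comparing the coefficient of $\varepsilon^k$ in the formal identity $[P_\varepsilon, P_\varepsilon^2 - P_\varepsilon] = 0$ yields $[P, R_k] = 0$, i.e.\ $R_k$ is block-diagonal; an operator that is both off-diagonal and block-diagonal is $0$, which is $(\mathrm{P}_k)$. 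Part~2 (the intertwining identity) is again an induction on $k$, now using Part~1. Differentiating the projection identities of orders $\le m-1$ gives $P_\varepsilon'P_\varepsilon + P_\varepsilon P_\varepsilon' \equiv P_\varepsilon'$, and combined with $[A,P_\varepsilon^2] = [A,P_\varepsilon]P_\varepsilon + P_\varepsilon[A,P_\varepsilon]$ and $P_\varepsilon^2 \equiv P_\varepsilon$ one checks that $Z := \varepsilon P_\varepsilon' - [A,P_\varepsilon]$ satisfies $P_\varepsilon Z + Z P_\varepsilon \equiv Z$; assuming its coefficients $Z_l = E_{l-1}' - [A,E_l]$ vanish for $l\le k$, the coefficient of $\varepsilon^{k+1}$ gives $PZ_{k+1} + Z_{k+1}P = Z_{k+1}$, so $Z_{k+1}$ is off-diagonal. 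On the other hand, by the commutator equation $[A,\widetilde B_{k+1}] = (E_k')^{\mathrm{od}}$, while $[A, S_{k+1} - 2PS_{k+1}P]$ is block-diagonal (since $A$ commutes with $P$ and preserves the two subspaces, and $S_{k+1}$ is block-diagonal by Part~1), whence $(Z_{k+1})^{\mathrm{od}} = (E_k')^{\mathrm{od}} - [A,\widetilde B_{k+1}] = 0$; thus $Z_{k+1}=0$, i.e.\ $E_k' \supset [A,E_{k+1}]$.

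Finally, uniqueness is proved by the same kind of induction: if $(\widehat E_k)$ is another family in the same regularity class with $\widehat E_0 = P$ and the two properties, then from $\widehat E_l = E_l$ for $l\le k$ one gets first that the diagonal blocks of $\widehat E_{k+1}$ and $E_{k+1}$ agree (they are forced by the projection identity and the lower $E_l$), and then that $W := \widehat E_{k+1} - E_{k+1}$ is off-diagonal with $[A,W] = 0$ on $D$; writing $W$ in blocks, $PW\overline{P}$ maps $\overline{P}X$ into $PX$ and intertwines $A|_{\overline{P}X}$ with the bounded operator $A|_{PX}$, whose spectra $\sigma(A(t))\setminus\sigma(t)$ and $\sigma(t)$ are disjoint (Theorem~\ref{thm: Rieszprojektion}), so a contour-integral (Sylvester-equation) argument forces $PW\overline{P}=0$, and likewise $\overline{P}WP=0$, hence $W=0$. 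I expect the main work to lie not in any single step but in the careful handling of the formal-power-series truncation (which derivatives $E_k'$ actually exist — Lemma~\ref{lm: reg der En} grants $E_k$ only $m-k$ derivatives) and of the operator domains; the only place where a genuinely new ingredient is needed is the uniqueness step, namely the vanishing of operators intertwining two operators with disjoint spectra.
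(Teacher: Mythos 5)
Your proposal is correct and follows essentially the route the paper takes: the paper's proof consists of the single remark that Nenciu's argument carries over (Lemma~1 in~\cite{Nenciu 93}), and your block-decomposition/formal-power-series induction — off-diagonal parts fixed by the commutator equation, diagonal parts by the projection identity, the two identities propagated by comparing coefficients in $[P_\varepsilon,P_\varepsilon^2-P_\varepsilon]=0$ and in $ZP_\varepsilon+P_\varepsilon Z\equiv Z$, and uniqueness via the disjoint-spectra Sylvester argument — is precisely that argument, filled in with the domain and differentiability bookkeeping the paper leaves implicit. The only slip is presentational: a priori $(E_k)^{\mathrm{od}}=\widetilde B_k+S_k^{\mathrm{od}}$, so the claim $(E_k)^{\mathrm{od}}=\widetilde B_k$ is available only after Part~1 has shown $S_k^{\mathrm{od}}=0$; since your computation $R_k=S_k^{\mathrm{od}}$ uses only the diagonal blocks of $E_k$, the induction is not circular.
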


\begin{proof}
Wir können die Argumentation Nencius übernehmen.
\end{proof}

Seien $A(t)$, $\sigma(t)$ und $P(t)$ wie im obigen Lemma und seien $A$, $\sigma$, $P$ $m$-regulär für ein $m \in \natu$ bzw. $\omega$-regulär. Wir setzen dann 
\begin{align*}
T_{\varepsilon}(t) := \sum_{k=0}^{m_{\varepsilon}-1} E_k(t) \varepsilon^k
\end{align*}
für alle $\varepsilon \in (0, \infty)$ und alle $t \in I$, wobei
\begin{align*}
m_{\varepsilon} := \begin{cases} m, &\, \text{wenn } A, \sigma, P \text{ $m$-regulär} \\
																\big\lfloor \frac{1}{g \varepsilon} \big\rfloor, &\, \text{wenn } A, \sigma, P \text{ $\omega$-regulär}  \end{cases}
\end{align*}
und $g$ eine positive Zahl sei, sodass
\begin{align*}
\norm{E_k(t)} \le g^k \, k!   \text{ \; und \; } \norm{E_k'(t)} \le g^{k+1} \, (k+1)!
\end{align*}
für alle $k \in \natu$ und alle $t \in I$. So eine Zahl existiert im Fall von $\omega$-Regularität wirklich, und zwar nach Lemma~4 in~\cite{Nenciu 93} (dessen Aussage auch in unserer leicht abgewandelten Situation gilt, wie eine sorgfältige Analyse des Beweises dieses Lemmas und des Lemmas~3 zeigt).
\\

Das folgende Lemma gibt Lemma~5 aus~\cite{Nenciu 93} wieder.

\begin{lm} \label{lm: T geht gegen P}
Seien $A(t)$, $\sigma(t)$ und $P(t)$ wie in Lemma~\ref{lm: reg der En} und seien $A$, $\sigma$, $P$ $m$-regulär für ein $m \in \natu$ oder $\omega$-regulär. Dann gilt
\begin{align*}
\sup_{t \in I} \norm{ T_{\varepsilon}(t) - P(t) } = O(\varepsilon) \quad (\varepsilon \searrow 0).
\end{align*}
\end{lm}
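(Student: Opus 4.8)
The plan is to split the proof according to the two regularity hypotheses. In both cases one writes
\begin{align*}
T_\varepsilon(t) - P(t) = \sum_{k=1}^{m_\varepsilon - 1} E_k(t)\,\varepsilon^k,
\end{align*}
since the $k=0$ term is $E_0(t) = P(t)$ and cancels, and the whole task reduces to bounding this sum uniformly in $t \in I$ by a fixed multiple of $\varepsilon$ for all sufficiently small $\varepsilon > 0$.

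First I would treat the case that $A$, $\sigma$, $P$ are $m$-regular for some $m \in \natu$. Here $m_\varepsilon = m$ is independent of $\varepsilon$, so the sum above is finite. By Lemma~\ref{lm: reg der En} each map $t \mapsto E_k(t)x$ is continuous on $I$; hence $t \mapsto E_k(t)$ is strongly continuous on the compact interval $I$, and by the uniform boundedness principle $c_k := \sup_{t \in I}\norm{E_k(t)} < \infty$ for $k = 1,\dots,m-1$ (alternatively one may simply invoke the constant $g$ fixed before the statement). Then, for $\varepsilon \in (0,1]$ and using $\varepsilon^k \le \varepsilon$ for $k \ge 1$,
\begin{align*}
\sup_{t \in I}\norm{T_\varepsilon(t) - P(t)} \le \sum_{k=1}^{m-1} c_k\,\varepsilon^k \le \Bigl(\,\sum_{k=1}^{m-1} c_k\Bigr)\,\varepsilon,
\end{align*}
which is $O(\varepsilon)$ as $\varepsilon \searrow 0$ (for $m = 1$ the difference vanishes identically). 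This case is routine.

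Next I would treat the $\omega$-regular case, where $m_\varepsilon = \lfloor \frac{1}{g\varepsilon}\rfloor$ and, by the choice of $g$, $\norm{E_k(t)} \le g^k\,k!$ for all $k \in \natu$ and all $t \in I$. Restricting to $\varepsilon$ small enough that $m_\varepsilon \ge 2$ (there is nothing to prove otherwise), every $k$ occurring in the sum satisfies $k \le m_\varepsilon - 1 < \frac{1}{g\varepsilon}$, i.e.\ $g\varepsilon k < 1$, whence $g\varepsilon < \frac{1}{k}$ and therefore
\begin{align*}
g^k\,k!\,\varepsilon^k = (g\varepsilon)\,(g\varepsilon)^{k-1}\,k! < (g\varepsilon)\,\frac{k!}{k^{k-1}} = (g\varepsilon)\,k\,\frac{k!}{k^k}.
\end{align*}
The series $\sum_{k \ge 1} k\,\frac{k!}{k^k}$ converges (by the ratio test, consecutive terms have ratio $\frac{k+1}{k}\bigl(\frac{k}{k+1}\bigr)^k \to e^{-1} < 1$), so setting $C := g\sum_{k\ge1} k\,\frac{k!}{k^k} < \infty$ gives
\begin{align*}
\sup_{t \in I}\norm{T_\varepsilon(t) - P(t)} \le \sum_{k=1}^{m_\varepsilon-1} g^k\,k!\,\varepsilon^k \le C\,\varepsilon,
\end{align*}
again $O(\varepsilon)$.

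The only genuinely non-routine point is the $\omega$-regular estimate. The formal series $\sum_k E_k(t)\,\varepsilon^k$ is in general divergent, because the bound $\norm{E_k(t)} \le g^k\,k!$ grows factorially; the entire construction hinges on the fact that the $\varepsilon$-dependent truncation index $m_\varepsilon \sim \frac{1}{g\varepsilon}$ is chosen so that $g\varepsilon k < 1$ holds throughout the range of summation, which is precisely what converts the factorially growing coefficients into a summable sequence once a single power of $\varepsilon$ has been extracted. Everything else — boundedness of the $E_k$ via Lemma~\ref{lm: reg der En} and Banach--Steinhaus, and the finite-sum bound in the $m$-regular case — is immediate.
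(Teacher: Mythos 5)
Your proof is correct and follows essentially the same route as the paper's: the $m$-regular case via uniform boundedness of the finitely many $E_k$, and the $\omega$-regular case by extracting one factor of $g\varepsilon$, using $g\varepsilon k<1$ on the range of summation, and dominating the remainder by the convergent series $\sum_k k\,k!/k^k$. The only (immaterial) difference is that you justify convergence of that series by the ratio test where the paper invokes Stirling's formula.
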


\begin{proof}
Im Fall von $m$-Regularität folgt das sofort aus 
\begin{align*}
\sup_{t \in I} \norm{E_k(t)} < \infty
\end{align*}
für alle $k \in \{0,1, \dots, m\}$, was sich aus der starken Stetigkeit von $t \mapsto E_k(t)$ (Lemma~\ref{lm: reg der En}) ergibt.

Im Fall von $\omega$-Regularität folgt das mithilfe der Abschätzung
\begin{align*}
\norm{E_k(t)} \le g^k \, k!
\end{align*}
für alle $k \in \natu$ und des stirlingschen Satzes (s. etwa Beispiel~6.13 in~\cite{AmannEscher}). Dieser sagt, dass
\begin{align*}
\frac{k!}{ (2 \pi k)^{\frac{1}{2}} \, k^k \, e^{-k} } \longrightarrow 1 \quad (k \to \infty)
\end{align*}
und es folgt, dass eine Zahl $c_0 \in (1, \infty)$ existiert, sodass
\begin{align*}
k! \le c_0 \, k^{\frac{1}{2}} \, e^{k \, \log k} \, e^{-k}
\end{align*}
für alle $k \in \natu \cup \{0\}$. Sei nun $\beta \in (0,1)$. Dann haben wir aufgrund der Abschätzung von $\norm{E_k(t)}$, der eben angeführten Abschätzung für $k!$ und der Wahl von $m_{\varepsilon}$, dass
\begin{align*}
\norm{ T_{\varepsilon}(t) - P(t) } &\le g \varepsilon \, \sum_{k=1}^{m_{\varepsilon}-1} (g \varepsilon)^{k-1} \, (k-1)! \, k \\
& \le c_0' \, g \varepsilon \, \sum_{k=1}^{m_{\varepsilon}-1} \bigl( k^{\frac{3}{2}} e^{-(1-\beta) k} \bigr) \, e^{-\beta k}
\le c_0'' \, g \varepsilon \, \sum_{k=1}^{\infty} e^{-\beta k} 
\end{align*}
für alle $\varepsilon \in (0, \infty)$ und alle $t \in I$, woraus die Behauptung folgt.
\end{proof}

Wir sehen anhand des obigen Lemmas und anhand des Beweises von Proposition~\ref{prop: sigma(A(t)) oberhstet}, dass ein $\varepsilon_0 \in (0, \infty)$ existiert, sodass $\sigma(T_{\varepsilon}(t)) \subset U_{\frac{1}{3}}\bigl( \sigma(P(t)) \bigr) \subset U_{\frac{1}{3}}( \{0,1\})$ für alle $\varepsilon \in (0, \varepsilon_0]$ und alle $t \in I$.
Also können wir setzen
\begin{align*}
P_{\varepsilon}(t) := \frac{1}{2 \pi i} \, \int_{ \partial U_{\frac{1}{2}}(1) } (z-T_{\varepsilon}(t))^{-1} \,dz
\end{align*}
für alle $\varepsilon \in (0, \varepsilon_0]$ und alle $t \in I$.
\\

Jetzt können wir durch geringfügige Abwandlung der Argumente Nencius eine verallgemeinerte Version von Theorem~2 in~\cite{Nenciu 93} beweisen.
Wir knüpfen damit an die Schlussbemerkung in Nencius Artikel an, in der die Verallgemeinerungsfähigkeit von Theorem~2 schon angedeutet wird: für nichtschiefselbstadjungierte $A(t)$ sei es notwendig geeignete Schranken an die (bei uns) mit $V_{\frac{1}{\varepsilon}}$ bezeichnete Zeitentwicklung zu finden, und in dem Artikel~\cite{NenciuRasche 92} wird auch eine spezielle Situation behandelt, in der das möglich ist. Allerdings werden dort (wie auch in~\cite{Nenciu 93} selbst) keine allgemeinen Voraussetzungen formuliert, unter denen so eine Abschätzung erzielt werden kann. Satz~\ref{thm: höherer adsatz} gibt mit der $(M,0)$-Stabilität von $A$ so eine Voraussetzung an.

\begin{thm} \label{thm: höherer adsatz}
Seien $A(t)$, $\sigma(t)$ und $P(t)$ wie in Lemma~\ref{lm: reg der En} und seien $A$, $\sigma$, $P$ $m$-regulär für ein $m \in \natu$ bzw. $\omega$-regulär. Seien weiter die $P_{\varepsilon}(t)$ wie oben definiert. Dann gilt: \\
(i)
\begin{align*}
\sup_{t \in I} \norm{ P_{\varepsilon}(t) - P(t)} = O(\varepsilon) \quad (\varepsilon \searrow 0)
\end{align*} 
(ii) Es gibt ein $\varepsilon_0' \in (0,\varepsilon_0]$, sodass $t \mapsto P_{\varepsilon}(t)x$ stetig differenzierbar ist für alle $x \in X$ und $[A(t), P_{\varepsilon}(t)]$ fortsetzbar ist zu einer beschränkten linearen Abbildung auf $X$ für alle $\varepsilon \in (0, \varepsilon_0']$, sodass für alle $c \in \bigl( 0,\frac{1}{g} \bigr)$ gilt: 
\begin{align*}
\sup_{t \in I} \norm{ P_{\varepsilon}'(t) - \frac{1}{\varepsilon} \, [A(t), P_{\varepsilon}(t)] } = O(\varepsilon^{m-1}) \text{ bzw. } O\bigl( e^{-\frac{c}{\varepsilon}} \bigr) \quad (\varepsilon \searrow 0).
\end{align*}
(iii) 
\begin{align*}
&\sup_{t \in I} \norm{  (1-P_{\varepsilon}(t)) U_{\frac{1}{\varepsilon}}(t) P_{\varepsilon}(0)  }, \\
& \qquad \qquad \sup_{t \in I} \norm{  P_{\varepsilon}(t) U_{\frac{1}{\varepsilon}}(t) (1-P_{\varepsilon}(0))  }  = O(\varepsilon^{m-1}) \text{ bzw. } O\bigl( e^{-\frac{c}{\varepsilon}} \bigr) \quad (\varepsilon \searrow 0)
\end{align*}
und wenn die Zeitentwicklung $V_{\frac{1}{\varepsilon}}$ zu $\frac{1}{\varepsilon} A + (1-2P_{\varepsilon}) \bigl(  P_{\varepsilon}' - \frac{1}{\varepsilon} \, [A, P_{\varepsilon}]  \bigr)$ für alle $\varepsilon \in (0,\varepsilon_0']$ existiert, dann ist $V_{\frac{1}{\varepsilon}}$ adiabatisch i. e. S. bzgl. $P_{\varepsilon}$ und $1-P_{\varepsilon}$ und es gilt
\begin{align*}
\sup_{t \in I} \norm{   V_{\frac{1}{\varepsilon}}(t) - U_{\frac{1}{\varepsilon}}(t)  } = O(\varepsilon^{m-1}) \text{ bzw. } O\bigl( e^{-\frac{c}{\varepsilon}} \bigr) \quad (\varepsilon \searrow 0).
\end{align*}
\end{thm}

\begin{proof}
Wir zeigen zunächst, dass ein $\varepsilon_0' \in (0,\varepsilon_0]$ existiert und eine Zahl $M_0$, sodass
\begin{align*}
\norm{ (z-T_{\varepsilon}(t))^{-1} } \le M_0
\end{align*}
für alle $z \in \partial U_{\frac{1}{2}}(1)$, alle $\varepsilon \in (0,\varepsilon_0']$ und alle $t \in I$.

Wir haben erstens, dass
\begin{align*}
(z-T_{\varepsilon}(t))^{-1} \, \Bigl( 1- \bigl( T_{\varepsilon}(t) - P(t) \bigr) (z-P(t))^{-1} \Bigr) = (z-P(t))^{-1}
\end{align*}
für alle $z \in \partial U_{\frac{1}{2}}(1)$, alle $\varepsilon \in (0,\varepsilon_0]$ und alle $t \in I$. Weil nun  $\sup_{t \in I} \norm{ T_{\varepsilon}(t) - P(t) } \longrightarrow 0 \;\;(\varepsilon \searrow 0)$ nach Lemma~\ref{lm: T geht gegen P}, existiert zweitens ein $\varepsilon_0' \in (0, \varepsilon_0]$, sodass 
\begin{align*}
\norm{  \bigl( T_{\varepsilon}(t) - P(t) \bigr) (z-P(t))^{-1}  } \le \frac{1}{2}
\end{align*} 
für alle $z \in \partial U_{\frac{1}{2}}(1)$, alle $\varepsilon \in (0,\varepsilon_0']$ und alle $t \in I$. 
Sei nun 
\begin{align*}
M_0 :=  \; 2 \, \sup_{(t,z) \in I \times \partial U_{\frac{1}{2}}(1) } \norm{  (z-P(t))^{-1}  },
\end{align*}
was eine reelle Zahl ist. Dann folgt wie gewünscht
\begin{align*}
\norm{ (z- T_{\varepsilon}(t))^{-1} } \le M_0
\end{align*}
für alle $z \in \partial U_{\frac{1}{2}}(1)$, alle $\varepsilon \in (0,\varepsilon_0']$ und alle $t \in I$ (neumannsche Reihe!).
\\

(i) Aus $(z-P(t))^{-1} = \frac{1}{z-1} \, P(t) + \frac{1}{z} \, (1-P(t))$ folgt, dass
\begin{align*}
P(t) = \frac{1}{2 \pi i} \, \int_{\partial U_{\frac{1}{2}}(1)} (z-P(t))^{-1} \, dz
\end{align*}
und damit
\begin{align*}
P_{\varepsilon}(t) - P(t) &= \frac{1}{2 \pi i} \, \int_{\partial U_{\frac{1}{2}}(1)} (z-T_{\varepsilon}(t))^{-1} - (z-P(t))^{-1} \, dz \\
&= \frac{1}{2 \pi i} \, \int_{\partial U_{\frac{1}{2}}(1)} (z-T_{\varepsilon}(t))^{-1} \, \bigl( T_{\varepsilon}(t)-P(t) \bigr) \, (z-P(t))^{-1} \, dz
\end{align*}
für alle $\varepsilon \in (0,\varepsilon_0']$ und alle $t \in I$. Aus der eingangs bewiesenen Abschätzung und Lemma~\ref{lm: T geht gegen P} ergibt sich nun 
\begin{align*}
\sup_{t \in I} \norm{ P_{\varepsilon}(t) - P(t)} = O(\varepsilon) \quad (\varepsilon \searrow 0),
\end{align*}
wie behauptet.
\\

(ii) Zunächst erhalten wir mithilfe von Lemma~\ref{lm: reg der En}, der eingangs bewiesenen Abschätzung, Lemma~\ref{lm: reg of inv} und Lemma~\ref{lm: vertauschung von abl und wegintegral}, dass $t \mapsto P_{\varepsilon}(t)x$ stetig differenzierbar ist für alle $x \in X$ und alle $\varepsilon \in (0, \varepsilon_0']$ und dass 
\begin{align*}
P_{\varepsilon}'(t)x = \frac{1}{2 \pi i} \, \int_{\partial U_{\frac{1}{2}}(1)} (z-T_{\varepsilon}(t))^{-1} \, T_{\varepsilon}'(t) \, (z-T_{\varepsilon}(t))^{-1}x \,dz 
\end{align*}
für alle $t \in I$. Weil nun 
\begin{align*}
T_{\varepsilon}'(t) &= \sum_{k=0}^{m_{\varepsilon}-1} E_k'(t) \, \varepsilon^k \supset \sum_{k=0}^{m_{\varepsilon}-2} [A(t),E_{k+1}(t)] \, \varepsilon^k + E_{m_{\varepsilon}-1}'(t) \, \varepsilon^{m_{\varepsilon}-1} \\
&= \frac{1}{\varepsilon} \, [A(t), T_{\varepsilon}(t)] + E_{m_{\varepsilon}-1}'(t) \, \varepsilon^{m_{\varepsilon}-1}
\end{align*}
nach Lemma~\ref{lm: konstr der En},
ergibt sich
\begin{align}             \label{eq: höherer adsatz 1}
P_{\varepsilon}'(t)x =& \: \int_{\partial U_{\frac{1}{2}}(1)} (z-T_{\varepsilon}(t))^{-1} \, \frac{1}{\varepsilon} \, [A(t), T_{\varepsilon}(t)] \, (z-T_{\varepsilon}(t))^{-1}x \,dz \notag \\
&+   \int_{\partial U_{\frac{1}{2}}(1)} (z-T_{\varepsilon}(t))^{-1} \, E_{m_{\varepsilon}-1}'(t) \, \varepsilon^{m_{\varepsilon}-1} \, (z-T_{\varepsilon}(t))^{-1}x \,dz
\end{align}
für alle $x \in D$, alle $\varepsilon \in (0, \varepsilon_0']$ und alle $t \in I$. Wir haben dabei benutzt, dass $T_{\varepsilon}(t)X \subset D$ nach Lemma~\ref{lm: reg der En} und dass wegen
\begin{align*}
(z-T_{\varepsilon}(t))^{-1} = \frac{1}{z} + \frac{1}{z} \, T_{\varepsilon}(t) \, (z-T_{\varepsilon}(t))^{-1}
\end{align*}
auch $(z-T_{\varepsilon}(t))^{-1} \, D \subset D$ gilt.

Weiter ist $A(t)T_{\varepsilon}(t)$ wegen $T_{\varepsilon}(t)X \subset D$ und der Abgeschlossenheit von $A(t)$ eine beschränkte lineare Abbildung. Die Abbildung 
\begin{align*}
\partial U_{\frac{1}{2}}(1) \ni z \mapsto A(t) \, (z-T_{\varepsilon}(t))^{-1} x  = \frac{1}{z} \, A(t)x + \frac{1}{z} \, A(t)T_{\varepsilon}(t) \, (z-T_{\varepsilon}(t))^{-1} x 
\end{align*}
ist also stetig für alle $x \in D$, alle $\varepsilon \in (0, \varepsilon_0']$ und alle $t \in I$, insbesondere existiert das zugehörige Wegintegral, und aufgrund der Abgeschlossenheit von $A(t)$ haben wir
\begin{align*}
\int_{\partial U_{\frac{1}{2}}(1)}   (z-T_{\varepsilon}(t))^{-1} x \, dz \in D
\end{align*}
sowie
\begin{align}         \label{eq: höherer adsatz 2}
A(t) \, \int_{\partial U_{\frac{1}{2}}(1)} (z-T_{\varepsilon}(t))^{-1} x \, dz = \int_{\partial U_{\frac{1}{2}}(1)} A(t) \,  (z-T_{\varepsilon}(t))^{-1} x \, dz
\end{align}
für alle $x \in D$.

Aus \eqref{eq: höherer adsatz 1} und \eqref{eq: höherer adsatz 2} folgt nun, dass
\begin{align*}
&P_{\varepsilon}'(t)x - \frac{1}{\varepsilon} \, [A(t), P_{\varepsilon}(t)]x \\
&\qquad \qquad \quad = \frac{1}{2 \pi i} \, \int_{\partial U_{\frac{1}{2}}(1)} (z-T_{\varepsilon}(t))^{-1} \, E_{m_{\varepsilon}-1}'(t) \, \varepsilon^{m_{\varepsilon}-1} \, (z-T_{\varepsilon}(t))^{-1}x \,dz
\end{align*}
für alle $x \in D$, alle $\varepsilon \in (0, \varepsilon_0']$ und alle $t \in I$, insbesondere ist $[A(t), P_{\varepsilon}(t)]$ fortsetzbar zu einer auf ganz $X$ definierten beschränkten linearen Abbildung und 
\begin{align*}
\norm{  P_{\varepsilon}'(t) - \frac{1}{\varepsilon} \, [A(t), P_{\varepsilon}(t)]  } \le \frac{1}{2} \, M_0^2 \, \norm{ E_{m_{\varepsilon}-1}'(t) } \, \varepsilon^{m_{\varepsilon}-1}
\end{align*}
für alle $\varepsilon \in (0, \varepsilon_0']$ und alle $t \in I$.

Wenn $A$, $\sigma$, $P$ $m$-regulär sind, dann folgt  
\begin{align*}
\sup_{t \in I} \norm{ P_{\varepsilon}'(t) - \frac{1}{\varepsilon} \, [A(t), P_{\varepsilon}(t)] } = O(\varepsilon^{m-1}) \quad (\varepsilon \searrow 0)
\end{align*}
sofort, da in diesem Fall $m_{\varepsilon} = m$ unabhängig von $\varepsilon$.

Wenn  $A$, $\sigma$, $P$ sogar $\omega$-regulär sind, dann folgt die entsprechende Aussage ähnlich wie in Lemma~\ref{lm: T geht gegen P}. Sei nämlich $\beta \in (0,1)$ und $c := \frac{\beta}{g}$. Wir haben 
\begin{align*}
\norm{ E_k'(t) } \le g^{k+1} \, (k+1)!
\end{align*}
für alle $k \in \natu$ und alle $t \in I$, und 
\begin{align*}
g^{m_{\varepsilon}} \, m_{\varepsilon}! \, \varepsilon^{m_{\varepsilon}-1} \le c_0 \, g \, \Bigl( \frac{1}{g \varepsilon} \Bigr)^{\frac{3}{2}} \, e^{-m_{\varepsilon}} \le c_0'''  \, g \,   e^{-\frac{\beta}{g \varepsilon}} = c_0''' \, g \, e^{-\frac{c}{\varepsilon}}
\end{align*}
für alle $\varepsilon \in (0,\infty)$, woraus sich 
\begin{align*}
\sup_{t \in I} \norm{ P_{\varepsilon}'(t) - \frac{1}{\varepsilon} \, [A(t), P_{\varepsilon}(t)] } = O\bigl(  e^{-\frac{c}{\varepsilon}}  \bigr) \quad (\varepsilon \searrow 0)
\end{align*}
ergibt, wie behauptet.
\\

(iii) Die beiden Abbildungen $[0,t] \ni s \mapsto U_{\frac{1}{\varepsilon}}(t,s) P_{\varepsilon}(s) U_{\frac{1}{\varepsilon}}(s) P_{\varepsilon}(0)x$ und $[0,t] \ni s \mapsto P_{\varepsilon}(t) U_{\frac{1}{\varepsilon}}(t,s) P_{\varepsilon}(s) U_{\frac{1}{\varepsilon}}(s)x$ sind wegen $P_{\varepsilon}(s) D \subset D$ differenzierbar für alle $x \in D$ (Lemma~\ref{lm: strong db of products}), woraus sich nach (der geläufigen Version von) Lemma~\ref{lm: mws für einseitig db} ergibt, dass
\begin{align*}
\norm{ (1-P_{\varepsilon}(t)) U_{\frac{1}{\varepsilon}}(t) P_{\varepsilon}(0)x } &= \norm{   U_{\frac{1}{\varepsilon}}(t,s) P_{\varepsilon}(s) U_{\frac{1}{\varepsilon}}(s) P_{\varepsilon}(0)x \big|_{s=0}^{s=t}   } \\
& \le \sup_{s \in [0,t]} \norm{      U_{\frac{1}{\varepsilon}}(t,s) \Bigl(  P_{\varepsilon}'(s) - \frac{1}{\varepsilon} \, [A(s), P_{\varepsilon}(s)]  \Bigr) U_{\frac{1}{\varepsilon}}(s) P_{\varepsilon}(0)x      }
\end{align*}
und
\begin{align*}
\norm{ P_{\varepsilon}(t) U_{\frac{1}{\varepsilon}}(t) (1-P_{\varepsilon}(0))x } &= \norm{   P_{\varepsilon}(t)  U_{\frac{1}{\varepsilon}}(t,s) P_{\varepsilon}(s) U_{\frac{1}{\varepsilon}}(s) x \big|_{s=0}^{s=t}   } \\
& \le \sup_{s \in [0,t]} \norm{    P_{\varepsilon}(t)  U_{\frac{1}{\varepsilon}}(t,s) \Bigl(  P_{\varepsilon}'(s) - \frac{1}{\varepsilon} \, [A(s), P_{\varepsilon}(s)]  \Bigr) U_{\frac{1}{\varepsilon}}(s) x      }
\end{align*}
für alle $\varepsilon \in (0, \varepsilon_0']$ und alle $t \in I$. Weil nun $A$ $(M,0)$-stabil ist und $\sup_{\varepsilon \in (0, \varepsilon_0'], \, t \in I} \norm{ P_{\varepsilon}(t) } < \infty$, erhalten wir
\begin{align*}
&\sup_{t \in I} \norm{  (1-P_{\varepsilon}(t)) U_{\frac{1}{\varepsilon}}(t) P_{\varepsilon}(0)  }, \\
& \qquad \qquad \sup_{t \in I} \norm{  P_{\varepsilon}(t) U_{\frac{1}{\varepsilon}}(t) (1-P_{\varepsilon}(0))  }  = O(\varepsilon^{m-1}) \text{ bzw. } O\bigl( e^{-\frac{c}{\varepsilon}} \bigr) \quad (\varepsilon \searrow 0),
\end{align*}
wie gewünscht.

Schließlich existiere die Zeitentwicklung $V_{\frac{1}{\varepsilon}}$ zu $\frac{1}{\varepsilon} A + (1-2P_{\varepsilon}) \bigl(  P_{\varepsilon}' - \frac{1}{\varepsilon} \, [A, P_{\varepsilon}]  \bigr)$ für alle $\varepsilon \in (0,\varepsilon_0']$.
Dann ist die Abbildung $[s,t] \ni \tau \mapsto V_{\frac{1}{\varepsilon}}(t,\tau) P_{\varepsilon}(\tau) V_{\frac{1}{\varepsilon}}(\tau,s)x$ differenzierbar für alle $x \in D$ nach Proposition~\ref{thm: char zeitentwicklung}, denn
\begin{align*}
&\tau \mapsto \bigl(  P_{\varepsilon}'(\tau) - \frac{1}{\varepsilon} \, [A(\tau), P_{\varepsilon}(\tau)]  \bigr) y \\
& \qquad \qquad = \frac{1}{2 \pi i} \, \int_{\partial U_{\frac{1}{2}}(1)} (z-T_{\varepsilon}(\tau))^{-1} \, E_{m_{\varepsilon}-1}'(\tau) \, \varepsilon^{m_{\varepsilon}-1} \, (z-T_{\varepsilon}(\tau))^{-1} y \,dz 
\end{align*}
ist stetig für alle $y \in X$ nach Lemma~\ref{lm: reg der En}, 
und ihre Ableitung verschwindet, denn
\begin{align*}
(1-2P_{\varepsilon})  P_{\varepsilon}' = [ P_{\varepsilon}', P_{\varepsilon}]
\end{align*}
und
\begin{align*}
\frac{1}{\varepsilon} Ay - (1-2P_{\varepsilon}) \, \frac{1}{\varepsilon} \, [A, P_{\varepsilon}]y & = \frac{A}{\varepsilon}\, y - \bigl( P_{\varepsilon} \, \frac{A}{\varepsilon} \, \overline{P_{\varepsilon}} \, y +  \overline{P_{\varepsilon}} \, \frac{A}{\varepsilon} \, P_{\varepsilon} \, y \bigr) \\
& = P_{\varepsilon} \, \frac{A}{\varepsilon} \, P_{\varepsilon} \, y + \overline{P_{\varepsilon}} \, \frac{A}{\varepsilon} \, \overline{P_{\varepsilon}} \, y
\end{align*}
für alle $y \in D$ und alle $\varepsilon \in (0,\varepsilon_0']$ (wobei $\overline{P_{\varepsilon}} := 1-P_{\varepsilon}$). 
Dies zeigt, dass die Zeitentwicklung $V_{\frac{1}{\varepsilon}}$ adiabatisch i. e. S. ist bzgl. $P_{\varepsilon}$ und $1-P_{\varepsilon}$.

Auch die Abbildung $[0,t] \ni  s \mapsto U_{\frac{1}{\varepsilon}}(t,s) V_{\frac{1}{\varepsilon}}(s)x$ ist differenzierbar für alle $x \in D$ und sogar stetig differenzierbar, weil (wie eben bemerkt) 
\begin{align*}
&s \mapsto \bigl(  P_{\varepsilon}'(s) - \frac{1}{\varepsilon} \, [A(s), P_{\varepsilon}(s)]  \bigr) y 
%& \qquad \qquad = \frac{1}{2 \pi i} \, \int_{\partial U_{\frac{1}{2}}(1)} (z-T_{\varepsilon}(s))^{-1} \, E_{m_{\varepsilon}-1}'(s) \, \varepsilon^{m_{\varepsilon}-1} \, (z-T_{\varepsilon}(s))^{-1} y \,dz 
\end{align*}
stetig ist für alle $y \in X$. 
Also haben wir
\begin{align*}
V_{\frac{1}{\varepsilon}}(t)x &- U_{\frac{1}{\varepsilon}}(t)x = U_{\frac{1}{\varepsilon}}(t,s) V_{\frac{1}{\varepsilon}}(s)x \big|_{s=0}^{s=t} \\
&= \int_0^t U_{\frac{1}{\varepsilon}}(t,s) \, (1-2P_{\varepsilon}(s)) \bigl(  P_{\varepsilon}'(s) - \frac{1}{\varepsilon} \, [A(s), P_{\varepsilon}(s)]  \bigr) \, V_{\frac{1}{\varepsilon}}(s)x \, ds,
\end{align*}
woraus wegen (ii) und der $(M,0)$-Stabilitiät von $A$ folgt (Lemma~\ref{lm: skalierung und (M,w)-stabilität} und Proposition~\ref{prop: abschätzung für gestörte zeitentw}), dass
\begin{align*}
\sup_{t \in I} \norm{   V_{\frac{1}{\varepsilon}}(t) - U_{\frac{1}{\varepsilon}}(t)   } = O(\varepsilon^{m-1}) \text{ bzw. } O\bigl( e^{-\frac{c}{\varepsilon}} \bigr) \quad (\varepsilon \searrow 0),
\end{align*}
wie behauptet.
\end{proof}

Aus diesem Satz folgt insbesondere die zweite Aussage unseres früheren Adiabatensatzes mit Spektrallückenbedingung, Satz~\ref{thm: unhandl adsatz mit sl}. Wir müssen dazu nur beachten, dass die dort getroffenen Voraussetzungen nichts anderes bedeuten, als dass $A$, $\sigma$, $P$ $2$-regulär sind.
\\

Außerdem ergibt sich (s. Abschnitt~1 in~\cite{Nenciu 93}) mühelos der folgende Adiabatensatz höherer Ordnung.

\begin{cor} \label{cor: höherer adsatz für suppP' in (0,1)}
Seien $A(t)$, $\sigma(t)$, $P(t)$ wie in Satz~\ref{thm: höherer adsatz} und sei zusätzlich $\supp P' \ne I$. Dann gilt
\begin{align*}
&\sup_{t \in I \setminus \supp P'} \norm{  (1-P(t)) U_{\frac{1}{\varepsilon}}(t) P(0)  }, \\
& \qquad \qquad \sup_{t \in I \setminus \supp P'} \norm{  P(t) U_{\frac{1}{\varepsilon}}(t) (1-P(0))  }  = O(\varepsilon^{m-1}) \text{ bzw. } O\bigl( e^{-\frac{c}{\varepsilon}} \bigr) \quad (\varepsilon \searrow 0).
\end{align*}
\end{cor}

\begin{proof}
Anhand der Definition der $E_k(t)$ (Lemma~\ref{lm: reg der En}) sieht man induktiv, dass $E_k(t) = 0$ für alle $t \in I \setminus \supp P'$ und alle $k \in \{1, \dots, m \}$ (falls $m$-Regularität vorliegt) bzw. alle $k \in \natu$ (falls $\omega$-Regularität vorliegt), woraus $T_{\varepsilon}(t) = P(t)$ und damit auch
\begin{align*}
P_{\varepsilon}(t) = \frac{1}{2 \pi i} \, \int_{\partial U_{\frac{1}{2}}(1)} (z-T_{\varepsilon}(t))^{-1} \, dz = P(t)
\end{align*}
folgt für alle $t \in I \setminus \supp P'$. Satz~\ref{thm: höherer adsatz} liefert nun die behauptete Aussage.
\end{proof}

Avron, Seiler und Yaffe zeigen einen ähnlichen Satz (Theorem~2.8) in~\cite{AvronSeilerYaffe 87}, allerdings nur für den Sonderfall schiefselbstadjungierter $A(t)$. Die Vorgehensweise dieser Arbeit (im wesentlichen mehrfache partielle Integration) scheint ganz wesentlich auf der Schiefselbstadjungiertheit der $A(t)$ zu beruhen und ist daher wohl nicht auf allgemeinere Situationen (wie im obigen Korollar) übertragbar.
\\

Wir weisen darauf hin, dass wir die Aussagen (i) und (ii) von Satz~\ref{thm: höherer adsatz} (ebenso alle Aussagen der vorangehenden Lemmas~\ref{lm: reg der En}, \ref{lm: konstr der En} und~\ref{lm: T geht gegen P}) auch unter deutlich schwächeren Voraussetzungen bekommen hätten: um diese Aussagen zu bekommen, hätte es genügt vorauszusetzen, dass 
\begin{itemize}
\item [(i)] $A(t)$ für jedes $t \in I$ eine abgeschlossene lineare Abbildung $D(A(t)) \subset X \to X$ ist mit $\rho(A(t)) \ne \emptyset$,
\item [(ii)] $\sigma(t)$ für jedes $t \in I$ eine in $\sigma(A(t))$ isolierte kompakte Untermenge von $\sigma(A(t))$ ist und zu jedem $t_0 \in I$ ein Zykel $\gamma_{t_0}$ und eine Umgebung $U_{t_0}$ existiert, sodass $\im \gamma_{t_0} \subset \rho(A(t))$ und $n(\gamma_{t_0},\sigma(t)) =1$ und $n(\gamma_{t_0},\sigma(A(t)) \setminus \sigma(t)) = 0$ für alle $t \in U_{t_0}$, 
\item [(iii)] die Abbildung $U_{t_0} \ni t \mapsto (A(t)-z)^{-1}x$ $(m-1)$-mal stetig differenzierbar bzw. beliebig oft differenzierbar ist für alle $z \in \im \gamma_{t_0}$ und alle $x \in X$, die Abbildung $\im \gamma_{t_0} \ni z \mapsto  \ddtk{ (A(t)-z)^{-1} }$ stetig ist für alle $t \in U_{t_0}$ und alle $k \in \{ 1, \dots, m-1 \}$ bzw. alle $k \in \natu$ und
\begin{align*}
\sup_{(t,z) \in U_{t_0} \times \, \im \gamma_{t_0} } \norm{ \ddtk{ (A(t)-z)^{-1} } }  < \infty
\end{align*}
für alle $k \in \{ 1, \dots, m-1 \}$ bzw. 
\begin{align*}
\sup_{(t,z) \in U_{t_0} \times \, \im \gamma_{t_0} } \norm{ \ddtk{ (A(t)-z)^{-1} } }  \le c^k \, k!
\end{align*}
für alle $k \in \natu$,
\item [(iv)] $P(t)$ für jedes $t \in I$ die Rieszprojektion von $A(t)$ auf $\sigma(t)$ ist und $t \mapsto P(t)x$ $m$-mal stetig differenzierbar ist für alle $x \in X$. 
\end{itemize}
Wir sind aber vor allem an Aussage (iii) von Satz~\ref{thm: höherer adsatz} interessiert und für diese brauchen wir die schärfere Voraussetzung der $m$- bzw. $\omega$-Regularität, insbesondere die $(M,0)$-Stabilität von $A$ (nach Beispiel~\ref{ex: (M,0)-stabilität wesentlich in den adsätzen mit sl}). 
\\

Wenn wir in den eben aufgeführten schwächeren Voraussetzungen (i) bis (iv) überall $m-1$ durch $m$ ersetzen und zudem $\sigma(t)$ als gleichmäßig isoliert in $\sigma(A(t))$ voraussetzen, so erhalten wir die Voraussetzungen aus Nencius Arbeit~\cite{Nenciu 93} (dort $G$ und $S^m$ bzw. $S_0$ genannt) für das beschränkte Grundintervall $J := I$. Jedenfalls verstehen wir die Voraussetzungen in~\cite{Nenciu 93} so -- und dass diese nicht genauso gemeint sein können, wie sie dort formuliert sind, zeigt das folgende Beispiel. In diesem sind zwar die wörtlich genommenen Voraussetzungen Nencius erfüllt, aber $t \mapsto P(t)x$ ist nicht (stetig) differenzierbar für alle $x \in X$. Die in Lemma~1 in~\cite{Nenciu 93} gegebene Definition der $E_k(t)$ ist in diesem Beispiel also nicht sinnvoll.

\begin{ex}
Sei $A(t)$ für jedes $t \in I$ eine beschränkte lineare Abbildung in $X$ mit konstantem Spektrum
\begin{align*}
\sigma(A(t)) = \overline{U}_1(0) \setminus U_{\frac{1}{2}}(0) \cup \{0, -3 i\}
\end{align*}
und sei $t \mapsto A(t)$ $m$-mal stetig differenzierbar bzw. (reell) analytisch. Sei weiter 
\begin{align*}
\sigma(t) := \begin{cases} \overline{U}_1(0) \setminus U_{\frac{1}{2}}(0) \cup \{0\}, & t \in I\setminus \{0\} \\
													 \overline{U}_1(0) \setminus U_{\frac{1}{2}}(0), & t = 0
	           \end{cases}	
\end{align*}
und $P(t)$ für jedes $t \in I$ die Rieszprojektion von $A(t)$ auf $\sigma(t)$.

Dann ist 
\begin{align*}
\partial U_{\frac{d(t)}{2}}(\sigma(t)) = \begin{cases} \partial U_2(0), & t \in I\setminus \{0\} \\
													 														 \partial U_{\frac{5}{4}}(0) \cup \partial U_{\frac{1}{4}}(0), & t = 0,
	                                       \end{cases},	
\end{align*}
wobei $d(t) := \dist(\sigma(t), \sigma(A(t)) \setminus \sigma(t))$, und die positiv einfach geschlossenen Zykel $\gamma_t$, die $\partial U_{\frac{d(t)}{2}}(\sigma(t))$ beschreiben (und die in~\cite{Nenciu 93} wohl gemeint sind), sind gegeben durch
\begin{align*}
\gamma_t = \begin{cases} \partial U_2(0), & t \in I\setminus \{0\} \\
												 \partial  U_{\frac{5}{4}}(0) - \partial U_{\frac{1}{4}}(0), & t = 0
	         \end{cases}.	
\end{align*}

Wir sehen nun, dass die Voraussetzungen $G$ und $S^m$ bzw. $S_0$ aus~\cite{Nenciu 93} wörtlich erfüllt sind, es gilt ja sogar $\im \gamma_{t_0} \subset \rho(A(t))$ für alle $t \in I$, die Abbildung $I \ni t \mapsto (A(t)-z)^{-1}$ ist $m$-mal stetig differenzierbar bzw. beliebig oft differenzierbar für alle $z \in \im \gamma_{t_0}$, die Abbildung $\im \gamma_{t_0} \ni z \mapsto \ddtk{ (A(t)-z)^{-1} }$ ist stetig für alle $k \in \{0,1, \dots, m\}$ bzw. alle $k \in \natu$ und 
\begin{align*}
\sup_{(t,z) \in I \times \, \im \gamma_{t_0}} \norm{ \ddtk{ (A(t)-z)^{-1} } } < \infty
\end{align*}
für alle $k \in \{0,1, \dots, m\}$ bzw.
\begin{align*}
\sup_{(t,z) \in V_{t_0} \times \, \im \gamma_{t_0}} \norm{ \ddtk{ (A(t)-z)^{-1} } } \le c^k \, k!
\end{align*}
für alle $k \in \natu$, wobei sich diese letzte Abschätzung aus den cauchyschen Ungleichungen (die aus Satz~\ref{thm: Cauchy global} folgen) ergibt:  wir müssen nur beachten, dass $A$ fortgesetzt werden kann zu einer holomorphen Abbildung $B$ auf einer in $\complex$ offenen Umgebung von $I$, dann eine so kleine positive Zahl $r_0$ wählen (Theorem~IV.3.1 in~\cite{Kato: Perturbation 80}), dass $\im \gamma_{t_0} \subset \rho(B(w))$ für alle $w \in U_{r_0}(t_0)$, und $V_{t_0} := U_{\frac{r_0}{2}}(t_0) \cap I$ setzen.

Aber $t \mapsto P(t)x$ ist nicht für alle $x \in X$ differenzierbar in $0$, denn
\begin{align*}
2 \pi i \, \frac{P(h)-P(0)}{h} \, x &= \frac{1}{h} \, \int_{\gamma_h} (z-A(h))^{-1} x \,dz - \frac{1}{h} \, \int_{\gamma_0} (z-A(0))^{-1} x \,dz \\
&= \frac{1}{h} \, \int_{\partial U_2(0)} (z-A(h))^{-1} x - (z-A(0))^{-1} x \,dz \\
& \quad + \frac{1}{h} \, \int_{\partial U_{\frac{1}{4}}(0)} (z-A(0))^{-1} x \,dz,
\end{align*}
was nicht konvergiert, schließlich ist $\int_{\partial U_{\frac{1}{4}}(0)} (z-A(0))^{-1} \,dz$ (bis auf Vorfaktor) die Rieszprojektion von $A(0)$ auf $\{0\} \ne \emptyset$ und verschwindet daher nicht.  $\blacktriangleleft$
\end{ex}

Sofern wir die Voraussetzungen in~\cite{Nenciu 93} auf die eben beschriebene Art verstehen (dürfen), ist Satz~\ref{thm: höherer adsatz} allgemeiner als Nencius Theorem~2. Seien nämlich die so verstandenen Voraussetzungen von Theorem~2 erfüllt und die $A(t)$ zusätzlich schiefselbstadjungiert. Dann sind $A$, $\sigma$, $P$ $m$-regulär bzw. $\omega$-regulär, nach Satz~\ref{thm: höherer adsatz} gilt also, dass
\begin{align*}
\sup_{t \in I} \norm{   V_{\frac{1}{\varepsilon}}(t) - U_{\frac{1}{\varepsilon}}(t)  } = O(\varepsilon^{m-1}) \text{ bzw. } O\bigl( e^{-\frac{c}{\varepsilon}} \bigr) \quad (\varepsilon \searrow 0)
\end{align*} 
(wenn nur die Zeitentwicklung $V_{\frac{1}{\varepsilon}}$ zu $\frac{1}{\varepsilon} A + P_{\varepsilon}' - \frac{1}{\varepsilon} \, [A, P_{\varepsilon}]$ existiert) oder, was nach der folgenden Proposition dasselbe bedeutet,
\begin{align*}
\sup_{t \in I} \norm{   V_{\frac{1}{\varepsilon}}(t)^{*} \, U_{\frac{1}{\varepsilon}}(t) - 1  } = O(\varepsilon^{m-1}) \text{ bzw. } O\bigl( e^{-\frac{c}{\varepsilon}} \bigr) \quad (\varepsilon \searrow 0).
\end{align*}
Und dies ist die (letzte und einzige uns hier interessierende) Aussage von Theorem~2.

\begin{prop}  \label{prop: P(eps) orth und V(eps) unitär}
Seien $A(t)$, $\sigma(t)$, $P(t)$ sowie $P_{\varepsilon}(t)$ wie in Satz~\ref{thm: höherer adsatz} und die $A(t)$ seien zusätzlich schiefselbstadjungiert. Dann sind die Projektionen $P_{\varepsilon}(t)$ orthogonal und wenn die Zeitentwicklung $V_{\frac{1}{\varepsilon}}$ zu $\frac{1}{\varepsilon} A + (1-2P_{\varepsilon}) \bigl(  P_{\varepsilon}' - \frac{1}{\varepsilon} \, [A, P_{\varepsilon}]  \bigr)$ existiert, dann ist sie unitär.
\end{prop}

\begin{proof}
Sei $m \in \natu$ und seien $A$ , $\sigma$, $P$ $m$-regulär. Wir zeigen zunächst mit Induktion über $k \in \{0,1, \dots, m\}$, dass $E_k(t)$ symmetrisch ist für alle $k \in \{0, 1, \dots, m \}$ und alle $t \in I$.

Sei $k = 0$.
Dann ist $E_k(t) = P(t)$ als Rieszprojektion der schiefselbstadjungierten linearen Abbildung $A(t)$ orthogonal (Proposition~\ref{prop: rieszproj für normale A}), insbesondere symmetrisch für alle $t \in I$.

Sei $k \in \{1, \dots, m \}$ und $E_l(t)$ sei symmetrisch für alle $l \in \{0, \dots, k-1\}$ und alle $t \in I$.  
Wegen $\sigma(t) \subset \sigma(A(t)) \subset i \, \real$ ist der Zykel $\gamma_t$ für jedes $t \in I$ in $\rho(A(t))$ homolog zu einem Zykel $\gamma_{0 t}$, der aus endlich vielen Kreiswegen mit Mittelpunkten auf der imaginären Achse besteht. Nach Induktionsvoraussetzung ist $P(t) E_{k-1}'(t) \overline{P}(t) - \overline{P}(t) E_{k-1}'(t) P(t)$ schiefsymmetrisch, womit sich durch Ausschreiben des Wegintegrals (beachte die Schiefselbstadjungiertheit von $A(t)$) ergibt, dass   
\begin{align*}
\frac{1}{2 \pi i} \, \int_{\gamma_{0 t}} (A(t)-z)^{-1} \, \Bigl( P(t) E_{k-1}'(t) \overline{P}(t) - \overline{P}(t) E_{k-1}'(t) P(t) \Bigr) (A(t)-z)^{-1} \,dz 
\end{align*}
symmetrisch ist für alle $t \in I$. Weiterhin zeigt die Induktionsvoraussetzung, dass auch $S_k(t)$ symmetrisch ist für alle $t \in I$. Also ist $E_k(t)$ symmetrisch für alle $t \in I$, wie gewünscht.

Jetzt sehen wir, dass 
\begin{align*}
T_{\varepsilon}(t) = \sum_{k=0}^{m_{\varepsilon}-1} E_k(t) \varepsilon^k
\end{align*}
symmetrisch ist, und daher ist $P_{\varepsilon}(t)$ als Rieszprojektion von $T_{\varepsilon}(t)$ auf $\sigma(T_{\varepsilon}(t)) \cap U_{\frac{1}{2}}(1)$ orthogonal für alle $\varepsilon \in (0, \varepsilon_0']$ und alle $t \in I$.

Schließlich sind die $(1-2P_{\varepsilon}(t)) \bigl(  P_{\varepsilon}'(t) - \frac{1}{\varepsilon} \, [A(t), P_{\varepsilon}(t)]  \bigr)$ schiefsymmetrisch fortsetzbar, da die linearen Abbildungen
\begin{align*}
(1-2P_{\varepsilon}(t))  P_{\varepsilon}'(t) = [ P_{\varepsilon}'(t), P_{\varepsilon}(t)]
\end{align*}
und
\begin{align*}
(1-2P_{\varepsilon}(t)) \, \frac{1}{\varepsilon} \, [A(t), P_{\varepsilon}(t)] & =  P_{\varepsilon}(t) \, \frac{A(t)}{\varepsilon} \, \overline{P_{\varepsilon}(t)}  +  \overline{P_{\varepsilon}(t)} \, \frac{A(t)}{\varepsilon} \, P_{\varepsilon}(t)  
\end{align*}
wegen der Schiefselbstadjungiertheit der $A(t)$ und der eben gezeigten Orthogonalität der Projektionen $P_{\varepsilon}(t)$ schiefsymmetrisch auf $D$ sind.
 
Aufgrund von Proposition~\ref{prop: abschätzung für gestörte zeitentw} können wir $V_{\frac{1}{\varepsilon}}$ als Störungsreihe darstellen (die Abbildung $t \mapsto P_{\varepsilon}'(t) - \frac{1}{\varepsilon} \, [A(t), P_{\varepsilon}(t)]$ ist nach dem Beweis von Satz~\ref{thm: höherer adsatz} wirklich stark stetig) und diese Störungsreihendarstellung von $V_{\frac{1}{\varepsilon}}$ liefert wegen der eingangs gezeigten Schiefsymmetrie der Störung $(1-2P_{\varepsilon}(t)) \bigl(  P_{\varepsilon}'(t) - \frac{1}{\varepsilon} \, [A(t), P_{\varepsilon}(t)]  \bigr)$ die gewünschte Unitarität von $V_{\frac{1}{\varepsilon}}$.
\end{proof}

Aufgrund von Satz~\ref{thm: Kato} \emph{existiert} die Zeitentwicklung $V_{\frac{1}{\varepsilon}}$, wenn $t \mapsto P_{\varepsilon}'(t) - \frac{1}{\varepsilon} \, [A(t), P_{\varepsilon}(t)]$ stark stetig differenzierbar ist für alle $\varepsilon \in (0, \varepsilon_0']$, was nach dem Beweis von Satz~\ref{thm: höherer adsatz} beispielsweise dann erfüllt ist, wenn $t \mapsto E_{m_{\varepsilon}-1}'(t)$ noch einmal stark stetig differenzierbar ist.
\\

Wie eben schon erwähnt, ist Satz~\ref{thm: höherer adsatz} (eher) allgemeiner als Theorem~2 in~\cite{Nenciu 93}. Aber er ist -- wie folgendes Beispiel zeigt -- auch \emph{echt} allgemeiner, und zwar weil die $A(t)$ in diesem Beispiel zum einen nicht schiefselbstadjungiert sind und zum andern weniger regulär sind als in Theorem~2 verlangt.

\begin{ex} \label{ex: unser höherer adsatz echt allgemeiner als der von nenciu}
Sei $X := \ell^2(I_3)$. Sei  
\begin{align*}
\lambda(t) := 0 \text{ \; und \; } 
\mu(t) := \begin{cases} \lambda_2 - \bigl( t-\frac{1}{2} \bigr)^2, & t \in [0,\frac{1}{2}) \\
												\lambda_2, & t \in [\frac{1}{2}, 1]
					\end{cases}, 
\end{align*}
\begin{align*}
A_0 := \begin{pmatrix}
\lambda(t)         & 0     & 0         \\
0    & \mu(t)  & 1     \\
0     & 0					& \mu(t)      
\end{pmatrix}, \;
R(t):= \begin{pmatrix}
\cos t  & \sin t & 0    \\
-\sin t & \cos t & 0     \\
0       & 0      & 1   
\end{pmatrix},
\end{align*}
$\sigma(t) := \{ \lambda(t) \} = \{0\}$ für alle $t \in I$ und $P_0$ die orthogonale Projektion auf $\spn \{e_1\}$. Sei 
\begin{align*}
A(t) := R(t)^* A_0 R(t) \text{ \; und \; } P(t):= R(t)^* P_0 R(t)
\end{align*}
für alle $t \in I$.

Wir zeigen, dass dann zwar die Voraussetzungen von Satz~\ref{thm: handl adsatz mit sl} (und damit auch die Voraussetzungen von Satz~\ref{thm: höherer adsatz} mit $m = 2$) erfüllt sind, aber nicht die von Nencius Theorem~2. Daraus folgt dann, wie gewünscht, dass Satz~\ref{thm: handl adsatz mit sl} nicht aus Theorem~2 folgt, und dass Satz~\ref{thm: höherer adsatz} echt allgemeiner ist als Theorem~2.

Zunächst erzeugt jedes $A_0(t)$ und damit auch jedes $A(t)$ eine Kontraktionshalbgruppe auf $X$ und $t \mapsto A(t)$ ist einmal stetig differenzierbar, weil $t \mapsto \mu(t)$ einmal stetig differnzierbar ist.
Weiter ist $\sigma(t)$ gleichmäßig isoliert in $\{0, \mu(t)\} = \sigma(A(t))$ und $t \mapsto \sigma(t) = \{0\}$  ist stetig.
Schließlich ist $P_0$ die Rieszprojektion von $A_0$ auf $\{0\}$ (Nachrechnen oder Proposition~\ref{prop: rieszproj eind}), das heißt, $P(t)$ ist die Rieszprojektion von $A(t)$ auf $\sigma(t)$ für alle $t \in I$ und $t \mapsto P(t) = R(t)^* P_0 R(t)$ ist zweimal stetig differenzierbar.  

Also sind tatsächlich die Voraussetzungen von Satz~\ref{thm: handl adsatz mit sl} erfüllt. Aber die Voraussetzungen von Theorem~2 in~\cite{Nenciu 93} sind nicht erfüllt: erstens sind die $A(t)$ nicht schiefselbstadjungiert und auch nicht normal und zweitens ist $t \mapsto A(t)x$ nicht zweimal stetig differenzierbar für alle $x \in X$, denn $t \mapsto \mu(t)$ und damit $t \mapsto A_0(t) e_2$ ist zwar einmal aber nicht zweimal (stetig) differenzierbar.

Zuletzt sei gesagt, dass die Aussage des Adiabatensatzes hier nicht schon aufgrund der trivialen Adiabatensätze erfüllt ist, denn zum einen ist $t \mapsto P(t)$ nicht konstant und zum andern ist $A$ auch für keine negative Zahl $\omega \in (-\infty, 0)$ $(M, \omega)$-stabil, weil dazu $\sigma(A(t))$ in $\{z \in \complex: \Re z < 0 \}$ enthalten sein müsste.  $\blacktriangleleft$
\end{ex}

Wir haben versucht, Satz~\ref{thm: höherer adsatz} auf Situationen ohne Spektrallücke auszudehnen. Jedoch erfolglos -- selbst im Sonderfall normaler $A(t)$ und einpunktiger $\sigma(t) = \{ \lambda(t) \}$ mit einem nach einer Seite hin in $\sigma(A(t))$ isolierten $\lambda(t)$ (das heißt, $\lambda(t) + \delta e^{i \vartheta_0} \in \rho(A(t))$ für alle $\delta \in (0, \delta_0]$ und alle $t \in I$, wie in den Sätzen von Abschnitt~\ref{sect: adsätze ohne sl}).

%%%%%%%%%%%%%%%%%%%%%%%%%%%%%%%%%%%%%%%%%%%%%%%%%%%%%%%%%%%%%%%%%%%%%%%%%%%%%%%%%%%%%%%%%%%%%%%%%%%%%%%%%%%%%%%%%%%%%%%%%%%%%%%%%%%%%%%%%%%%%%%%%%%%%%

\section{Anwendungsbeispiel} \label{sect: anwendbsp}

In diesem letzten Abschnitt besprechen wir eine kleine Anwendung der Adiabatentheorie, und zwar eine Anwendung in der Neutronentransporttheorie. Wir stützen uns dabei auf die Arbeiten~\cite{LehnerWing 55}, \cite{LehnerWing 56} Lehners und Wings und werden auch sonst nur die wichtigsten Schritte ausführlich begründen: es geht uns hier nur um Satz~\ref{thm: anwendung des adsatzes}.

Wir betrachten die folgenden Anfangsrandwertprobleme ($T \in (0, \infty)$) auf $I \times [-a,a] \times [-1,1]$:
\begin{align*}
\frac{ \partial \varphi }{\partial t}(t,x,\mu) = T \Bigl( - \mu \, \frac{ \partial \varphi }{\partial x}(t,x,\mu) + \frac{c(t)}{2} \, \int_{-1}^1 \varphi(t,x,\mu') \, d\mu' - s(t) \varphi(t,x,\mu) \Bigr),
\end{align*}
\begin{gather*}
\varphi(0,x, \mu) = \varphi_0(x,\mu), \\
\varphi(t,a,\mu) = 0 \text{ für alle } \mu \in [-1,0) \text{ und } \varphi(t,-a,\mu) = 0 \text{ für alle } \mu \in (0,1].
\end{gather*}
Diese beschreiben, wie sich die Verteilung $\varphi(t, \,.\,,\,.\,)$ der Neutronen in einer unendlich ausgedehnten Platte der Dicke $2 a$ (umgeben von Vakuum) zeitlich entwickelt. 
$\varphi(t,\,.\,,\,.\,)$ bezeichnet genauer die Anzahldichte (im Sinne des Satzes von Radon, Nikodym) der Neutronen zur Zeit $t$, das heißt
$\int_{E} \varphi(t,x,\mu) \, d(x,\mu)$ ist die Anzahl der Neutronen, die sich zur Zeit $t$ in der Untermenge $E \in \mathcal{B}_{[-a,a] \times [-1,1]}$ des Ortsrichtungsraumes $[-a,a]\times [-1,1]$ aufhalten, insbesondere ist
\begin{align*}
\int_{E_1 \times E_2} \varphi(t,x,\mu) \, d(x,\mu) 
\end{align*}
die Anzahl der Neutronen, die zur Zeit $t$ in $E_1 \in \mathcal{B}_{[-a,a]}$ sind und sich in eine Richtung $\mu$ aus $E_2 \in \mathcal{B}_{[-1,1]}$ bewegen (was bedeuten soll, dass $\mu$ gleich dem Kosinus des Winkels zwischen Bewegungsrichtung des Neutrons und der positiven $x$-Achse ist). Wir nehmen dabei an, dass die Neutronen nur mit dem Medium wechselwirken (und zwar durch Stöße), aber nicht untereinander, dass die Streuung infolge der Stöße isotrop ist und dass alle Neutronen (betragsmäßig) dieselbe Geschwindigkeit haben. Die Zahl $s(t)$ bezeichnet den totalen Wirkungsquerschnitt zur Zeit $t$ (kennzeichnet die Wahrscheinlichkeit einer Wechselwirkung, das heißt eines Zusammenstoßes) und $\frac{c(t)}{s(t)}$ die durchschnittliche Anzahl Neutronen, die aus dem Zusammenstoß eines Neutrons mit einem Kern hervorgehen: $\frac{c(t)}{s(t)} < 1$ bedeutet also Streuung und Absorption, $\frac{c(t)}{s(t)} = 1$ reine Streuung und $\frac{c(t)}{s(t)} > 1$ eine Vervielfältigung.

Die Randbedingungen schließlich bedeuten, dass zu keiner Zeit $t$ Neutronen von rechts oder links aus dem Vakuum in die Platte eindringen. 
\\

Den obigen konkreten Anfangsrandwertproblemen entsprechen die folgenden abstrakten Anfangswertprobleme ($T \in (0, \infty)$) auf $I$:
\begin{align*}
\varphi' = T A(t) \varphi, \;\; \varphi(0) = \varphi_0,
\end{align*}
wobei $A(t) := A_0(c(t)) - s(t)$ und $A_0(c) := A_0 + cB$ für alle $c \in (0, \infty)$. $B$ ist dabei die lineare Abbildung in $X := L^2([-a,a] \times [-1,1], \complex)$, gegeben durch
\begin{align*}
\bigl( B \varphi \bigr)(x,\mu) := \frac{1}{2} \, \int_{[-1,1]} \varphi(x,\mu') \,d\mu' 
\end{align*}
für alle $\varphi \in X$, und $A_0$ ist die wie folgt gegebene lineare Abbildung: $D(A_0)$ ist die Menge (der Äquivalenzklassen) genau der $2$-integrierbaren (das heißt quadratintegrierbaren) Abbildungen $\varphi: [-a,a] \times [-1,1] \to \complex$, für die gilt:
\begin{itemize}
\item [(i)] $(-a,a) \ni x \mapsto \varphi(x, \mu)$ ist schwach differenzierbar für fast alle $\mu \in [-1,1]$
\item [(ii)] $[-a,a] \times [-1,1] \ni (x,\mu) \mapsto \mu \, \frac{ \partial \varphi }{\partial x}(x,\mu)$ ist $2$-integrierbar
\item [(iii)] $\varphi(a, \mu) = 0$ für fast alle $\mu \in [-1,0)$ und $\varphi(-a, \mu) = 0$ für fast alle $\mu \in (0,1]$,
\end{itemize}
und
\begin{align*}
\bigl( A_0 \varphi) (x,\mu) := - \mu \, \frac{ \partial \varphi }{\partial x}(x,\mu)
\end{align*}
für alle $\varphi \in D(A_0)$, wobei $\frac{ \partial \varphi }{\partial x}(x,\mu)$ hier und im folgenden für $\bigl( \varphi(\, . \, , \mu) \bigr)'(x)$ steht und $\bigl( \varphi(\, . \, , \mu) \bigr)'$ die für fast alle $\mu \in [-1,1]$ existierende schwache Ableitung von $(-a,a) \ni x \mapsto \varphi(x, \mu)$ bezeichnet. Weiter meinen wir mit einer \emph{$p$-integrierbaren} Abbildung $f$ auf einem Maßraum $(X_0, \mathcal{A}, \mu)$ natürlich eine messbare Abbildung $X_0 \to \complex$ mit $\int |f|^p \, d\mu < \infty$.

Zu dieser Definition von $A_0$ (s. Abschnitt~1 in~\cite{LehnerWing 56} oder in~\cite{Lehner 56}) ist anzumerken: wenn $\varphi, \tilde{\varphi}: [-a,a] \times [-1,1] \to \complex$ zwei $2$-integrierbare Abbildungen sind, die fast überall übereinstimmen, und $(-a,a) \ni x \mapsto \varphi(x, \mu)$ schwach differenzierbar ist für fast alle $\mu \in [-1,1]$, dann ist auch die Abbildung $(-a,a) \ni x \mapsto \tilde{\varphi}(x, \mu)$ schwach differenzierbar für fast alle $\mu \in [-1,1]$ und es gilt
\begin{align*}
%\frac{ \partial \tilde{\varphi} }{\partial x}(x,\mu) = \frac{ \partial \varphi }{\partial x}(x,\mu)
\bigl( \tilde{\varphi}(\, . \, , \mu) \bigr)'(x) = \bigl( \varphi(\, . \, , \mu) \bigr)'(x)
\end{align*}
für fast alle $(x,\mu) \in [-a,a] \times [-1,1]$. Insbesondere hängt der $A_0$ definierende Ausdruck nicht vom gewählten Vertreter ab.

Weiter ist zu beachten: wenn $\varphi$ eine $2$-integrierbare Abbildung mit (i) und (ii) ist, dann ist $(-a,a) \ni x \mapsto \bigl( \varphi(\, . \, , \mu) \bigr)'(x)$ $2$-integrierbar für fast alle $\mu \in [-1,1]$ und daher ist $[-a,a] \ni x \mapsto \varphi(x,\mu)$ absolut stetig. Warum? Weil ganz allgemein $W^{1,p}(J, \complex)$ für beschränkte offene Intervalle $J$ und $p \in [1, \infty)$ beschrieben werden kann als die Menge (der Äquivalenzklassen) der absolut stetigen Abbildungen $f: \overline{J} \to \complex$, deren fast überall gegebene punktweise Ableitung $p$-integrierbar ist. 
Insbesondere ist für eine Abbildung $\varphi$ mit (i) und (ii) die Randbedingung (iii) überhaupt sinnvoll.

Die fast überall gegebene punktweise Ableitung einer absolut stetigen Abbildung $f$ wie eben stimmt mit der schwachen Ableitung von $f$ überein.
All dies folgt aus Satz~VII.4.14 in~\cite{Elstrodt} (Charakterisierung absolut stetiger Abbildungen) und dem bekannten Satz, wonach eine schwach differenzierbare $p$-integrierbare Abbildung $f: J \to \complex$ mit $f' = 0$ (im wesentlichen) konstant ist. 
\\

Wegen $C_c^{\infty}((-a,a)\times(-1,1)) \subset D(A_0)$ ist $D(A_0)$ dicht in $X$. Außerdem ist $A_0$ dissipativ, was man mithilfe partieller Integration gemäß Satz~VII.4.16 in~\cite{Elstrodt} einsieht (Abschnitt~1 in~\cite{Lehner 56}), und $B$ ist eine orthogonale Projektion, wie man leicht bestätigt.

\begin{lm} \label{lm: inhom lin dgl für schwache abl}
Sei $J = (a,b)$ ein beschränktes offenes Intervall und $p \in [1, \infty)$. Seien $f, g \in L^p(J)$, $f$ schwach differenzierbar und $f' = \lambda f + g$ für eine komplexe Zahl $\lambda$. Dann gilt
\begin{align*}
f(x) = f(a) e^{\lambda(x-a)} + \int_{(a,x]} e^{\lambda(x-t)} \, g(t) \, dt
\end{align*}
für fast alle $x \in J$.
\end{lm}

\begin{proof}
Sei 
\begin{align*}
f_1(x):= e^{-\lambda(x-a)} f(x) \text{ \; und \; } f_2(x):= \int_{(a,x]} e^{-\lambda(t-a)} \, g(t) \, dt
\end{align*}
für alle $x \in J$.
Wie man leicht sieht, sind $f_1$ und $f_2$ dann schwach differenzierbar und 
\begin{align*}
f_1'(x) = -\lambda e^{-\lambda(x-a)} \, f(x) + e^{-\lambda(x-a)} \, f'(x) = e^{-\lambda(x-a)} \, g(x) = f_2'(x)
\end{align*}
für fast alle $x \in J$. 
Also existiert eine Zahl $c \in \complex$, sodass $f_1(x) = c + f_2(x)$ für fast alle $x \in J$, oder mit anderen Worten:
\begin{align*}
f(x) = e^{\lambda(x-a)} c + \int_{(a,x]} e^{\lambda(x-t)} \, g(t) \, dt
\end{align*}
für fast alle $x \in J$. Wir können $f$ daher kanonisch auf $\partial J$ fortsetzen, denn die rechte Seite der obigen Gleichung ist ja (absolut) stetig in $x \in \overline{J}$. %und damit kanonisch fortsetzbar. 
\end{proof}

Aus diesem Lemma folgt: wenn $\varphi \in D(A_0)$, $\psi \in X$ und $(\lambda - A_0)\varphi = \psi$ für ein $\lambda \in \complex$, dann ist $\varphi$ gegeben durch
\begin{align*}
\varphi(x, \mu) = \begin{cases} -\frac{1}{\mu} \int_{[x,a]} e^{-\frac{\lambda}{\mu} \, (x-t)} \, \psi(t,\mu) \, dt & \text{ für fast alle } (x,\mu) \in [-a,a] \times [-1,0) \\
																 \frac{1}{\mu} \int_{[-a,x]} e^{-\frac{\lambda}{\mu} \, (x-t)} \, \psi(t,\mu) \, dt & \text{ für fast alle } (x,\mu) \in [-a,a] \times (0,1]
						      \end{cases}.
\end{align*}
Wenn umgekehrt $\varphi$ durch die rechte Seite der obigen Gleichung definiert ist (mit einem $\psi \in X$) und $\Re \lambda > 0$, dann ist $\varphi$ $2$-integrierbar, $\varphi \in D(A_0)$ und $(\lambda - A_0)\varphi = \psi$. Also ist $\lambda - A_0$ surjektiv für $\Re \lambda > 0$ und mit dem Satz von Lumer, Phillips (Satz~\ref{thm: Lumer, Phillips}) folgt, dass $A_0$ eine Kontraktionshalbgruppe auf $X$ erzeugt. Wegen Satz~\ref{thm: störungssatz halbgruppenerz} erzeugt daher $A_0(c) = A_0 + c B$ eine stark stetige Halbgruppe und 
%Die rechte Seite ist $2$-integrierbar, wenn $\Re \lambda > 0$, insbesondere ist $\lambda - A_0$ also surjektiv für $\Re \lambda > 0$ und mit dem Satz von Lumer, Phillips (Satz~\ref{thm: Lumer, Phillips}) folgt, dass $A_0$ eine Kontraktionshalbgruppe auf $X$ erzeugt. Wegen Satz~\ref{thm: störungssatz halbgruppenerz} erzeugt daher $A_0(c) = A_0 + c B$ eine stark stetige Halbgruppe und 
\begin{align*}
\norm{ e^{A_0(c)s} }  \le e^{c s} 
\end{align*}
für alle $s \in [0, \infty)$. Insbesondere ist die quasikontraktive Wachstumsschranke dieser Halbgruppe kleiner oder gleich $c$.

\begin{prop} \label{prop: quasikontr wschranke von A0(c)}
Die quasikontraktive Wachstumsschranke von $A_0(c)$ ist gleich $c$ für alle $c \in (0, \infty)$.
\end{prop}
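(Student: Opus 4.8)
Der Plan ist, die bereits festgestellte Ungleichung $\omega_{A_0(c)}' \le c$ (sie folgt aus der Kontraktivit\"at von $e^{A_0\,\cdot\,}$ und Satz~\ref{thm: st�rungssatz halbgruppenerz}) durch die umgekehrte Ungleichung $\omega_{A_0(c)}' \ge c$ zu erg\"anzen. Dazu werde ich einen einzigen geeigneten Testvektor $\varphi$ angeben, f\"ur den $\Re\scprd{\varphi, A_0(c)\varphi} = c\,\norm{\varphi}^2$ gilt, und aus dem Kurzzeitverhalten der Halbgruppe $e^{A_0(c)\,\cdot\,}$ auf diesem Vektor ablesen, dass kein $\omega < c$ die Ungleichung $\norm{e^{A_0(c)s}} \le e^{\omega s}$ f\"ur alle $s \ge 0$ erf\"ullen kann.

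Als Testvektor w\"ahle ich eine von $\mu$ unabh\"angige Funktion: $\varphi(x,\mu) := f(x)$ mit etwa $f(x) := \sin\bigl( \tfrac{\pi(x+a)}{2a} \bigr)$ f\"ur alle $(x,\mu) \in [-a,a] \times [-1,1]$. Zun\"achst ist dann $\varphi \in D(A_0) \subset D(A_0(c))$: Die Abbildung $f$ ist schwach differenzierbar mit $f' \in L^2((-a,a))$, und es gilt $f(-a) = f(a) = 0$, sodass die Bedingungen (i), (ii) und (iii) in der Definition von $D(A_0)$ unmittelbar erf\"ullt sind. Ferner gilt $\bigl( B\varphi \bigr)(x,\mu) = \tfrac12 \int_{-1}^1 f(x)\,d\mu' = f(x) = \varphi(x,\mu)$, also $B\varphi = \varphi$ und damit $\scprd{\varphi, B\varphi} = \norm{\varphi}^2$, und wegen $\int_{-1}^1 \mu\,d\mu = 0$ gilt
\begin{align*}
\scprd{\varphi, A_0\varphi} = -\int_{-a}^a \overline{f(x)}\,f'(x) \Bigl( \int_{-1}^1 \mu\,d\mu \Bigr)\,dx = 0.
\end{align*}
Insgesamt folgt $\Re\scprd{\varphi, A_0(c)\varphi} = \Re\scprd{\varphi, A_0\varphi} + c\,\Re\scprd{\varphi, B\varphi} = c\,\norm{\varphi}^2$.

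Der entscheidende Schritt ist die \"Ubersetzung dieser Gleichung in ein Kurzzeitverhalten. Da $\varphi \in D(A_0(c))$, ist $[0,\infty) \ni s \mapsto u(s) := e^{A_0(c)s}\varphi$ nach den Ausf\"uhrungen zu stark stetigen Halbgruppen (rechtsseitige Differenzierbarkeit mit Ableitung $s \mapsto A_0(c)u(s)$) zusammen mit Satz~\ref{thm: einseitig db und beidseitig db} stetig differenzierbar mit $u'(s) = A_0(c)u(s)$; folglich ist $s \mapsto \norm{u(s)}^2$ stetig differenzierbar mit $\dds{\norm{u(s)}^2} = 2\Re\scprd{u(s), A_0(c)u(s)}$ und insbesondere
\begin{align*}
\dds{\norm{u(s)}^2}\Big|_{s=0} = 2\Re\scprd{\varphi, A_0(c)\varphi} = 2c\,\norm{\varphi}^2 > 0.
\end{align*}
Ist nun $\omega \in \real$ mit $\norm{e^{A_0(c)s}} \le e^{\omega s}$ f\"ur alle $s \in [0,\infty)$, so folgt $\norm{u(s)}^2 \le e^{2\omega s}\norm{\varphi}^2$ f\"ur alle $s$; Subtraktion von $\norm{u(0)}^2 = \norm{\varphi}^2$, Division durch $s > 0$ und der Grenz\"ubergang $s \searrow 0$ liefern dann $2c\,\norm{\varphi}^2 \le 2\omega\,\norm{\varphi}^2$, wegen $\norm{\varphi} > 0$ also $c \le \omega$. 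Da jedes solche $\omega$ demnach $\ge c$ ist, folgt $\omega_{A_0(c)}' \ge c$ und zusammen mit der Eingangsungleichung $\omega_{A_0(c)}' = c$, und zwar f\"ur jedes $c \in (0,\infty)$.

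Als einzige kleine H\"urde sehe ich das saubere Nachpr\"ufen, dass der gew\"ahlte Vektor $\varphi$ tats\"achlich in $D(A_0)$ liegt und dass $B\varphi = \varphi$ ist -- beides ist aber v\"ollig unproblematisch. Hervorzuheben ist, dass f\"ur diese untere Schranke keinerlei feinere Spektralanalyse von $A_0(c)$ (etwa im Sinne der Arbeiten~\cite{LehnerWing 55}, \cite{LehnerWing 56} von Lehner und Wing) n\"otig ist: Das lineare Anwachsen von $\norm{e^{A_0(c)s}\varphi}^2$ f\"ur kleine $s$ auf dem einen speziellen Testvektor gen\"ugt bereits.
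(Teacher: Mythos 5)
Dein Beweis ist korrekt und verwendet im Kern dieselbe Idee wie die Arbeit: eine von $\mu$ unabh\"angige Testfunktion mit Nullrandwerten, f\"ur die $B\varphi = \varphi$ und $\scprd{\varphi, A_0\varphi} = 0$ gilt, sodass $\Re\scprd{\varphi, A_0(c)\varphi} = c\,\norm{\varphi}^2$. Der einzige (unwesentliche) Unterschied liegt im letzten Schritt: Die Arbeit schlie\ss t \"uber die Nichtdissipativit\"at von $A_0(c)-\omega$ f\"ur $\omega < c$, w\"ahrend du die Ungleichung $\norm{e^{A_0(c)s}\varphi}^2 \le e^{2\omega s}\norm{\varphi}^2$ direkt bei $s=0$ differenzierst -- das ist genau die infinitesimale Fassung desselben Arguments.
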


\begin{proof}
Sei $c \in (0, \infty)$.
Wir haben eben schon gesehen, dass $\omega_{A_0(c)}' \le c$. Wir zeigen nun, dass $A_0(c) - \omega$ für Zahlen $\omega < c$ nicht dissipativ ist, woraus dann folgt, dass $\omega_{A_0(c)}' \ge c$.

Sei also $\omega < c$. Sei weiter $\varphi \in C_c^1((-a,a), \complex)$ mit $\varphi \ne 0$ und sei $\psi(x, \mu) := \varphi(x)$ für alle $(x, \mu) \in [-a,a] \times [-1, 1]$. Dann gilt $\psi \in D(A_0)$, $B \psi = \psi$ und
\begin{align*}
\scprd{ \psi, A_0 \psi } &= \int_{ [-a,a] \times [-1, 1] } \overline{ \psi(x, \mu) } \, \bigl( - \mu \frac{ \partial \psi }{ \partial x} (x,\mu) \bigr) \, d(x,\mu) \\
&= - \int_{[-a,a]} \overline{ \varphi(x)} \, \varphi'(x) \Bigl( \int_{[-1,1]} \mu \, d\mu \Bigr) \, dx = 0.
\end{align*}
Also erhalten wir
\begin{align*}
\scprd{ \psi, (A_0(c) - \omega) \psi } = \scprd{ \psi, A_0 \psi } + \scprd{ \psi, c B \psi - \omega \psi } = (c-\omega) \norm{\psi}^2 > 0,
\end{align*}
das heißt, $A_0(c) - \omega$ ist nicht dissipativ, was zu zeigen war.
\end{proof}

$A_0$ ist nicht normal, weil $D(A_0^*) \ne D(A_0)$. $D(A_0^*)$ ist nämlich die Menge (der Äquivalenzklassen) genau der $2$-integrierbaren Abbildungen $\varphi: [-a,a] \times [-1,1] \to \complex$, für die $(-a,a) \ni x \mapsto \varphi(x, \mu)$ schwach differenzierbar ist für fast alle $\mu \in [-1,1]$, für die $[-a,a] \times [-1,1] \ni (x,\mu) \mapsto \mu \, \frac{ \partial \varphi }{\partial x}(x,\mu)$ $2$-integrierbar ist und $\varphi(a, \mu) = 0$ für fast alle $\mu \in [-1,0)$ und $\varphi(-a, \mu) = 0$ für fast alle $\mu \in (0,1]$, und
\begin{align*}
\bigl( A_0^* \varphi) (x,\mu) = \mu \, \frac{ \partial \varphi }{\partial x}(x,\mu).
\end{align*}
Auch $A_0(c)$ ist damit nicht normal, %Begründung: $D(A_0(c)^*) = D(A_0^*)$ und D(A_0(c)) = D(A_0)$, weil $B$ beschränkt ist
die Adiabatensätze für nichtnormale (erst recht für nichtschiefselbstadjungierte) $A(t)$ sind also durchaus -- auch von den Anwendungen her -- wünschenswert (s. auch die Anwendungen in Abou Salems Arbeit~\cite{Abou 07}). 
\\

%Wie sieht das Spektrum der $A_0(c)$ aus?
Was können wir über das Spektrum von $A_0$ sagen? Zunächst gilt $\sigma(A_0) \subset \{ z \in \complex: \Re z \le 0\}$, weil $A_0$ ja eine Kontraktionshalbgruppe erzeugt (Satz~\ref{thm: eigenschaften von erzeugern}). Weiter ist $1 \notin \im (\lambda-A_0)$ für $\Re \lambda < 0$, woraus sich ergibt, dass $\sigma(A_0) = \{ z \in \complex: \Re z \le 0\}$.
Schließlich gilt nach der Schlussfolgerung aus Lemma~\ref{lm: inhom lin dgl für schwache abl} $\sigma_p(A_0) = \emptyset$ und $\sigma_p(A_0^*) = \emptyset$, das heißt, $\sigma_r(A_0) = \emptyset$ und daher $\sigma(A_0) = \sigma_c(A_0)$.
 
Die Spektralstruktur von $\sigma(A_0(c))$ für $c \in (0, \infty)$ zu klären ist deutlich anspruchsvoller. Der nächste auf Lehner und Wing zurückgehende Satz tut dies.

\begin{thm} \label{thm: spektrum von A0(c)}
Sei $c \in (0, \infty)$. Dann gilt:
\begin{itemize}
\item [(i)] $\sigma_p(A_0(c))$ ist eine nichtleere endliche Untermenge von $(0, \infty)$:
\begin{align*}
\sigma_p(A_0(c)) = \{ \beta_1(c), \dots, \beta_{m_c}(c)\} \subset (0, \infty),
\end{align*} 
wobei $\beta_1(c), \dots, \beta_{m_c}(c)$ nach geometrischer Vielfachheit gezählt und absteigend geordnet sind. $\sigma_c(A_0(c)) = \{ z \in \complex: \Re z \le 0\}$ und $\sigma_r(A_0(c)) = \emptyset$.
\item [(ii)] $\beta_1(c), \dots, \beta_{m_c}(c)$ sind jeweils isolierte Spektralwerte von $A_0(c)$ der Ordnung $1$, %$1$-fache Pole von $(\, . \, - A_0(c))^{-1}$, 
insbesondere stimmen geometrische und algebraische Vielfachheit jedes Eigenwertes überein. 
\end{itemize}  
\end{thm}

\begin{proof}
(i) ist die Aussage des in Abschnitt~1 von~\cite{LehnerWing 55} festgehaltenen Satzes. Die Aussage (ii) folgt aus Lemma~2 in~\cite{LehnerWing 56} und Satz~\ref{thm: isol spektralwerte}.
\end{proof}

Zur Abhängigkeit der Eigenwerte von $A_0(c)$ vom Störungsparameter $c \in (0, \infty)$ ist zu sagen: die Anzahl $m_c$ der nach geometrischer Vielfachheit gezählten Eigenwerte wächst monoton und unbeschränkt mit $c$, die Definitionsbereiche $\dom \beta_n$ sind unbeschränkte offene Intervalle mit 
\begin{align*}
\dom \beta_{n+1} \subset \dom \beta_n \subset (0, \infty) = \dom \beta_1 
\end{align*}
für alle $n \in \natu$, und $c \mapsto \beta_n(c) \in (0, \infty)$ ist für jedes $n \in \natu$ eine streng monoton wachsende stetige Abbildung mit $\beta_n(c) \longrightarrow \infty \;\;(c \to \infty)$ und $\beta_n(c) \longrightarrow 0 \;\;(c \searrow \inf \dom \beta_n)$. Außerdem gilt $\dom \beta_n \subsetneq (0, \infty)$ für alle $n \ne 1$. %Begründung: $\mu_1(\beta) \to \infty \;\; (\beta \searrow 0)$ aber $(0, \infty) \ni \beta \mapsto \mu_2(\beta)$ ist beschränkt 
Diese Aussagen gehen alle aus der Argumentation Lehners und Wings in~\cite{LehnerWing 55} hervor.
\\

Jetzt können wir zeigen, dass die Aussage des Adiabatensatzes gilt, wenn wir wie oben $A(t) := A_0(c(t)) - s(t)$ setzen für genügend reguläre Abbildungen $c, s: I \to (0, \infty)$ mit $\frac{c(t)}{s(t)} \le 1$ für alle $t \in I$ und weiter $\lambda(t)$ als den größten Eigenwert von $A(t)$ wählen. Übrigens ist nur dieser physikalisch bedeutsam (Abschnitt~5.1 in~\cite{Mokhtar-K 97}).

\begin{thm} \label{thm: anwendung des adsatzes}
Seien $c$ und $s$ zweimal stetig differenzierbare Abbildungen $I \to (0, \infty)$ mit $c(t) \le s(t)$ für alle $t \in I$. Sei $A(t) := A_0(c(t)) - s(t)$, $\lambda(t) := \beta_1(c(t)) - s(t)$ und $P(t)$ die Rieszprojektion von $A(t)$ auf $\{\lambda(t)\}$ für alle $t \in I$. Dann gilt
\begin{align*}
\sup_{t \in I} \norm{ U_{a,T}(t) - U_T(t) } = O\Bigl( \frac{1}{T} \Bigr) \quad (T \to \infty).
\end{align*}
\end{thm}

\begin{proof}
Wir zeigen zunächst, dass $\lambda(t)$ gleichmäßig isoliert ist in $\sigma(A(t))$. Der Schlüssel dazu liegt in Proposition~C-III.3.5 aus~\cite{Nagel 86} (vgl. Theorem~VI.1.12 in~\cite{EngelNagel}), die die Aussage von Satz~\ref{thm: spektrum von A0(c)} entscheidend ergänzt: sie besagt nämlich, dass unter der Voraussetzung, $A_0(c)$ erzeuge eine irreduzible positive Halbgruppe, die algebraische (und damit auch geometrische) Vielfachheit von $\beta_1(c)$ gleich $1$ ist für alle $c \in (0, \infty)$.

Zeigen wir nun, dass diese Voraussetzung der Positivität und Irreduzibilität erfüllt ist. Wir zeigen dazu, dass für alle $c \in (0, \infty)$ ein $\lambda \in (0, \infty)$ existiert, sodass
\begin{align*}
(\lambda - A_0(c))^{-1} \varphi  \ge 0 \text{ bzw.} > 0 \text{ fast überall}
\end{align*}
für alle $\varphi \in X$ mit $\varphi \ge 0 \text{ bzw.} > 0$ fast überall. 
Zunächst gilt
\begin{align} \label{eq: anwendung des adsatzes 1}
\bigl((\lambda-A_0)^{-1} \varphi \bigr)(x, \mu) = \begin{cases} -\frac{1}{\mu} \int_{[x,a]} e^{-\frac{\lambda}{\mu} \, (x-t)} \, \varphi(t,\mu) \, dt, & (x,\mu) \in [-a,a] \times [-1,0) \\
																													      \frac{1}{\mu} \int_{[-a,x]} e^{-\frac{\lambda}{\mu} \, (x-t)} \, \varphi(t,\mu) \, dt, & (x,\mu) \in [-a,a] \times (0,1]
																						      \end{cases}
\end{align}
für alle $\varphi \in X$ und alle $\lambda \in (0, \infty)$. Sei nun $c \in (0, \infty)$ und sei $\lambda := 2c$. Dann gilt weiter, dass
\begin{align*}
\norm{ c B (\lambda - A_0)^{-1} } \le c \, \frac{1}{\lambda} = \frac{1}{2}
\end{align*}
($A_0$ erzeugt ja eine Kontraktionshalbgruppe und $B$ ist eine orthogonale Projektion) und damit
\begin{align} \label{eq: anwendung des adsatzes 2}
(\lambda - A_0(c))^{-1} &= (\lambda - A_0)^{-1} \, \bigl( 1-cB(\lambda - A_0)^{-1} \bigr)^{-1} \notag \\
&= (\lambda - A_0)^{-1} + (\lambda - A_0)^{-1} \, \sum_{n=1}^{\infty} \bigl( cB(\lambda - A_0)^{-1} \bigr)^n.
\end{align}
Aus \eqref{eq: anwendung des adsatzes 1} und \eqref{eq: anwendung des adsatzes 2} und der Positivität von $B$ (im Verbandssinn) folgt nun, dass
\begin{align*}
(\lambda - A_0(c))^{-1} \varphi \ge 0 \text{ bzw.} > 0 \text{ fast überall}
\end{align*}
für alle $\varphi \in X$ mit $\varphi \ge 0 \text{ bzw.} > 0$ fast überall, und damit nach Theorem~VI.1.8 in~\cite{EngelNagel} und der  Definition~C-III.3.1 von Irreduzibiltät (zusammen mit den Ausführungen in Abschnitt~C-I.2 in~\cite{Nagel 86}), dass $A_0(c)$ tatsächlich für jedes $c \in (0, \infty)$ eine irreduzible positive Halbgruppe auf $X$ erzeugt.  

Jetzt können wir Proposition~C-III.3.5 anwenden, die besagt, dass $\beta_1(c)$ für alle $c \in (0, \infty)$ die algebraische Vielfachheit $1$ hat. Wegen der Stetigkeit von $c \mapsto A_0(c)$ im verallgemeinerten Sinn (Lemma~\ref{lm: A stetig im verallg sinn}) und der Stetigkeit von $c \mapsto \{\beta_1(c)\}$ (Ausführugen nach Satz~\ref{thm: spektrum von A0(c)}) folgt daraus mithilfe von Proposition~\ref{prop: zshg isoliert und glm isoliert}, dass $\beta_1(c)$ gleichmäßig in $c \in J$ isoliert ist in $\sigma(A_0(c))$ für jedes kompakte Intervall $J \subset (0,\infty)$, und daraus schließlich, dass $\lambda(t) = \beta_1(c(t)) -s(t)$ gleichmäßig (in $t \in I$) isoliert ist in $\sigma\bigl( A_0(c(t)) - s(t) \bigr) = \sigma(A(t))$, wie gewünscht. 
\\

Was haben wir nun von der gleichmäßigen Isoliertheit von $\lambda(t)$? Aus ihr ergibt sich, dass $t \mapsto P(t)\varphi$ zweimal stetig differenzierbar ist für alle $\varphi \in X$. %und damit, dass alle Voraussetzungen von Satz~\ref{thm: unhandl adsatz mit sl} erfüllt sind, aus dem schließlich die Behauptung folgt.
Sei nämlich $t_0 \in I$ und $r_0$ eine positive Zahl, sodass 
\begin{align*}
\overline{U}_{r_0}(\lambda(t)) \setminus \{\lambda(t)\} \subset \rho(A(t))
\end{align*}
für alle $t \in I$. Dann exisitiert wegen der Stetigkeit von $t \mapsto \lambda(t)$ eine in $I$ offene Umgebung $U_{t_0}$ von $t_0$, sodass
\begin{align*}
\{\lambda(t)\} \subset U_{\frac{r_0}{3}}(\lambda(t_0)) \text{ \; und \; } \sigma(A(t)) \setminus \{\lambda(t)\} \subset \complex \setminus \overline{U}_{r_0}(\lambda(t)) \subset \complex \setminus \overline{U}_{\frac{2 r_0}{3}}(\lambda(t_0))
\end{align*}
für alle $t \in U_{t_0}$, das heißt,
\begin{align*}
P(t) \varphi = \frac{1}{2 \pi i} \, \int_{\gamma_{t_0}} (z-A(t))^{-1} \varphi \, dz
\end{align*}
für alle $t \in U_{t_0}$ und alle $\varphi \in X$, wobei $\gamma_{t_0} := \partial  U_{\frac{2 r_0}{3}}(\lambda(t_0))$. Aus der zweimaligen stetigen Differenzierbarkeit der Abbildungen $c$ und $s$ folgt deshalb mit Lemma~\ref{lm: vertauschung von abl und wegintegral}, dass $V_{t_0} \ni t \mapsto P(t) \varphi$ zweimal stetig differenzierbar ist (für jede in $I$ offene Umgebung $V_{t_0}$ von $t_0$ mit $\overline{V}_{t_0} \subset U_{t_0}$) und daher, weil $t_0$ beliebig war in $I$, dass $t \mapsto P(t) \varphi$ zweimal stetig differenzierbar ist für alle $\varphi \in X$. Wir sehen nun, dass alle Voraussetzungen von Satz~\ref{thm: unhandl adsatz mit sl} erfüllt sind, und dieser liefert die Behauptung. 
\end{proof}

Wir weisen darauf hin, dass die gleichmäßige Isoliertheit von $\lambda(t)$ unter der einschränkenderen Voraussetzung, dass $c(t) \notin \operatorname{dom}\beta_2$ für alle $t \in I$, offensichtlich ist, denn $\sigma_p(A_0(c(t))$ ist ja dann für alle $t \in I$ gleich $\{ \beta_1(c(t)) \}$, was wegen der Stetigkeit von $t \mapsto \beta_1(c(t)) \in (0, \infty)$ gleichmäßig isoliert ist in $\{ \Re z \le 0 \} \cup \{ \beta_1(c(t)) \} = \sigma\bigl( A_0(c(t)) \bigr)$.  %(beachte, dass $\operatorname{dom}\beta_2 = (c_*, \infty) \subsetneq (0, \infty) = \operatorname{dom}\beta_1$): da ist ja dann $\sigma(A_0(c(t))) = \{ \Re z \le 0 \} \cup \{ \beta_1(c(t)) \}$ für alle $t \in I$.  

\newpage

\end{document}